\definecolor{linkcol}{rgb}{0.0,0.55,0.7}
\definecolor{citecol}{rgb}{0.0, 0.6, 0.45}
\definecolor{urlcol}{rgb}{0.7, 0.0, 0.55}
\def\01{\{0,1\}}
\let\Pr\relax
\DeclareMathOperator*{\Pr}{\mathbb{P}}
\DeclarePairedDelimiterX\ceil[1]{\lceil}{\rceil}{
    \IfBlankTF{#1}{\thinspace\cdot\thinspace}{#1}
}
\DeclarePairedDelimiterX\floor[1]{\lfloor}{\rfloor}{
    \IfBlankTF{#1}{\thinspace\cdot\thinspace}{#1}
}
\DeclarePairedDelimiterX\norm[1]{\lVert}{\rVert}{
    \IfBlankTF{#1}{\thinspace\cdot\thinspace}{#1}
}
\DeclarePairedDelimiterX\abs[1]{\lvert}{\rvert}{
    \IfBlankTF{#1}{\thinspace\cdot\thinspace}{#1}
}
\DeclarePairedDelimiterX\ket[1]{\lvert}{\rangle}{
    \IfBlankTF{#1}{\psi}{#1}
}
\DeclarePairedDelimiterX\bra[1]{\langle}{\rvert}{
    \IfBlankTF{#1}{\psi}{#1}
}
\DeclarePairedDelimiterX\expval[1]{\langle}{\rangle}{
    \IfBlankTF{#1}{\thinspace\cdot\thinspace}{#1}
}
\DeclarePairedDelimiterX\braket[2]{\langle}{\rangle}{
    #1\,\delimsize\vert\,\mathopen{}#2
}
\DeclarePairedDelimiterX\mel[3]{\langle}{\rangle}{
    #1\delimsize\rvert\,\mathopen{}#2\,\delimsize\lvert\mathopen{}#3
}
\DeclarePairedDelimiterX\ketbra[2]{\lvert}{\rvert}{
    #1\delimsize\rangle\negthinspace\delimsize\langle\mathopen{}#2
}
\DeclarePairedDelimiterX\proj[1]{\lvert}{\rvert}{
    \IfBlankTF{#1}{
        \psi\delimsize\rangle\negthinspace\delimsize\langle\mathopen{}\psi
    }{
        #1\delimsize\rangle\negthinspace\delimsize\langle\mathopen{}#1
    }        
}
\DeclarePairedDelimiterX\Set[1]\{\}{%

#1
}
\newcommand{\tr}{\operatorname{tr}}
\newtheoremstyle{mydefinitionsty}
{10pt}
{10pt}
{}
{}
{}
{}
{.5em}
{\textbf{\thmname{#1}~\thmnumber{#2}:  }\thmnote{(#3)}}
\theoremstyle{mydefinitionsty}
\newtheorem{definition}{Definition}
\newtheorem{remark}{Remark}
\newtheorem{observation}{Observation}
\newtheoremstyle{mythmsty}
{10pt}
{10pt}
{\itshape}
{}
{}
{}
{.5em}
{\textbf{\thmname{#1}~\thmnumber{#2}:  }\thmnote{(#3)}}
\theoremstyle{mythmsty}
\newtheorem{theorem}{Theorem}
\newtheorem{lemma}{Lemma}
\newtheorem{problem}{Problem}
\newtheorem{corollary}{Corollary}
\newtheorem{proposition}{Proposition}
\title{Covert Quantum Learning: \\
Privately and Verifiably Learning from Quantum Data}
\author[1]{Abhishek Anand\footnote{\href{mailto:abhi@caltech.edu}{abhi@caltech.edu}}}
\author[2]{Matthias C. Caro\footnote{\href{mailto:matthias.caro@warwick.ac.uk}{matthias.caro@warwick.ac.uk}}}
\author[3]{Ari Karchmer\footnote{\href{mailto:akarchmer0@gmail.com}{akarchmer0@gmail.com}}}
\author[4]{Saachi Mutreja\footnote{\href{mailto:sm5540@columbia.edu}{sm5540@columbia.edu}}}
\affil[1]{
California Institute of Technology, Pasadena, CA, USA}
\affil[2]{
University of Warwick, Coventry, UK}
\affil[3]{Morgan Stanley Machine Learning Research, New York, NY, USA}
\affil[4]{Columbia University, New York, NY, USA}
\date{}
\begin{document}

\pagenumbering{gobble}

\maketitle

\begin{abstract}

    Quantum learning from remotely accessed quantum compute and data must address two key challenges: 
    verifying the correctness of data and ensuring the privacy of the learner’s data-collection strategies and resulting conclusions. The \emph{covert (verifiable) learning} model of Canetti and Karchmer (TCC 2021) provides a framework for endowing classical learning algorithms with such guarantees, protecting against computationally bounded adversaries who observe and tamper with public oracle queries. However, their framework has two drawbacks: it relies on computational hardness assumptions and does not flexibly accommodate richer data-access models, such as quantum ones.
    
    In this work, we propose models of covert verifiable learning in quantum learning theory and realize them without computational hardness assumptions for remote data access scenarios motivated by established quantum data advantages.
    We consider two 
    privacy notions: (i) \emph{strategy-covertness}, where the eavesdropper does not gain information about the learner's strategy; and (ii) \emph{target-covertness}, where the eavesdropper does not gain information about the unknown object being learned. We show:
    \begin{itemize}
        \item Strategy-covert algorithms for making quantum statistical queries via classical shadows;  
        \item Target-covert algorithms for: 
        \begin{itemize}
            \item learning quadratic functions from public quantum examples and private quantum statistical queries;
            \item Pauli shadow tomography and stabilizer state learning from public multi-copy and private single-copy quantum measurements;
            \item solving Forrelation and Simon's problem from public quantum queries and private classical queries, where the adversary is a unidirectional or i.i.d.\ ancilla-free eavesdropper.
        \end{itemize} 
    \end{itemize}
    The lattermost results in particular establish that the exponential separation between classical and quantum queries for Forrelation and Simon’s problem survives under covertness constraints.
    Along the way, we design 
    covert verifiable protocols for quantum data acquisition from public quantum queries which may be of independent interest. 
    Overall, our models and corresponding algorithms demonstrate that quantum advantages are privately and verifiably achievable even with untrusted, remote data. 
\end{abstract}

\newpage
\tableofcontents
\newpage

\pagenumbering{arabic}

\section{Introduction}

As quantum technologies progress \cite{arute2019quantum, zhong2020quantum, scholl2021quantum, ebadi2021quantum, sivak2023real, google2023suppressing, bluvstein2023logical, google2025quantum, putterman2025hardware, brock2025quantum} beyond the so-called NISQ era \cite{preskill2018nisq, preskill2025megaquop}, quantum devices may advance scientific discovery in physics through enhanced experimental and simulation capabilities. 
However, with only few powerful quantum devices available, most researchers will only be able to access them remotely and indirectly by asking a quantum service provider to run quantum computations or simulations on their behalf and report back the results.
In such a setting of delegated data collection, two demands become paramount: to verify the correctness of the data and/or the conclusions drawn from it, and to ensure that painstakingly devised experimental setups and the resulting conclusions remain confidential. 
When quantum compute \emph{as well as} quantum data are accessed remotely, verified quantum computing \cite{fitzsimonsUnconditionallyVerifiableBlind2017, gheorghiu2018verification, broadbent2018howtoverify, mahadev2022classical-verification} by itself does not suffice to address these challenges.

Similar challenges arise in other scientific disciplines.
Examples include the use of proprietary datasets, where access is mediated by a third party, or instructing a field research team which specific observations to collect.

\paragraph{Covert Learning.}
To address these situations, the \textit{covert learning} model was introduced by \cite{canetti2021covert}. The covert learning model considers the task of learning an unknown concept using membership oracle queries, while said oracle queries and the oracle responses are monitored by an adversary. We refer to the 
oracle queries and responses together as the \textit{learning transcript}.
The (informal) objectives of a \textit{covert learning algorithm} are:
\begin{enumerate}
    \item \textbf{Efficient Learning:} The algorithm should efficiently use the oracle access to the unknown concept to  
    obtain good learning guarantees (e.g., as phrased in agnostic PAC-learning \cite{vapnik1971uniform, valiant1984theory}). 
    \item \textbf{Covertness:} The algorithm should prevent the adversary from 
    gaining information about the unknown concept or any prior knowledge used by the algorithm (e.g., a chosen hypothesis class).
\end{enumerate} 

Formulating a framework akin to ``real/ideal'' simulation security in secure computation \cite{goldwasser1982probabilistic-encryption}, \cite{canetti2021covert} define covertness as follows: There exists a probabilistic polynomial-time simulator that samples an ``ideal'' distribution over learning transcripts that is computationally indistinguishable from the ``real'' distribution over learning transcripts generated in executing the covert learning algorithm. 
Crucially, the simulator must function given only access to random examples from the unknown concept (and \textit{not} access to a membership oracle nor the underlying hypothesis class).\footnote{As noted by \cite{canetti2021covert}, allowing the simulator no access to the unknown concept whatsoever (no random examples, say), while tempting, is not possible, since at the very least the adversary will gain access to some oracle responses in the real distribution.}
This ensures that the learning transcript reveals no more than what can be efficiently extracted from only random examples. 
Consequently, whenever the unknown concept cannot be learned efficiently from random examples, then no efficient adversary can learn the concept from the transcript produced by a covert learning algorithm. Additionally, no information about a chosen hypothesis class can be leaked, since the existence of the simulator implies that the oracle queries made by the algorithm are indistinguishable across hypothesis classes.\footnote{Note that this guarantee is only nontrivial when operating in an agnostic model of learning (as opposed to a  realizable setting), where the choice of hypothesis class determines the optimal error of the learned hypothesis.}

\paragraph{More Broadly Applicable Models of Covert Learning Needed.}
The 
\cite{canetti2021covert} model comes with two caveats: First, the learner must make computational hardness assumptions, and the adversary has to be computationally bounded. These are \textit{necessary} assumptions: 
\textit{meaningful} covert learning algorithms in the \cite{canetti2021covert} model only exist for learning problems where learning with random examples is computationally hard.\footnote{To see this, observe that without the hardness assumption the problem becomes trivial, as the learner can just query randomly, which implies a trivial simulator.} 
Second, the \cite{canetti2021covert} model is designed specifically to leverage supposed separations between learning with classical membership queries and classical random examples. 
However, given the plethora of oracles considered in classical and quantum learning theory---such as (quantum) examples \cite{vapnik1971uniform, valiant1984theory, bshouty1995learning-DNF}, (quantum) statistical queries \cite{kearns1998efficient, arunachalam2020quantumstatisticalquerylearning}, or copies of a quantum state---more broadly applicable models for covert learning are needed.
%
%
Furthermore, while the \cite{canetti2021covert} model is theoretically appealing and has applications to \textit{undetectable} machine learning model stealing \cite{karchmer2023theoretical}\footnote{Covert learning algorithms can be used to \textit{fool} polynomial-time detection algorithms 
that mount a preventative defense against model stealing adversaries.}, 
it has not seen much progress in terms of new algorithms.
This lack of progress may be attributable to the necessity of computational assumptions and the strong requirements imposed by the definition. 

To pave the way for progress in the design of covert learning algorithms in general and in quantum learning theory in particular, we thus view the following question as central:

\begin{center}
    \textit{Is efficient quantum learning with meaningful covertness possible without computational assumptions?}
\end{center}


In a nutshell, the purpose of this work is to explore and answer this question. In particular, we can answer it in the affirmative in different quantum learning-theoretic scenarios.

\subsection{Our Contributions}

As conceptual contributions, we introduce new covert (verifiable) learning models for a variety of quantum learning scenarios and argue that these constitute operationally meaningful notions of covertness. 
Here, while the  \cite{canetti2021covert} model combines two aspects of privacy in one definition---preventing the adversary from gaining information regarding (i) the unknown concept and (ii) the query strategy used by the learning algorithm---we will treat (i) and (ii) as independently relevant aspects of covertness. We refer to (i) as \emph{target-covertness} and (ii) as \emph{strategy-covertness}. 

As technical contributions, we show that these models can be realized without computational hardness assumptions for several quantum data oracles. In particular, we design strategy-covert protocols for implementing quantum statistical queries and target-covert protocols for learning from quantum examples and measurement data, both against computationally unbounded adversaries. We also develop target-covert verifiable protocols for obtaining quantum phase states against certain physically constrained adversaries. We leverage these to show that classically hard problems can be solved covertly and verifiably by a learner with remote access to quantum data.

\subsubsection{New Covert Learning Models}\label{sbsct:new_models}

\paragraph{Covert statistical queries.} 
Our first new model focuses on \emph{strategy-covertness}, where the learner aims to hide their analysis strategy rather than the underlying data itself. Before introducing this model in the quantum setting, we begin with a classical warm-up. Consider the classical statistical query (SQ) model, where a learner has access to a public oracle that provides estimates for statistical properties of an unknown function $f$. Specifically, the learner can submit a query function $q$ and receive an approximate value for $\mathbb{E}[q(x, f(x))]$. The privacy risk is that an adversary observing the sequence of submitted query functions $\{q_i\}$ could infer the learner's goals in their analysis (i.e., their \textit{strategy}). The focus on this privacy issue is motivated by a setting where a client conducts data analysis through a cloud server that implements the SQ oracle. In this case, the client may be interested in maintaining privacy of their desired analyses, while preventing information leakage about the underlying concept is essentially unimportant since the server owns the data.

To model this scenario, we consider a compilation scheme consisting of a pair of efficient algorithms: an \emph{encoder} ($\mathcal{E}$) and a \emph{decoder} ($\mathcal{D}$). The learner's true, sensitive SQs are first passed to the encoder, which transforms them into a set of encoded public SQs. After receiving the oracle's responses, the decoder processes them to recover accurate answers to the original, private queries. Informally, a protocol in this model must satisfy three properties:
\begin{itemize}
    \item \textbf{Completeness:} The learner must still get the right answers. After the public queries are made and the responses are decoded, the resulting values must be (approximately) valid answers to the learner's original SQs, with high probability. This ensures the protocol remains useful for learning.
    \item \textbf{Privacy:} The adversary must learn nothing about the learner's strategy. The distribution of the encoded queries sent to the public oracle must be statistically indistinguishable from a distribution generated by a simulator that has no knowledge of the learner's original queries. This ensures the public transcript is independent of the learner's private interests.
    \item \textbf{Efficiency:} The protocol must not be prohibitively expensive. The number of public queries, the runtime of the encoder and decoder, and any degradation in the precision of the answers must all be bounded by a polynomial in the relevant parameters.
\end{itemize}

This definition (see \Cref{definition:covert-sq-model} for a formal version) can be understood as describing an efficient compilation of few SQs into not-too-many SQs in such a way that little information about the original SQs is leaked. As a simple proof-of-principle, fix a query function $q$ which is thought of as a multivariate polynomial. By publicly querying for the monomials in $q$ and post-processing, one can approximate the true response to $q$ while keeping the monomial coefficients hidden. Indeed, this strategy reveals which monomials are in $q$, but the strategy extends to hide the identities of the monomials themselves via a random sketching approach (see \Cref{subsec:covert-poly-sq} for details).

\emph{The Quantum Case.} The idea behind the above classical covert statistical query model is to compile few queries to an oracle of type A into not-too-many queries to an oracle of type B in such a way that the encoded queries do not hold any information about the original queries. 
This immediately generalizes to other scenarios.
Quantumly, we consider efficiently compiling few quantum statistical queries (QSQs), a quantum generalization of classical SQs introduced in \cite{arunachalam2020quantumstatisticalquerylearning}, into not-too-many randomized measurements to be performed on an unknown quantum state, all while leaking little information about the actual QSQs of interest.
To formalize this, recall that a QSQ oracle $\mathsf{O}^{\mathrm{QSQ}}$ for a (mixed) quantum state $\rho$
can be queried on a pair $(M,\tau)$ consisting of an observable $M$ 
with operator norm at most $1$ and a tolerance parameter $\tau > 0$, and outputs some $v\in\mathbb{R}$ such that $|v - \mathrm{tr}[M\rho]|\leq\tau$.\footnote{As in the case of SQs, an algorithm that queries a QSQ oracle has to succeed for any implementation of the QSQ oracle that provides answers in a way that respects the tolerance.}
In contrast to the QSQ oracle, which outputs approximate expectation values, we use $\mathsf{O}^{\mathrm{QMeasEx}}(\rho)$ for a (mixed) quantum state $\rho$ to denote a quantum measurement example oracle, which, when queried on a measurement, outputs a sample from the distribution obtained when performing said measurement on copies of $\rho$. This oracle access is natural in scenarios where one party does not have direct access to quantum capabilities, thus they send descriptions of experiments to be executed to another party (the oracle) and ask for the measurement outcomes observed in the experiment.
Now, we obtain the following quantum version of the classical covert SQ model for compiling QSQs for observables from some class $\mathcal{M}$ into quantum measurements:

\begin{definition}[Covert Quantum Statistical Query Model---Informal]\label{inf-definition:covert-qsq-model}
    A pair $(\mathcal{E},\mathcal{D})$ of (possibly randomized) algorithms, where $\mathcal{E}$ encodes the $m$ original QSQs $(\vec{M},\vec{\tau})$ from class $\mathcal{M}$ into $m^{(e)}$ positive operator-valued (POVM) measurements $\vec{E}^{(e)}$, and where $\mathcal{D}$ decodes $((\vec{M}, \vec{\tau}), \vec{E}^{(e)},\vec{v}^{(e)})$ to an $m$-dimensional real vector $\vec{v}$, where the $\vec{v}^{(e)}$ is a vector of measurement outcomes obtained when measuring $\vec{E}^{(e)}$, is a \emph{covert QSQ algorithm for $\mathcal{M}$-queries} if it satisfies the following properties:
    \begin{itemize}
        \item \textbf{Completeness:} After decoding the measurement outcomes received for the encoded POVM measurements, the vector $\vec{v}$ contains (approximately) valid QSQ responses to the original QSQs $(\vec{M},\vec{\tau})$ with high success probability.
        \item \textbf{Privacy:} There exists a simulator that, given as input only the desired number $m^{(e)}$ of encoded queries and knowledge of $\mathcal{M}$, generates a distribution over encoded POVM measurements that is statistically indistinguishable from the distribution over $\vec{E}^{(e)}$, no matter the initial QSQs $(\vec{M},\vec{\tau})$.
        \item \textbf{Efficiency:} The increase from QSQ complexity to measurement sample complexity and the runtime of $\mathcal{E}$ and $\mathcal{D}$ are polynomial in the relevant parameters. 
    \end{itemize}
\end{definition}

\paragraph{Covert learning from examples.} 
We next focus on \emph{target-covertness} 
as well as work in a different data oracle setting. Again, our main contributions live in the quantum setting, but we will begin with a classical introduction and proof-of-principle demonstration.

The \cite{canetti2021covert} model considered public membership queries and private random examples. We weaken both oracles and work with a public random example oracle and a private SQ oracle.\footnote{Clearly, the resulting covert learning model will be meaningful only for problems that are easy to solve from random examples but hard to solve from SQs.}
This setting is motivated by scenarios in which a party has access to data with noisy labels: individual examples cannot be trusted, but aggregate statistics can. In particular, this allows them to emulate an SQ oracle privately. For example, consider a genomic research group that maintains a proprietary noisy dataset but now faces a learning task that requires noiseless i.i.d.~samples. The lab can ask a field team to collect these samples. While there is no hidden data collection strategy (only i.i.d.~sampling is requested), the lab may still wish to limit leakage about the data-generating process to any eavesdropper (e.g., the field team itself). In particular, they may wish to prevent such an eavesdropper from solving the learning task using the collected samples.

In contrast to \cite{canetti2021covert}, where covertness was inherent in the query strategy and a private oracle was only needed for verifiability, here, we obtain covertness by using the private SQ oracle. Moreover, because this setting admits \emph{unconditional} separations (as we will see in the following paragraph), our results do not require computational assumptions. 

Intuitively, a learner in this model uses a relatively small number of public queries to the example oracle, thus leaking little information to the eavesdropper, to ``kick-start'' a learning procedure that relies only on further private SQs. Let us illustrate this with a problem that is easy to solve from random examples but hard to solve from SQs, namely parity learning w.r.t.~the uniform distribution: A parity function $x\mapsto s\cdot x$, where $s$ is an unknown $n$-bit string and the inner product is modulo 2, can be learned computationally efficiently and exactly from $\mathcal{O}(n)$ random examples with high probability via simple Gaussian elimination. In contrast, the seminal work \cite{kearns1998efficient} showed that $2^{\Omega(n)}$ many SQs with inverse-polynomial tolerance are needed to achieve the same. When both random examples and SQs are available, the two resources can be traded off against each other: After seeing $n-k$ linearly independent random examples, $2^k$ SQs with constant tolerance suffice to find the true parity string among the remaining $2^k$ candidates consistent with the observed examples.
However, an eavesdropper observing only the public examples can do no better than uniform random guessing over the remaining $2^k$ options. Thus, this simple procedure motivates the following covert learning definition:



\begin{definition}[Covert Exact Learning From Public Examples and Private SQs---Informal]\label{inf-definition:covert-exact-learning-public-examples-private-sq}
    An algorithm $\mathcal{L}$ that has access to a private SQ oracle $\mathsf{O}_{\mathrm{pri}}^{\mathrm{SQ}}$ and to a public random example oracle $\mathsf{O}_{\mathrm{pub}}^{\mathrm{Ex}}$ is a $(m_{\mathrm{pri}}, m_{\mathrm{pub}},\delta_c,\delta_p)$-covert exact learner for a concept class $\mathcal{F}$ from private SQs and public examples if it satisfies the following:
    \begin{itemize}
        \item $\delta_c$-\textbf{Completeness:} For any $f\in\mathcal{F}$, after making at most $m_{\mathrm{pri}}$ queries to $\mathsf{O}_{\mathrm{pri}}^{\mathrm{SQ}}(f)$ and at most $m_{\mathrm{pub}}$ queries to $\mathsf{O}_{\mathrm{pub}}^{\mathrm{Ex}}(f)$, $\mathcal{L}$ outputs $f$ with success probability $\geq 1-\delta_c$.
        \item $\delta_p$-\textbf{Privacy:} For $F\sim \mathcal{F}$ a uniformly random concept, no adversary $\mathcal{A}$ can correctly guess $F$ with success probability $\geq \delta_p$ from the at most $m_{\mathrm{pub}}$ public random examples requested by $\mathcal{L}$. 
    \end{itemize}
\end{definition}

While this definition is phrased for specific public and private oracles, it generalizes immediately to other oracle settings. For instance, we may consider a version of \Cref{inf-definition:covert-exact-learning-public-examples-private-sq} with public quantum measurements and private QSQs; apart from this change in oracles, the correctness and covertness requirements remain the same (we motivate this choice of oracles in \Cref{sbsct:new_algos}). We note that, while the QSQ and QMeasEx oracles are defined in terms of an underlying quantum state, queries sent to and answers received from these oracles are entirely classical.

\begin{definition}[Covert Exact Learning From Public QMeasExs and Private QSQs---Informal]\label{inf-definition:covert-learning-public-quantum-examples-private-qsq}
    An algorithm $\mathcal{L}$ that has access to a private QSQ oracle $\mathsf{O}_{\mathrm{pri}}^{\mathrm{QSQ}}$ and to a public quantum measurement example oracle $\mathsf{O}_{\mathrm{pub}}^{\mathrm{QMeasEx}}$ is a $(m_{\mathrm{pri}}, m_{\mathrm{pub}},\delta_c,\delta_p)$-covert exact learner for a class of states $\mathcal{S}$ from private QSQs and public examples if it satisfies the following:
    \begin{itemize}
        \item $\delta_c$-\textbf{Completeness:} For any $\rho\in\mathcal{S}$, after making at most $m_{\mathrm{pri}}$ queries to $\mathsf{O}_{\mathrm{pri}}^{\mathrm{QSQ}}(\rho)$ and at most $m_{\mathrm{pub}}$ queries to $\mathsf{O}_{\mathrm{pub}}^{\mathrm{QMeasEx}}(\rho)$, $\mathcal{L}$ outputs $\rho$ with success probability $\geq 1-\delta_c$.
        \item $\delta_p$-\textbf{Privacy:} For $\rho\sim \mathcal{S}$ a uniformly random state from the class, no adversary $\mathcal{A}$ can correctly guess $\rho$ with success probability $\geq\delta_p$ from the at most $m_{\mathrm{pub}}$ many public quantum measurement examples requested by $\mathcal{L}$.
    \end{itemize}
\end{definition}

Note that the above definition can be phrased over a class of functions $\mathcal{F}$, rather than over a class of states $\mathcal{S}$, by fixing an encoding of a Boolean function $f$ into a quantum state $\rho_f$. In that case, a QSQ or QMeasEx query to $f$ is implemented as a query to $\mathsf{O}^{\mathrm{QSQ}}(\rho_f)$ or $\mathsf{O}^{\mathrm{QMeasEx}}(\rho_f)$, respectively. Arguably the most common such encoding for learning w.r.t.\ the uniform distribution—dating back to \cite{bshouty1995learning-DNF}—takes the form $\rho_f=\ket{\psi_f^{\mathrm{Ex}}}\bra{\psi_f^{\mathrm{Ex}}}$ with the quantum example state $\ket{\psi_f^{\mathrm{Ex}}}=2^{-n/2}\sum_{x\in\{0,1\}^n}\ket{x,f(x)}$. Another common encoding, which we will use extensively, is $\rho_f=\ket{\psi_f^{\mathrm{Ph}}}\bra{\psi_f^{\mathrm{Ph}}}$ with the quantum phase state defined as $\ket{\psi_f^{\mathrm{Ph}}}=2^{-n/2}\sum_{x\in\{0,1\}^n}(-1)^{f(x)}\ket{x}$ \cite{arunachalam2022optimal}.
Without writing out the informal definition, let us note that we will also later employ an analogue of \Cref{inf-definition:covert-learning-public-quantum-examples-private-qsq} in which the private oracle access consists of measurements performed on single copies of $\rho$ at a time whereas the public access corresponds to multi-copy measurements.


\paragraph{Covert quantum learning.} 
While the covert learning models above capture certain quantum learning scenarios, the oracle queries and responses considered are purely classical, despite relating to underlying quantum systems. 
If we consider a fully quantum learning scenario with oracle queries and/or responses given as quantum states, however, a fundamental rule of quantum physics comes into play: Quantum information is fragile and cannot simply be ``observed'' by an eavesdropper; the act of observing in general perturbs a quantum system. 
Thus, while the above definitions of covert learning with passive (observing) adversaries can be extended to \emph{covert verifiable learning} (following the nomenclature of \cite{canetti2021covert}) by adding a soundness requirement against malicious, meddling adversaries, such an extension becomes inherently necessary for covertness in fully quantum learning scenarios.  
Therefore, we develop a new model of covert verifiable learning to capture learning settings with quantum data. 

We will focus on a scenario common among many quantum algorithms that achieve quantum advantages in query complexity, such as Deutsch-Jozsa \cite{deutsch1992rapid}, Bernstein-Vazirani \cite{bernstein1997complexity}, Simon \cite{simon1997power}, or Forrelation \cite{aaronson2015forrelation}, and quantum computational learning theory \cite{bshouty1995learning-DNF, arunachalam2017survey}:
The quantum (learning) algorithm has access to an unknown function $f$ via a quantum membership query oracle $\mathsf{O}^{\mathrm{QMem}}(f)$ with action $\ket{x, y} \mapsto \ket{x, y \oplus f(x)}$. By querying $\mathsf{O}^{\mathrm{QMem}}(f)$ on a uniform superposition over all inputs, the learner can create a copy of the quantum example state $\ket{\psi_f^{\mathrm{Ex}}}$ and then use multiple such copies to learn about the unknown function. Alternatively, quantum function access is also often modeled by a so-called quantum phase oracle $\mathsf{O}^{\mathrm{QPh}}(f)$, acting as $\ket{x}\mapsto (-1)^{f(x)}\ket{x}$. When queried on a uniform superposition over all computational basis states, the oracle $\mathsf{O}^{\mathrm{QPh}}$ thus outputs a copy of the phase state $\ket{\psi_f^{\mathrm{Ph}}}$. As $\mathsf{O}^{\mathrm{QMem}}$ and $\mathsf{O}^{\mathrm{QPh}}$ can be related via the phase kickback trick, and as function states are unitarily equivalent to phase states for slightly modified functions, 
these quantum access models are often (but not always) equally powerful. 
Thus, in this overview of our results, we will not be too careful in distinguishing between these two oracle models; a more detailed discussion is provided in \Cref{sec:public-quantum-oracle-private-classical-queries}.

In our first \emph{quantum} covertness definition, we isolate the subroutine of preparing a quantum phase state $\ket{\psi_f^{\mathrm{Ph}}}$ using the public oracle $\mathsf{O}^{\mathrm{QPh}}_{\mathrm{pub}}$, and formalize what it means to achieve this covertly and with verifiability (i.e., soundness) guarantees against a quantum adversary who monitors and modifies the learner-oracle interaction. 
As discussed above, in this setting we cannot meaningfully phrase covertness requirements without also requiring soundness. We thus give the learner access to a private but weak(er) oracle, classical membership queries, which they use to ensure covertness and/or soundness.  
In fact, in the following definition we consider two possible versions of the covertness requirement:

\begin{definition}[Covert Verifiable Quantum Data From Public Quantum Oracle Queries---Informal]\label{inf-definition:covert-quantum-data-public-quantum-oracle-private-classical-queries}
    A quantum algorithm $\mathcal{L}$ that has access to a private classical membership query oracle $\mathsf{O}^{\mathrm{Mem}}_{\mathrm{pri}}$ and to a public quantum phase oracle $\mathsf{O}_{\mathrm{pub}}^{\mathrm{QPh}}$ is a $(m_{\mathrm{pri}}, m_{\mathrm{pub}},\delta_c,\delta_s, \varepsilon, (\delta_{\mathrm{leak}}))$-covert verifiable procedure for producing $m$ quantum phase state copies for a concept class $\mathcal{F}$ against adversary $\mathcal{A}$ if it satisfies the following:
    \begin{itemize}
        \item $(\varepsilon,\delta_c)$-\textbf{Completeness:} For any $f\in\mathcal{F}$, after making at most $m_{\mathrm{pri}}$ queries to $\mathsf{O}_{\mathrm{pri}}^{\mathrm{Mem}}(f)$ and at most $m_{\mathrm{pub}}$ queries to $\mathsf{O}_{\mathrm{pub}}^{\mathrm{QPh}}(f)$, $\mathcal{L}$ accepts and outputs a state $\rho$ such that $\bra{\psi_f^{\mathrm{Ph}}}^{\otimes m}\rho\ket{\psi_f^{\mathrm{Ph}}}^{\otimes m}\geq 1-\varepsilon$ with success probability $\geq 1-\delta_c$.
        \item $(\varepsilon,\delta_s)$-\textbf{Soundness:} For any $f\in\mathcal{F}$, after making at most $m_{\mathrm{pri}}$ queries to $\mathsf{O}_{\mathrm{pri}}^{\mathrm{Mem}}(f)$ and at most $m_{\mathrm{pub}}$ queries to $\mathsf{O}_{\mathrm{pub}}^{\mathrm{QPh}}(f)$, the latter of which are subject to corruption by the adversary $\mathcal{A}$, $\mathcal{L}$ accepts and outputs some $\rho$ with $\bra{\psi_f^{\mathrm{Ph}}}^{\otimes m}\rho\ket{\psi_f^{\mathrm{Ph}}}^{\otimes m}< 1-\varepsilon$ with failure probability $\leq \delta_s$.
        \item \textbf{{Privacy}:}
        For $F\sim\mathcal{F}$ a randomly drawn concept, \ldots 
        \begin{itemize}
            \item \textbf{Version 1:} \ldots the adversary gains no information about $F$, in the sense that the joint state $\rho_{\mathsf{FA}}$ of the classical function register and the adversary's (in general quantum) register factorizes as $\rho_{\mathsf{FA}}=\frac{\mathds{1}_{\mathsf{F}}}{|\mathcal{F}|}\otimes \rho_{\mathsf{A}}$.
            \item \textbf{Version 2:} \ldots if the adversary has a probability of at least $\delta_{\mathrm{leak}}$\footnote{Note that we allow $m_{\mathrm{pri}}$ and  $m_{\mathrm{pub}}$ to vary based on this target bound on the adversary’s information-extraction probability, see \Cref{sec:iid-both-directions}. Additionally, Version 2 implicitly restricts the adversaries that can satisfy the definition by requiring a lower bound on the per-round information extraction probability. } of extracting information about $F$ in every interaction, then $\mathcal{L}$ accepts with failure probability $\leq \delta_s$. 
        \end{itemize}
    \end{itemize}
\end{definition}


\Cref{inf-definition:covert-quantum-data-public-quantum-oracle-private-classical-queries} can be modified to covertly acquiring quantum example states from a public quantum membership query oracle, by simply replacing $\mathsf{O}_{\mathrm{pub}}^{\mathrm{QPh}}$ by $\mathsf{O}_{\mathrm{pub}}^{\mathrm{QMem}}$ and $\ket{\psi_f^{\mathrm{Ph}}}$ by $\ket{\psi_f^{\mathrm{Ex}}}$.
Also, Version 1 of the privacy requirement in \Cref{inf-definition:covert-quantum-data-public-quantum-oracle-private-classical-queries} allows for an approximate version by weakening the equality $\rho_{\mathsf{FA}}=\frac{\mathds{1}_{\mathsf{F}}}{|\mathcal{F}|}\otimes \rho_{\mathsf{A}}$ to a closeness requirement $\rho_{\mathsf{FA}}\approx_\varepsilon \frac{\mathds{1}_{\mathsf{F}}}{|\mathcal{F}|}\otimes \rho_{\mathsf{A}}$.\footnote{If we interpret the approximation guarantee $\approx_\varepsilon$ in terms of the trace distance, then this is analogous to how \cite[Definitions 5.2.1 and 5.2.2]{vidick2023introduction} phrase ``($\varepsilon$-)ignorance'' for quantum key distribution.}

Our \Cref{inf-definition:covert-quantum-data-public-quantum-oracle-private-classical-queries} and its modifications are naturally motivated by problems in which quantum phase or example states (possibly supported by classical queries) provide a quantum advantage over classical queries.
Reiterating the examples above, such quantum advantages are known in a variety of cases, for instance for the Deutsch-Jozsa problem ($1$ quantum phase state vs $2^{n-1}+1$ deterministic classical queries), the Bernstein-Vazirani problem ($1$ quantum phase state copy vs $\Theta(n)$ classical queries), the Forrelation problem ($\Theta(1)$ quantum phase states vs $\widetilde{\Omega}(2^{n/2})$ classical queries), and Simon's problem ($\mathcal{O}(n)$ quantum examples + $2$ classical queries vs $\Omega(2^{n/2})$ classical queries). 
In any of these tasks, a client capable of quantum processing but without direct access to quantum data may want to pull such data from a public quantum data repository, while certifying that the quantum data they obtain is high-fidelity and ensuring that no eavesdropper gained information about the data through monitoring the pull request. 
Covert verifiable quantum data acquisition as in \Cref{inf-definition:covert-quantum-data-public-quantum-oracle-private-classical-queries} thus has the potential to make quantum advantages attainable in a verifiable and private manner for learners who access quantum data remotely.

When viewing quantum data acquisition as a subroutine in quantumly solving a problem with oracle access, it becomes natural to phrase task-specific versions of \Cref{inf-definition:covert-quantum-data-public-quantum-oracle-private-classical-queries}. 
We use the Forrelation problem as a concrete example. First, we recall the Forrelation problem.

\begin{definition}[Forrelation \cite{aaronson2010bqp, aaronson2015forrelation}]
    The amount of ``Forrelation'' between two Boolean functions $f,g:\{0,1\}^n\to\{0,1\}$ is defined as
    \begin{equation}
        \Phi(f,g)
        \coloneqq \frac{1}{2^{3n/2}} \sum_{x,y\in\{0,1\}^n} (-1)^{f(x)+x\cdot y + g(y)}\, .
    \end{equation}
    In the Forrelation problem, given oracle access to $f$ and $g$, the task is to decide whether (i) $\lvert \Phi(f,g)\rvert\leq \frac{1}{100}$ or (ii) $\Phi(f,g)\geq \nicefrac{3}{5}$, promised that one of these two is the case.
\end{definition}

We can now phrase a version of \Cref{inf-definition:covert-quantum-data-public-quantum-oracle-private-classical-queries} that is aimed specifically at the Forrelation problem.

\begin{definition}[Covert Verifiable Forrelation From Public Quantum Oracle Queries---Informal]\label{inf-definition:covert-forrelation}
    A quantum algorithm $\mathcal{L}$ that has access to a private classical membership query oracle $\mathsf{O}^{\mathrm{Mem}}_{\mathrm{pri}}$ and to a public quantum phase oracle $\mathsf{O}_{\mathrm{pub}}^{\mathrm{QPh}}$ is a $(m_{\mathrm{pri}}, m_{\mathrm{pub}},\delta_c,\delta_s, (\delta_{\mathrm{leak}}))$-covert verifiable procedure for solving the Forrelation problem against adversary $\mathcal{A}$ if it satisfies the following:
    \begin{itemize}
        \item $\delta_c$-\textbf{Completeness:} For any $f, g:\{0,1\}^n\to\{0,1\}$ with either (i) $\lvert \Phi(f,g)\rvert\leq \frac{1}{100}$ or (ii) $\Phi(f,g)\geq \nicefrac{3}{5}$, after making at most $m_{\mathrm{pri}}$ queries to $\mathsf{O}_{\mathrm{pri}}^{\mathrm{Mem}}(f)$ and $\mathsf{O}_{\mathrm{pri}}^{\mathrm{Mem}}(g)$ and at most $m_{\mathrm{pub}}$ queries to $\mathsf{O}_{\mathrm{pub}}^{\mathrm{QPh}}(f)$ and $\mathsf{O}_{\mathrm{pub}}^{\mathrm{QPh}}(g)$, $\mathcal{L}$ accepts and correctly decides between (i) and (ii) with success probability $\geq 1-\delta_c$.
        \item $\delta_s$-\textbf{Soundness:} For any $f, g:\{0,1\}^n\to\{0,1\}$ with either (i) $\lvert \Phi(f,g)\rvert\leq \frac{1}{100}$ or (ii) $\Phi(f,g)\geq \nicefrac{3}{5}$, after making at most $m_{\mathrm{pri}}$ queries to $\mathsf{O}_{\mathrm{pri}}^{\mathrm{Mem}}(f)$ and $\mathsf{O}_{\mathrm{pri}}^{\mathrm{Mem}}(g)$ and at most $m_{\mathrm{pub}}$ queries to $\mathsf{O}_{\mathrm{pub}}^{\mathrm{QPh}}(f)$ and $\mathsf{O}_{\mathrm{pub}}^{\mathrm{QPh}}(g)$, the latter of which are subject to corruption by the adversary $\mathcal{A}$, $\mathcal{L}$ accepts and incorrectly decides between (i) and (ii) with failure probability $\leq \delta_s$.
        \item \textbf{Privacy:} Suppose the function pair $(F,G)$ is with probability $1/2$ drawn uniformly at random from all function pairs satisfying (i) and with probability $1/2$ drawn uniformly at random from all function pairs satisfying (ii). Then, \ldots 
        \begin{itemize}
            \item \textbf{Version 1:} \ldots no adversary $\mathcal{A}$ can correctly decide between (i) and (ii) with success probability $>1/2$ from the at most $2 m_{\mathrm{pub}}$ many public quantum queries made by $\mathcal{L}$.
            \item \textbf{Version 2:} \ldots if the adversary has a probability of at least $\delta_{\mathrm{leak}}$ of extracting information about $(F, G)$ in every interaction, then $\mathcal{L}$ accepts with failure probability $\leq \delta_s$.
        \end{itemize}
    \end{itemize}
\end{definition}

More generally, once we fix a task of interest (e.g., among the ones mentioned above) to be solved using quantum data, a corresponding variant of \Cref{inf-definition:covert-forrelation} formalizes what it means to solve the task in a covert verifiable manner from public quantum oracle access and private classical query access. 

We note that the settings of quantum learning from remote data access as in \Cref{inf-definition:covert-quantum-data-public-quantum-oracle-private-classical-queries,inf-definition:covert-forrelation} are incomparable to that of verified (blind) quantum computation (V(B)QC) \cite{fitzsimonsUnconditionallyVerifiableBlind2017, gheorghiu2018verification, broadbent2018howtoverify, mahadev2022classical-verification}: 
In V(B)QC, a mostly or even entirely \emph{classical} client delegates quantum computations to be performed on \emph{trusted} input data to an untrusted quantum server. 
In contrast, we consider a \emph{quantum} client that aims to process \emph{untrusted, public} data obtained by delegating data collection to a public quantum oracle. 
Consequently, as also pointed out in recent work on interactive proofs for learning \cite{goldwasser2021interactive, gur2024power, caro2024verification, ma2024classicalverificationquantumlearning, caro2024interactiveproofsverifyingquantum}, 
V(B)QC is insufficient to achieve the kind of covert and verifiable quantum data processing that we desire.
In fact, we propose that covert verifiable quantum data acquisition as in \Cref{inf-definition:covert-quantum-data-public-quantum-oracle-private-classical-queries} may be used in tandem with and complementarily to V(B)QC tools (compare \Cref{remark:combine-with-vqc}).    

\subsubsection{New Covert Learning Results}\label{sbsct:new_algos}

\paragraph{Covert statistical queries.}

Recall that in the previous section, we obtained strategy-covertness in the covert SQ model by querying a set of random polynomials and privately post-processing the responses to recover an estimate for the SQ response to the desired query function. We obtain a quantum analogue of this setting by employing the classical shadows framework \cite{huang2020predicting}. In this recently developed randomized measurement toolkit \cite{elben2023randomized}, one applies random (typically local) unitaries to a state and then measures; the resulting measurement outcomes can later be classically post-processed to estimate many properties of the state (e.g., expectation values of chosen observables). Operationally, data acquisition is public and query-agnostic (``measure first''), whereas the choice of observable and subsequent post-processing is private (``estimate later''). An eavesdropper who observes only the public measurement transcript therefore gains no information about which observable is ultimately estimated beyond what is implied by the randomized measurement design. Accordingly, the results below can be viewed as reinterpreting existing algorithms from \cite{huang2020predicting} as \emph{covert} QSQ protocols. 

\begin{observation}[Classical Shadows as Covert QSQ Algorithms---Informal] \label{obs:classical_shadows_intro}
    Classical shadows constitute covert QSQ-algorithms in the sense of \Cref{inf-definition:covert-qsq-model}. Namely:
    \begin{itemize}
        \item Classical shadows with random Pauli measurements is an efficient covert QSQ-algorithm for bounded $k$-local $n$-qubit QSQs, where the initial QSQ complexity of $m$ increases to an encoded measurement sample complexity of $\mathcal{O}(\log(m)4^k)$.
        \item Classical shadows with random Clifford measurements is a covert QSQ-algorithm for bounded rank-$r$ $n$-qubit QSQs, where the initial QSQ complexity of $m$ increases to an encoded measurement sample complexity of $\mathcal{O}(\log(m) 2^r)$. Here, only the encoding is efficient.
    \end{itemize}
\end{observation}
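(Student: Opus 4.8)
\textit{Proof proposal.} The plan is to observe that the classical shadows protocol of \cite{huang2020predicting} already has exactly the shape demanded by \Cref{inf-definition:covert-qsq-model}: its measurement stage is chosen \emph{obliviously} to the queries, so privacy is immediate, while its post-processing stage is precisely a decoder producing QSQ responses, so completeness reduces to the known shadow-norm estimation guarantees. Concretely, I would define the encoder $\mathcal{E}$ to discard its input $(\vec{M},\vec{\tau})$ and output $m^{(e)} \defeq N$ copies of the fixed randomized-measurement POVM: in the Pauli version, measure each qubit in an independently uniformly random one of the three single-qubit Pauli eigenbases; in the Clifford version, apply a uniformly random $n$-qubit Clifford $U$ and measure in the computational basis. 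A single such snapshot is (a classical coarse-graining of) a POVM outcome, so the encoded objects are bona fide POVM measurements. Since the induced distribution over $\vec{E}^{(e)}$ does not depend on $(\vec{M},\vec{\tau})$ at all, one takes the simulator to be $\mathcal{E}$ itself (given only $m^{(e)}$ and $\mathcal{M}$); the resulting statistical distance is $0$, so Privacy holds — in fact perfectly.

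For completeness, I would let the decoder $\mathcal{D}$, on input $((\vec{M},\vec{\tau}),\vec{E}^{(e)},\vec{v}^{(e)})$, run the median-of-means shadow estimator against each requested observable $M_i$ to produce $v_i$. The guarantee of \cite{huang2020predicting} is that $N = \mathcal{O}\!\left(\log(m/\delta)\cdot \max_{i\in[m]}\norm{M_i}_{\mathrm{shadow}}^2 \,/\, \min_{i\in[m]}\tau_i^2\right)$ snapshots suffice so that $\lvert v_i - \tr[M_i\rho]\rvert \le \tau_i$ holds simultaneously for all $i\in[m]$ with probability $\ge 1-\delta$; i.e.\ $\vec{v}$ is a valid QSQ-response vector, which is exactly Completeness. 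It then remains to insert the shadow-norm bounds for the two ensembles: for $k$-local observables with $\norm{M_i}_\infty\le 1$ one has $\norm{M_i}_{\mathrm{shadow}}^2 \le 4^k$, and for the bounded rank-$r$ observable class one has $\norm{M_i}_{\mathrm{shadow}}^2 \le 3\tr[M_i^2] = \mathcal{O}(2^r)$. Treating the tolerance and failure probability as fixed parameters absorbed into $\mathcal{O}(\cdot)$ then yields the advertised encoded sample complexities $\mathcal{O}(\log(m)\,4^k)$ and $\mathcal{O}(\log(m)\,2^r)$.

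For efficiency, the encoder is $\mathrm{poly}(n,\log m)$-time in both cases (sampling random local bases, resp.\ random Clifford circuits). For the Pauli ensemble, the per-snapshot estimator for a $k$-local $M_i$ factorizes over qubits and only the $\le k$ qubits in the support of $M_i$ contribute nontrivially, so $\mathcal{D}$ runs in time $\mathrm{poly}(n,\log m, 2^k)$ — a fully efficient covert QSQ algorithm. For the Clifford ensemble, $\mathcal{D}$ must evaluate $\tr[M_i\,((2^n+1)\,U^\dagger \proj{b} U - \mathds{1})]$, which for a generic (even low-rank) observable requires manipulating $2^n$-dimensional objects unless the eigenvectors of $M_i$ happen to be efficiently describable; hence, as stated, only the encoding is efficient.

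The main obstacle is not any single hard step but careful bookkeeping against \Cref{inf-definition:covert-qsq-model}: one must confirm that a classical-shadow snapshot legitimately instantiates the definition's ``POVM measurement'' primitive; that the union bound over all $m$ queries in the median-of-means analysis is exactly what produces the $\log m$ dependence (and nothing worse); and that $4^k$ and $\mathcal{O}(2^r)$ are indeed the correct shadow norms for the stated observable classes (as opposed to, e.g., the sharper $3^k$ bound valid for Pauli \emph{observables} specifically, or the convention hidden in ``rank-$r$''). Beyond this, the content is conceptual rather than technical: classical shadows are query-agnostic by design, so covert-QSQ privacy is free, and the observation is essentially a reinterpretation of \cite{huang2020predicting, elben2023randomized} within our framework.
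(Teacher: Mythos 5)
Your proposal matches the paper's own proof in both route and content: \Cref{prop:strategy-covert-classical-shadows} and \Cref{prop:strategy-covert-classical-shadows-low-rank} likewise take the encoder to be the query-oblivious randomized-measurement design (hence the simulator is the encoder itself and privacy is perfect), invoke the median-of-means estimation guarantee of \cite{huang2020predicting} for completeness, and record the efficiency dichotomy between the Pauli and Clifford ensembles; your write-up simply unpacks what the paper compresses into a two-sentence proof. One small point where your caution was warranted: the chain $3\tr[M_i^2]=\mathcal{O}(2^r)$ is exponentially loose, since a rank-$r$ observable with operator norm $\le 1$ has $\tr[M_i^2]\le r$; the formal \Cref{prop:strategy-covert-classical-shadows-low-rank} indeed states the bound as $B^2$ with $B$ the Frobenius norm, so the informal $\mathcal{O}(\log(m)\,2^r)$ in the statement is the overstatement here, not a convention you were missing.
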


\paragraph{Covert learning from examples.}
As already observed when introducing \Cref{inf-definition:covert-exact-learning-public-examples-private-sq}, parity learning serves as a proof-of-principle demonstration for covert learning from public examples and private SQs:

\begin{observation}[Covert Parity Learning from Public Examples and Private SQs---Informal]\label{inftheorem:public-example-private-sq-parity}
    There exists a computationally efficient $(m_{\mathrm{pri}}=2/\delta_p,  m_{\mathrm{pub}} =n-\lceil\log(1/\delta_p)\rceil +\mathcal{O}(\log(1/\delta_c)), \delta_c, \delta_p)$-covert exact learner from private SQs and public random examples for $n$-bit parity functions w.r.t.~the uniform distribution.
\end{observation}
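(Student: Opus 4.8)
The plan is to make precise the public-examples / private-statistical-queries trade-off sketched when introducing \Cref{inf-definition:covert-exact-learning-public-examples-private-sq}. Identify each $n$-bit parity with its string, write $f_s(x)=s\cdot x \bmod 2$, assume $\delta_p<1$ so that $k\defeq\lceil\log(1/\delta_p)\rceil\geq 1$, and set $m_{\mathrm{pub}}\defeq (n-k)+\Ord{\log(1/\delta_c)}$ with the implied constant fixed below. The learner $\mathcal{L}$ runs two phases. \emph{Example phase:} query $\mathsf{O}_{\mathrm{pub}}^{\mathrm{Ex}}(f_s)$ for labeled examples $(a_i, s\cdot a_i)$ one at a time, maintaining by Gaussian elimination over $\mathbb{F}_2$ the rank of $a_1,a_2,\dots$; halt the instant this rank hits $n-k$, and abort if $m_{\mathrm{pub}}$ examples are drawn without reaching it. On the non-abort event, the strings consistent with the observed labels form an affine subspace $s_0+V$ with $\dim V = k$ and $s\in s_0+V$; enumerate its $2^k$ elements $t^{(1)},\dots,t^{(2^k)}$. \emph{Query phase:} for each $t^{(i)}$ in turn, submit to $\mathsf{O}_{\mathrm{pri}}^{\mathrm{SQ}}(f_s)$ the query function $q_{t^{(i)}}(x,y)\defeq(-1)^{y\,\oplus\, t^{(i)}\cdot x}$ with tolerance $\tau=\tfrac{1}{3}$, and output $t^{(i)}$ as soon as the returned value exceeds $\tfrac{1}{2}$. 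Correctness of the query phase is immediate from the Fourier identity $\Ex_{x\sim\Unif}[(-1)^{f_s(x)\oplus t\cdot x}]=\Ex_x[(-1)^{(s\oplus t)\cdot x}]=\mathds{1}[s=t]$: a $\tfrac{1}{3}$-accurate answer to $q_t$ is $\geq\tfrac{2}{3}$ when $t=s$ and $\leq\tfrac{1}{3}$ otherwise, so the threshold test recovers $s$ exactly, using at most $2^k = 2^{\lceil\log(1/\delta_p)\rceil}\leq 2/\delta_p = m_{\mathrm{pri}}$ private queries. It remains to size the example budget and to verify privacy.

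For the budget, observe that $\mathcal{L}$ fails to reach rank $n-k$ within $m$ examples exactly when the first $m$ i.i.d.\ uniform points span a subspace of dimension $<n-k$, equivalently when their common orthogonal complement contains some $(k{+}1)$-dimensional subspace $V$. For a fixed such $V$, each point lands in the $(n{-}k{-}1)$-dimensional space $V^\perp$ independently with probability $2^{-(k+1)}$, so a union bound over the at most $2^{(k+1)(n-k)}$ such subspaces (a standard Gaussian-binomial estimate) bounds the failure probability by $2^{(k+1)(n-k)}\cdot 2^{-(k+1)m}=2^{(k+1)(n-k-m)}$, which drops below $\delta_c$ once $m\geq (n-k)+\tfrac{\log(1/\delta_c)}{k+1}$. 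Hence $m_{\mathrm{pub}}=n-\lceil\log(1/\delta_p)\rceil+\Ord{\log(1/\delta_c)}$ suffices, and on that event---of probability $\geq 1-\delta_c$---the learner $\mathcal{L}$ outputs $f_s$. This counting step, needed only to obtain the \emph{logarithmic} (rather than polynomial) dependence on $1/\delta_c$, is the one place a small calculation is required; everything else is routine.

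Finally, privacy holds because the halting rule never lets the requested points exceed rank $n-k$ (it stops the instant that rank is reached, and otherwise stays strictly below), and the rule depends only on the points $a_i$ and $\mathcal{L}$'s internal coins, all independent of the secret---so this remains true even on the abort branch. Consequently, for a uniformly random secret $S$ the posterior on $S$ given the entire public transcript is uniform over an affine subspace of dimension $\geq k$, and hence \emph{every} adversary---including a computationally unbounded one---correctly guesses $S$ with probability at most $2^{-k}\leq\delta_p$ (with strict inequality unless $\log(1/\delta_p)$ is an integer, a case handled by enlarging $k$ by $1$, which rescales $m_{\mathrm{pri}}$ by $\Ord{1}$). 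For computational efficiency, Gaussian elimination on $\Ord{n+\log(1/\delta_c)}$ vectors over $\mathbb{F}_2$, enumeration of the $2^k\leq 2/\delta_p$ coset elements, and the $\leq 2/\delta_p$ statistical queries (each with an $\Ord{n}$-size query circuit) all run in time polynomial in $n$ and $1/\delta_p$, which completes the plan.
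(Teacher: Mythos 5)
Your proof is correct and follows the paper's two-phase template: collect public examples until the candidate parities shrink to a $2^k$-element affine set (with $k=\lceil\log(1/\delta_p)\rceil$), then identify the secret with at most $2^k\le 2/\delta_p$ private SQs of constant tolerance, with privacy because the public transcript determines the secret only up to an affine set of dimension $\ge k$. The one substantive difference is the SQ resolution step: the paper runs a single-elimination tournament among the $2^k$ candidates using pairwise ``match'' queries $q_{t_1,t_2}(x,y)=\mathds{1}_{\mathsf{X}_{t_1\neq t_2}}(x)\cdot\delta_{t_1\cdot x,y}$ at tolerance $1/6$, whereas you test each candidate $t$ directly against the target with the correlational query $q_t(x,y)=(-1)^{y\oplus t\cdot x}$ at tolerance $1/3$; both spend at most $2^k$ private queries, and yours is a touch simpler. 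One cosmetic note: the SQ oracle of \Cref{def:sq_oracle} expects a $\{0,1\}$-valued query function, so your $\{\pm1\}$-valued correlational query should be affinely rescaled (to $\tfrac{1+(-1)^{y\oplus t\cdot x}}{2}$ with tolerance $1/6$) to match the formal interface. Your explicit Gaussian-binomial union bound for the example budget fills in a constant the paper leaves implicit, and the edge case $2^{-k}=\delta_p$ you flag is already harmless under the formal privacy requirement, which uses the strict inequality $>\delta_p$.
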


This tells us: While parity learning from SQ access only is hard, one can combine private SQ access with a small number of public random examples to efficiently solve the problem. And the number of public examples vs private SQs can be balanced to keep the guessing probability of any adversary inverse-polynomially small.
This serves as a simple example of how information-theoretic privacy guarantees can be achieved in learning from public examples when a weaker private oracle is available.

Parity learning was a prime candidate to consider in the context of \Cref{inf-definition:covert-exact-learning-public-examples-private-sq} precisely because it separates SQ learning from learning from examples \cite{kearns1998efficient}.
Recently, \cite{arunachalam2023role} proved a quantum counterpart of \cite{kearns1998efficient}'s seminal result. Namely, the class of quadratic Boolean functions $x\mapsto x^\top A x~(\mathrm{mod}~2)$ for (w.l.o.g.) upper-triangular $A\in\{0,1\}^{n\times n}$ can be learned computationally efficiently and exactly from measurements performed on $\mathcal{O}(n)$ copies of the corresponding quantum example state $\ket{\psi_A}$ with high success probability. In contrast, learning this class from access to a QSQ oracle for $\ket{\psi_A}$ requires $2^{\Omega(n)}$ many queries with inverse-polynomial tolerance. This makes quadratic Boolean function learning a natural problem to study in the context of \Cref{inf-definition:covert-learning-public-quantum-examples-private-qsq}. Our next result shows how to solve this learning problem covertly.

\begin{theorem}[Covert Quadratic Function Learning from Public QMeasExs and Private QSQs---Informal]\label{inftheorem:public-qmeasexample-private-qsq-quadratic}
    There exists a computationally efficient $(m_{\mathrm{pri}}=n,m_{\mathrm{pub}}=\mathcal{O}(n + \log(1/\delta_c)), \delta_c, \delta_p = 2^{-n})$-covert exact quantum learner from private QSQs and public quantum measurement examples for degree-$2$ polynomials over $n$ bits w.r.t.~the uniform distribution.
\end{theorem}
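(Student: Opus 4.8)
The plan is to realize the quadratic-function learning algorithm of \cite{arunachalam2023role} as a two-phase covert protocol that mirrors the parity template behind \Cref{inftheorem:public-example-private-sq-parity}: a short, \emph{query-agnostic} public phase that reveals all but an $n$-bit ``slice'' of the secret, followed by a private phase that pins down that slice with $n$ quantum statistical queries. Write $q_A(x) = x^\top A x \bmod 2$ with $A$ upper triangular, and split $A = U + D$ into its strictly-upper-triangular part $U$ (the genuinely quadratic part $\sum_{i<j} A_{ij}x_i x_j$) and its diagonal $D = \mathrm{diag}(A_{11},\dots,A_{nn})$ (the linear part $\sum_i A_{ii}x_i$). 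The quantum example state $\ket{\psi_A} = 2^{-n/2}\sum_x (-1)^{q_A(x)}\ket{x}$ is a stabilizer state whose stabilizer group is generated by the signed Paulis $(-1)^{A_{kk}} X^{e_k} Z^{M e_k}$ for $k = 1,\dots,n$, where $M \coloneqq A + A^\top$ is symmetric with zero diagonal and off-diagonal entries $M_{ij} = A_{ij}$. The key point is that the \emph{Pauli labels} of these generators encode exactly $U$ (equivalently $M$), while their \emph{signs} $(-1)^{A_{kk}}$ encode exactly $D$, and Bell-difference sampling extracts the former but is provably blind to the latter.

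\textbf{Public phase.} The learner requests $m_{\mathrm{pub}} = \mathcal{O}(n + \log(1/\delta_c))$ copies' worth of the (fixed, non-adaptive) Bell-basis measurement on pairs of copies of $\ket{\psi_A}$ from $\mathsf{O}^{\mathrm{QMeasEx}}_{\mathrm{pub}}$. Each outcome is a uniformly random element of the label group $\{X^a Z^{Ma} : a \in \{0,1\}^n\}$, whose distribution depends only on $M$; collecting $n + \mathcal{O}(\log(1/\delta_c))$ samples yields a spanning set, hence all of $M$, with probability $\ge 1-\delta_c$, by $\mathbb{F}_2$-linear algebra (this is the public-measurement part of the \cite{arunachalam2023role} algorithm, equivalently standard stabilizer learning via Bell sampling). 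Two facts matter: (i) the Bell-sampling distribution is uniform over a subspace determined solely by $M$, hence statistically independent of $D$; and (ii) the public queries are a fixed measurement, independent of $A$.

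\textbf{Private phase, completeness, and privacy.} Having recovered $M$, the learner queries $\mathsf{O}^{\mathrm{QSQ}}_{\mathrm{pri}}$ on the $n$ Pauli observables $P_k \coloneqq X^{e_k} Z^{M e_k}$ with tolerance $\tau = 1/4$ (valid, since $\|P_k\|_{\mathrm{op}} = 1$). A one-line calculation — $\bra{\psi_A} X^a Z^b \ket{\psi_A} = (-1)^{a^\top A a}$ when $b = (A+A^\top)a$ and $0$ otherwise — gives $\bra{\psi_A} P_k \ket{\psi_A} = (-1)^{A_{kk}}$, so thresholding each returned value at $0$ recovers $A_{kk}$, hence $D$; the learner outputs $A = U + D$. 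This uses $m_{\mathrm{pri}} = n$ private QSQs, runs in $\mathrm{poly}(n)$ time, and is deterministic and correct on the event (probability $\ge 1-\delta_c$) that $M$ was recovered exactly, so $\delta_c$-completeness holds. For privacy, the entire public transcript (the fixed Bell measurements plus their outcomes) is a randomized function of $M$ alone, hence independent of $D$; when $A$ is uniform over upper-triangular matrices, $U$ and $D$ are independent with $D$ uniform on $\{0,1\}^n$, so conditioned on the transcript no (even unbounded) adversary guesses $D$ — let alone all of $A$ — with probability exceeding $2^{-n}$, giving $\delta_p = 2^{-n}$ (tight: recover $U$, guess $D$ at random).

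\textbf{Main obstacle.} The crux is the public phase: verifying that Bell-difference sampling on the quadratic phase state (a) is implementable through the multi-copy $\mathsf{O}^{\mathrm{QMeasEx}}_{\mathrm{pub}}$ oracle, (b) recovers all of $M$ from $\mathcal{O}(n + \log(1/\delta_c))$ samples at the stated confidence, and — most importantly for covertness — (c) has an outcome distribution that is \emph{exactly independent of the diagonal $D$}. Point (c) is what makes the $2^{-n}$ bound clean, and can be argued from the explicit form of the Bell-sampling distribution of a stabilizer state (uniform over the label subgroup, independent of signs), or by noting that the diagonal acts as a local diagonal gate whose only effect on the doubled state is on stabilizer signs, which Bell sampling never reads out. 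Everything else — the $\mathbb{F}_2$ post-processing, the norm bound on $P_k$, and the identity $\bra{\psi_A}P_k\ket{\psi_A} = (-1)^{A_{kk}}$ — is routine bookkeeping.
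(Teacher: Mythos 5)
Your proposal is correct and follows the same two-phase covert template as the paper's proof (Theorem~3), but you take a genuinely different route in the private phase and a different encoding. The paper works with the $(n+1)$-qubit quantum example state $\ket{\psi_A}=2^{-n/2}\sum_x\ket{x,x^\top Ax}$; its public POVM first Hadamards-and-measures the last qubit of each copy (post-selecting on outcome $11$, which moves the function into the phase) and then Bell-samples, and its private phase queries observables $UM_iU^\dagger$, where $M_i$ is the influence-estimation observable from \cite{arunachalam2020quantumstatisticalquerylearning} and $U$ uncomputes the now-known off-diagonal quadratic part, reducing to parity-influence estimation. You instead work directly with the $n$-qubit phase state $\ket{\psi^{\mathrm{Ph}}_A}$, observe that it is a real stabilizer state with generators $(-1)^{A_{kk}}X^{e_k}Z^{Me_k}$, and recover the $n$ signs via $n$ direct Pauli QSQs using the identity $\bra{\psi^{\mathrm{Ph}}_A}X^{e_k}Z^{Me_k}\ket{\psi^{\mathrm{Ph}}_A}=(-1)^{A_{kk}}$. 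This sidesteps both the example-to-phase post-selection and the influence-estimation machinery, making the private phase more self-contained (note $e_k\cdot Me_k=M_{kk}=0$, so $X^{e_k}Z^{Me_k}$ is Hermitian with $\|\cdot\|_{\mathrm{op}}=1$, as required for a valid QSQ). The public-phase and privacy analyses are essentially identical in both proofs: Bell sampling on a stabilizer state is uniform over the unsigned stabilizer group, which depends only on $M$ and is manifestly blind to the diagonal $D$, so conditioned on the transcript $D$ remains uniform on $\{0,1\}^n$, yielding $\delta_p=2^{-n}$. The one mismatch to flag: the paper's formal Theorem~3 is stated for the example-state class $\mathcal{S}_{\mathrm{quad}}$, so to literally match it you would lift $P_k$ to the $(n+1)$-qubit observable $(\mathds{1}\otimes H)(P_k\otimes\ketbra{1}{1})(\mathds{1}\otimes H)$, giving expectation $\tfrac{1}{2}(-1)^{A_{kk}}$ (still resolvable at tolerance $1/4$). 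The informal statement you are proving, however, is encoding-agnostic, and your phase-state instantiation is an equally valid one.
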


In contrast to the toy problem considered in \Cref{inftheorem:public-example-private-sq-parity}, the covert learner here, while using only linearly-in-$n$ many private oracle queries, nevertheless ensures that the adversary cannot correctly guess the unknown function with more than exponentially-small success probability. Thus, by ``diverting'' only a small number of oracle queries from the stronger public to the weaker private oracle, the learner can achieve a strong privacy guarantee.
As we show in \Cref{subsec:covert-learning-template}, the same strategy yields covert learning advantages for Pauli shadow tomography \cite{huang2021information} and stabilizer state learning \cite{aaronson2008identifying, montanaro2017learning} under public two-copy and private single-copy measurement example oracles. In these instances, jointly querying the public and private oracles reduces the query complexity below what is achievable with the private oracle alone, while preserving covertness. This demonstrates that certain exponential advantages of learning with vs without quantum memory can be made covert. Additional state, process, and Hamiltonian learning tasks fitting our template are discussed in \Cref{subsec:covert-learning-template}.

\paragraph{Covert quantum learning.}

We next present our results for covert quantum learning with a public quantum phase oracle and a private classical membership oracle. In \Cref{sec:public-quantum-oracle-private-classical-queries}, we first prove a no-go result: Covertness is information-theoretically impossible for quantum data acquisition for certain families of states. This necessitates physical restrictions on the adversary to achieve covertness. Thus, for this model, we give two quantum algorithms for covert verifiable quantum data acquisition (as defined in \Cref{inf-definition:covert-quantum-data-public-quantum-oracle-private-classical-queries}), each secure against a distinct adversary class.

\begin{theorem}[Covert Verifiable Phase States Against Unidirectional \footnote{By ``unidirectional'' adversaries, we mean eavesdroppers that can observe and possibly tamper with the oracle$\to$learner (response) channel, but have no access to the learner$\to$oracle (query) channel. The opposite notion (adversaries restricted to the learner$\to$oracle channel only) is trivially covert in our setting, since they never interact with any object that depends on the unknown function implemented by the oracle.} Adversaries---Informal]\label{inf-theorem:covert-quantum-data-acquisition-v1}
    There exists a computationally efficient $(m_{\mathrm{pri}}= \mathcal{O}(\poly(m,n, \log(1/\delta), 1/\varepsilon)), m_{\mathrm{pub}}=\mathcal{O}(\poly(m,n, 1/\delta, 1/\varepsilon)),\delta_c=0,\delta_s=\delta, \varepsilon)$-covert verifiable procedure for producing $m$ quantum phase state copies for an arbitrary concept class $\mathcal{F}$ that satisfies \textbf{Privacy Version 1} against arbitrary unidirectional quantum adversaries that observe and modify only the quantum communication from $\mathsf{O}_{\mathrm{pub}}^{\mathrm{QPh}}$ to the learner.
\end{theorem}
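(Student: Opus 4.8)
The plan is to get covertness from a purification (``keep the reference'') trick and verifiability from a three‑flavoured cut‑and‑choose whose traps are checked with the private classical oracle.
\emph{The protocol.} To manufacture one phase‑state copy, the learner prepares the maximally entangled state $\frac{1}{\sqrt{2^n}}\sum_{x\in\{0,1\}^n}\ket{x}_Q\ket{x}_{Q'}$ locally, sends only register $Q$ through $\mathsf{O}_{\mathrm{pub}}^{\mathrm{QPh}}$, and keeps $Q'$. After the oracle the joint state is $\frac{1}{\sqrt{2^n}}\sum_x(-1)^{f(x)}\ket{x}_Q\ket{x}_{Q'}$, whose $Q$‑marginal—the only thing a unidirectional adversary ever touches—equals $\mathds{1}_Q/2^n$, independently of $f$ and of the role the learner secretly assigns to the round. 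On the returned (possibly tampered) $Q$, the learner then does one of: (i) \emph{uncompute}: apply CNOTs $Q\!\to\! Q'$ and reject unless $Q'=\ket{0^n}$, keeping a register that honestly holds $\ket{\psi_f^{\mathrm{Ph}}}$; (ii) a \emph{phase trap}: measure $Q'$ on all but one uniformly random coordinate $i$ to get $y\in\{0,1\}^{n-1}$, measure coordinate $i$ of $Q'$ in the Hadamard basis to get $c\in\{0,1\}$, make the two private classical queries $f(0,y),f(1,y)$, and measure $Q$ in a basis containing $\frac{1}{\sqrt{2}}(\ket{(0,y)}\pm(-1)^{c\oplus f(0,y)\oplus f(1,y)}\ket{(1,y)})$, rejecting unless the ``$+$'' outcome occurs; (iii) hold it for a \emph{SWAP trap} (later uncomputed and SWAP‑tested against another such round). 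Overall the learner issues $M=m+T_{\mathrm{ph}}+T_{\mathrm{sw}}$ public queries, uniformly at random marks $m$ output rounds, $T_{\mathrm{ph}}$ phase‑trap rounds, and $T_{\mathrm{sw}}$ SWAP‑trap rounds, runs the corresponding (efficient) subroutine on each, and accepts and outputs the $m$ output registers iff nothing rejected. One can take $T_{\mathrm{ph}}=\mathrm{poly}(m,n,1/\varepsilon)\log(1/\delta)$ and $T_{\mathrm{sw}}=\mathrm{poly}(m,n,1/\varepsilon,1/\delta)$, which yields exactly the advertised $m_{\mathrm{pub}}=M$ and $m_{\mathrm{pri}}=2T_{\mathrm{ph}}$.

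\emph{Privacy and completeness.} For any (even memoryful) unidirectional adversary, its final state is a fixed function of the returned $Q$‑registers—each $\mathds{1}/2^n$ regardless of $f$—and its own ancilla, hence independent of $F$; so the joint concept/adversary state factorizes as $\frac{\mathds{1}_{\mathsf{F}}}{|\mathcal{F}|}\otimes\rho_{\mathsf{A}}$, i.e.\ Privacy Version 1 holds exactly. Completeness with $\delta_c=0$ is immediate: against the identity channel every uncompute‑check finds $\ket{0^n}$, every SWAP test of two identical copies of $\ket{\psi_f^{\mathrm{Ph}}}$ accepts, and every phase trap accepts by direct computation, so the learner always accepts with an exact $\ket{\psi_f^{\mathrm{Ph}}}^{\otimes m}$.

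\emph{Soundness.} This is the crux. A structural reduction first shows that, conditioned on passing a round's uncompute‑check, the adversary's channel $\Lambda$ on that round is effectively ``projected onto its diagonal'': the surviving $Q$‑state is, up to normalization, $\sum_{x,x'}(-1)^{f(x)+f(x')}\,\bra{x}\Lambda(\ketbra{x}{x'})\ket{x'}\,\ketbra{x}{x'}$, which depends on $f$ only through phases and equals $\ket{\psi_f^{\mathrm{Ph}}}$ exactly iff $\Lambda$ preserves every coherence $\ketbra{x}{x'}$ up to one common constant. Three facts are then combined. (a) Because a random Hadamard‑basis measurement of $Q'$ pins $Q$ onto a uniformly random hypercube \emph{edge} $\{(0,y),(1,y)\}$, on which the two classical queries reveal the correct relative phase, a phase trap rejects with probability equal to the fraction of edges on which the effective corruption alters the phase; the edge‑isoperimetric inequality on $\{0,1\}^n$ then forces any corruption that drops a copy's fidelity below $1-\eta$ to be non‑constant on an $\Omega\!\big(\tfrac{\eta}{n}\log(1/\eta)\big)$‑fraction of edges, so it is caught with at least that probability—this is why $T_{\mathrm{ph}}$ depending on $\delta$ only as $\log(1/\delta)$ suffices to catch a corruption applied consistently to most rounds. (b) If all SWAP tests pass with high probability, a product/collision‑test argument shows the output registers are jointly $\approx\ket{\phi}^{\otimes m}$ for a single state $\ket{\phi}$, ruling out cross‑round entanglement and per‑round inconsistency (with $T_{\mathrm{sw}}$ scaling only polylogarithmically in $\delta$ for this i.i.d.\ conclusion). (c) Since the adversary cannot distinguish output rounds, either trap flavour, the random coordinate $i$, or the classical query locations, any corruption supported on few rounds falls entirely inside the output set only with probability $\approx(m/M)^{\#\{\text{corrupted rounds}\}}$, while a corruption supported on many rounds is caught by some trap; balancing these two regimes is what the (classical‑query‑free, hence affordable) $T_{\mathrm{sw}}=\mathrm{poly}(\ldots,1/\delta)$ SWAP rounds are for. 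Chaining (a)–(c) with a hybrid/cut‑and‑choose bound gives: if the learner accepts, then with probability $\ge 1-\delta$ the output is $O(\varepsilon/m)$‑close per copy to some $\ket{\phi}$ that is in turn $O(\varepsilon/m)$‑close to $\ket{\psi_f^{\mathrm{Ph}}}$, hence the $m$‑copy fidelity is $\ge 1-\varepsilon$; a gentle‑measurement argument ensures having ``survived'' the checks does not appreciably disturb the output.

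\emph{Main obstacle.} The delicate point is executing the soundness argument against \emph{arbitrary} (non‑diagonal, cross‑round‑entangling, memoryful) adversaries while keeping the parameters tight: one must (i) control the off‑diagonal and inter‑round parts of the adversary's action jointly via the uncompute‑checks and SWAP traps—the single‑round ``diagonal Choi'' reduction above is the clean special case; (ii) lower‑bound the phase‑trap rejection probability in terms of the genuine fidelity deficit of the surviving $Q$‑state, not merely of a hypothetical Boolean phase function; and (iii) combine per‑round rejection probabilities with the hypergeometric statistics of the random round assignment so that $T_{\mathrm{ph}}$ stays $\mathrm{polylog}(1/\delta)$ while $T_{\mathrm{sw}}$ absorbs the unavoidable $\mathrm{poly}(1/\delta)$. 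The remaining ingredients—the purification trick for covertness, edge‑isoperimetry, the product/collision test, and gentle measurement—are by now standard.
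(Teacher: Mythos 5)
Your covertness/privacy argument is essentially the paper's: the paper's entanglement-based query procedure (Algorithm~\ref{alg:covert-public-phase-query-entanglement}) sends the last $n$ qubits of $\frac{1}{\sqrt{2^n}}\sum_r\ket{r}\otimes\ket{\psi^{(r)}}$, which equals $(I\otimes H^{\otimes n})\frac{1}{\sqrt{2^n}}\sum_x\ket{x}\ket{x}$, i.e.\ precisely your maximally entangled purification up to a local unitary on the outgoing register, and the privacy proof (Observation~\ref{obs:target-covert-query}) rests on the same observation you make, namely that the reduced state on the outgoing register is $\mathds{1}/2^n$ before and after the oracle unitary.

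Where you genuinely diverge is in how you propose to establish completeness and soundness. The paper does not build a bespoke trap scheme: it takes the shadow‑overlap certification of \cite{huang2024certifying-arXiv} (which, for phase states, reduces fidelity certification to non‑adaptive single‑qubit Pauli measurements plus classical queries in post‑processing), lifts it to non‑i.i.d.\ inputs in one stroke via \cite{fawzi2024learning} (Theorem~\ref{theorem:extending-iid-to-noniid}), and applies the resulting $\ComplexityFont{CertifyState}$ subroutine to the batch of covertly obtained copies (Algorithm~\ref{alg:covert-verifiable-public-quantum-private-classical}). Soundness against arbitrary adversaries — cross‑round entangled, memoryful, non‑diagonal — is then inherited as a black box: the non‑i.i.d.\ lifting theorem makes no structural assumption about the joint state. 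By contrast, you propose a three‑flavoured cut‑and‑choose with uncompute checks, random‑edge phase traps, and SWAP traps, and carry the burden of re‑proving from scratch the statement that passing all checks implies high output fidelity.

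The gaps you flag under ``Main obstacle'' are real and, in my view, the hard content of the theorem, not polish. Concretely: your diagonal‑Choi reduction $\sum_{x,x'}(-1)^{f(x)+f(x')}\bra{x}\Lambda(\ketbra{x}{x'})\ket{x'}\ketbra{x}{x'}$ is derived for a single round with a fixed channel $\Lambda$; against a memoryful adversary acting jointly on many rounds this form is simply not available, and the SWAP traps you invoke to restore an i.i.d.\ picture would have to be analyzed carefully for states that are not even permutation‑invariant (the paper explicitly symmetrizes in Algorithm~\ref{alg:certify-state} step~1, which you do not do). Second, the post‑uncompute coefficients $\bra{x}\Lambda(\ketbra{x}{x'})\ket{x'}$ are arbitrary complex numbers, so your edge‑isoperimetry argument, which quantifies how non‑constant a \emph{Boolean} corruption must be, does not apply to the actual surviving state; you note this, but without a replacement lemma relating phase‑trap rejection probability to the genuine fidelity deficit, the soundness claim is unsupported. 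The paper's analogue of such a lemma is exactly Theorem~\ref{theorem:shadow-overlap-completeness-and-soundness}, which it imports rather than re‑derives. Third, the cut‑and‑choose/hypergeometric balancing you sketch in (c) needs the output rounds to be information‑theoretically indistinguishable from trap rounds from the adversary's viewpoint, and one must check this survives the fact that you adaptively perform different measurements in different flavours (a gentle‑measurement argument alone will not give this for free). None of these is a routine calculation, and together they constitute the soundness proof rather than its finishing touches, so the proposal as written does not yet establish the theorem — though your covertness mechanism is correct and the cut‑and‑choose route, if completed, would be a self‑contained alternative to the paper's reliance on the non‑i.i.d.\ lifting toolkit.
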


Note that the above protocol does not efficiently achieve an inverse-exponential in $n$ soundness error because $m_{\mathrm{pub}}$ scales with $1/\delta$.
However, when this subroutine is used to obtain phase states to solve a specific (say, distinguishing) task, \Cref{sec:non-iid-onlybackwards} shows how to amplify the success probability to obtain inverse-exponential soundness, as reflected in \Cref{inf-corollary:covert-forrelation-v1}. Moreover, even without the privacy guarantee, applying the the above theorem to many-vs-one distinguishing tasks offers insights into limitations of quantum data: \Cref{remark:interactive-verification} shows how combining \Cref{inf-theorem:covert-quantum-data-acquisition-v1} with \cite[Corollary 2]{caro2024interactiveproofsverifyingquantum} implies that any Boolean many-vs-one task solvable from polynomially many phase state copies is also solvable with polynomially many classical queries. Hence, in this setting, phase states offer at most a polynomial information-theoretic advantage over classical queries. 

The algorithm underlying \Cref{inf-theorem:covert-quantum-data-acquisition-v1} achieves the stronger of the two privacy notions from \Cref{inf-definition:covert-quantum-data-public-quantum-oracle-private-classical-queries}, but only against adversaries with a restricted view of the learner’s public-oracle queries.\footnote{The soundness guarantee achieved by the algorithm from \Cref{inf-theorem:covert-quantum-data-acquisition-v1} in fact holds against arbitrary adversaries, even if they observe and modify the quantum communication between the learner and the public quantum query oracle in both directions.}
Our next result removes this restriction, establishing covertness even against adversaries that can observe and modify the entire quantum transcript exchanged between the learner and the public quantum oracle, however, under the restriction that the adversary’s strategy is i.i.d.\ across queries and that it is ancilla-free (i.e., has no separate quantum memory of its own).

\begin{theorem}[Covert Verifiable Phase States Against i.i.d.\ Ancilla-free Adversaries---Informal]\label{inf-theorem:covert-quantum-data-acquisition-v2}

    There exists a computationally efficient $(m_{\mathrm{pri}}=\mathcal{O}(\poly(n, m, \log(1/\delta), 1/\varepsilon, 1/\delta_{\mathrm{leak}})), m_{\mathrm{pub}}=\mathcal{O}(\poly(n,m, \log(1/\delta), 1/\varepsilon,\allowbreak  1/\delta_{\mathrm{leak}})),\delta_c=0,\delta_s=\delta, \varepsilon, \delta_{\mathrm{leak}})$-covert verifiable procedure for producing $m$ quantum phase state copies for an arbitrary concept class $\mathcal{F}$ that satisfies \textbf{Privacy Version 2} against i.i.d.~ancilla-free adversaries that gain information with probability at least $\delta_{\mathrm{leak}}$.
\end{theorem}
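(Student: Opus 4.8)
The plan is to upgrade the unidirectional-adversary protocol of \Cref{inf-theorem:covert-quantum-data-acquisition-v1} so that its verification step becomes completely invisible on the learner$\leftrightarrow$oracle channel and hence cannot be sidestepped by an adversary who also tampers with the learner's outgoing queries. The key device is a private random phase twirl. The learner never sends the bare state $\ket{\mathrm{unif}}$; instead, to obtain one candidate phase-state copy it samples a fresh random function $r$ (from an efficient family rich enough that averaging over $r$ kills off-diagonal coherences), prepares the learner-known phase state $\ket{\psi_r^{\mathrm{Ph}}}=D_r\ket{\mathrm{unif}}$, queries $\mathsf{O}_{\mathrm{pub}}^{\mathrm{QPh}}(f)$ on it---nominally receiving $\ket{\psi_{f\oplus r}^{\mathrm{Ph}}}$---and applies $D_r$ to recover $\ket{\psi_f^{\mathrm{Ph}}}$. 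Over $N=\poly(n,m,\log(1/\delta),1/\varepsilon,1/\delta_{\mathrm{leak}})$ rounds, each comprising a pair of such oracle calls on fresh random phase states, a private coin decides per round whether the round yields a kept output copy or, instead, a verification: in the latter case the learner re-twirls the first call's response by a fresh $D_{r'}$, feeds it to the second call---which honestly returns the learner-known state $\ket{\psi_{r\oplus r'}^{\mathrm{Ph}}}$---and checks this, together with some auxiliary low-weight coordinate-parity checks derived from values of $f$ read off via $\mathsf{O}_{\mathrm{pri}}^{\mathrm{Mem}}$. Because every state sent to the oracle is a uniformly random phase state regardless of the round's secret role, and all verification is confined to the learner's private post-processing, an i.i.d.\ adversary is forced to apply one fixed channel $\mathcal N$ to every oracle call.

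Completeness is immediate: with no tampering every returned register is exactly as predicted, all checks pass with certainty, the learner accepts, and the output is $m$ noiseless phase-state copies, so $\delta_c=0$. For soundness, fix the adversary's channel $\mathcal N$ (round-independent by the i.i.d.\ assumption): the candidate copies are $D_r$-conjugates of $\mathcal N$ evaluated on random phase states, so averaging over the private twirl replaces $\mathcal N$ by its $D$-twirl $\bar{\mathcal N}$, which has a restricted (essentially phase-covariant) form; one shows that the re-query check, itself twirled, forces $\bar{\mathcal N}$---and then, using the per-round unaveraged checks, $\mathcal N$ itself---to act close to the honest phase-oracle response on the relevant states, so that if the $m$ output copies have fidelity below $1-\varepsilon$ with $\ket{\psi_f^{\mathrm{Ph}}}^{\otimes m}$ (hence, by multiplicativity of fidelity, average per-copy infidelity $\gtrsim\varepsilon/m$) a random check fails with probability $\gtrsim\poly(\varepsilon/m)$; Chernoff over $\Theta(N)$ checks then bounds the probability of accepting a bad output by $\delta$. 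For Privacy Version 2, suppose instead $\mathcal N$ extracts information about $F$ with probability $\ge\delta_{\mathrm{leak}}$ per round; since $\mathcal N$ is ancilla-free, such a gain must come from an actual measurement of the $f$-dependent response rather than being coherently deferred, so a (standard) information--disturbance tradeoff forces $\mathcal N$ to perturb that response by some $\eta=\eta(\delta_{\mathrm{leak}})>0$ in each round, and the same detection argument makes the learner reject except with probability $\le\delta$. Efficiency---$\poly(n,m,\log(1/\delta),1/\varepsilon,1/\delta_{\mathrm{leak}})$ rounds, private classical queries, and post-processing---follows by inspection, and the improved (polylogarithmic rather than polynomial in $1/\delta$, cf.\ \Cref{inf-theorem:covert-quantum-data-acquisition-v1}) dependence on $\delta$ comes from this amplification over $\Theta(N)$ independent check rounds.

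The main obstacle is the step shared by soundness and privacy: extracting an inverse-polynomial per-round detection probability from a test that must remain query-agnostic. The twirl handles agnosticity, but reconciling it with sensitivity is delicate---low-support trap states would be far more sensitive yet are not agnostic and are recognized and neutralized by a fixed channel, while a single copy of $\ket{\psi_f^{\mathrm{Ph}}}$ cannot be certified with $\poly(n)$ classical queries in isolation---so the argument genuinely relies on both restrictions at once: i.i.d.-ness to pin the adversary to one channel across \emph{all} oracle calls, including the learner's re-queries on twirled candidates (this is what lets the easily-verifiable ``return-to-a-known-phase-state'' check constrain the channel's action on the states the learner actually outputs), and ancilla-freeness to turn ``information gain'' into unavoidable ``per-round disturbance.'' The technical heart is the analysis of the twirled channel $\bar{\mathcal N}$ and the bootstrap from the twirl-averaged check condition back to a quantitative per-round guarantee with the correct polynomial dependence on $\varepsilon/m$ and on $\delta_{\mathrm{leak}}$, so that $N$ stays polynomial; getting these constants right is where the real work lies, and the no-go result proven in the same section shows that neither restriction on the adversary can simply be dropped.
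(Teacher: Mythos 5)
Your route diverges from the paper's in a way that re-introduces the very obstacle the paper sets up the entanglement-based protocol to avoid. You keep a \emph{classical} private mask (a random phase $D_r$ known to the learner), so the state sent to the oracle is a pure phase state $\ket{\psi_r^{\mathrm{Ph}}}$. The paper explicitly analyzes this style of masking (their Algorithm~\ref{alg:covert-public-phase-query-randomness}) and shows it is broken once the adversary can touch the learner$\to$oracle channel: for the linear mask family, measuring in the $\{\ket{\psi^{(r)}}\}_r$ basis recovers $r$ \emph{without any perturbation}, after which the adversary can read the oracle response, re-mask, and evade detection. You gesture at a ``family rich enough that averaging over $r$ kills off-diagonal coherences,'' but that condition governs covertness against a unidirectional eavesdropper, not detectability of a pre-oracle measurement; the paper's own discussion of richer masking families notes that this does not, by itself, overcome the worst-case impossibility of \Cref{obs:no-covert-qmem}. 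Your proposed remedy---the re-query check plus coordinate-parity checks, together with an analysis of the $D$-twirled channel $\bar{\mathcal{N}}$ and a ``standard information--disturbance tradeoff'' for ancilla-free channels---is where all the difficulty lives, and none of it is established: you need, and do not give, a quantitative per-round detection bound (something like a lower bound on rejection probability as a function of the adversary's information gain). You acknowledge this gap yourself (``getting these constants right is where the real work lies''), but that sentence is the theorem.

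The paper's proof uses a genuinely different mechanism. It switches to the entanglement-based mask (\Cref{alg:covert-public-phase-query-entanglement}), so the register sent to the oracle is one half of a maximally entangled pair, and the mask is quantum, not a private classical string. Then the argument is two clean observations: (\Cref{obs:ancilla-free-learns-nothing}) an ancilla-free, measurement-free adversary applies a unitary to a maximally mixed reduced state and therefore learns nothing about $F$; so to learn anything it must measure at least one qubit pre-oracle; and (\Cref{thm:measurement-fidelity-drop}) any such measurement reduces the Schmidt number of the joint $\mathsf{RQ}$ state to at most $2^{n-1}$, which by the Terhal--Horodecki bound forces fidelity with the $2n$-qubit entangled phase state $\ket{\Psi_f^{\mathrm{Ph}}}$ down to $\le 1/2$. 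This converts any information gain into a \emph{constant} per-round fidelity drop, which is then detected by running the (adaptive, i.i.d.) shadow-overlap certification of \Cref{theorem:shadow-overlap-copy-complexity} on the full $2n$-qubit state. The i.i.d.\ assumption lets the paper use the i.i.d.\ certification rather than the more expensive non-i.i.d.\ lifting, which is why the $\log(1/\delta)$ dependence appears here. So: different mask, different detection mechanism (Schmidt number vs.\ an unproven twirled-channel/information-disturbance argument), and a concrete per-round bound where your proposal has an IOU. Your high-level structure (i.i.d.\ pins the channel, ancilla-freeness forces disturbance, amplification gives $\log(1/\delta)$) is aligned with the intuition, but as written the proposal would not go through because the crucial per-round detection step is neither proved nor obviously provable for classical masks.
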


Together, \Cref{inf-theorem:covert-quantum-data-acquisition-v1,inf-theorem:covert-quantum-data-acquisition-v2} show that, under specified physical constraints on the adversary, efficient covert verifiable quantum data acquisition is achievable with only polynomial overhead in query complexity. We discuss the adversary models considered and possible extensions in \Cref{subsection:future-work}.

Finally, we demonstrate how the covert verifiable quantum data acquisition subroutines from \Cref{inf-theorem:covert-quantum-data-acquisition-v1,inf-theorem:covert-quantum-data-acquisition-v2} can be used to help a learner combine private classical and public quantum oracle access to solve a problem with exponential quantum advantage while maintaining privacy. Concretely, we do so for the Forrelation problem, following the phrasing in \Cref{inf-definition:covert-forrelation}.

\begin{corollary}[Covert Verifiable Forrelation against Unidirectional Adversaries---Informal]\label{inf-corollary:covert-forrelation-v1}
    There exists a computationally efficient $(m_{\mathrm{pri}}=
    \poly(n,\log(1/\delta))
    , m_{\mathrm{pub}}=
    \poly(n,\log(1/\delta))
    , \delta_c=\delta, \delta_s=\delta)$-covert verifiable procedure for solving the Forrelation problem that satisfies \textbf{Privacy Version 1} against arbitrary unidirectional 
    adversaries that observe and modify only the quantum communication from $\mathsf{O}_{\mathrm{pub}}^{\mathrm{QPh}}$ to the learner. 
\end{corollary}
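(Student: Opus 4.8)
The plan is to reduce to the phase-state acquisition primitive of \Cref{inf-theorem:covert-quantum-data-acquisition-v1} and then run the textbook Forrelation test on the states it hands back. The only elementary fact needed is that the Forrelation value is a phase-state overlap: a direct computation with $H^{\otimes n}\ket{y}=2^{-n/2}\sum_x(-1)^{x\cdot y}\ket{x}$ gives $\Phi(f,g)=\bra{\psi_f^{\mathrm{Ph}}}H^{\otimes n}\ket{\psi_g^{\mathrm{Ph}}}$. Hence a quantum client holding one copy each of $\ket{\psi_f^{\mathrm{Ph}}}$ and $\ket{\psi_g^{\mathrm{Ph}}}$ can apply $H^{\otimes n}$ to the $g$-copy and run a SWAP test against the $f$-copy, which accepts with probability $\tfrac12(1+\Phi(f,g)^2)$. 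Under the promise, case~(i) forces $\Phi(f,g)^2\le 10^{-4}$ (accept probability $\le 0.50005$) while case~(ii) forces $\Phi(f,g)^2\ge 9/25$ (accept probability $\ge 0.68$), so the two acceptance probabilities straddle a threshold $\theta\coloneqq 3/5$ with margin $\ge 0.08$ on either side. If the tested copies have fidelity at least $1-\varepsilon$ with the ideal phase states for a small enough constant $\varepsilon$—which shifts each SWAP-test acceptance probability by only $\O(\sqrt\varepsilon)$—then a threshold vote over $\O(\log(1/\delta))$ such tests decides Forrelation correctly except with probability $\le\delta$.

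The learner therefore instantiates \Cref{inf-theorem:covert-quantum-data-acquisition-v1} with the concept class $\mathcal{F}$ of all $n$-bit Boolean functions, a small constant target infidelity $\varepsilon$, and a \emph{constant} soundness error $\delta_s^{(0)}=1/3$. It runs $B=\O(\log(1/\delta))$ independent batches; batch $i$ invokes the primitive once on $f$ to obtain one copy of $\ket{\psi_f^{\mathrm{Ph}}}$ and once on $g$ to obtain one copy of $\ket{\psi_g^{\mathrm{Ph}}}$, then performs the SWAP test above. The learner \emph{rejects} if any of the $2B$ invocations rejects; otherwise it accepts and outputs ``(i)'' or ``(ii)'' according to whether the number of accepting SWAP tests is below or above $B\theta$. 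Each invocation costs $m_{\mathrm{pri}}^{(0)}=\poly(n,\log(1/\delta_s^{(0)}),1/\varepsilon)=\poly(n)$ private classical queries and $m_{\mathrm{pub}}^{(0)}=\poly(n,1/\delta_s^{(0)},1/\varepsilon)=\poly(n)$ public quantum queries (the $1/\delta_s^{(0)}$ factor is harmless as $\delta_s^{(0)}$ and $\varepsilon$ are constants), so summing over the $2B=\O(\log(1/\delta))$ invocations gives $m_{\mathrm{pri}},m_{\mathrm{pub}}=\poly(n,\log(1/\delta))$, and efficiency is inherited. Completeness with $\delta_c=\delta$ is immediate: with no adversary, $\delta_c^{(0)}=0$ means no invocation rejects, and by Hoeffding over the independent batches the $B$ honest SWAP outcomes—each biased by $\ge 0.08-\O(\sqrt\varepsilon)$ toward the correct side of $\theta$—land correctly except with probability $e^{-\Omega(B)}\le\delta$.

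For privacy (Version~1), the unidirectional adversary's entire view is the backward communication of the $2B$ invocations. \Cref{inf-theorem:covert-quantum-data-acquisition-v1} guarantees that after each invocation the joint state of the (classical, diagonal) function register and the adversary register factorizes with the function register maximally mixed; since the function register is diagonal, this actually forces the adversary's conditional state to be \emph{identical} for every value of the queried function, and this per-function invariance composes across the $2B$ invocations because a later invocation's operations never reference an earlier function. Hence the adversary's final state is independent of $(F,G)$ for \emph{any} input distribution, in particular the Forrelation mixture over cases (i)/(ii), so it cannot beat probability $1/2$ at deciding between them; this is Privacy Version~1 for the Forrelation phrasing.

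The remaining and main point is soundness with $\delta_s=\delta$ at public cost $\poly(n,\log(1/\delta))$, which is precisely why independent batching is needed (the bare primitive has $m_{\mathrm{pub}}^{(0)}$ scaling as $1/\delta_s^{(0)}$ and so cannot be run with inverse-exponential soundness directly). Call an invocation \emph{severely corrupted} if its output copy has fidelity below $1-\varepsilon$ with $\ket{\psi_f^{\mathrm{Ph}}}$ (resp.\ $\ket{\psi_g^{\mathrm{Ph}}}$); the primitive's soundness says each severely corrupted invocation rejects with probability $\ge 2/3$ conditioned on any prior history. If at least a constant fraction $cB$ of the invocations are severely corrupted, iterating this bound along the transcript shows the learner rejects except with probability $\le(1/3)^{cB}\le\delta$; otherwise fewer than $cB$ invocations are severely corrupted, so at most an $\O(c)$-fraction of the $B$ SWAP outcomes can be arbitrarily flipped while the rest run on states within fidelity $1-\varepsilon$ of ideal, and for $c,\varepsilon$ small enough the adversary moves the accept count by less than the margin around $B\theta$, keeping the output correct except with probability $e^{-\Omega(B)}\le\delta$. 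The delicate part—making the ``iterate the per-invocation soundness bound'' step rigorous against an \emph{adaptive} unidirectional adversary that may correlate batches (so the rejections of severely corrupted invocations are only sequentially lower-bounded, not literally independent), and checking that a ``mild'' (fidelity $\ge 1-\varepsilon$) corruption applied in every batch still cannot push the pooled SWAP statistics across $\theta$—is the main obstacle; it is exactly the amplification analysis carried out in \Cref{sec:non-iid-onlybackwards}. Since the primitive's soundness holds even against bidirectional adversaries, nothing is lost in the unidirectional restriction, and combining the four items yields the claimed $(\poly(n,\log(1/\delta)),\poly(n,\log(1/\delta)),\delta,\delta)$-covert verifiable Forrelation procedure with Privacy Version~1 against unidirectional adversaries. \hfill$\square$
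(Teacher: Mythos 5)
Your proposal follows essentially the same route as the paper's own proof, and the route is correct. The core observation ($\Phi(f,g)=\bra{\psi_f^{\mathrm{Ph}}}H^{\otimes n}\ket{\psi_g^{\mathrm{Ph}}}$, SWAP test with acceptance probability $\tfrac12(1+\Phi^2)$ giving a constant margin under the promise, margin degraded only by $\O(\sqrt\varepsilon)$ for slightly-off copies) is exactly what underlies the paper's use of the Forrelation algorithm in \Cref{thm:forrelation_upper_bound} plugged into \Cref{corollary:verifiable-binary-distinguishing-from-phase-states}. The repetition-plus-majority structure and the iterated per-round conditional soundness bound are the content of the paper's \Cref{alg:amp-verifiable-public-quantum-private-classical} and its analysis, which you correctly identify as the delicate step.

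Two presentational differences worth flagging, neither a change of substance. First, the paper folds $f,g$ into the single function $h(x,y)=f(x)\oplus g(y)$ with $\ket{\psi_h^{\mathrm{Ph}}}=\ket{\psi_f^{\mathrm{Ph}}}\otimes\ket{\psi_g^{\mathrm{Ph}}}$ and acquires one phase state per batch; you keep $f$ and $g$ separate and invoke the primitive twice per batch. The paper explicitly notes the two variants are interchangeable, so this is fine. Second, the paper does a single up-front preparation of $N\ell m$ copies and loops \ComplexityFont{CertifyState} over blocks (\Cref{alg:amp-verifiable-public-quantum-private-classical}), whereas you call the full Theorem-3 primitive $2B$ times with constant soundness $\delta_s^{(0)}=1/3$ each; the query counts come out the same up to constants.

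One small imprecision in your soundness sketch: the guarantee of \Cref{inf-theorem:covert-quantum-data-acquisition-v1} is a bound on the \emph{joint} probability $\Pr[\text{accept}\land\text{low-fidelity output}]\le 1/3$, not the conditional statement ``a severely corrupted invocation rejects with probability $\ge 2/3$.'' Conditioning on ``severely corrupted'' does not directly inherit the bound. The paper handles this by fixing a subset of indices, iterating the joint-probability bound along the transcript to get $(\text{const})^{|S|}$ per subset, and then summing a binomial tail; your ``$(1/3)^{cB}$'' should similarly be $\sum_{k\ge cB}\binom{B}{k}(1/3)^k$, which still decays as $e^{-\Omega(B)}\le\delta$. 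Since you explicitly defer the rigorous version of this step to \Cref{sec:non-iid-onlybackwards} (where \Cref{corollary:verifiable-binary-distinguishing-from-phase-states} carries it out exactly this way), this is a gap acknowledged rather than missed, and the final bound is unaffected.

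Your privacy composition argument is also in line with the paper's \Cref{obs:target-covert-query} and the proof of \Cref{theorem:covert-shadow-overlap}: since each backward message is exactly maximally mixed independently of $F$, and fresh private randomness is used per round, the factorization persists across all $2B$ invocations.

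Overall: correct, same approach, with one small technical imprecision that the paper's own \Cref{corollary:verifiable-binary-distinguishing-from-phase-states} proof fills in.
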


\begin{corollary}[Covert Verifiable Forrelation against i.i.d.\ Ancilla-free Adversaries---Informal]\label{inf-corollary:covert-forrelation-v2}
    There exists a computationally efficient $(m_{\mathrm{pri}}= m_{\mathrm{pub}}= \mathcal{O}(\poly(n, \log(1/\delta), 1/\delta_{\mathrm{leak}})), \delta_c=\delta, \delta_s=\delta, \delta_{\mathrm{leak}})$-covert verifiable procedure for solving the Forrelation problem that satisfies \textbf{Privacy Version 2} against i.i.d.~ancilla-free quantum adversaries that gain information with probability at least $\delta_{\mathrm{leak}}$.
\end{corollary}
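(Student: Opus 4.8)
The plan is to compose the covert verifiable quantum data acquisition subroutine of \Cref{inf-theorem:covert-quantum-data-acquisition-v2} with a phase-state-based test for Forrelation, and then amplify the decision externally. First I would record an elementary identity: for the phase states $\ket{\psi_f^{\mathrm{Ph}}}=2^{-n/2}\sum_x(-1)^{f(x)}\ket{x}$ and $\ket{\psi_g^{\mathrm{Ph}}}$ one has $\bra{\psi_g^{\mathrm{Ph}}}H^{\otimes n}\ket{\psi_f^{\mathrm{Ph}}}=\Phi(f,g)$, by expanding in the computational basis and using $\bra{x}H^{\otimes n}\ket{y}=2^{-n/2}(-1)^{x\cdot y}$. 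Hence a SWAP test between $H^{\otimes n}\ket{\psi_f^{\mathrm{Ph}}}$ and $\ket{\psi_g^{\mathrm{Ph}}}$ accepts with probability $(1+\Phi(f,g)^2)/2$, which is at most $\tfrac12(1+10^{-4})$ in case (i) of \Cref{inf-definition:covert-forrelation} and at least $\tfrac12(1+9/25)$ in case (ii)---a constant gap. Thus $\mathcal{O}(1)$ copies each of $\ket{\psi_f^{\mathrm{Ph}}}$ and $\ket{\psi_g^{\mathrm{Ph}}}$, together with SWAP tests and a threshold on the empirical acceptance frequency, decide Forrelation with constant confidence, and this procedure is computationally efficient.

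Second, I would build a single-shot covert verifiable Forrelation sub-protocol: invoke \Cref{inf-theorem:covert-quantum-data-acquisition-v2} twice---once with public oracle $\mathsf{O}^{\mathrm{QPh}}_{\mathrm{pub}}(f)$ and private oracle $\mathsf{O}^{\mathrm{Mem}}_{\mathrm{pri}}(f)$, once for $g$---each with copy count $m=\mathcal{O}(1)$, a small fixed constant infidelity $\varepsilon$, and a small fixed constant soundness target. If either invocation rejects, reject; otherwise run the SWAP-test decision on the returned (approximate) copies. Since the returned batch has global fidelity $\ge 1-\varepsilon$ with the ideal $\ket{\psi_f^{\mathrm{Ph}}}^{\otimes m}\otimes\ket{\psi_g^{\mathrm{Ph}}}^{\otimes m}$, its trace distance from the ideal is $\le\sqrt{2\varepsilon}$, so the SWAP-test statistic is perturbed by at most $\sqrt{2\varepsilon}$; choosing $\varepsilon$ below the constant gap margin keeps the single-shot decision correct with constant probability---both when the adversary is honest (using $\delta_c=0$ from \Cref{inf-theorem:covert-quantum-data-acquisition-v2}) and, conditioned on acceptance, when it tampers (using the constant soundness). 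Each invocation costs $\mathrm{poly}(n,1/\delta_{\mathrm{leak}})$ public and private queries, so the single-shot sub-protocol does too.

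Third, amplification: repeat the single-shot sub-protocol with fresh randomness $N=\Theta(\log(1/\delta))$ times, accept iff every repetition accepts, and output the majority decision. A Chernoff bound over the $N$ repetitions turns the constant completeness and soundness errors into $\delta$, so $\delta_c=\delta_s=\delta$, while the total query counts become $m_{\mathrm{pri}}=m_{\mathrm{pub}}=N\cdot\mathrm{poly}(n,1/\delta_{\mathrm{leak}})=\mathrm{poly}(n,\log(1/\delta),1/\delta_{\mathrm{leak}})$, as claimed. For \textbf{Privacy Version 2}: if an i.i.d.\ ancilla-free adversary has probability $\ge\delta_{\mathrm{leak}}$ of extracting information about $(F,G)$ in every interaction, then in particular it does so in every round of each single acquisition invocation; since within a round querying $\mathsf{O}^{\mathrm{QPh}}_{\mathrm{pub}}(f)$ the adversary acts only on $f$-dependent registers (and symmetrically for $g$), joint information about $(F,G)$ there entails information about the relevant single function, so the Version-2 guarantee of \Cref{inf-theorem:covert-quantum-data-acquisition-v2} forces each acquisition invocation---hence the whole protocol---to reject except with probability $\le\delta_s$.

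The main obstacle I expect is keeping the query complexity at $\mathrm{poly}(\log(1/\delta))$ rather than $\mathrm{poly}(1/\delta)$: demanding inverse-polynomial infidelity directly from the acquisition subroutine (so that one $\Theta(\log(1/\delta))$-copy SWAP test already reaches confidence $1-\delta$) would force $1/\varepsilon=\mathrm{poly}(1/\delta)$ and blow up $m_{\mathrm{pub}}$. The fix---running the acquisition subroutine only at constant $\varepsilon$ and constant soundness and instead amplifying the \emph{decision} by outer repetition with an ``accept iff all accept'' rule---must be checked to be compatible with the soundness composition (a tampered-with repetition either self-rejects or yields an $\varepsilon$-close batch) and with the Version-2 privacy composition (per-round information extraction survives restriction both to a single repetition and to a single function's query channel, using the i.i.d.\ ancilla-free structure). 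Verifying this compatibility, together with the reduction of ``joint information about $(F,G)$'' to ``information about a single function,'' is the technically delicate part; the phase-state identity and the SWAP-test analysis are routine.
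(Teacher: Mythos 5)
Your approach is essentially the same as the paper's: run the covert verifiable phase-state acquisition (\Cref{inf-theorem:covert-quantum-data-acquisition-v2}) at constant target fidelity, feed the acquired copies to the $\Theta(1)$-copy SWAP-test Forrelation algorithm of \cite{aaronson2015forrelation}, and amplify by repetition with an abort-if-any-round-rejects rule plus a majority vote on the decision bits. The identity $\bra{\psi_g^{\mathrm{Ph}}}H^{\otimes n}\ket{\psi_f^{\mathrm{Ph}}}=\Phi(f,g)$ and the Fuchs--van de Graaf perturbation analysis you use are exactly the ingredients behind the paper's Remark on robustness (\Cref{remark:robustness-assumption}) combined with \Cref{thm:forrelation_upper_bound}.

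The one structural deviation is that you invoke the acquisition subroutine twice, once for $f$ and once for $g$, whereas the paper folds the pair into a single function $h(x,y)=f(x)\oplus g(y)$ with $\ket{\psi_h^{\mathrm{Ph}}}=\ket{\psi_f^{\mathrm{Ph}}}\otimes\ket{\psi_g^{\mathrm{Ph}}}$ and runs a single acquisition against $\mathsf{O}^{\mathrm{QPh}}_{\mathrm{pub}}(h)$. The paper explicitly acknowledges your variant as an equally valid route; the $h$-formulation is adopted only to apply \Cref{corollary:verifiable-binary-distinguishing-from-phase-states-ancilla-free} verbatim and, importantly, to sidestep exactly the two subtleties you flag. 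First, with a single function the Version~2 privacy hypothesis---``extracts information about $(F,G)$ in every interaction''---is trivially identical to ``extracts information about $H$ in every interaction,'' since $H$ determines the decision; in your two-oracle variant you must argue that the per-round hypothesis on $(F,G)$ implies the per-round hypothesis on $F$ for the $f$-rounds (true, because the $f$-rounds touch only $F$-dependent registers, but it does need to be said). Second, with a single acquisition the i.i.d.\ assumption on the adversary is invoked once; in your variant you should check that it is preserved when restricting to the $f$-rounds alone and to the $g$-rounds alone (it is, provided the two acquisitions are run as blocks or the adversary's per-round channel is the same regardless of which oracle is queried). Both points are fine and you are right to flag them as the delicate part; neither is a gap, only extra bookkeeping the paper avoids by construction. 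The soundness composition across repetitions also follows the same GOOD/BAD conditioning pattern as the paper's proof of \Cref{corollary:verifiable-binary-distinguishing-from-phase-states}, which you are implicitly relying on.
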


As seen in \Cref{inf-corollary:covert-forrelation-v2}, the scaling of 
$m_{\mathrm{pub}}$ and $m_{\mathrm{pri}}$ depends on the information-extraction 
probability of the adversary that the protocol is designed to protect against. For the precise scaling, refer to \Cref{sec:iid-both-directions}.
In particular, we obtain efficient covert quantum data acquisition and thus 
efficient covert Forrelation, against adversaries that gain information with probability at least inverse-polynomial in $n$.

\Cref{inf-corollary:covert-forrelation-v1,inf-corollary:covert-forrelation-v2} realize the promise of \Cref{inf-definition:covert-forrelation} against two different classes of adversaries, allowing the learner to make use of the public quantum oracle in conjunction with the private classical oracle to efficiently solve the Forrelation problem---thus overcoming the exponential query complexity lower bound that they would face when using only their private oracle---without leaking information to an adversarial eavesdropper.
While here we have chosen to present a task-specific version of covert quantum data processing tailored to Forrelation, our algorithmic framework extends to other problems. For instance, we also give analogues of \Cref{inf-corollary:covert-forrelation-v1,inf-corollary:covert-forrelation-v2} for efficiently solving Simon's problem in a covert verifiable manner. Thus, we have shown how to make two of the most fundamental exponential quantum advantages in the query complexity model covert and verifiable.

\subsubsection{Algorithmic Ideas}

\paragraph{Covert learning from examples.}
Both \Cref{inftheorem:public-example-private-sq-parity,inftheorem:public-qmeasexample-private-qsq-quadratic} 
rely on a simple observation about the structure of the respective procedures for learning from random/quantum measurement examples: As we collect more examples, the space of consistent concepts (parities in the case of \Cref{inftheorem:public-example-private-sq-parity}, quadratic functions in the case of \Cref{inftheorem:public-qmeasexample-private-qsq-quadratic}) shrinks. Once the space has been shrunken down sufficiently, the private SQ/QSQ oracle access becomes sufficient to query-efficiently identify the unknown function among the remaining possibilities. 


When learning a quadratic function from $\mathsf{O}^{\mathrm{QMeasEx}}$-access in \Cref{inftheorem:public-qmeasexample-private-qsq-quadratic}, the procedure from \cite{arunachalam2023role} first uses $\mathcal{O}(n+\log(1/\delta_c))$ many two-copy Bell basis measurements to learn all off-diagonal elements of the unknown matrix $A$. Knowing these off-diagonal elements, one can reduce the remaining task of learning the diagonal entries to a parity learning problem, which can be solved using only $\mathcal{O}(n)$ QSQs with constant tolerance by estimating the influence of each variable \cite{arunachalam2020quantumstatisticalquerylearning}. For our privacy considerations, it is important to note that the Bell basis measurement outcomes are completely independent of the diagonal entries of $A$. Thus, an adversary who only sees those measurement outcomes cannot do better than guess each diagonal entry uniformly at random, which implies $2^{-n}$-privacy in the language of \Cref{inf-definition:covert-learning-public-quantum-examples-private-qsq}.

\paragraph{Covert quantum learning: Verifiability.}
To achieve covert verifiable quantum data acquisition as in \Cref{inf-definition:covert-quantum-data-public-quantum-oracle-private-classical-queries}, we first focus on verifiability, i.e., on completeness and soundness. As our quantum data comes in the form of phase states, this reduces to the task of certifying whether an unknown $(mn)$-qubit quantum state $\rho$ has sufficiently large overlap with an $m$-copy phase state $\ket{\psi_f^{\mathrm{Ph}}}^{\otimes m}$, assuming that we are given classical query access to $f$.
In the i.i.d.~case, that is, for $\rho=\rho_{0}^{\otimes m}$, the shadow overlap procedure from \cite{huang2025certifying-nature} can be used to achieve this, accepting $\rho=\ket{\psi_f^{\mathrm{Ph}}}\bra{\psi_f^{\mathrm{Ph}}}^{\otimes m}$ with certainty and rejecting any $\rho$ with $\bra{\psi_f^{\mathrm{Ph}}}^{\otimes m}\rho\ket{\psi_f^{\mathrm{Ph}}}^{\otimes m}<1-\varepsilon$ with probability $\geq 1-\delta$. To do this, the procedure uses either non-adaptive single-qubit Pauli measurements on $\mathcal{O}\left(n^2 \log(1/\delta)/\varepsilon^2\right)$ many i.i.d.~copies or adaptive single-qubit measurements on $\mathcal{O}\left(n \log(1/\delta)/\varepsilon\right)$ many i.i.d.~copies.

To go beyond the i.i.d.~case, we invoke the toolkit developed in \cite{fawzi2024learning}. 
Concretely, \cite[Theorem 3]{fawzi2024learning} lifts non-adaptive quantum algorithms from the i.i.d.~setting to general inputs, and we apply it to the shadow overlap protocol with non-adaptive single-qubit Pauli measurements. 
This comes at the cost of a $\poly(m,n,1/\delta,1/\varepsilon)$ overhead in the number queries to $\mathsf{O}_{\mathrm{pub}}^{\mathrm{QPh}}(f)$.\footnote{One may also apply \cite[Theorem 1]{fawzi2024learning} to lift the adaptive shadow overlap protocol to general inputs. However, this incurs a factor exponential-in-$n$ in the number of public quantum oracle queries, which is undesirable for our purposes.}  
To achieve this combination of the results from \cite{huang2025certifying-nature, fawzi2024learning}, we observe that \cite[Algorithm 1]{fawzi2024learning} can be applied even if the underlying i.i.d.~algorithm accesses a classical membership query oracle $\mathsf{O}_{\mathrm{pri}}^{\mathrm{Mem}}(f)$, and that the non-adaptive shadow overlap adapted from \cite{huang2025certifying-nature} to our setting uses queries $\mathsf{O}_{\mathrm{pri}}^{\mathrm{Mem}}(f)$ only during the classical post-processing of the measurement outcomes.
This way, we ensure completeness and soundness from \Cref{inf-definition:covert-quantum-data-public-quantum-oracle-private-classical-queries}, which, as we discuss in \Cref{sec:public-quantum-oracle-private-classical-queries}, even without covertness, has ramifications for the interactive verification of quantum learning \cite{caro2024verification, ma2024classicalverificationquantumlearning, caro2024interactiveproofsverifyingquantum}.

\paragraph{Covert quantum learning: Covertness against Unidirectional Adversaries.}
Next, we turn our attention to the privacy requirements. First, we give two simple strategies---the first using private randomness, the second using entanglement---for the learner to achieve covertness against unidirectional adversaries that ``see'' only the quantum communication back from the oracle to the learner. 
Note that $[Z^r, \mathsf{O}_{\mathrm{pub}}^{\mathrm{QPh}}(f)]=0$ holds for all $r\in\{0,1\}^n$ and $f:\{0,1\}^n\to\{0,1\}$.\footnote{Here, we write $Z^r = \bigotimes_{i=1}^n Z^{r_i}$ for $r\in\{0,1\}^n$.}
Thus, a learner can first privately sample a uniformly random $r$, query $\mathsf{O}_{\mathrm{pub}}^{\mathrm{QPh}}(f)$ on the state $\ket{\psi^{(r)}} = Z^{r} H^{\otimes n} \ket{0^n} = \frac{1}{\sqrt{2^n}}\sum_{x\in\{0,1\}^n} (-1)^{r\cdot x} \ket{x}$, receive the state $\ket{\psi_{f}^{(r)}} = \mathsf{O}_{\mathrm{pub}}^{\mathrm{QPh}}(f)\ket{\psi^{(r)}} = \mathsf{O}_{\mathrm{pub}}^{\mathrm{QPh}}(f)Z^{r} H^{\otimes n} \ket{0^n} = Z^{r} \mathsf{O}_{\mathrm{pub}}^{\mathrm{QPh}}(f)H^{\otimes n} \ket{0^n} = Z^{r}\ket{\psi_f^{\mathrm{Ph}}}$, and undo $Z^r$ to obtain one copy of $\ket{\psi_f^{\mathrm{Ph}}}$.
The quantum state as seen by an eavesdropper who does not know $r$ and who only has access to the state returned by $\mathsf{O}_{\mathrm{pub}}^{\mathrm{QPh}}$ is the average $\mathbb{E}_{r\sim\{0,1\}^n}\left[\ket{\psi_{f}^{(r)}}\bra{\psi_{f}^{(r)}}\right]=\mathds{1}_2^{\otimes n}/2^n$. As this state is independent of $f$, the eavesdropper cannot gain any information about $f$.

This protocol admits a variant with entanglement instead of randomness: The learner begins by preparing the state $\ket{\Psi}_{\mathsf{RS}}=\frac{1}{\sqrt{2^n}}\sum_{r\in\{0,1\}^n}\ket{r}_\mathsf{R}\otimes \ket{\psi^{(r)}}_{\mathsf{S}}$, with a private $\mathsf{R}$-register.
Querying $\mathsf{O}_{\mathrm{pub}}^{\mathrm{QPh}}(f)$ on the $\mathsf{S}$-register produces $\ket{\Psi_f^{\mathrm{Ph}}}_{\mathsf{RS}}=\frac{1}{\sqrt{2^n}}\sum_{r\in\{0,1\}^n}\ket{r}_{\mathsf{R}}\otimes \ket{\psi_f^{(r)}}_{\mathsf{S}}$, which after applying suitable controlled-$Z$ gates 
becomes $H^{\otimes n}\ket{0^n}_{\mathsf{R}}\otimes \ket{\psi_f^{\mathrm{Ph}}}
_{\mathsf{S}}$. 
Thus, the learner has obtained a copy of $\ket{\psi_f^{\mathrm{Ph}}}$ in the ${\mathsf{S}}$-register. If we again consider an eavesdropper who only has access to the state returned by $\mathsf{O}_{\mathrm{pub}}^{\mathrm{QPh}}(f)$, then this reduced state (only on the $\mathsf{S}$-register) is the same as before, so we have the same privacy guarantee.

As $\ket{\psi_f^{(r)}}$ and $\ket{\Psi_f^{\mathrm{Ph}}}_{\mathsf{RS}}$ are phase states, and since $\mathsf{O}_{\mathrm{pub}}^{\mathrm{QPh}}(f)$-access (together with knowledge of $r$) suffices to simulate classical query access to the functions determining the phases in those states, the above covert query protocols can be directly integrated with our phase state certification through non-i.i.d.~shadow overlap estimation. This yields covert verifiable quantum data acquisition against unidirectional quantum adversaries as stated in \Cref{inf-theorem:covert-quantum-data-acquisition-v1}.

\paragraph{Covert quantum learning: Covertness against i.i.d.~Ancilla-free Adversaries.}

Bidirectional adversaries, who sees the state $\ket{\psi^{(r)}}$ sent from the learner to $\mathsf{O}_{\mathrm{pub}}^{\mathrm{QPh}}(f)$, immediately break the security of the first protocol (with private randomness): Measuring in the Hadamard basis $\{H^{\otimes n}\ket{r}=\ket{\psi^{(r)}}\}_{r\in\{0,1\}^n}$, the adversary learns the randomness $r$ without perturbing the state.
Performing the same measurement on the $\mathsf{S}$-subsystem of $\ket{\Psi}_{\mathsf{RS}}$ and then feeding the post-measurement state to $\mathsf{O}_{\mathrm{pub}}^{\mathrm{QPh}}(f)$ will similarly allow the adversarial eavesdropper to gain information about $f$. 
Note, though, that while this measurement was non-destructive in the randomness-based protocol, it is highly destructive in the entanglement-based protocol, completely and, since the adversarial eavesdropper does not have access to the learner's private $\mathsf{R}$-register, irrevocably breaking the entanglement between the $\mathsf{R}$- and $\mathsf{S}$-registers in the original state. Building on this intuition, we next develop a version of the entanglement-based protocol that allows the learner to detect eavesdropping by ancilla-free i.i.d.~adversaries.

Consider an \emph{ancilla-free} adversary $\mathcal{A}$ that may act on the $S$ register (the last $n$ qubits of $\ket{\Psi}_{\mathsf{RS}}$) immediately before the public phase oracle $\mathsf{O}_{\mathrm{pub}}^{\mathrm{QPh}}(f)$.
By our privacy analysis for unidirectional adversaries, the reduced state on $S$ that $\mathcal{A}$ observes is maximally mixed (i.e., $\rho_\mathsf{S}=\mathbb{I}_\mathsf{S}/2^n$).
Since $\mathcal{A}$ is ancilla-free, any pre-oracle operation it can implement without measuring is a unitary. Moreover, because $\mathsf{O}_{\mathrm{pub}}^{\mathrm{QPh}}(f)$ is also unitary and as unitaries leave the maximally mixed state invariant, if $\mathcal{A}$ does not measure then its marginal remains $f$-independent both before and after the oracle, and thus, it gains no information about $f$. Consequently, to obtain any information about $f$ from the learner's public query, an ancilla-free
adversary must perform a \emph{measurement} on at least one qubit of $S$ prior to the oracle
call \footnote{As the adversary has no quantum workspace beyond $\mathsf{S}$, its admissible operations are adaptive circuits on $\mathsf{S}$ consisting only of unitaries and projective measurements with classical feed-forward. In particular, it cannot implement general POVMs via Naimark dilation or swap any part of $\mathsf{S}$ into private quantum memory.}.

While $\ket{\Psi}_{\mathsf{RS}}$ is maximally entangled and thus has a Schmidt rank equal to the maximal value $2^n$, we show that the post-measurement state after the adversary's measurement has Schmidt number at most $2^{n-1}$. 
Any consequent quantum processing performed by the adversary, including querying $\mathsf{O}_{\mathrm{pub}}^{\mathrm{QPh}}(f)$, acts only locally on the ${\mathsf{S}}$-system, therefore it cannot increase the Schmidt number. 
Thus, the joint $\mathsf{RS}$-system state that the learner holds after receiving the $\mathsf{S}$-system back from the adversary has fidelity at most $\frac{1}{2}$ with the maximally entangled state $\ket{\Psi_f^{\mathrm{Ph}}}_{\mathsf{RS}}$ by \cite[Lemma 1]{terhal2000schmidt}.
If we consider an i.i.d.~adversary that extracts non-zero information with probability $\delta_{\mathrm{leak}}$, this adversary must measure at least one qubit with probability $\delta_{\mathrm{leak}}$ and hence, the learner receives i.i.d.~copies of a state that has fidelity at most $1 - (\delta_{\mathrm{leak}}/2)$ with the phase state $\ket{\Psi_f^{\mathrm{Ph}}}_{\mathsf{RS}}$. This low fidelity can be detected using shadow overlap estimation, now applied for the $(2n)$-qubit phase state $\ket{\Psi_f^{\mathrm{Ph}}}_{\mathsf{RS}}$. Thus, the learner can use shadow overlap estimation on top of the entanglement-based covert queries to $\mathsf{O}_{\mathrm{pub}}^{\mathrm{QPh}}(f)$ to simultaneously ensure completeness, soundness, and the weaker version of privacy from \Cref{inf-definition:covert-quantum-data-public-quantum-oracle-private-classical-queries} against i.i.d.~ancilla-free adversaries, yielding \Cref{inf-theorem:covert-quantum-data-acquisition-v2}. 

\subsection{Related Work}\label{subsct:related-work}

\paragraph{Covert Learning With Multiple Servers.} 
The core question of our work is whether there exist meaningful notions of covert (verifiable) learning that do not require any computational hardness assumptions, which are inherent to the \cite{canetti2021covert} model.
The recent work \cite{holmgren2023locally}, building on a setting proposed in \cite{ishai2019cryptographic}, defines a ``multi-oracle'' version of covert learning, which also answers this question affirmatively.
Here, one assumes that there are $k>1$ many public oracles, and the learner's goal is to prevent anyone who eavesdrops on the learner's interactions with any $k-1$ of those oracles from learning.
Such a ``multi-oracle'' setting is inspired by multi-server Private Information Retrieval protocols \cite{chor1998private}, which are known to admit information-theoretically secure protocols.
In contrast, in our single-oracle setting, it is not \textit{a priori} clear that one can define interesting notions of covert learning without vastly weakening the computational power of the adversary.

\paragraph{Verifiable Learning.} 
The interactive verification of Probably Approximately Correct (PAC) learning was introduced as PAC verification and formalized via interactive proof systems in \cite{goldwasser2021interactive}.
The follow-up works \cite{mutreja2022pac-verification, gur2024power} further developed this framework and gave new and improved PAC verification algorithms for learning and extending the framework towards verification of general statistical algorithms. 
Recently, interactive proofs for verifying different kinds of quantum learning (and testing) have been studied in \cite{caro2024verification, ma2024classicalverificationquantumlearning, caro2024interactiveproofsverifyingquantum}. 
Our completeness and soundness requirements are very much inspired by and similar in  spirit to those that appear in interactive verification of learning.
Privacy requirements, however, are not commonly considered in interactive proofs for learning, and are unique to covert notions of (verifiable) learning. Importantly, as we argued in \Cref{sbsct:new_models}, meaningful formulations of covertness for learning from quantum data must be integrated with aspects of verifiability, due to the destructive nature of measurements in quantum physics.

\paragraph{Private Information Retrieval and Differential Privacy.}
Several inequivalent notions of ``privacy'' arise in query access to data; here we contrast two standard ones with covertness.
In \emph{private information retrieval} (PIR) \cite{chor1998private}, a client retrieves an item from a database without revealing \emph{which index} was requested to the server(s). Thus, the hidden object is the query index. PIR does not aim to hide the client's algorithmic strategy beyond the index, nor to protect the contents of the items themselves.
In \emph{differential privacy} (DP) \cite{dwork2008differential}, the goal is the stability of the output distribution of a query under changes to a single item's value; informally, no single item's value can be inferred much better from the output response than without it. Here, the hidden object is any item's value and it's contribution to the query. 
By contrast, in our notion of covertness, the protected objects are the learner's \emph{strategy} (e.g., learning algorithm or hypothesis class) and/or the \emph{target} (the unknown object being learned), against an eavesdropper observing public queries. The privacy guarantee is therefore orthogonal to PIR and DP.

\subsection{Discussion and Outlook}\label{subsection:future-work}

Our work extends the scope of (verifiable) learning with privacy guarantees to quantum learning, without relying on computational hardness assumptions. We formalized new covert learning models tailored to quantum learning settings by treating \emph{strategy-} and \emph{target-covertness} separately, and we demonstrated their utility by devising protocols that meet these requirements. First, we observed that the classical shadows paradigm for estimating observables is inherently strategy-covert, since data collection is task-agnostic. Second, we showed that quantum learning algorithms following a two-step template that narrows the hypothesis class, such as quadratic function learning, stabilizer state learning, and Pauli shadow tomography, can be made target-covert by performing the second step privately. Finally, we provided procedures for verifiable, covert acquisition of quantum phase states from a public oracle against two physically restricted classes of adversaries. We employed these procedures as a subroutine to solve Forrelation and Simon's problem covertly while maintaining the exponential quantum advantage. We conclude by highlighting some open questions raised by our work.

\paragraph{New adversary models.} 
Our information-theoretic no-go results for covert verifiable acquisition of quantum phase states, together with positive results under physically restricted adversaries, mirror a familiar cryptographic pattern: information-theoretic impossibility can be overcome by imposing physical or computational constraints (e.g., in private information retrieval and quantum bit commitment). In particular, our \emph{ancilla-free} adversary model was studied in \cite{divincenzo2004security}, where a type of asymptotic security for quantum bit commitment was achieved under this restriction. Moreover, \emph{cheat-sensitive} quantum bit commitment was developed in \cite{hardy2004cheat,aharonov2000quantum,buhrman2008possibility}, providing explicit trade-offs between information gain and detectability. This is analogous in spirit to our Privacy Version~2 in \Cref{inf-definition:covert-quantum-data-public-quantum-oracle-private-classical-queries}: Any nontrivial information gain necessarily induces a fidelity drop that our certification detects. To extend beyond the adversaries considered here, two natural directions are: (i) relaxing the ancilla-free condition to bounded/noisy-storage models \cite{damgaard2007tight,wehner2008composable,damgaard2008cryptography,wehner2008cryptography}; and (ii) adopting standard computational assumptions. For the latter, a concrete path is to relate our covert learning primitives to established cryptographic tasks, analogous to \cite{canetti2021covert} connecting their model to key exchange, to determine the minimal assumptions required. 

Along with the adversary models discussed above, as \Cref{inf-theorem:covert-quantum-data-acquisition-v2} currently guarantees covertness only against ancilla-free i.i.d.~adversaries, it is interesting to ask whether this result can be extended to non-i.i.d.~adversaries that have strictly less than $n$ qubits of quantum memory. Our present information-disturbance trade-off is discrete: any nonzero information gain forces fidelity to drop to $1/2$. It would be desirable to develop smooth, fine-grained trade-offs that relate the amount of information extracted (e.g., advantage or mutual information) to the resulting fidelity loss, thereby capturing a broader range of scenarios.

\paragraph{New kinds of data.}
Our covert verifiable quantum learning algorithms have focused on quantum data in the form of (uniform) quantum example or phase states. While this form of quantum data is relevant in a variety of scenarios, there are other important forms of quantum data. 
On the one hand, non-uniform superpositions become relevant when considering learning w.r.t.~non-uniform distributions \cite{kanade2019learningDNFs, caro2020quantum} or in distribution learning and testing \cite{gilyen2020distributional}, and even noisy quantum examples can still be a powerful resource for learning \cite{cross2015quantumlearning, grilo2019LWE, caro2020quantum}.
On the other hand, more general quantum states (or channels) arise naturally as the quantum data to be processed in quantum physics experiments \cite{huang2022quantum-advantage}.
Thus, future work will have to extend covert verifiability to learning from these other forms of quantum data. For this, the state certification protocol from  \cite{gupta2025singlequbitmeasurementssufficecertify} may be relevant as it allows for efficient certification of all quantum states albeit using non-adaptive single-qubit measurements (thus, under our current approach, we can only hope for results against i.i.d.~adversaries) and using a different classical oracle than \cite{huang2024certifying-arXiv}. To develop covert protocols beyond phase states against non-i.i.d.~adversaries, it will be interesting to explore non-adaptive results from \cite{huang2024certifying-arXiv} for other classes of quantum states.

On the classical side, we have already demonstrated that some of our covert learning definitions are also meaningful for classical learning scenarios. Taking inspiration from our mainly quantum notions of covert learning, can we develop further classical notions of covert learning and, more importantly, non-trivial covert learning algorithms that satisfy those definitions?
For instance, can we identify additional classical learning algorithms that fit the two-stage template underlying \Cref{inftheorem:public-example-private-sq-parity,inftheorem:public-qmeasexample-private-qsq-quadratic} and can thus be endowed with target-covertness?

\paragraph{Algorithmic improvements.}
Our covert verifiable quantum data acquisition can in principle be used to obtain covert verifiable versions of any quantum algorithm that relies on quantum examples or phase states. However, our method incurs a polynomial complexity overhead, which leads to vacuous guarantees for problems with a (low-degree) polynomial quantum advantage such as Bernstein-Vazirani---in the sense that the learner could have achieved similar or better query complexity by only querying their private classical oracle, ignoring the public quantum oracle altogether. This raises the question: Can covert verifiable quantum data acquisition be achieved with a smaller or even no complexity overhead?

There are also niches between our impossibility results and our current algorithms. Concretely, our current no-go theorem for information-theoretic covert verifiable acquisition of quantum phase states (presented in \Cref{sec:public-quantum-oracle-private-classical-queries}) applies to function classes that are learnable with high probability from a \emph{single} phase-state copy. Can this restriction be removed? Alternatively, are there function classes that are not one-copy-learnable for which the no-go can be circumvented and information-theoretic covert learning is possible?

In the direction of strategy-covertness, while we have phrased a general compilation definition for (Q)SQs, it would be interesting to extend this definition to and instantiate it for other classical or quantum scenario. That is, one would have to find other privacy-preserving query compilations that reduce $A$-type queries to $B$-type queries which allow us to estimate $A$-query responses from $B$-query responses with small overhead, while ensuring the $B$-query transcript leaks only negligible information about the original $A$-queries.

\paragraph{Applications.}
As mentioned in the introduction, beyond theoretical interest, the design of covert learning algorithms in the \cite{canetti2021covert} model is motivated by applications to polynomial-time-undetectable model stealing algorithms (see \cite{canetti2021covert, karchmer2023theoretical} and references therein). Namely, covert learning algorithms in the \cite{canetti2021covert} model yield model stealing attacks that cannot be efficiently defended against by analyzing only the query distribution. Potential implications of our alternative formulations of covert learning to (possibly quantum) model stealing without relying on hardness assumptions remain to be explored.

\section{Preliminaries}

\subsection{Notation}

We briefly summarize the conventions we use throughout the paper. We adopt Dirac bra–ket notation for quantum states; for instance, $\ket{\psi}$ denotes a pure state, $\bra{\psi}$ its conjugate transpose, and $\ket{\psi}\bra{\psi}$ the associated density operator. We use symbols $\rho, \sigma$ to denote mixed states. For Hilbert space subsystems (registers) we use sans-serif letters such as $\mathsf{R},\mathsf{S},\mathsf{F}$. 

Script letters $\mathcal{F} and \mathcal{S}$  will denote classes of functions and states, respectively, while $\mathcal{A}$ denotes an adversary or a class of adversaries based on the context. $\mathcal{E}, \mathcal{D}, \mathcal{L}, A$, and $A'$ are used for denoting algorithms.  $H=\tfrac{1}{\sqrt{2}}\begin{pmatrix}1&1\\[2pt]1&-1\end{pmatrix}$ denotes the Hadamard gate, while $Z=\begin{pmatrix}1&0\\[2pt]0&-1\end{pmatrix}$ denotes the Pauli-$Z$ gate. We write $H^{\otimes n}$ for $n$-fold Hadamards and, for $r\in\{0,1\}^n$, $Z^r \coloneqq \bigotimes_{i=1}^n Z^{r_i}$. $f, g, h$ refer to specific Boolean functions, while $F, G, H$ refer to random variables sampled from a class of functions. We use $\mathrm{poly}(\cdot)$ for polynomially bounded quantities and standard asymptotic notation $\mathcal{O}(\cdot), \tilde{\mathcal{O}}(\cdot)$ for upper bounds. 

Oracles will be denoted by $\mathsf{O}$, with a subscript specifying whether they are \emph{public} ($\mathrm{pub}$) or \emph{private} ($\mathrm{pri}$), and a superscript specifying the type of oracle. The argument of the oracle is the underlying function or state. For example, $\mathsf{O}^{\mathrm{QMeasEx}}_{\mathrm{pri}}(\rho)$ denotes a private quantum measurement example oracle acting for a (mixed) state $\rho$, and $\mathsf{O}^{\mathrm{QPh}}_{\mathrm{pub}}(f)$ is a public quantum phase oracle for the function $f$.

We freely switch between using ``adversary'' or ``eavesdropper'' to describe the party that views communication between the learner and the oracle, and against whom we aim to provide privacy, depending on whether we are in a cryptographic or interactive verification context, respectively. 

\subsection{Oracles}\label{subsec:oracles}

Oracle access means black-box query access to a function or a state, with costs measured in query complexity. Intuitively, one can imagine delegating an experiment to a trusted but opaque service that returns outcomes associated with the function or the state, without revealing any internal structure. 
We now give an overview of the different oracles that appear in our work.

\subsubsection{The (Quantum) Statistical Query Model}

\textbf{Classical SQ model.} The statistical query (SQ) model, introduced by Kearns~\cite{kearns1998efficient}, is a restricted learning framework where an algorithm learns by requesting statistical properties of the data distribution rather than observing individual examples. 
Let $f: \{0,1\}^n \to \{0,1\}$ be the unknown target concept, and let the underlying distribution over the instance space $\{0,1\}^n$ be uniform. In the SQ model, the learner has access to an oracle that provides estimates of expectation values for user-specified query functions.

\begin{definition}[Statistical Query Oracle]
\label{def:sq_oracle}
A \emph{statistical query (SQ) oracle} for a target concept $f$ is an oracle, denoted by $\mathsf{O}^{\mathrm{SQ}}(f)$, that takes as input a query pair $(q, \tau)$. This pair consists of:
\begin{itemize}
    \item A \emph{query function} $q: \{0,1\}^n \times \{0,1\} \to \{0,1\}$.
    \item A \emph{tolerance parameter} $\tau > 0$.
\end{itemize}
The oracle returns a value $v$ that is guaranteed to be a $\tau$-accurate estimate of the true expectation of $q$ over the uniform distribution of labeled examples induced by $f$. Formally, the returned value $v$ must satisfy:
\begin{equation}
    \left| v - \mathbb{E}_{x}[q(x, f(x))] \right| \leq \tau.
\end{equation}
\end{definition}

Intuitively, the SQ oracle allows the learner to gather aggregate information about the target concept without revealing specific data points. The complexity of an SQ algorithm is measured by the number of queries it makes, the complexity of its query functions, and the required tolerances. Any SQ oracle can be simulated using a random example oracle; by drawing $\mathcal{O}(\tau^{-2})$ examples, one can use the empirical mean to answer a single query $(q, \tau)$ with high constant probability.

\textbf{Quantum SQ model.} \cite{arunachalam2020quantumstatisticalquerylearning} introduced the quantum analogue of the classical SQ model, the \emph{quantum statistical query} (QSQ) model. Let $\rho$ be an unknown (mixed) quantum state. In the QSQ model, the learner has oracle access to estimates of expectation values of user-specified observables with respect to $\rho$.

\begin{definition}[Quantum Statistical Query Oracle]
\label{def:qsq_oracle}
A \emph{quantum statistical query (QSQ) oracle} for a $n$-qubit quantum state $\rho$ is an oracle, denoted by $\mathsf{O}^{\mathrm{QSQ}}(\rho)$, that takes as input a query pair $(M, \tau)$. This pair consists of:
\begin{itemize}
    \item A bounded (Hermitian) \emph{observable} $M \in \left((\mathbb{C}^2)^{\otimes n} \times (\mathbb{C}^2)^{\otimes n}\right)$ with $\lVert M \rVert\leq 1$
     \item A \emph{tolerance parameter} $\tau > 0$.
\end{itemize}
The oracle returns a value $v$ that is guaranteed to be a $\tau$-accurate estimate of the true expectation of observable $M$ on $\rho$. Formally, the returned value $v$ must satisfy:
\begin{equation}
    \left| v - \tr{[M\rho]} \right| \leq \tau.
\end{equation}
\end{definition}

As in the SQ model, the cost of a QSQ algorithm is measured by the number of queries, the complexity of the queried observables (e.g., the circuit size needed to implement a measurement), and the required tolerances. Moreover, a single QSQ query can be efficiently simulated with high constant probability from $\mathcal{O}(\tau^{-2})$ i.i.d.\ copies of $\rho$. 

As discussed in \Cref{sbsct:new_models}, we can speak of QSQ access to Boolean functions $f:\{0,1\}^n\to\{0,1\}$ instead of states by associating each $f$ to a state $\rho_f$ and querying $\mathsf{O}^{\mathrm{QSQ}}(\rho_f)$ as in \Cref{def:qsq_oracle}. Two standard choices are:
(i) the \emph{quantum example state} on $n+1$ qubits, $\ket{\psi_f^{\mathrm{Ex}}}=2^{-n/2}\sum_{x\in\{0,1\}^n}\ket{x,f(x)}$ with $\rho_f=\ket{\psi_f^{\mathrm{Ex}}}\bra{\psi_f^{\mathrm{Ex}}}$; and
(ii) the \emph{quantum phase state} on $n$ qubits, $\ket{\psi_f^{\mathrm{Ph}}}=2^{-n/2}\sum_{x\in\{0,1\}^n}(-1)^{f(x)}\ket{x}$ with $\rho_f=\ket{\psi_f^{\mathrm{Ph}}}\bra{\psi_f^{\mathrm{Ph}}}$.

A key QSQ subroutine that we will use later in \Cref{subsec:public-QMeasEX-private-QSQ} is the following lemma.

\begin{lemma}[{\cite[Lemma 4.1]{arunachalam2020quantumstatisticalquerylearning}}]
Let $f : \{0,1\}^n \to \{-1,1\}$ and 
$\ket{\psi_f} = \frac{1}{\sqrt{2^n}} \sum_{x \in \{0,1\}^n} \ket{x, f(x)}$.
There is a procedure that, on input $T \subseteq \{0,1\}^n$, outputs a $\tau$-estimate of 
$\sum_{S \in T} \hat{f}(S)^2$
using one (efficient) QSQ query on $\ket{\psi_f}\bra{\psi_f}$ with tolerance $\tau$.
\end{lemma}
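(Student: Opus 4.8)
The plan is to realize the estimate via \emph{Fourier sampling}, packaged as a single bounded observable so that it costs exactly one QSQ query. First I would identify $\sum_{S\in T}\hat f(S)^2$ with the acceptance probability of the following two-outcome measurement on $\ket{\psi_f}$ (identifying the label value in $\{-1,1\}$ with a bit in the obvious way): apply $H^{\otimes n}$ to the input register, project the label qubit onto $\ket{-}$, and accept iff the resulting $n$-bit string lies in $T$. Writing $\Pi_T \coloneqq \sum_{S\in T}\ketbra{S}{S}$, this is the observable
\[
M \;\coloneqq\; \bigl(H^{\otimes n}\,\Pi_T\,H^{\otimes n}\bigr)\otimes\ketbra{-}{-},
\]
which is an orthogonal projector, hence $\norm{M}\leq 1$ and $M$ is a valid QSQ observable.

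Next I would compute $\tr[M\rho_f]$ by a short Fourier-analytic calculation: expanding $(H^{\otimes n}\otimes\id)\ket{\psi_f}$ and collecting the branch carrying label $\ket{-}$, the amplitude on $\ket{S}\ket{-}$ works out to $\hat f(S)/\sqrt2$, so $\tr[M\rho_f]=\tfrac12\sum_{S\in T}\hat f(S)^2$. Consequently, a single QSQ query $(M,\tau/2)$ returns a value $v$ with $\abs{2v-\sum_{S\in T}\hat f(S)^2}\leq\tau$, i.e.\ $2v$ is the claimed $\tau$-estimate, obtained from one query. (If one instead adopts the $n$-qubit phase-state encoding $\ket{\psi_f^{\mathrm{Ph}}}$, the analogous observable is simply $M=H^{\otimes n}\Pi_T H^{\otimes n}$, whose expectation on $\ketbra{\psi_f^{\mathrm{Ph}}}{\psi_f^{\mathrm{Ph}}}$ equals $\sum_{S\in T}\hat f(S)^2$ exactly, so the constant-factor tolerance adjustment is unnecessary.)

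Finally, for the ``efficient'' qualifier I would note that $M$ is implemented by the constant-depth circuit $H^{\otimes(n+1)}$ followed by a projective measurement whose accepting subspace is $\{(S,1):S\in T\}$; this requires only an efficient membership test for $T$, which we assume is supplied alongside $T$, so the query is efficient. I expect the only real point of care — the main ``obstacle'' — to be the interplay between the operator-norm bound $\norm{M}\leq1$ and the normalization constants: for the example-state encoding an unavoidable factor of $2$ appears (indeed, no $f$-independent unitary can concentrate $\ket{\psi_f}$ onto $\sum_S\hat f(S)\ket{S}$ while fixing the label register), and one must absorb it into the tolerance; the phase-state encoding sidesteps this. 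Everything else is routine.
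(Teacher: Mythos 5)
Your Fourier-sampling construction is correct, and it is essentially how Lemma~4.1 of Arunachalam et al.\ is proved: the projector $M=(H^{\otimes n}\Pi_T H^{\otimes n})\otimes\ketbra{-}{-}$ satisfies $\tr[M\rho_f]=\tfrac12\sum_{S\in T}\hat f(S)^2$, and the ``efficient'' qualifier reduces to an efficient membership test for $T$, exactly as you note. The one inaccuracy is the claim that the factor of~$2$ is ``unavoidable'' for the example-state encoding and must therefore be absorbed into a halved tolerance. The QSQ model allows any Hermitian observable with operator norm at most~$1$, not just projectors, so you can instead query $2M-\id$, which still has operator norm~$1$ (its eigenvalues are $\pm1$ since $M$ has eigenvalues $\{0,1\}$) and satisfies $\tr[(2M-\id)\rho_f]=\sum_{S\in T}\hat f(S)^2-1$. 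A single query $(2M-\id,\tau)$ followed by adding $1$ then gives a $\tau$-estimate of $\sum_{S\in T}\hat f(S)^2$ with tolerance exactly $\tau$, matching the lemma's wording verbatim. Your ``query at $\tau/2$, then double'' variant is correct in substance; it just does not literally reproduce the stated tolerance, whereas the affine shift does so at no cost.
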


In particular, taking $T_i\coloneqq\{S\subseteq\{0, 1\}^n: s_i = 1\}$ yields a $\tau$-estimate of the influence for each variable $i$. Hence, parity can be learned with $O(n)$ QSQ queries (e.g., using $\tau<1/2$ to distinguish $\mathrm{Inf}_i(f)\in\{0,1\}$ for each $i$).

\subsubsection{Quantum Measurement Example Query Model}

In the QSQ setting, we can imagine that the $n$-qubit state $\rho$ resides on a remote server: the learner specifies an observable and a tolerance, and receives an estimate. Thus, only aggregate statistics are revealed.
We also consider a stronger model in which, instead of expectation estimates, the learner requests the \emph{raw measurement outcomes} from a specified quantum experiment (e.g., a POVM or a circuit followed by computational-basis readout), and performs all post-processing locally. Both models capture the scenario where the user lacks quantum hardware but can access a remote quantum device; they differ in the granularity of access to experimental outcomes.

\begin{definition}[Quantum Measurement Example Oracle]
\label{def:qmeasex_oracle}
A \emph{quantum measurement example oracle} for a $n$-qubit quantum state $\rho$ is an oracle, denoted by $\mathsf{O}^{\mathrm{QMeasEx}}(\rho)$, that takes as input a $m$-copy POVM $\{E_i\}_{i=1}^k$. 
The oracle outputs a sample drawn from the probability distribution $\{p_i=\tr[E_i\rho^{\otimes m}]\}_{i=1}^k$.
\end{definition}

To obtain a meaningful notion of complexity, we weight an $m$-copy POVM query by a factor of $m$ when evaluating the overall query complexity of an algorithm with access to $\mathsf{O}^{\mathrm{QMeasEx}}$. For ease of notation, if such an algorithm makes $\ell$ many queries to $\mathsf{O}^{\mathrm{QMeasEx}}(\rho)$ where the $i$th query is an $m_i$-copy query and where $\sum_{i=1}^\ell m_i\leq m_{\mathrm{pri}}$, then we will say the algorithm ``makes at most $m_{\mathrm{pri}}$ queries to $\mathsf{O}^{\mathrm{QMeasEx}}(\rho)$''.

Akin to the QSQ model above, we can apply the QMeasEx oracle to functions instead of quantum states by associating each function with a quantum state. Moreover, given a QSQ query $(M,\tau)$, we can simulate it via QMeasEx queries by measuring the projective measurement associated with $M$ and averaging the outcomes. Using $\mathcal{O}(\tau^{-2})$ 1-copy QMeasEx queries yields an estimate of the QSQ query with high constant probability.

\subsubsection{(Quantum) Example and Membership Query Model}

\textbf{Classical example and membership oracles.} In contrast to the statistical query model, other standard learning models provide access to individual data points. We briefly define the two most common oracles for a target concept $f: \{0,1\}^n \to \{0,1\}$.

\begin{definition}[Example and Membership Query Oracles]
\label{def:example_membership_oracles}
Let $f$ be the unknown target concept.
\begin{itemize}
    \item A \emph{Random Example Oracle}, denoted $\mathsf{O}^{\mathrm{Ex}}(f)$, takes no input. When queried, it draws an instance $x$ from a fixed (often uniform) distribution $\mathcal{D}$ over $\{0,1\}^n$ and returns the labeled pair $(x, f(x))$.

    \item A \emph{Membership Query Oracle}, denoted $\mathsf{O}^{\mathrm{Mem}}(f)$, takes an instance $x \in \{0,1\}^n$ as input. When queried with $x$, it returns the corresponding label $f(x)$.
\end{itemize}
\end{definition}

\textbf{Quantum example, membership and phase oracles.}
For all quantum oracles considered above, even though the underlying object is a quantum state, the queries and responses are all classical. In this section, we discuss oracles where the queries and responses are themselves quantum states. 

The simplest oracle in this setting, is the quantum example oracle. 

\begin{definition}[Quantum Example Query Oracle]
\label{def:q_example_oracle}
Let $\rho$ be the unknown quantum state. A \emph{Quantum 
Example Oracle}, denoted $\mathsf{O}^{\mathrm{QEx}}(\rho)$, takes no input and when queried, returns $\rho$.
\end{definition}

Akin to the QSQ and QMeasEx models above, we can apply the QEx oracle applied to functions instead of quantum states by associating each function with a quantum state. 

The following two quantum oracles are defined using an underlying Boolean function $f$.

\begin{definition}[Quantum Membership Query Oracle] \label{def:q_example_oracles}
 Let $f: \{0, 1\}^n \to \{0, 1\}^m$ be the unknown target concept. A \emph{Quantum Membership Oracle}, denoted $\mathsf{O}^{\mathrm{QMem}}(f)$, takes as input a quantum $(n+m)$-qubit state $\rho$. When queried, it acts as $\ket{x, y} \mapsto \ket{x, y \oplus f(x)}$ on computational basis states (and extended linearly to the whole state space). 
\end{definition}

\begin{definition}[Quantum Phase Query Oracle]
\label{def:q_phase_oracles}
    Let $f: \{0, 1\}^n \to \{0, 1\}$ be the unknown target concept. A \emph{Quantum Phase Oracle}, denoted $\mathsf{O}^{\mathrm{QPh}}(f)$, takes input a quantum $n$-qubit state $\rho$. When queried, it acts as $\ket{x} \mapsto (-1)^{f(x)}\ket{x}$ on computational basis states (and extended linearly to the whole state space). 
\end{definition}

In particular, when applied to the state $2^{-n/2}\sum_{x\in\{0, 1\}^n}\ket{x, 0^m}$, $\mathsf{O}^{\mathrm{QMem}}(f)$ returns the quantum example state associated with $f$, $\ket{\psi_f^{\mathrm{Ex}}}=2^{-n/2}\sum_{x\in\{0,1\}^n}\ket{x,f(x)}$. Similarly, when applied to the uniform superposition state $2^{-n/2}\sum_{x\in\{0, 1\}^n}\ket{x}$, $\mathsf{O}^{\mathrm{QPh}}(f)$ returns the quantum phase state associated with $f$, $\ket{\psi_f^{\mathrm{Ph}}}=2^{-n/2}\sum_{x\in\{0,1\}^n}(-1)^{f(x)}\ket{x}$. These states and oracles have been extensively studied in black-box quantum advantages \cite{deutsch1992rapid, bernstein1997complexity, simon1997power, aaronson2015forrelation} and quantum computational learning literature \cite{bshouty1995learning-DNF, arunachalam2017survey, arunachalam2022optimal}. See \Cref{remark:phase-states-vs-quantum-examples,remark:covert-quantum-examples-from-QMem} for a comparison of these two oracles and conditions under which they can be used interchangeably.

\section{Covert Quantum Statistical Queries}\label{sec:covert-(q)sqs}

\subsection{Warmup: Covert Low-Degree Polynomial Statistical Queries}\label{subsec:covert-poly-sq}

Let us begin with a warmup for the quantum setting by defining a notion of covert learning with a statistical query oracle. 

In this model, we focus on \textbf{strategy-covertness}. We consider a learner who has access to a public statistical query (SQ) oracle for an unknown concept, where the entire \emph{learning transcript}---the sequence of oracle queries and oracle responses---is observed by an adversary. In this setting, the learner's goal is not to hide the underlying concept itself (which may be known to the adversary, e.g., a cloud provider who owns the data), but rather to conceal their \textbf{learning strategy}, embodied by the specific set of queries they wish to make.

To achieve this, we introduce a framework for covertly making statistical queries. The core idea is a compilation scheme consisting of a pair of efficient algorithms, an encoder~($\mathcal{E}$) and a decoder~($\mathcal{D}$). The learner's true statistical queries of interest are first passed to the encoder, which transforms them into a set of encoded statistical queries. These encoded queries are designed to be sent to the public oracle. After receiving the oracle's responses, the learner uses the decoder to process these responses and recover accurate answers to their original, private queries.

A protocol in this model should satisfy three properties.
\begin{itemize}
    \item \textbf{Completeness}. The decoded answers are valid responses to the learner's original SQs, preserving the utility of the oracle.
    \item \textbf{Privacy}. The distribution of encoded queries is statistically indistinguishable from a distribution generated by a simulator that has no knowledge of the learner's original queries. This ensures that the public transcript reveals nothing about the learnering strategy.
    \item \textbf{Efficiency}. The protocol incurs only a polynomial overhead in complexity.
\end{itemize}

Now, formally:

\begin{definition}[Covert Statistical Query Model
]\label{definition:covert-sq-model}
    Let $Q\subseteq \mathbb{R}^{\mathcal{X}\times\mathcal{Y}}$ be a class of query functions.
    Let $\mathcal{E}: (Q \times (0,1))^{m} \to (Q\times (0,1))^{m^{(e)}}$, $(\vec{q},\vec{\tau}) \mapsto (\vec{q}^{(e)}, \vec{\tau}^{(e)})$, and $\mathcal{D}: (Q \times (0,1))^{m}\times (Q\times (0,1)\times\mathbb{R})^{m^{(e)}} \rightarrow \mathbb{R}^m$ denote (possibly randomized) encoding and decoding algorithms, respectively.
    The pair $(\mathcal{E}, \mathcal{D})$ is an $(m^{(e)}, \varepsilon, \delta_c, \gamma_p)$-statistically covert SQ algorithm for $m$ many $Q$-queries, if the following properties hold.
    \begin{itemize}
        \item \textbf{$(\varepsilon, \delta_c)$-Completeness:} For any $(\vec{q}, \vec{\tau})=((q_i, \tau_i))_{i=1}^m\in (Q\times (0,1))^m$,
        \begin{equation}
            \Pr_{\mathcal{E},\mathcal{D}}\left[ \norm{\mathcal{D}((\vec{q},\vec{\tau}),\mathcal{E}(\vec{q},\vec{\tau}),\mathsf{O}^{\mathrm{SQ}}(\mathcal{E}(\vec{q},\vec{\tau}))) - \mathsf{O}^{\mathrm{SQ}}(\vec{q}, \vec{\tau}) }_{\infty} > \varepsilon \right] \leq \delta_c \, .
        \end{equation}
        In this notation, two quantifiers are implicit; the requirement is to be read as: With probability at least $1-\delta_c$ over the internal randomness of $\mathcal{E}$ and $\mathcal{D}$, for any valid SQ oracle responses to the $m^{(e)}$ many encoded queries, the $i$th entry of the decoded vector should be $\varepsilon$-close to some valid SQ oracle response to the query $(q_i,\tau_i)$ for every $i$.
    
        \item \textbf{$\gamma_p$-Statistical Privacy:} There exists a simulator $S$ such that, for any $(\vec{q}, \vec{\tau})=((q_i, \tau_i))_{i=1}^m\in (Q\times (0,1))^m$, and for any adversary $A: (Q\times (0,1))^{m^{(e)}}\to  \{0,1\}$,
        \begin{equation}
            \left|\Pr_{\mathcal{E}} \big[ A(\mathcal{E}(\vec{q}, \vec{\tau})) = 1  \big] - \Pr_{S} \big[ A(S(1^{m^{(e)}})) = 1  \big] \right|\le \gamma_p \, .
        \end{equation}
        If $\gamma_p=0$, we speak of perfect privacy.

        \item \textbf{Query-Efficiency:} The number of encoded SQs satisfies $m^{(e)}\leq\mathcal{O}(\poly(n, m, \varepsilon^{-1}, \delta_c^{-1}, \gamma_p^{-1}))$, and the encoded tolerances satisfy $\tau_i^{(e)}\geq\Omega(\poly(\tau_i,n^{-1}, m^{-1},\varepsilon, \delta_c, \gamma_p))$.
    
    \end{itemize}
\end{definition}
One may also add a time-efficiency requirement to the above definition, requiring the runtime of both $\mathcal{E}$ and $\mathcal{D}$ to be bounded by $\poly(n, m, \varepsilon^{-1}, \delta_c^{-1}, \gamma_p^{-1})$.

Let us illustrate \Cref{definition:covert-sq-model} with a concrete example. Consider the class of $n$-variate polynomials of degree at most $d$, which is spanned by a basis of $N = \binom{n+d}{d}$ monomials $\{M_i\}_{i=1}^N$. Suppose a learner has a secret hypothesis that their analysis is best conducted using a sparse polynomial $q = \sum_{i \in S} c_i M_i$, where $S \subseteq \{1, \dots, N\}$ is a secret set of $k = |S|$ monomial indices, and $\{c_i\}_{i \in S}$ are the corresponding secret coefficients. The learner's objective is to estimate the expectation of $q$ with a final error of at most $\delta$.

\paragraph{A Direct (but Partially Insecure) Protocol.}
The most straightforward approach is to directly compute the target expectation, $\mathbb{E}[q] = c \cdot m$, where $m$ is the unknown moment vector whose components are the true expectations of the basis monomials. To do this, the learner queries the SQ oracle for only the $k$ components of $m$ corresponding to the monomials in their support set $S$, and then multiplies by the private coefficients. This protocol is partially secure: it successfully hides the secret coefficients $\{c_i\}$, but it fails to hide all of the learner's strategy, as the learning transcript explicitly reveals the support set $S$.

\paragraph{A Secure (but Inefficient) Protocol.}
To fix this privacy leak, the learner can query \emph{all} $N$ basis monomials. This protocol achieves perfect privacy, as the public query set is fixed and independent of the learner's strategy. The problem, however, is efficiency. The query complexity is now $N$, which can be prohibitive for high-dimensional problems where $k \ll N$.

\paragraph{An Efficient and Secure Protocol via Sketching.}
To achieve both privacy and efficiency simultaneously, we can use sketching via random projections. The theoretical foundation for this approach is the Johnson-Lindenstrauss (JL) Lemma, which guarantees that dot products are approximately preserved in a low-dimensional random projection.

\begin{lemma}[Johnson-Lindenstrauss Lemma, Dot Product Version]
\label{lem:jl-dot-product}
For any $\varepsilon_0>0, \delta_c < 1$, let $m^{(e)} = O(\log(1/\delta_c)/\varepsilon_0^2)$. Let $R$ be an $m^{(e)} \times N$ random matrix with entries drawn i.i.d.\ from $\mathcal{N}(0, 1/m^{(e)})$. Then for any two vectors $u, v \in \mathbb{R}^N$, with probability at least $1-\delta_c$ over the choice of $R$:
\begin{equation}
    | (Ru) \cdot (Rv) - u \cdot v | \le \varepsilon_0 ||u||_2 ||v||_2
\end{equation}
A direct consequence for a single vector $u$ is that the same event implies approximate norm preservation: $||Ru||_2^2 \le (1+\varepsilon_0)||u||_2^2$.
\end{lemma}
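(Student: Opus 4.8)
The plan is to reduce the dot-product claim to the familiar single-vector norm-preservation guarantee of a Gaussian sketch, and to obtain the latter from a chi-squared tail bound. As preliminary reductions: by homogeneity one may assume $u$ and $v$ are unit vectors, since replacing $u,v$ by $u/\|u\|_2$ and $v/\|v\|_2$ rescales both sides of the desired inequality by $\|u\|_2\|v\|_2$; thus it suffices to prove $\bigl|(Ru)\cdot(Rv)-u\cdot v\bigr|\le\varepsilon_0$ whenever $\|u\|_2=\|v\|_2=1$. One may also assume $\varepsilon_0\le 1$, as for $\varepsilon_0>1$ the hypothesis on $m^{(e)}$ is only weaker and the statement follows from the $\varepsilon_0=1$ case.

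Next I would establish the key concentration estimate: for any fixed unit $w\in\mathbb{R}^N$ and any $\eta\in(0,1)$, $\Pr_R\bigl[\,\bigl|\,\|Rw\|_2^2-1\,\bigr|>\eta\,\bigr]\le 2\exp(-c\,\eta^2 m^{(e)})$ for an absolute constant $c>0$. This holds because the $j$-th coordinate of $Rw$ equals $\langle R_j,w\rangle\sim\mathcal{N}(0,\|w\|_2^2/m^{(e)})=\mathcal{N}(0,1/m^{(e)})$ (with $R_j$ the $j$-th row of $R$), and these are independent over $j$; hence $m^{(e)}\,\|Rw\|_2^2=\sum_{j=1}^{m^{(e)}} g_j^2$ with i.i.d.\ standard Gaussians $g_j$, i.e.\ a $\chi^2$ random variable with $m^{(e)}$ degrees of freedom. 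The displayed bound is then the standard $\chi^2$ concentration inequality (e.g.\ Laurent--Massart), which can also be derived directly by a Chernoff argument using $\mathbb{E}[e^{\lambda g^2}]=(1-2\lambda)^{-1/2}$ for $\lambda<1/2$.

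Finally I would combine this with polarization. Since $R$ is linear, $(Ru)\cdot(Rv)=\tfrac{1}{4}\bigl(\|R(u+v)\|_2^2-\|R(u-v)\|_2^2\bigr)$ and likewise $u\cdot v=\tfrac{1}{4}\bigl(\|u+v\|_2^2-\|u-v\|_2^2\bigr)$. Apply the single-vector estimate to $u+v$ and to $u-v$ (after normalizing; the degenerate case $u\pm v=0$ is trivial as both sides vanish) with deviation $\eta=\varepsilon_0$ and per-event failure probability $\delta_c/2$; this is precisely what forces $m^{(e)}=\Theta(\log(1/\delta_c)/\varepsilon_0^2)$, choosing the absolute constant so that $2\exp(-c\varepsilon_0^2 m^{(e)})\le\delta_c/2$. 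A union bound then gives, with probability $\ge 1-\delta_c$, that $\bigl|\,\|R(u\pm v)\|_2^2-\|u\pm v\|_2^2\,\bigr|\le\varepsilon_0\|u\pm v\|_2^2$ for both signs simultaneously. On this event, the triangle inequality together with the parallelogram law $\|u+v\|_2^2+\|u-v\|_2^2=2(\|u\|_2^2+\|v\|_2^2)=4$ yields $\bigl|(Ru)\cdot(Rv)-u\cdot v\bigr|\le\tfrac{\varepsilon_0}{4}\bigl(\|u+v\|_2^2+\|u-v\|_2^2\bigr)=\varepsilon_0$, which is the claim for unit vectors. The asserted single-vector consequence is immediate by taking $v=u$: on the same good event $\|Ru\|_2^2\le(1+\varepsilon_0)\|u\|_2^2$.

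There is no genuine conceptual obstacle here; the only care required is constant-chasing — pinning down the hidden constant in $m^{(e)}=O(\log(1/\delta_c)/\varepsilon_0^2)$ so that, after the factor-$2$ union bound over $u+v$ and $u-v$, the total failure probability stays at most $\delta_c$. Notably, nothing about $N$ beyond its finiteness enters the argument, which is exactly the dimension-free feature of JL-type sketching that we rely on.
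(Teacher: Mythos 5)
The paper states this lemma as a standard result and gives no proof of its own, so there is nothing in the text for your argument to diverge from. Your proof is correct and is the classical one: reduce to unit vectors by homogeneity, observe that $m^{(e)}\|Rw\|_2^2$ is $\chi^2$ with $m^{(e)}$ degrees of freedom so that $\|Rw\|_2^2$ concentrates around $1$ at rate $\exp(-\Theta(\eta^2 m^{(e)}))$, then transfer to dot products via the polarization identity $(Ru)\cdot(Rv)=\tfrac14\left(\|R(u+v)\|_2^2-\|R(u-v)\|_2^2\right)$ together with the parallelogram law and a union bound over the two normalized vectors $u\pm v$. The constant-chasing you flag is exactly right: requiring $2\exp(-c\varepsilon_0^2 m^{(e)})\le\delta_c/2$ per event is what fixes $m^{(e)}=\Theta(\log(1/\delta_c)/\varepsilon_0^2)$. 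One small presentational caveat: the lemma's phrase ``the same event implies approximate norm preservation'' is loose as written, since the good event is defined per pair $(u,v)$; your resolution (instantiate at $v=u$, equivalently include the pair $(u,u)$ in the union bound) matches how the paper itself actually invokes the lemma in the proof of Proposition~\ref{prop:covert-poly-sq-sketching}, where it conditions on the event holding for both $(c,m)$ and $(c,c)$. The dismissal of $\varepsilon_0>1$ is a bit breezy --- the hypothesis gives a \emph{smaller} $m^{(e)}$ there, not a larger one, so it is not literally a corollary of the $\varepsilon_0=1$ case --- but this regime is vacuous for the application and the $O(\cdot)$ in the statement hides a free constant, so nothing of substance is lost.
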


Leveraging this guarantee, we can directly estimate the dot product $c \cdot m$ by computing a corresponding dot product in the low-dimensional "sketched" space.

\begin{proposition}[Covert SQ via Random Projections]
\label{prop:covert-poly-sq-sketching}
Let $B_c$ and $B_m$ be public bounds on $||c||_2$ and $||m||_2$ respectively. For any query $(q, \delta)$, there exists a $(m^{(e)}, \varepsilon, \delta_c, \gamma_p)$-statistically covert SQ algorithm with perfect privacy ($\gamma_p=0$) that achieves:
\begin{itemize}
    \item \textbf{Efficiency:} The number of encoded queries is $m^{(e)} = O(\log(1/\delta_c)/\varepsilon_0^2)$, where $\varepsilon_0 = \delta/(2B_c B_m)$.
    \item \textbf{Completeness:} The final estimate is within $\delta$ of the true value (i.e., $\varepsilon = \delta$) with success probability at least $1-\delta_c$.
\end{itemize}
\end{proposition}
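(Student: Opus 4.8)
The plan is to realize the encoder--decoder pair of \Cref{definition:covert-sq-model} through a single Gaussian Johnson--Lindenstrauss sketch, and then verify completeness, privacy, and efficiency one at a time. Fix $\varepsilon_0\coloneqq\delta/(2B_cB_m)$ and $m^{(e)}\coloneqq O(\log(1/\delta_c)/\varepsilon_0^2)$ as in \Cref{lem:jl-dot-product}. On input the single query $(q,\delta)$, where $q=\sum_{i\in S}c_iM_i$, the encoder $\mathcal{E}$ samples $R\in\mathbb{R}^{m^{(e)}\times N}$ with i.i.d.\ $\mathcal{N}(0,1/m^{(e)})$ entries and outputs the $m^{(e)}$ encoded queries $q^{(e)}_j\coloneqq\sum_{i=1}^N R_{ji}M_i$ --- each a degree-$\le d$ polynomial, hence a legitimate element of $Q$ --- all carrying a common tolerance $\tau^{(e)}\coloneqq\delta/(2\sqrt{m^{(e)}(1+\varepsilon_0)}\,B_c)$. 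Given the oracle responses $\vec v^{(e)}=(v^{(e)}_1,\dots,v^{(e)}_{m^{(e)}})$, the decoder $\mathcal{D}$, which reads $R$ off the encoded query functions and the coefficient vector $c$ off $q$, outputs $\hat v\coloneqq(Rc)\cdot\vec v^{(e)}=\sum_{j=1}^{m^{(e)}}(Rc)_j\,v^{(e)}_j$.

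Privacy comes for free, and in the strong sense $\gamma_p=0$: the pair $(\vec q^{(e)},\vec\tau^{(e)})$ produced by $\mathcal{E}$ is a function of $R$ together with quantities that are all \emph{public} --- the target accuracy $\delta$, the norm bounds $B_c,B_m$, the sketch dimension $m^{(e)}$, and the fixed monomial basis $\{M_i\}$ --- so it is statistically independent of the secret support $S$ and the secret coefficients $\{c_i\}_{i\in S}$. The simulator $S(1^{m^{(e)}})$ therefore simply draws a fresh $R'$ from the same Gaussian ensemble and returns the associated rows and the tolerance $\tau^{(e)}$; its output is identically distributed to $\mathcal{E}(q,\delta)$, so no adversary can distinguish the two.

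For completeness, let $m\in\mathbb{R}^N$ be the true moment vector, $m_i=\mathbb{E}_x[M_i(x,f(x))]$, so that the true SQ answer to $(q,\delta)$ is $c\cdot m$ and the true expectation of the $j$th encoded query is $(Rm)_j$; hence any valid oracle response obeys $|v^{(e)}_j-(Rm)_j|\le\tau^{(e)}$ for every $j$. Applying \Cref{lem:jl-dot-product} with $u=c$, $v=m$ yields, on a single event of probability at least $1-\delta_c$ over $R$, both $|(Rc)\cdot(Rm)-c\cdot m|\le\varepsilon_0\|c\|_2\|m\|_2\le\varepsilon_0 B_cB_m=\delta/2$ and $\|Rc\|_2^2\le(1+\varepsilon_0)\|c\|_2^2\le(1+\varepsilon_0)B_c^2$. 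Conditioning on that event, Hölder's inequality together with $\|Rc\|_1\le\sqrt{m^{(e)}}\,\|Rc\|_2$ gives $|(Rc)\cdot(\vec v^{(e)}-Rm)|\le\|Rc\|_1\,\tau^{(e)}\le\sqrt{m^{(e)}(1+\varepsilon_0)}\,B_c\,\tau^{(e)}=\delta/2$, and then the triangle inequality yields $|\hat v-c\cdot m|\le\delta/2+\delta/2=\delta$. Since the conditioning is only on $R$, this holds simultaneously for every admissible response vector $\vec v^{(e)}$, so $\hat v$ is $\delta$-close to a valid SQ answer to $(q,\delta)$ (indeed, $c\cdot m$ itself is such an answer); this is exactly $(\varepsilon,\delta_c)$-completeness with $\varepsilon=\delta$. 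Efficiency is then routine: $m^{(e)}=O(B_c^2B_m^2\log(1/\delta_c)/\delta^2)$ is polynomial in the relevant parameters, $\tau^{(e)}$ is polynomially bounded below, and $\mathcal{E}$ and $\mathcal{D}$ run in time $O(m^{(e)}N)$ with $N=\binom{n+d}{d}$.

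The step I expect to be the main obstacle is bookkeeping the oracle's tolerance error after the random re-weighting by $Rc$: the error term $\sum_j|(Rc)_j|\,\tau^{(e)}=\|Rc\|_1\,\tau^{(e)}$ must be bounded by $\sqrt{m^{(e)}}\,B_c\,\tau^{(e)}$, which forces us to obtain the norm-preservation estimate $\|Rc\|_2\lesssim B_c$ from the \emph{same} JL event as the dot-product guarantee (so that no extra failure probability accrues) and to choose $\tau^{(e)}$ a $\sqrt{m^{(e)}}$-factor smaller than the target accuracy $\delta$. A secondary point to verify is that the rows of $R$ are admissible queries: they are degree-$\le d$ polynomials, and if the class $Q$ additionally imposes a magnitude constraint on query functions, one rescales row $j$ by $\|R_{j,\cdot}\|_1$ and $\tau^{(e)}$ by the same factor, which only moves constants around while preserving all the bounds above.
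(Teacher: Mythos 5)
Your proof is correct and takes essentially the same approach as the paper: a Gaussian JL sketch, conditioning on the single JL event to control both the dot-product approximation and $\|Rc\|_2$, and splitting the error into an approximation term and an oracle-noise term bounded by $\sqrt{m^{(e)}}\,B_c\,\tau^{(e)}$. The only cosmetic difference is that you bound the noise term via H\"older ($\|Rc\|_1\|\epsilon\|_\infty$) while the paper uses Cauchy--Schwarz ($\|Rc\|_2\|\epsilon\|_2$); both routes give the same $\sqrt{m^{(e)}}$ scaling and the same choice of $\tau^{(e)}$ up to constants.
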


\begin{proof}[Proof sketch]
Let the target polynomial $q$ be defined by its coefficient vector $c \in \mathbb{R}^N$. The goal is to estimate $c \cdot m$. Since the input domain is $\{0,1\}^n$, the monomial values are bounded, and thus the moment vector $m$ has a bounded norm, $||m||_2 \le \sqrt{N}$, making a public bound $B_m$ reasonable. Recall that here, $m \in \mathbb{R}^N$ is the unknown moment vector whose components are the true expectation values of the basis monomials, such that $\mathbb{E}[q] = c \cdot m$.

\paragraph{Encoder ($\mathcal{E}$).}
The encoder generates an $m^{(e)} \times N$ random matrix $R$ as described in \Cref{lem:jl-dot-product}. The $m^{(e)}$ rows of $R$ define the coefficient vectors for the dense public queries. The tolerance for each query is set to $\tau_e = \delta/(4B_c\sqrt{m^{(e)}})$.

\paragraph{Decoder ($\mathcal{D}$).}
The decoder receives the response vector $y = Rm + \epsilon$, where $|\epsilon_j| \le \tau_e$. It performs two simple steps:
\begin{enumerate}
    \item Privately compute the "projected coefficient vector": $c_{\text{proj}} = Rc$.
    \item Compute the final estimate as the dot product in the low-dimensional space: $v_{\text{est}} = c_{\text{proj}} \cdot y$.
\end{enumerate}

\paragraph{Completeness.}
The total error is $|v_{\text{est}} - c \cdot m| = |(Rc) \cdot (Rm + \epsilon) - c \cdot m|$. We can split this into two parts using the triangle inequality:
\begin{equation}\text{Error} \le \underbrace{| (Rc) \cdot (Rm) - c \cdot m |}_{\text{Approximation Error}} + \underbrace{| (Rc) \cdot \epsilon |}_{\text{Noise Error}} \end{equation}
Let $E$ be the high-probability event (occurring with probability at least $1-\delta_c$) that the guarantee of \Cref{lem:jl-dot-product} holds for the pairs $(c,m)$ and $(c,c)$.
\begin{enumerate}
    \item \textbf{Approximation Error:} Conditioned on $E$, this term is bounded by $\varepsilon_0 ||c||_2 ||m||_2$. Using the public bounds $B_c$ and $B_m$, and our choice of $\varepsilon_0 = \delta/(2B_c B_m)$, this error is at most $\delta/2$.
    \item \textbf{Noise Error:} By the Cauchy-Schwarz inequality, $|(Rc) \cdot \epsilon| \le ||Rc||_2 ||\epsilon||_2$. Conditioned on the same event $E$, we are guaranteed that $||Rc||_2^2 \le (1+\varepsilon_0)||c||_2^2$. For any reasonable choice of $\varepsilon_0 \le 1$, this means $||Rc||_2 \le \sqrt{2}||c||_2 \le 2B_c$. The noise vector norm is bounded by $||\epsilon||_2 \le \sqrt{m^{(e)} \tau_e^2} = \tau_e \sqrt{m^{(e)}}$. By our choice of $\tau_e = \delta/(4B_c\sqrt{m^{(e)}})$, the noise error is bounded by:
    \begin{equation} ||Rc||_2 ||\epsilon||_2 \le (2B_c) \cdot (\tau_e \sqrt{m^{(e)}}) = (2B_c) \cdot \left(\frac{\delta}{4B_c\sqrt{m^{(e)}}} \sqrt{m^{(e)}}\right) = \frac{\delta}{2} \end{equation}
\end{enumerate}
Therefore, conditioned on the event $E$, the total error is at most $\delta/2 + \delta/2 = \delta$. Since $E$ occurs with probability at least $1-\delta_c$, the completeness guarantee holds.

\paragraph{Privacy.}
Privacy is perfect ($\gamma_p=0$). We define a simulator $S$ that operates without knowledge of the target query.
\begin{itemize}
    \item \textbf{Simulator $Sim$:} Given $m^{(e)}$, generate an $m^{(e)} \times N$ random matrix $R'$ from the same distribution used by the encoder. Output the rows of $R'$ as the query polynomials, paired with a default tolerance.
\end{itemize}
The distribution of the query polynomials is determined entirely by the random matrix $R$, which is independent of the the coefficients $c$ and also the relevant set $S$. Thus, the distribution of query functions produced by $\mathcal{E}$ is identical to that produced by $Sim$.
\end{proof}

\subsection{Covert Local and Low-Rank Observable Quantum Statistical Queries}

In the previous subsection, we have demonstrated how a learner can query $\mathsf{O}^{\mathrm{SQ}}_{\mathrm{pub}}$ in a way that leaks no information about the specific $m$ many statistical queries that are of interest to the learner. Notice that with a public random example oracle $\mathsf{O}^{\mathrm{Ex}}_{\mathrm{pub}}$, achieving the same kind of covertness guarantee---keeping the statistical queries of interest private---is trivial to achieve assuming bounded query functions: The learner simply collects $\mathcal{O}(\log(m/\delta)/\min_i \tau_i^2)$ many i.i.d.~random examples and then privately computes empirical averages to accurately estimate the true expectation value of the SQ with high success probability by standard Chernoff-Hoeffding bounds. 
And since the random examples collected are completely independent of the statistical queries to be evaluated, no information about the statistical query of interest (apart from what can be inferred by the number of random examples requested) is leaked to a potential eavesdropper.

The quantum version of this scenario, however, is not as immediate.
Suppose a learner $\mathcal{L}$, that does not have any quantum hardware of its own, wants to evaluate QSQs for some observable $M_1,\ldots,M_m$ by making queries to a public quantum measurement example oracle $\mathsf{O}^{\mathrm{QMeasEx}}_{\mathrm{pub}}$ in a way that leaks no (or only little) information about the $M_i$.
When querying $\mathsf{O}^{\mathrm{QMeasEx}}_{\mathrm{pub}}$, $\mathcal{L}$ has to choose which measurements to query the oracle on. This choice of measurement constitutes a public query strategy, which will in general depend on the observables $M_i$ of interest\footnote{Of course there is a strategy that does not depend on the $M_i$: Perform full state tomography to obtain an approximation of the true density matrix and then use that approximation to predict the expectation value. However, without prior assumptions on the unknown state, this approach requires a number of queries to $\mathsf{O}^{\mathrm{QMeasEx}}_{\mathrm{pub}}$ that is undesirably large \cite{haah2016sample, odonnell2016efficient}.}, thus potentially leaking information about $M$ to an eavesdropper. In this subsection, we argue that classical shadows \cite{huang2020predicting} circumvent this difficulty for relevant classes of observables.

First, we define the covert quantum statistical query model in analogy to \Cref{definition:covert-sq-model}, but now the QSQs are simulated with public quantum measurement examples (rather than public QSQs). For concreteness of exposition, we consider the case of $n$-qubit states/observables; the version for general qudits is analogous.

\begin{definition}[Covert Quantum Statistical Query Model---Formal version of \Cref{inf-definition:covert-qsq-model}]\label{definition:covert-qsq-model}

    Let $\mathcal{M}$ be a class of $n$-qubit observables.
    Let $\mathrm{POVM}_{k}$ denote the set of POVMs on $k$ copies of $n$-qubit states, let $\mathrm{POVM} = \bigcup_{k=1}^\infty \mathrm{POVM}_{k}$.
    Let $\mathcal{E}: (\mathcal{M} \times (0,1))^{m} \rightarrow \mathrm{POVM}^{m^{(e)}}$, $(\vec{M}, \vec{\tau})\mapsto \vec{M}^{(e)}$, and $\mathcal{D}: (\mathcal{M} \times (0,1))^{m}\times (\mathrm{POVM}\times \mathbb{N})^{m^{(e)}} \rightarrow \mathbb{R}^m$ denote (possibly randomized) encoding and decoding algorithms, respectively.
    The pair $(\mathcal{E}, \mathcal{D})$ is an $(m^{(e)}, \varepsilon, \delta_c, \gamma_p)$-statistically covert QSQ algorithm for $m$ many $\mathcal{M}$-queries from public quantum measurement examples if the following properties hold.
    \begin{itemize}
        \item \textbf{$(\varepsilon,\delta_c)$-Completeness:} For any $(\vec{M}, \vec{\tau})=((M_i, \tau_i))\in (\mathcal{M}\times (0,1))^m$,
        \begin{equation}
            \Pr_{\mathcal{E},\mathcal{D}, \mathrm{QMeasEx}}\left[ \norm{\mathcal{D}((\vec{M},\vec{\tau}),\mathcal{E}(\vec{M},\vec{\tau}),\mathsf{O}^{\mathrm{QMeasEx}}(\mathcal{E}(\vec{M},\vec{\tau}))) - \mathsf{O}^{\mathrm{QSQ}}(\vec{M}, \vec{\tau}) }_{\infty} > \varepsilon \right] \leq \delta_c \, .
        \end{equation}
        In this notation, a quantifier is implicit; the requirement is to be read as: With probability at least $1-\delta_c$ over the internal randomness of $\mathcal{E}$ and $\mathcal{D}$ as well as over the randomness of the sampling of measurement outcomes according to $\mathsf{O}^{\mathrm{QMeasEx}}$, the $i$th entry of the decoded vector should be $\varepsilon$-close to some valid QSQ oracle response to the query $(M_i,\tau_i)$ for every $i$.
    
        \item \textbf{$\gamma$-Statistical Covertness:} There exists a simulator $S$ such that, for any $(\vec{M}, \vec{\tau})=((M_i, \tau_i))\in (\mathcal{M}\times (0,1))^m$, and for any adversary $\mathcal{A}: \mathrm{POVM}^{m^{(e)}} \rightarrow \{0,1\}$, 
        \begin{equation}
            \left|\Pr_{\mathcal{E}} \big[ \mathcal{A}(\mathcal{E}(\vec{M}, \vec{\tau})) = 1  \big] - \Pr_{S} \big[ \mathcal{A}(S(1^{m^{(e)}})) = 1  \big] \right|\le \gamma_p \, .
        \end{equation}
        If $\gamma_p=0$, we speak of perfect privacy.
    
        \item \textbf{Query-Efficiency:} The number of encoded quantum measurement examples satisfies the bound $m^{(e)}\leq \mathcal{O}(\poly(n, m, \max_i \tau_i^{-1}, \varepsilon^{-1}, \delta_c^{-1}, \gamma_p^{-1}))$.
    \end{itemize}
\end{definition}

Additionally, if the encoding is such that $\vec{M}^{(e)}\in \mathrm{POVM}_k^{m^{(e)}}$ always holds, then we may speak of a covert QSQ algorithm from public $k$-copy measurement examples.
One may also add a time-efficiency requirement to the above definition, requiring the runtime of both $\mathcal{E}$ and $\mathcal{D}$ to be bounded by $\poly (n, m, \varepsilon^{-1}, \delta_c^{-1}, \gamma_p^{-1})$.

Our next result reinterprets classical shadows in terms of covert QSQs for bounded local observables using a public (single-copy) quantum measurement example oracle. 

\begin{proposition}[Classical shadows for $k$-local observables as a covert QSQ algorithm---Formal version of \Cref{obs:classical_shadows_intro}]\label{prop:strategy-covert-classical-shadows}
    Let $\mathcal{M}_{k}$ denote the class of $k$-local $n$-qubit observables with operator norm $\leq B$.
    Classical shadows with random Pauli measurements consitute a time-effcient $(m^{(e)}=\mathcal{O}(\log(m/\delta_p)4^k/\min_i\tau_i^2), \varepsilon = 0, \delta_p, \gamma_p = 0)$-statistically covert QSQ algorithm for $m$ many $\mathcal{M}_{k}$-queries from public random single-copy Pauli measurement examples.
\end{proposition}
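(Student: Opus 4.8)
The plan is to read the proposition as a \emph{reinterpretation} statement: the encoder $\mathcal{E}$ is nothing but the (query-agnostic) data-acquisition step of classical shadows, and all observable-dependent work is pushed into the decoder $\mathcal{D}$, so that privacy is essentially immediate and only completeness and efficiency require checking. Concretely, on input $(\vec M,\vec\tau)$ the encoder ignores the observables themselves and outputs $m^{(e)}=\mathcal{O}\big(4^k B^2\log(m/\delta_p)/\min_i\tau_i^2\big)$ copies of a single fixed single-copy POVM $E^{\mathrm{sh}}$, namely ``measure each qubit in an independently and uniformly random Pauli eigenbasis,'' whose outcomes are labelled by a basis choice $b\in\{X,Y,Z\}^n$ together with an eigenvalue string $s\in\{0,1\}^n$. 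The decoder, which by \Cref{definition:covert-qsq-model} is given $(\vec M,\vec\tau)$, receives the $m^{(e)}$ sampled labels $(b^{(j)},s^{(j)})$, forms the usual shadow snapshots $\hat\rho_j=\bigotimes_{\ell=1}^n\big(3\,\lvert b^{(j)}_\ell,s^{(j)}_\ell\rangle\langle b^{(j)}_\ell,s^{(j)}_\ell\rvert-\mathds{1}_2\big)$, and for each $M_i$ computes single-shot estimators $\hat o_{i,j}=\tr[M_i\hat\rho_j]$, aggregating them by median-of-means to produce $v_i$. Since $M_i$ acts nontrivially on at most $k$ qubits, $\tr[M_i\hat\rho_j]$ depends only on the corresponding $\le k$ local snapshots and costs $\mathcal{O}(4^k)$ arithmetic, so both $\mathcal{E}$ and $\mathcal{D}$ run in $\poly(n,m,4^k,\min_i\tau_i^{-1},\log(1/\delta_p))$ time.

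For \textbf{privacy}, the point is that $\mathcal{E}(\vec M,\vec\tau)$ is the constant tuple $(E^{\mathrm{sh}},\dots,E^{\mathrm{sh}})$ of length $m^{(e)}$, so the only dependence of the public transcript on the queries is through the integer $m^{(e)}$, which is precisely the input handed to the simulator. Defining $S(1^{m^{(e)}})$ to output $(E^{\mathrm{sh}},\dots,E^{\mathrm{sh}})$ of the same length makes the two distributions identical, giving $\gamma_p=0$; the same argument goes through verbatim if one instead models $E^{\mathrm{sh}}$ as an i.i.d.\ random choice of product Pauli basis, with $S$ sampling the bases in the same way.

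For \textbf{completeness}, the two facts to import from the classical-shadows analysis of \cite{huang2020predicting} are: (i) each single-shot estimator is unbiased, $\mathbb{E}[\hat o_{i,j}]=\tr[M_i\rho]$; and (ii) its variance is bounded by the Pauli shadow norm, $\mathrm{Var}[\hat o_{i,j}]\le \|M_i\|_{\mathrm{shadow}}^2\le 4^{k}\|M_i\|_\infty^2\le 4^kB^2$, using $k$-locality of $M_i$. Partitioning the $m^{(e)}$ samples into $\mathcal{O}(\log(m/\delta_p))$ batches of size $\mathcal{O}(4^kB^2/\min_i\tau_i^2)$, Chebyshev makes each batch mean $\tau_i$-accurate with probability $\ge 3/4$, so the median of batch means is $\tau_i$-accurate except with probability $\le\delta_p/m$, and a union bound over the $m$ observables gives $\Pr[\exists i:\lvert v_i-\tr[M_i\rho]\rvert>\tau_i]\le\delta_p$. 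Because a $\tau_i$-accurate value is by definition a legal QSQ response to $(M_i,\tau_i)$, this is exactly $(\varepsilon=0,\delta_p)$-completeness; multiplying batch count by batch size recovers the claimed $m^{(e)}$ (with $B=\mathcal{O}(1)$ absorbed into the constant, else the bound carries an extra $B^2$), and the per-query runtime estimate above gives query- and time-efficiency.

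The only step needing care is (ii), the shadow-norm bound for $k$-local observables under uniformly random single-qubit Pauli bases, together with the (automatic) observation that a \emph{single} pool of $m^{(e)}$ shadows with one fixed batching serves all $m$ queries simultaneously, since the $v_i$ are merely different post-processings of the same outcomes; both are standard consequences of \cite{huang2020predicting}. Conceptually there is no obstacle: the entire content of the proposition is that the ``measure first, ask later'' structure of classical shadows renders the public transcript literally independent of the queried observables, so one only has to confirm that the known sample-complexity guarantees survive the union bound over $m$ queries, which they do at the cost of a single logarithmic factor.
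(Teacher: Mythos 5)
Your proof is correct and follows the same approach as the paper's, which is itself just a one-line observation that the random Pauli basis choice in classical shadows is observable-independent and defers the rest to \cite{huang2020predicting}. You have spelled out the details that the paper leaves implicit: modelling the data acquisition as a fixed POVM (or equivalently an i.i.d.\ random product basis) so that the encoder output is literally independent of $(\vec{M},\vec\tau)$ and the simulator is trivial, and recalling the shadow-norm bound plus median-of-means from \cite{huang2020predicting} to get $(\varepsilon=0,\delta_p)$-completeness with the stated $m^{(e)}$. Your aside that the $B^2$ factor should appear in $m^{(e)}$ unless $B=\mathcal{O}(1)$ is also a fair reading of the stated bound.
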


Notice that $m^{(e)}$ here depends on $m,k,\delta_p$, and $\min_i\tau_i$. As the simulator in \Cref{definition:covert-qsq-model} receives $1^{m^{(e)}}$ as input, information about these parameters may leak to an adversary. However, no information about which $m$ many $k$-local observables are predicted leaks.

\begin{proof}
    This is an immediate consequence of the results from \cite{huang2020predicting}. We only have to observe that the random choice of Pauli basis queries in the classical shadows protocol is independent of the specific $k$-local observables of interest, and that classical shadows with random Paulis are computationally efficient.
\end{proof}

We can obtain a result similar to \Cref{prop:strategy-covert-classical-shadows} for low-rank (instead of $k$-local) observables. 

\begin{proposition}[Classical shadows for low rank observables as a covert QSQ algorithm---Formal version of \Cref{obs:classical_shadows_intro}]\label{prop:strategy-covert-classical-shadows-low-rank}
    Let $\mathcal{M}_{\norm{\cdot}_F\leq B}$ denote the class of $n$-qubit observables with Frobenius norm $\leq B$.
    Classical shadows with random Clifford measurements constitute a $(m^{(e)}=\mathcal{O}(\log(m/\delta_p)B^2/\min_i\tau_i^2), \varepsilon = 0, \delta_p, \gamma_p = 0)$-statistically covert QSQ algorithm for $m$ many $\mathcal{M}_{\norm{\cdot}_F\leq B}$-queries from public random single-copy Clifford measurement examples.
\end{proposition}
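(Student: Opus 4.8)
The plan is to follow the proof of \Cref{prop:strategy-covert-classical-shadows} almost verbatim, replacing the random local Pauli measurement ensemble by the uniformly random global Clifford ensemble and the $4^{k}$ locality factor in the single-shot variance bound by the Frobenius-norm factor $\tr[M^2]=\norm{M}_F^2$. Concretely, on input $(\vec M,\vec\tau)$ the encoder $\mathcal{E}$ ignores the observables entirely: it sets $m^{(e)}=\mathcal{O}(\log(m/\delta_p)\,B^2/\min_i\tau_i^2)$, draws $m^{(e)}$ i.i.d.\ uniformly random $n$-qubit Clifford unitaries $U_1,\dots,U_{m^{(e)}}$, and outputs the single-copy rank-one projective measurements $\{U_j^{\dagger}\ket{b}\bra{b}U_j\}_{b\in\{0,1\}^n}$ for $j=1,\dots,m^{(e)}$. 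Since the POVM descriptions — and hence the realized $U_j$ — are part of the input to $\mathcal{D}$, the decoder, upon receiving the sampled outcomes $b_j$ from $\mathsf{O}^{\mathrm{QMeasEx}}$, forms the standard classical-shadow snapshots $\hat\rho_j=(2^n+1)\,U_j^{\dagger}\ket{b_j}\bra{b_j}U_j-\mathbb{I}$ and returns, as the $i$th coordinate of $\vec v$, the median-of-means estimate built from $\{\tr[M_i\hat\rho_j]\}_{j}$.

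For completeness with $\varepsilon=0$ and $\delta_c=\delta_p$, recall that the random-Clifford shadow channel satisfies $\mathbb{E}[\hat\rho_j]=\rho$, so each $\tr[M_i\hat\rho_j]$ is unbiased for $\tr[M_i\rho]$, with single-shot variance at most the Clifford shadow norm $\norm{M_i}_{\mathrm{shadow}}^2\le 3\,\tr[M_i^2]=3\norm{M_i}_F^2\le 3B^2$ (\cite{huang2020predicting}). Plugging this into the median-of-means sample-complexity bound of \cite{huang2020predicting}, a budget of $m^{(e)}=\mathcal{O}(\log(m/\delta_p)\,\max_i\norm{M_i}_{\mathrm{shadow}}^2/\min_i\tau_i^2)=\mathcal{O}(\log(m/\delta_p)\,B^2/\min_i\tau_i^2)$ snapshots guarantees that, with probability at least $1-\delta_p$ over the randomness of $\mathcal{E}$ and of the $\mathsf{O}^{\mathrm{QMeasEx}}$ outcomes, $\abs{v_i-\tr[M_i\rho]}\le\tau_i$ holds simultaneously for all $i$. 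Each $v_i$ is then itself a valid QSQ response to $(M_i,\tau_i)$, which is exactly the $\varepsilon=0$ completeness requirement of \Cref{definition:covert-qsq-model}.

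Privacy and efficiency are immediate. Let the simulator $S$, on input $1^{m^{(e)}}$, draw $m^{(e)}$ i.i.d.\ uniform Cliffords and output the corresponding projective measurements in exactly the same way $\mathcal{E}$ does. Then the distribution of $\mathcal{E}(\vec M,\vec\tau)$ equals that of $S(1^{m^{(e)}})$ for every $(\vec M,\vec\tau)$, so $\gamma_p=0$; as in \Cref{prop:strategy-covert-classical-shadows}, only the side information $1^{m^{(e)}}$ — and therefore $m$, $\delta_p$, and $\min_i\tau_i$ — can leak, never the identities of the observables being predicted. Query-efficiency holds with the stated $m^{(e)}$ (treating $B$ as one of the relevant parameters, in analogy with the $4^{k}$ factor appearing in \Cref{prop:strategy-covert-classical-shadows}), and sampling a uniform Clifford together with a polynomial-size circuit realizing it is efficient, so $\mathcal{E}$ is efficient.

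I do not anticipate a genuine obstacle: the entire argument is a bookkeeping adaptation of \Cref{prop:strategy-covert-classical-shadows}, and the only point demanding care is to invoke the correct variance bound $\norm{M}_{\mathrm{shadow}}^2\le 3\,\tr[M^2]$ for the global-Clifford ensemble rather than the local-Pauli bound. One asymmetry with the local case is worth flagging explicitly, matching the informal statement in \Cref{obs:classical_shadows_intro}: the decoder is \emph{not} time-efficient in general, since for an arbitrary Frobenius-bounded observable $M_i$ the inner product $\tr[M_i\hat\rho_j]$ can take time exponential in $n$ to evaluate; accordingly we claim an efficient encoder but make no efficiency claim for the decoder.
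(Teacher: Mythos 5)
Your proposal is correct and takes essentially the same approach as the paper: the paper proves \Cref{prop:strategy-covert-classical-shadows-low-rank} by the same one-line observation used for the Pauli case (the Clifford-sampling step in classical shadows is independent of the observables of interest, so \cite{huang2020predicting}'s guarantees give completeness with the stated sample complexity, and the simulator just samples the same random Cliffords), together with the remark that the Clifford-shadows decoder is not time-efficient. Your write-up simply spells out the median-of-means bookkeeping and the shadow-norm bound $\norm{M}_{\mathrm{shadow}}^2\le 3\norm{M}_F^2$ that the paper invokes implicitly, and it correctly flags the encoder-efficient/decoder-inefficient asymmetry.
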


As in the previous proposition, while information may be leaked about $m,B,\delta_p$, and $\min_i\tau_i$, no information about the specific low-rank/low-Frobenius-norm observables of interest gets leaked.
Note that unlike in the random Pauli case above, the classical postprocessing for classical shadows with random Cliffords is not computationally efficient, so the decoder in this covert QSQ procedure is not time-efficient.

Classical shadows thus are a powerful tool for covert learning that keeps the actual question (in this case, observables) of interest private because of the ``measure first, ask questions later'' feature of the randomized measurement toolbox \cite{elben2023randomized}. Namely, in the randomized measurement framework, the data collection is the same irrespective of which specific task from a certain family of admissible/achievable tasks is considered. In fact, data can be collected before the task is even known. 
A similar task-independence of data collection appears more generally in sample-based (rather than adaptive, query-based) learning and testing: If two tasks can be solved from the same number of random samples, then those samples may be collected even before a decision is made on which of the two tasks one wants to solve. As such, the data collection process does not carry any information about the task of interest (other than on what number of samples suffices). 
For example, in standard Probably Approximately Correct (PAC) learning \cite{valiant1984theory}, if two concept classes have the same VC-dimension \cite{vapnik1971uniform}, the standard Empirical Risk Minimization (ERM) learner will collect a dataset of random examples in the same way for either class.
Similarly, if two properties have the same sample complexity in passive testing \cite{goldreich1998property, alon2016active, caro2024testingclassicalpropertiesquantum}, the specific property being tested cannot be inferred when observing only how the samples are collected.
In summary, strategy-covertness is easily achieved in scenarios of learning or testing from random samples.

\begin{remark}
    Clearly, the basis query strategy outlined in \Cref{subsec:covert-poly-sq} could also be applied to achieve a version of \Cref{definition:covert-qsq-model} in which a single QSQ is compiled into multiple QSQs, analogously to \Cref{definition:covert-sq-model}. 
    Namely, if any $M\in\mathcal{M}$ can be written as a linear combination of some basis elements $\{M_i\}_{i=1}^N$ with a coefficient vector with $1$-norm at most $C$, then we can compile any QSQ $(M, \tau)$ with $M\in\mathcal{M}$ into the QSQs $((M_i, \tau'))_{i=1}^N$ with $\tau' = \tau/C$.  
    For $k$-local or rank-$r$ observables, however, this approach requires large $N$. In the $k$-local case, $N=\sum_{i=0}^k \binom{n}{i}3^i$, which scales polynomially in $n$ for constant $k$; in the rank-$r$ case, $N$ grows exponentially with $n$ for constant $r$.
    In contrast, as shown above, via classical shadows, we can compile even many $k$-local or low-rank QSQs simultaneously into much fewer QMeasEx queries. Thus, the classical-shadows approach is, in spirit, akin to the random sketching strategy of \Cref{subsec:covert-poly-sq}: Both demonstrate that a modest number of randomized queries suffice to solve a broad class of tasks.
\end{remark}

In the learning scenarios discussed in this section, we do not address verifiability alongside covertness; designing protocols that achieve both remains an open problem.

\section{Covert Learning From Public Quantum Measurement Examples}
 
\subsection{Warmup: Public Classical Examples, Private Statistical Queries}\label{subsec:public-classical-private-sq-parity}

As a first proof-of-principle demonstration of how a learner can, with the support of a ``weak'' private oracle, learn from a ``strong'' public oracle in a covert manner, we consider the fully classical setting of covert learning from public random examples and private SQs. We begin by giving a formal version of \Cref{inf-definition:covert-exact-learning-public-examples-private-sq}.

\begin{definition}[Covert Exact Learning From Public Examples and Private SQs---Formal version of \Cref{inf-definition:covert-learning-public-quantum-examples-private-qsq}]\label{definition:covert-exact-learning-public-examples-private-sq}
    An algorithm $\mathcal{L}$ that has access to a private SQ oracle $\mathsf{O}_{\mathrm{pri}}^{\mathrm{SQ}}$ and to a public random example oracle $\mathsf{O}_{\mathrm{pub}}^{\mathrm{Ex}}$, both w.r.t.~a distribution with uniform marginal over inputs, is a $(m_{\mathrm{pri}}, m_{\mathrm{pub}},\tau_{\mathrm{pri}},\delta_c,\delta_p)$-covert exact learner for a concept class $\mathcal{F}$ from private SQs and public examples if it satisfies the following:
    \begin{itemize}
        \item $\delta_c$-\textbf{Completeness:} For any $f\in\mathcal{F}$, after making at most $m_{\mathrm{pri}}$ queries of tolerance $\tau_{\mathrm{pri}}$ to $\mathsf{O}_{\mathrm{pri}}^{\mathrm{SQ}}(f)$ and at most $m_{\mathrm{pub}}$ queries to $\mathsf{O}_{\mathrm{pub}}^{\mathrm{Ex}}(f)$, $\mathcal{L}$ outputs $f$ with success probability $\geq 1-\delta_c$.
        \item $\delta_p$-\textbf{Privacy:} For $F\sim \mathcal{F}$ a uniformly random concept, no adversary $\mathcal{A}$ can correctly guess $F$ with success probability $>\delta_p$ from the at most $m_{\mathrm{pub}}$ many public random examples requested by $\mathcal{L}$.
    \end{itemize}
\end{definition}

As mentioned in the introduction, the focus on exact learning in \Cref{definition:covert-exact-learning-public-examples-private-sq} is merely for ease of presentation. It can be naturally extended to an approximate notion of learning.\footnote{In an $\varepsilon$-approximate version of \Cref{definition:covert-exact-learning-public-examples-private-sq}, one would require the output of $\mathcal{L}$ in the completeness requirement to be an $\varepsilon$-approximation of the unknown $f$ with probability at least $1-\delta_c$, whereas in the privacy requirement the adversary should not be able to produce an $\varepsilon$-approximation of the random $F$ with probability $>\delta_p$.}
Similarly, while we have fixed the input marginal to be uniform, one may also consider versions of the definition with other (possibly unknown) underlying distributions over inputs.

For our discussion, we consider the concrete task of parity learning, which is well known to be efficiently solvable from linearly-in-$n$ many random examples but exponentially hard to solve from SQs \cite{kearns1998efficient}. Our next results gives a simple covert learning algorithm in which a learner uses public random examples to overcome the exponential bottleneck of parity learning from SQs.

\begin{observation}[Covert Parity Learning from Public Examples and Private SQs---Formal Version of \Cref{inftheorem:public-example-private-sq-parity}]\label{theorem:public-example-private-sq-parity}
    Let $n\in\mathbb{N}$.
    Let $\delta_c, \delta_p\in (0,1)$ with $k\coloneqq \lceil \log(1/\delta_p)\rceil < n$.
    There exists a computationally efficient $(m_{\mathrm{pri}} = 2/\delta_p, m_{\mathrm{pub}} = n-k +\mathcal{O}(\log(1/\delta_c)), \tau_{\mathrm{pri}} = 1/6, \delta_c, \delta_p)$-covert exact learner from private SQs and public examples for the concept class of $n$-bit parity functions.
    That is, there is a computationally efficient learner $\mathcal{L}$ that, for any unknown $s\in\{0,1\}^n$, makes at most $n-k +\mathcal{O}(\log(\delta_c))$ queries to a public parity example oracle $\mathsf{O}_{\mathrm{pub}}^{\mathrm{Ex}}(s\cdot (\cdot))$, 
    followed by at most $2/\delta_p$ many queries with tolerance $1/6$ to a private parity SQ oracle $\mathsf{O}_{\mathrm{pri}}^{\mathrm{SQ}}(s\cdot (\cdot))$, 
    in such a way that the following holds:
    \begin{itemize}
        \item \textbf{Completeness:} With probability $\geq 1-\delta_c$, $\mathcal{L}$ learns the unknown parity string $s$.
        \item \textbf{Privacy:} For $S\sim\{0,1\}^n$, any adversary $\mathcal{A}$ that only sees $\mathcal{L}$'s public example oracle queries correctly guesses the parity string $S$ with probability at most $\delta_p$.
    \end{itemize}
\end{observation}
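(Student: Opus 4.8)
The plan is to split the learner $\mathcal{L}$ into a \emph{public phase} that collects just enough linear constraints on the hidden string $s$ to leave exactly $2^k$ candidates, and a \emph{private phase} that singles out $s$ among them using one statistical query per candidate. In the public phase I would have $\mathcal{L}$ draw examples $(x_1,s\cdot x_1),(x_2,s\cdot x_2),\dots$ from $\mathsf{O}_{\mathrm{pub}}^{\mathrm{Ex}}$ \emph{one at a time}, maintaining the $\mathbb{F}_2$-rank of $\{x_1,\dots,x_i\}$ by incremental Gaussian elimination, and stop the instant this rank reaches $n-k$ --- or abort once $m_{\mathrm{pub}}$ examples have been requested, whichever comes first. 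The observed pairs define an affine subspace $\mathcal{C}\subseteq\{0,1\}^n$ of candidate strings with $|\mathcal{C}|=2^{\,n-\mathrm{rank}}$, equal to $2^k$ in the typical case. In the private phase, for each $s'\in\mathcal{C}$ the learner queries $\mathsf{O}_{\mathrm{pri}}^{\mathrm{SQ}}$ with the $\{0,1\}$-valued query function $q_{s'}(x,y)\coloneqq\mathds{1}[s'\cdot x=y]$ at tolerance $\tau_{\mathrm{pri}}=1/6$, and outputs the first $s'$ whose answer exceeds $3/4$.

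For completeness, the key observation is that $\mathbb{E}_x[q_{s'}(x,s\cdot x)]=\Pr_x[(s+s')\cdot x=0]$, which is $1$ when $s'=s$ and exactly $1/2$ when $s'\ne s$ (a nonzero $\mathbb{F}_2$-linear form is balanced on the uniform distribution). With tolerance $1/6$, any valid oracle answer then lies in $[5/6,7/6]$ for the true string and in $[1/3,2/3]$ otherwise, so the threshold $3/4$ identifies $s$ correctly no matter how the oracle resolves its tolerance. What remains is to show that with probability $\ge1-\delta_c$ the public phase reaches rank exactly $n-k$ within $m_{\mathrm{pub}}=n-k+\mathcal{O}(\log(1/\delta_c))$ examples; in that event $|\mathcal{C}|=2^k<2/\delta_p=m_{\mathrm{pri}}$, so the private phase stays in budget and returns $s$. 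This is a standard rank-concentration estimate: $n-k+t$ uniformly random vectors in $\mathbb{F}_2^n$ have rank $\le n-k-1$ only if some $(k+1)$-dimensional subspace is orthogonal to all of them, so a union bound over the $\mathcal{O}\bigl(2^{(k+1)(n-k-1)}\bigr)$ such subspaces, each of which is orthogonal to all $n-k+t$ vectors with probability $(2^{-(k+1)})^{\,n-k+t}$, bounds the failure probability by $\mathcal{O}\bigl(2^{-(k+1)(t+1)}\bigr)\le\mathcal{O}\bigl(2^{-(t+1)}\bigr)$, which is $\le\delta_c$ once $t=\mathcal{O}(\log(1/\delta_c))$.

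For privacy I would argue directly about the posterior. The adversary's view is the list $(x_1,s\cdot x_1),\dots,(x_T,s\cdot x_T)$, where the $x_i$ are i.i.d.\ uniform and the stopping index $T$ depends only on $x_1,x_2,\dots$ (the rank criterion never reads the labels); hence $T$ and the $x_i$ are independent of $S$. Conditioned on any realization of $x_1,\dots,x_T$, the strings $S$ compatible with the observed labels form an affine subspace of dimension $n-\mathrm{rank}(x_1,\dots,x_T)\ge n-(n-k)=k$, on which $S$ is uniform, since the public phase never lets the rank exceed $n-k$. So every guess is correct with conditional probability at most $2^{-k}$, and $k=\lceil\log(1/\delta_p)\rceil\ge\log(1/\delta_p)$ gives $2^{-k}\le\delta_p$; averaging over the $x_i$ keeps this $\le\delta_p$. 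Efficiency is immediate: the incremental Gaussian elimination and the affine-subspace solve are $\mathrm{poly}(n)$, the candidate set has size $2^k<2/\delta_p$, and each $q_{s'}$ is evaluable in $\mathcal{O}(n)$ time, so $\mathcal{L}$ runs in $\mathrm{poly}(n,1/\delta_p,\log(1/\delta_c))$.

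The one genuinely delicate point --- where a naive design breaks --- is the tension between the two guarantees: completeness wants \emph{many} public examples (to be confident the rank attains $n-k$), while privacy wants \emph{few} (any example that pushes the rank past $n-k$ shrinks $\mathcal{C}$ below $2^k$ and violates the $2^{-k}\le\delta_p$ bound). A fixed batch of size $n-k+\mathcal{O}(\log(1/\delta_c))$ would leave the rank concentrated strictly \emph{above} $n-k$ and so fail privacy, which is why $\mathcal{L}$ must stop adaptively exactly when the rank hits $n-k$: this pins the rank to $n-k$ except in the $\le\delta_c$ abort event, where it is only smaller and hence only better for privacy. (One could replace the $2^k$-query linear scan of $\mathcal{C}$ by an $\mathcal{O}(k)$-query binary search, but the claimed bound $m_{\mathrm{pri}}=2/\delta_p$ already follows from the simple scan.)
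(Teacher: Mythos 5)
Your proof is correct and follows the paper's two-phase structure: collect public examples with adaptive stopping at rank $n-k$, then identify $s$ among the $2^k$ consistent candidates via private SQs. The one substantive difference is the private phase: the paper runs a single-elimination tournament with pairwise comparison queries $q_{t_1,t_2}(x,y)=\mathds{1}_{\mathsf{X}_{t_1\neq t_2}}(x)\cdot\delta_{t_1\cdot x,y}$ (separating expectations $1/2$ vs.\ $0$), whereas you do a linear scan with per-candidate equality queries $q_{s'}(x,y)=\mathds{1}[s'\cdot x=y]$ (separating $1$ vs.\ $1/2$), thresholding at $3/4$. Both use at most $2^k\le 2/\delta_p$ queries at tolerance $1/6$ and both are valid; yours is arguably simpler since it avoids the tournament bookkeeping and the indicator $\mathds{1}_{\mathsf{X}_{t_1\neq t_2}}$ entirely. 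You also spell out the rank-concentration estimate (union bound over $(k+1)$-dimensional subspaces giving failure $\mathcal{O}(2^{-(k+1)(t+1)})$), which the paper asserts without detail, and you correctly flag that adaptive stopping is essential for privacy --- a fixed-size batch would push the rank above $n-k$ and break the $2^{-k}\le\delta_p$ bound. One small correction: the parenthetical suggestion that the $2^k$-query scan could be replaced by an $\mathcal{O}(k)$-query binary search is false --- distinguishing among $2^k$ parities from SQs of constant tolerance requires $2^{\Omega(k)}$ queries by standard SQ-dimension lower bounds, which is exactly why this example is instructive in the first place. Since you do not rely on that remark, the proof stands.
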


The result of \Cref{theorem:public-example-private-sq-parity} should be contrasted with the capabilities of a learner $\mathcal{L}$ that only uses their private SQ access for queries with inverse-polynomial tolerance. Such a learner would have to make exponentially-in-$n$ many $\mathsf{O}_{\mathrm{pri}}^{\mathrm{SQ}}$-queries \cite{kearns1998efficient}. 
\Cref{theorem:public-example-private-sq-parity} demonstrates that $\mathcal{L}$ can be significantly more query- and computationally efficient when also making use of the public example oracle, while still ensuring that not too much information about the unknown parity leaks to a potential eavesdropper.

\begin{proof}
    First, we describe that protocol that $\mathcal{L}$ uses, then we prove that it has the claimed completeness and privacy properties.
    Let $s\in\{0,1\}^n$ be the unknown parity.
    
    \textbf{Algorithm:}
    The procedure is as follows: 
    \begin{enumerate}
        \setcounter{enumi}{-1}
        \item Set $k = \lceil \log(1/\delta_p)\rceil$ and $m_{\mathrm{pub}} = n-k + \mathrm{const}\cdot \log(1/\delta_c)$ for a sufficiently large constant.
        \item Initialize an empty data set $D = \{\}$ of $n$-bit inputs and $1$-bit outputs. Let $D_{\mathrm{in}}$ be the set of $n$-bit inputs that occur in $D$. Further, initialize a counter for oracle queries, $\mathrm{count}=0$. 
        While $\operatorname{dim}(\operatorname{span}(D))<n-k$, which $\mathcal{L}$ can check efficiently using Gaussian elimination, $\mathcal{L}$ does the following:
        \begin{enumerate}
            \item $\mathcal{L}$ makes a query to $\mathsf{O}_{\mathrm{pub}}^{\mathrm{Ex}}(s\cdot (\cdot))$, receiving an example $(x,s\cdot x)$. 
            \item $\mathcal{L}$ updates $D\gets D\cup\{x\}$.
            \item $\mathcal{L}$ updates $\mathrm{count}\gets\mathrm{count} + 1$. 
            \item If $\mathrm{count}\geq m_{\mathrm{pub}}$, $\mathcal{L}$ aborts and reports failure.
        \end{enumerate}
        \item $\mathcal{L}$ takes $n-k$ data points $(x_i,s\cdot x_i)$ associated to linearly independent $\{x_i\}_{i=1}^{n-k}$ from $D$. 
        Let $S$ be the space of parities consistent with the associated data points, i.e.,
        \begin{equation}
            S
            = \{t\in\{0,1\}^n~|~t\cdot x_i = s\cdot x_i~\forall 1\leq i\leq n-k\}\, .
        \end{equation}
        $\mathcal{L}$ can efficiently determine $S$ using Gaussian elimination.
        Note that $S$ is an affine subspace of dimension $n-(n-k)=k$ and therefore in particular $\lvert S\rvert = 2^k$.
        \item Set up a (binary) tournament tree with leaves labeled by strings in $S$. Play the tournament as follows: For any $t_1\neq t_2\in S$ playing a match against each other, let $\mathsf{X}_{t_1\neq t_2}$ be the set of inputs on which the parity functions for $t_1$ and $t_2$ differ, i.e., 
        \begin{equation}
            \mathsf{X}_{t_1\neq t_2}
            = \{x\in\{0,1\}^n~|~ t_1\cdot x\neq t_2\cdot x\}\, .
        \end{equation}
        Note that $\lvert \mathsf{X}_{t_1\neq t_2}\rvert = 2^{n-1}$ holds for all $t_1\neq t_2\in S$.
        Let $q_{t_1,t_2}:\{0,1\}^n\times \{0,1\}\to \{0,1\}$ be defined as
        \begin{equation}
            q_{t_1,t_2}(x,y)
            = \mathds{1}_{\mathsf{X}_{t_1\neq t_2}}(x)\cdot \delta_{t_1\cdot x, y}\, .
        \end{equation}
        Given $t_1,t_2$, $\mathcal{L}$ can computationally efficiently determine an efficient description of $q_{t_1,t_2}$.
        $\mathcal{L}$ now sends the SQ specified by $(q_{t_1,t_2}, 1/6)$ to $\mathsf{O}_{\mathrm{pri}}^{\mathrm{SQ}}(s\cdot (\cdot))$ and receives $\alpha_{t_1,t_2}\in [0,1]$ satisfying
        \begin{equation}
            \lvert \alpha_{t_1,t_2} - \mathbb{E}_{x\sim\mathcal{U}_n}[q_{t_1,t_2}(x,s\cdot x)]\rvert\leq \frac{1}{6}\, .
        \end{equation}
        If $\alpha_{t_1,t_2}\geq 1/3$, $t_1$ wins the match. Otherwise, $t_2$ wins the match. 
        In playing the whole tournament, $\mathcal{L}$ makes $m_{\mathrm{pri}}=\sum_{k=1}^{|S|-1}|S|\cdot 2^{-k} \leq |S| = 2^k\leq 2/\delta_p$ many SQs with constant tolerance parameter $1/6$.
        \item $\mathcal{L}$ outputs the string $t\in S$ that has won the tournament.
    \end{enumerate}

    \textbf{Completeness:}
    Now that we have described $\mathcal{L}$'s algorithm, we analyze its completeness and privacy. We begin by considering the completeness. 
    First, our choice of $m_{\mathrm{pub}}$ ensures that $\mathcal{L}$ sees $n-k$ linearly independent samples with probability at least $1-\delta_c$. 
    For the rest of the completeness analysis, we condition on this high probability event (and thus $\mathcal{L}$ does not not abort in Step 1).
    By definition of $S$ and $X_{t_1\neq t_2}$, we have 
    \begin{equation}
        \mathbb{E}_{x\sim\mathcal{U}_n}[q_{t_1,t_2}(x,s\cdot x)]
        = \begin{cases}
            \frac{1}{2} \quad &\textrm{if } s=t_1\\
            0 & \textrm{if } s=t_2
        \end{cases}\, ,
    \end{equation}
    and we ignore the value for $s\neq t_1,t_2$, since it is not relevant for our discussion.
    In particular, as $s\in S$, there is a unique $t_1\in S$ such that $\mathbb{E}_{x\sim\mathcal{U}_n}[q_{t_1,t_2}(x,s\cdot x)] = 1/2$ and $\mathbb{E}_{x\sim\mathcal{U}_n}[q_{t_2, t_1}(x,s\cdot x)] = 0$ for all $t_2\in S\setminus\{t_1\}$, namely $t_1 = s$.
    Therefore, choosing the SQ tolerance to be $1/6$ suffices to ensure that the true $s$ wins every match it plays in the tournament in Step 3, and thus the string that $\mathcal{L}$ outputs in Step 4 is indeed the desired $s$. This concludes the completeness proof.

    \textbf{Privacy:} Now we turn our attention to the privacy guarantee. Because of $\mathcal{L}$'s procedure, the only information that $\mathcal{A}$ sees are at most $m_{\mathrm{pub}}$ input-output pairs $(x_i,s\cdot x_i)$, where the $x_i$ span an at most $(n-k)$-dimensional space.
    Thus, $\mathcal{A}$ knows that the unknown parity has to be consistent with all these data points, i.e., lie in $S$, but the best $\mathcal{A}$ can do beyond that is to randomly guess an element of $S$. Hence, the probability that $\mathcal{A}$ correctly identifies the unknown parity string is at most $\lvert S\rvert^{-1} = 2^{-k}\leq \delta_p$ by our choice of $k$. This is the claimed privacy guarantee.
\end{proof}

The covert learner $\mathcal{L}$ in the proof of \Cref{theorem:public-example-private-sq-parity} proceeds in two phases. In the first phase, $\mathcal{L}$ queries the public oracle to obtain partial information about the unknown function, thereby restricting the space of possibilities. In the second phase, $\mathcal{L}$ queries its private oracle to identify the correct one among the remaining possibilities. Note that this procedure is adaptive, the private queries in the second phase depend on what $\mathcal{L}$ has learned from the public queries in the first phase. By design, any adversary $\mathcal{A}$ can only gain the partial information collected in the first phase of the procedure, which is how we obtain a covertness guarantee. The same two-phase structure will reappear in the next subsection.

We note that \Cref{theorem:public-example-private-sq-parity} is concerned with a realizable (in fact, exact) learning scenario. Therefore, this can trivially be made sound against arbitrarily powerful, malicious adversaries (compare the discussion in \cite{goldwasser2021interactive}). Namely, after running the protocol and obtaining some candidate string $\hat{s}$, the learner $\mathcal{L}$ can use a single private SQ (in fact, a single \emph{correlational} SQ \cite{feldman2008evolvability} suffices) to estimate the correlation between $f_{\hat{s}}$ and the unknown parity. If that estimate is sufficiently close to $1$, the learner accepts the interaction. Otherwise, the learner rejects the interaction.
Thus, the covert learner from \Cref{theorem:public-example-private-sq-parity} can trivially be made verifiable.

\subsection{Public Quantum Measurement Examples, Private Quantum Statistical Queries}\label{subsec:public-QMeasEX-private-QSQ}

As mentioned in the introduction, the definition of covert learning from public examples and private SQs readily generalizes to the case of public quantum measurement examples and private QSQs. The public oracle we consider $\mathsf{O}_{\mathrm{pri}}^{\mathrm{QMeasEx}}$, formally defined in \Cref{subsec:oracles}, is not a fully quantum oracle; when queried on a measurement, it outputs a sample drawn from the measurement statistics.
We now give the formal definition of the associated covert learning model:

\begin{definition}[Covert Exact Learning From Public Quantum Measurement Examples and Private QSQs--Formal version of \Cref{inf-definition:covert-learning-public-quantum-examples-private-qsq}]\label{definition:covert-learning-public-quantum-examples-private-qsq}
    An algorithm $\mathcal{L}$ that has access to a private QSQ oracle $\mathsf{O}_{\mathrm{pri}}^{\mathrm{QSQ}}$ and to a public quantum measurement example oracle $\mathsf{O}_{\mathrm{pri}}^{\mathrm{QMeasEx}}$ is an $(m_{\mathrm{pri}}, m_{\mathrm{pub}},\tau_{\mathrm{pri}},\delta_c,\delta_p)$-covert exact learner for a class of states $\mathcal{S}$ from private QSQs and public examples if it satisfies the following:
    \begin{itemize}
        \item $\delta_c$-\textbf{Completeness:} For any $\rho\in\mathcal{S}$, after making at most $m_{\mathrm{pri}}$ queries of tolerance at least $\tau_{\mathrm{pri}}$ to $\mathsf{O}_{\mathrm{pri}}^{\mathrm{QSQ}}(\rho)$ and at most $m_{\mathrm{pub}}$ queries to $\mathsf{O}_{\mathrm{pub}}^{\mathrm{QMeasEx}}(\rho)$, $\mathcal{L}$ outputs $\rho$ with success probability $\geq 1-\delta_c$.
        \item $\delta_p$-\textbf{Privacy:} For $\rho\sim \mathcal{S}$ a uniformly random state from the class, no adversary $\mathcal{A}$ can correctly guess $\rho$ with success probability $>\delta_p$ from the at most $m_{\mathrm{pub}}$ many public measurement examples requested by $\mathcal{L}$.
    \end{itemize}
\end{definition}

Note that by choosing $\mathsf{O}_{\mathrm{pri}}^{\mathrm{QMeasEx}}$ as the public oracle, all public oracle queries and responses are classical messages. The same applies to the private QSQ oracle $\mathsf{O}_{\mathrm{pri}}^{\mathrm{QSQ}}$. Thus, while these oracles arise from underlying quantum states, the corresponding learning problem defined in \Cref{definition:covert-learning-public-quantum-examples-private-qsq} is classical.

Mirroring the exponential separation between classical random examples and classical SQs due to \cite{kearns1998efficient}, \cite{arunachalam2023role} (see \cite{arunachalam2023role-arXiv} for the arXiv version) proved an exponential separation between quantum measurement examples and QSQs.\footnote{\cite{arunachalam2023role, nietner2023unifyingquantumstatisticalparametrized} also contain additional separations between quantum examples and QSQs for different learning and testing tasks.} While the former is for parity learning, the latter is for learning the class of $(n+1)$-qubit states
\begin{equation}
    \mathcal{S}_{\mathrm{quad}}
    = \left\{\rho_A=\ket{\psi_A}\bra{\psi_A} \, \Bigg\vert\, \ket{\psi_A}=\frac{1}{\sqrt{2^n}}\sum_{x\in\{0,1\}^n} \ket{x, x^\top Ax} \textrm{ for upper-triangular } A\in\{0,1\}^{n\times n}\right\}\, .
\end{equation}
Here, $x^\top Ax$ is to be understood modulo 2. 
Note that we can w.l.o.g. only consider upper-triangular $A$ in this definition because $x^\top A x = \sum_i A_{ii} x_i + \sum_{i\neq j}x_i A_{ij} x_j = \sum_i A_{ii} x_i + \sum_{i<j}x_i (A_{ij}+A_{ji}) x_j$.
The elements of $\mathcal{S}_{\mathrm{quad}}$ can be viewed as quantum example states for purely quadratic functions. With the following theorem, we show that the idea behind \Cref{theorem:public-example-private-sq-parity} extends to the setting of \Cref{definition:covert-learning-public-quantum-examples-private-qsq}, allowing for a covert learning algorithm that overcomes the limitations of QSQs for learning $\mathcal{S}_{\mathrm{quad}}$ with the help of a public quantum measurement example oracle.

\begin{theorem}[Covert Quadratic Function Learning from Public QMeasExs and Private QSQs---Formal version of \Cref{inftheorem:public-qmeasexample-private-qsq-quadratic}]\label{theorem:public-qmeasexample-private-qsq-quadratic}
    Let $n\in\mathbb{N}$.
    Let $\delta_c\in (0,1)$.
    There exists a computationally efficient $(m_{\mathrm{pri}} = n,m_{\mathrm{pub}} = \mathcal{O}(n + \log(1/\delta_c)), \tau_{\mathrm{pri}} = 1/3, \delta_c, \delta_p = 2^{-n})$-covert exact learner from private QSQs and public quantum measurement examples for the concept class $\mathcal{S}_{\mathrm{quad}}$.
    That is, there is a computationally efficient learner $\mathcal{L}$ that, for any unknown upper-triangular $A\in\{0,1\}^{n\times n}$, makes at most $\mathcal{O}(n + \log(1/\delta_c))$ many queries to a public quantum measurement example oracle $\mathsf{O}_{\mathrm{pub}}^{\mathrm{QMeasEx}}(\rho_A)$, followed by at most $n$ many queries with tolerance $1/3$ to a private parity QSQ oracle $\mathsf{O}_{\mathrm{pri}}^{\mathrm{QSQ}}(\rho_A)$ in such a way that the following holds:
    \begin{itemize}
        \item \textbf{Completeness:} With probability $\geq 1-\delta_c$, $\mathcal{L}$ learns the unknown matrix $A$.
        \item \textbf{Privacy:} For a uniformly random (upper-triangular) $A\in\{0,1\}^{n\times n}$, any adversary $\mathcal{A}$ that only sees $\mathcal{L}$'s public quantum measurement example oracle queries correctly guesses $A$ with probability at most $2^{-n}$.
    \end{itemize}
\end{theorem}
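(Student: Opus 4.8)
The learner $\mathcal{L}$ follows the two-phase template of \Cref{theorem:public-example-private-sq-parity}. Write $f_A(x)=x^\top A x=\ell_A(x)+q_A(x)\pmod{2}$, where $\ell_A(x)=\sum_i A_{ii}x_i$ is the unknown ``diagonal parity'' and $q_A(x)=\sum_{i<j}A_{ij}x_ix_j$ is the purely quadratic ``off-diagonal'' part. In a \emph{public phase}, $\mathcal{L}$ uses cheap two-copy measurements to learn $q_A$ (equivalently, all off-diagonal $A_{ij}$); in a \emph{private phase}, it uses $n$ QSQs to recover the diagonal via influence estimation.

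\emph{Public phase.} From $\ket{\psi_A}$ one can herald a copy of the phase state $\tfrac{1}{\sqrt{2^n}}\sum_x(-1)^{f_A(x)}\ket{x}$ by applying $H$ to the last qubit and post-selecting outcome $1$ (success probability $\tfrac{1}{2}$, constant overhead); this is a valid measurement to request from $\mathsf{O}_{\mathrm{pub}}^{\mathrm{QMeasEx}}(\rho_A)$. Bell-sampling two such phase states (a mild variant of the routine of \cite{arunachalam2023role}) produces an outcome $(a,b)$ with probability $2^{-n}\,\mathds{1}[a=(A+A^\top)b]$ — a short calculation in which the diagonal-dependent sign $(-1)^{b^\top A b}$ is killed upon taking $\lvert\cdot\rvert^2$. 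Since $A$ is upper-triangular, $A+A^\top$ has zero diagonal and off-diagonal entries exactly $A_{ij}$ ($i<j$); after $\mathcal{O}(n+\log(1/\delta_c))$ such samples the observed $b$'s span $\mathbb{F}_2^n$ with probability $\geq 1-\delta_c$, whereupon $\mathcal{L}$ reads off $A+A^\top$, hence $q_A$, by Gaussian elimination. Two features matter downstream: (i) the public phase succeeds using $m_{\mathrm{pub}}=\mathcal{O}(n+\log(1/\delta_c))$ weighted queries, and (ii) the outcome distribution depends on $A$ only through $A+A^\top$, i.e.\ it is independent of $\mathrm{diag}(A)$.

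\emph{Private phase and completeness.} Knowing $q_A$, $\mathcal{L}$ knows the efficiently implementable unitary $V_{q_A}:\ket{x,c}\mapsto\ket{x,c\oplus q_A(x)}$, which maps $\rho_A$ to $\ket{\psi_g}\bra{\psi_g}$ for the parity $g=(-1)^{\ell_A}$. Hence any QSQ with observable $M$ on $\ket{\psi_g}$ is simulated by the QSQ with observable $V_{q_A}^\dagger M V_{q_A}$ on $\rho_A$ (same operator norm, still efficient), returning the same value $\tr[M\ket{\psi_g}\bra{\psi_g}]$. Applying \cite[Lemma 4.1]{arunachalam2020quantumstatisticalquerylearning} to $\ket{\psi_g}$ with $T_i=\{S:S_i=1\}$ then yields, from one QSQ on $\rho_A$ of tolerance $1/3$, an estimate of $\sum_{S\ni i}\hat g(S)^2=\mathrm{Inf}_i(g)=A_{ii}\in\{0,1\}$; thresholding at $1/2$ recovers $A_{ii}$ with certainty. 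Doing this for $i=1,\dots,n$ costs $m_{\mathrm{pri}}=n$ QSQs and recovers $\mathrm{diag}(A)$, which together with the public phase reconstructs $A$. The only randomness is the Phase-1 spanning event, so completeness holds with probability $\geq 1-\delta_c$.

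\emph{Privacy, and the main obstacle.} For $A$ uniformly random upper-triangular, the off-diagonal and diagonal entries are independent, and conditioning on the off-diagonal part leaves $\mathrm{diag}(A)$ uniform on $\{0,1\}^n$. By property (ii), the entire public transcript is, conditionally on the off-diagonal part, independent of $\mathrm{diag}(A)$, so the posterior on $\mathrm{diag}(A)$ given the transcript is still uniform on $\{0,1\}^n$; hence any adversary matches the true $A$ with probability at most $2^{-n}$, giving $\delta_p=2^{-n}$. I expect property (ii) to be the only genuinely delicate step: one must use a measurement whose outcomes are \emph{exactly} diagonal-independent (naive Bell sampling on the $(n{+}1)$-qubit example state is not --- when $q_A\equiv 0$ it leaks $\ell_A$ entirely --- which is precisely why the heralding-to-a-phase-state step is needed), whereas the QSQ simulation of the private phase and the spanning/union-bound arguments are routine.
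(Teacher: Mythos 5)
Your proposal is correct and follows essentially the same route as the paper: herald from two copies of $\ket{\psi_A}$ to two phase-state copies (via $H$ on the last qubit and post-selecting on $1$, which the paper folds into its explicit POVM $\{E_{y,z,b}\}$ with the "$b=11$" branch), Bell-sample to obtain $(a,b)$ with $a=(A+A^\top)b$ so that Gaussian elimination recovers the off-diagonal part, then use the known off-diagonal unitary (your $V_{q_A}$, the paper's $U$; the two conjugations $V_{q_A}^\dagger M V_{q_A}$ vs.\ $UMU^\dagger$ coincide since $U$ is an involution) to reduce to parity-influence QSQs with tolerance $1/3$. The privacy argument is also the same: conditioned on the public transcript, $\mathrm{diag}(A)$ remains uniform over $\{0,1\}^n$. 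Your side remark that the heralding step is what makes the public outcomes \emph{exactly} diagonal-independent (raw Bell sampling on the $(n{+}1)$-qubit example state would leak $\ell_A$ when $q_A\equiv 0$) is a correct and slightly more explicit justification of a point the paper handles implicitly through its POVM construction; this is a presentational difference rather than a different proof.
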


\begin{proof}
    First, we describe that protocol that $\mathcal{L}$ uses, then we prove that it has the claimed completeness and privacy properties.
    Let $A\in\{0,1\}^{n\times n}$ be the upper-triangular Boolean matrix specifying the unknown state $\rho_A=\ket{\psi_A}\bra{\psi_A}$.
    
    \textbf{Algorithm:} The procedure is as follows; it closely follows the algorithm from \cite[Fact 3]{arunachalam2023role-arXiv}:
    \begin{enumerate}
        \item Set $m_{\mathrm{pub}}=\lceil \frac{1}{\log_2(8/7)}\cdot (n + \log_2(1/\delta_c))\rceil$.
        \item Query $\mathsf{O}_{\mathrm{pub}}^{\mathrm{QMeasEx}}(\rho_A)$ for $m_{\mathrm{pub}}$ many times on the $2$-copy Bell sampling POVM $\{E_{y,z,b}\}_{y,z\in\{0,1\}^n,b\in\{0,1\}^2}$\footnote{The POVM can be written down explicitly as $E_{y,z,b} = F_{y,z,b}^\dagger F_{y,z,b}$ for $$F_{y,z,b}=B_{y,z,b}(\ket{b_1}\bra{b_1}_{n+1}\otimes \ket{b_2}\bra{b_2}_{2n+2})(H_{n+1}\otimes H_{2n+2})\, ,$$ 
        where we defined $B_{y,z,11} = \ket{z}\bra{z}_{1,\ldots,n}H^{\otimes n}_{1,\ldots,n} \ket{y}\bra{y}_{n+1,\ldots,2n+1}(\bigotimes_{i=1}^n \mathrm{CNOT_{i,n+1+i}})$ and $B_{y,z,b}=\ket{z}\bra{z}_{1,\ldots,n}\otimes \ket{y}\bra{y}_{n+1,\ldots,2n+1}$ for $b\neq 11$. 
        Here we used indices to denote the subsystems acted on, and we did not write out identity tensor factors. This description of the POVM arises as follows: $H_{n+1}\otimes H_{2n+2}$ first transforms from two copies of $\ket{\psi_A}$ to two copies of $\frac{1}{\sqrt{2}}(\ket{+}^{\otimes n}\ket{0} + \sum_{x\in\{0,1\}^{n}} (-1)^{x^\top Ax}\ket{x,1})$.
        We then measure the last qubits of both copies in the computational basis; if we observe outcome $b=11$, we proceed, otherwise we effectively abort. This effective post-selection on $b=11$ moves the function to the phase. The remaining operations now implement the standard Bell sampling procedure on two copies of the phase state, compare, e.g., the proof of \cite[Fact 3]{arunachalam2023role-arXiv}.}, receiving outcomes $(y_j,z_j,b_j)$ for i.i.d. uniformly random $y_j$ and $b_j$. 
        \item Consider the subsequence $(j_\ell)_\ell$ of rounds for which $b_{j_\ell}=11$. For these rounds, we have $z_{j_\ell} = y_{j_\ell}^\top (A+A^\top)  y_{j_\ell}$. If the $(y_{j_\ell})_{j_\ell}$ do not contain a set of $n$ linearly independent vectors (which $\mathcal{L}$ can check efficiently using Gaussian elimination), abort. Otherwise, continue.
        \item Use Gaussian elimination for the data points $(y_{j_\ell}, z_{j_\ell})=(y_{j_\ell}, y_{j_\ell}^\top (A+A^\top)  y_{j_\ell})$ to determine all off-diagonal entries of $A$, $A_{ij}$ for $i\neq j$ .
        \item For each $1\leq i\leq n$, query $\mathsf{O}_{\mathrm{pri}}^{\mathrm{QSQ}}(\rho_A)$ on the observable $UM_iU^\dagger$ with tolerance $\tau_{\mathrm{pri}} = 1/3$, obtaining outcome $\hat{I}_i$. Here, $M_i$ is the observable for estimating the influence of the $i$th variable (compare \cite[Lemma 4.2]{arunachalam2020quantumstatisticalquerylearning}), and $U=\prod_{1\leq i<j\leq n} U(A_{ij})$ is defined via $U(A_{ij})\ket{x,b}=\ket{x, b\oplus (x_iA_{ij}x_j)}$\footnote{Note that $\mathcal{L}$ can computationally efficiently produce a description of $UM_iU^\dagger$ and then use this description for the QSQs.}. 
        If $\hat{I}_i\leq 1/2$, conclude $A_{ii}=0$. If $\hat{I}_i> 1/2$, conclude $A_{ii}=1$.
        \item $\mathcal{L}$ outputs $\rho_A$ using the learned $n\times n$ matrix $A$.
    \end{enumerate}

    \textbf{Completeness:}
    We now prove the completeness guarantee for the above procedure. As the $y_j$ 
    are i.i.d. uniformly random strings in $\{0,1\}^n$ 
    and by our choice of ${m}_{\mathrm{pub}}$, the subsequence $(y_{j_\ell})_{j_\ell}$ contains a set of $n$ linearly independent vectors with probability $\geq 1-\delta_c$.
    The remainder of the completeness analysis for our procedure consists of combining the proofs from \cite[Fact 3]{arunachalam2023role-arXiv} and \cite[Lemma 4.2]{arunachalam2020quantumstatisticalquerylearning}), using that $U$ from Step 5 uncomputes the contributions from the off-diagonal terms in $A$ (learned in Step 4), leaving only the parity function $x\mapsto \sum_{i}A_{ii}x_i$, which we can learn by estimating the influence of every variable.

    \textbf{Privacy:} 
    Now we turn our attention to the privacy guarantee. 
    By $\mathcal{L}$'s procedure, the only information that the adversary $\mathcal{A}$ sees are the $(y_j,z_j,b_j)$ for $1\leq j\leq n$, where the $y_j$ and $b_j$ are i.i.d. uniformly random.
    If $b_j=11$, then $z_j=y_j^\top (A+A^\top)y_j$, so the data point contains no information about the diagonal entries of $A$.
    If $b_j\neq 11$, then $z_j$ is itself a uniformly random $n$-bit string that contains no information about $A$. 
    Thus, $\mathcal{A}$ has received no information about the diagonal entries of $A$, and the best they can do is to randomly guess those entries. The probability that $\mathcal{A}$ guesses all $n$ diagonal entries of $A$ correctly is $2^{-n}$. This is the claimed privacy guarantee.
\end{proof}

The above covert learning algorithm utilizes the learning procedure from \cite[Fact 3]{arunachalam2023role}, which is based on two-copy Bell sampling. 
We leave open the question whether the single-copy procedure for learning $\mathcal{S}_{\mathrm{quad}}$ via derivative learning from \cite[Theorem 3]{arunachalam2022optimal} can be made covert with a similarly strong covertness guarantee, achieving an exponentially small guessing probability for any adversary.
We also note that, for the same reason as discussed at the end of \Cref{subsec:public-classical-private-sq-parity}, the above covert learner can trivially be made verifiable against malicious adversaries.

\begin{remark}
    Our proof above used only one feature of the private oracle $\mathsf{O}_{\mathrm{pri}}^{\mathrm{QSQ}}$: one can efficiently learn parity functions from access to this oracle. Hence, the covert learning procedure above can, after a small modification, be obtained with other sufficiently strong private oracles.
    For instance, we could juxtapose the public quantum measurement example oracle $\mathsf{O}_{\mathrm{pub}}^{\mathrm{QMeasEx}}$ with a private classical example oracle $\mathsf{O}_{\mathrm{pri}}^{\mathrm{Ex}}$.
    In this case, the covert learner with access to both oracles would use $\mathcal{O}(n+\log(1/\delta_c))$ queries to $\mathsf{O}_{\mathrm{pub}}^{\mathrm{QMeasEx}}$ followed by $\mathcal{O}(n)$ queries to $\mathsf{O}_{\mathrm{pri}}^{\mathrm{Ex}}$. 
    As learning degree-$2$ polynomials in $n$ binary variables classically requires $\Omega(n^2)$ uniformly random examples, this would constitute a (mild) polynomial improvement in both sample and computational complexity while maintaining covertness.
\end{remark}

\begin{remark}In this subsection, we have focused on the standard notion of QSQs for quantum states introduced in \cite{arunachalam2020quantumstatisticalquerylearning}. We note that models of QSQ access to quantum channels have recently been proposed \cite{angrisani2025learningunitariesquantumstatistical, wadhwa2025process-qsq}, and our \Cref{definition:covert-learning-public-quantum-examples-private-qsq} can naturally be extended to QSQ learning of channels rather than states.
Additionally, as the exponential separation between QSQs and quantum measurement examples for quadratic function learning carries over from states to unitaries \cite[Section 7]{angrisani2025learningunitariesquantumstatistical}, one can immediately obtain an analogue of \Cref{theorem:public-qmeasexample-private-qsq-quadratic} for covertly learning certain quadratic function unitaries from public quantum measurement examples and private (Choi state) QSQs.
\end{remark}

\subsection{A Covert Learning Template and Further Quantum Applications}\label{subsec:covert-learning-template}

We conclude this section with a discussion of how the underlying principles of the previous two subsections can be applied more broadly.
The covert learning procedures from \Cref{theorem:public-example-private-sq-parity,theorem:public-qmeasexample-private-qsq-quadratic} suggest the following template for covert learning:
Suppose a class of objects $\mathcal{C}$ can be learned using $m_{1}$ queries to $\mathsf{O}_{1}$ and $m_{2}$ queries to $\mathsf{O}_{2}$, whereas learning the same task from $\mathsf{O}_{2}$ alone requires $m'_{2}\gg m_{1}+m_{2}$ queries. 
Further suppose that learning $\mathcal{C}$ from $\mathsf{O}_{1},\mathsf{O}_{2}$-access implicitly works with an underlying partition $\mathcal{C}=\bigsqcup_{i}\mathcal{C}_i$ by first using $m_{1}$ queries to $\mathsf{O}_{1}$-access to identify which $\mathcal{C}_i$ the unknown concept lies in and then solving the learning problem for that $\mathcal{C}_i$ using the at most $m_{2}$ queries to $\mathsf{O}_{2}$.
In this setting, there exists a covert learner for $\mathcal{C}$ that makes $m_{1}$ public queries to $\mathsf{O}_{\mathrm{pub}} = \mathsf{O}_{1}$ to determine $\mathcal{C}_i$, followed by at most $m_{2}$ queries to $\mathsf{O}_{\mathrm{pri}} = \mathsf{O}_{2}$ to recover the unknown concept. 
This comes with the covertness guarantee that any adversary $\mathcal{A}$ who sees only the public queries and answers can guess the unknown concept with success probability at most $(\min_i |\mathcal{C}|_i)^{-1}$.

This simple template can be instantiated in other scenarios in addition to those of \Cref{theorem:public-example-private-sq-parity,theorem:public-qmeasexample-private-qsq-quadratic}. For example, consider a setting in which $\mathsf{O}_{\mathrm{pub}}$ provides access to (outcomes of measurements performed on) multiple copies of an unknown quantum state, whereas $\mathsf{O}_{\mathrm{pri}}$ only provides access to (outcomes of measurements performed on) single copies. This difference in oracle access---which can also be viewed as a difference in terms of the quantum memory available to the learner---is by now known to lead to large separations in query complexity for a variety of tasks; see \cite{huang2021information, aharonov2022quantum, chen2022tight, chen2022exponential, huang2022quantum-advantage, chen2023unitarity, chen2024optimal, caro2022learning-PTM} for a non-exhaustive list. For some of these scenarios, a learner can use the above template to overcome the limitation of single-copy access in a covert manner by accessing a public multi-copy oracle. In the following, we highlight two such scenarios as representative examples.

First, we consider the task of Pauli shadow tomography \cite{huang2021information,king2025triply, chen2024optimal}, a special case of general shadow tomography \cite{aaronson2018shadow, badescu2024improved, abbas2023quantum, chen2024optimalhighprecisionshadowestimation, sinha2025dimension}. 
Formally, this task is defined as follows:

\begin{problem}[Pauli Shadow Tomography]\label{problem:pauli_shadow_tomography}
    Let $\mathcal{P}_n =\{\mathds{1}_2, X,Y,Z\}^{\otimes n}\}$ be the set of $n$-qubit Pauli operators. Pauli shadow tomography for a subset $\mathcal{S}\subseteq\mathcal{P}_n$ is the following problem: Given oracle access to an unknown $n$-qubit state $\rho$, output, with probability $\geq 1-\delta$, estimate $\hat{\alpha}_P$ such that $|\hat{\alpha}_P - \tr{P\rho}|\leq \varepsilon$, for all $P\in\mathcal{S}$.
\end{problem}

For $\mathcal{S}=\mathcal{P}_n$, this task requires exponentially-in-$n$ many single-copy oracle queries \cite{huang2021information, chen2022exponential}, but it can be solved from linearly-in-$n$ many two-copy oracle queries \cite{chen2024optimal, king2025triply}. The most recent such two-copy algorithms uses in the first phase, two-copy Bell sampling to identify Pauli operators that have non-negligible expectation values w.r.t.~the unknown state, and then in a second phase, employs single-copy access to determine the signs of those expectation values \cite{chen2024optimal, king2025triply}. Thus, following our above template, a learner can covertly make use of a public multi-copy oracle and a private single-copy oracle to solve \Cref{problem:pauli_shadow_tomography} from linearly-in-$n$ many queries, thus overcoming the exponential single-copy lower bound, while any eavesdropper is left guessing about the signs of the Pauli expectation values.
In more detail, following our template and the Pauli shadow tomography procedures from \cite{chen2024optimal, king2025triply}, we obtain the following result:

\begin{proposition}[Covert Pauli Tomography using Public Two-Copy and Private Singe-Copy Quantum Measurement Example Oracle]\label{proposition:covert-pauli-shadow-tomography}
    Let $n\in\mathbb{N}$. Let $\delta\in (0,1)$ and $\varepsilon\in (0,1/2)$.
    Let $\mathcal{S}\subseteq\mathcal{P}_n$.
    There exists a covert Pauli shadow tomography algorithm $\mathcal{L}$ from private single-copy quantum measurement examples and public two-copy quantum measurement examples that achieves the following guarantees:
    For any unknown $n$-qubit state $\rho$, $\mathcal{L}$ makes at most $\mathcal{O}((\log|\mathcal{S}| + \log(1/\delta))/\varepsilon^4)$ many queries to a public two-copy quantum measurement example oracle for $\rho$, followed by at most $\mathcal{O}((\log|\mathcal{S}| + \log(1/\delta))/\varepsilon^4)$ many queries to a private single-copy quantum measurement example oracle for $\rho$ in such a way that the following holds:
    \begin{itemize}
        \item \textbf{Completeness:} With probability $\geq 1-\delta$, $\mathcal{L}$ outputs estimates $\hat{\alpha}_P$ such that $|\hat{\alpha}_P - \tr{P\rho}|\leq \varepsilon ~\forall P\in\mathcal{S}$.
        \item \textbf{Privacy:} For a uniformly randomly chosen $\rho_{b,P} = \frac{\mathds{1}_2^{\otimes n} +2\varepsilon (-1)^{b}   P}{2^n}$, $b\sim\{0,1\}$ and $P\sim\mathcal{S}$, any adversary $\mathcal{A}$ that only sees $\mathcal{L}$'s public two-copy quantum measurement example oracle correctly guesses $b$ with probability at most $1/2$.
        Moreover, if $\varepsilon\leq (2\max_{b\in\{0,1\}^{|\mathcal{S}|}}\norm{\sum_{P\in\mathcal{S}} (-1)^{b_P} P}_\infty)^{-1}$, then, for a uniformly random $\rho_b = \frac{\mathds{1}_2^{\otimes n} + 2\varepsilon\sum_{P\in\mathcal{S}} (-1)^{b_P} P}{2^n}$, $b\sim\{0,1\}^{|\mathcal{S}|}$, any adversary $\mathcal{A}$ that only sees $\mathcal{L}$'s public two-copy quantum measurement example oracle queries correctly guesses $b$ with probability at most $1/2^{|\mathcal{S}|}$.\footnote{Notice that $\norm{\sum_{P\in\mathcal{S}} (-1)^{b_P} P}_\infty\leq |\mathcal{S}|$ by triangle inequality and, using $\norm{\cdot}_\infty\geq \norm{\cdot }_F/\sqrt{2^n}$, $\norm{\sum_{P\in\mathcal{S}} (-1)^{b_P} P}_\infty\geq \sqrt{|\mathcal{S}|}$. So, for the assumption $\varepsilon\leq (2\max_{b\in\{0,1\}^{|\mathcal{S}|}}\norm{\sum_{P\in\mathcal{S}} (-1)^{b_P} P}_\infty)^{-1}$ to hold, it is necessary that $2\varepsilon\leq 1/\sqrt{|\mathcal{S}|}$ and sufficient that $2\varepsilon\leq 1/|\mathcal{S}|$. Given the bound of $1/2^{|\mathcal{S}|}$, we see that we can ensure an adversary guessing probability of at most $\delta_p$ for $2\varepsilon\leq 1/\log(1/\delta_p)$.}
    \end{itemize}
\end{proposition}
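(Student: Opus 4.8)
\emph{Proof strategy.} The plan is to instantiate the covert-learning template discussed just above with the two-phase, two-copy Pauli shadow tomography protocols of \cite{chen2024optimal, king2025triply}, letting $\mathsf{O}_{\mathrm{pub}}^{\mathrm{QMeasEx}}$ (two-copy queries) play the role of the ``strong'' oracle used in the first phase and $\mathsf{O}_{\mathrm{pri}}^{\mathrm{QMeasEx}}$ (single-copy queries) the ``weak'' oracle used in the second. Concretely, in \emph{Phase~1} the learner $\mathcal{L}$ makes $\mathcal{O}((\log|\mathcal{S}|+\log(1/\delta))/\varepsilon^4)$ rounds of the fixed, non-adaptive $n$-qubit Bell-basis measurement on pairs of copies of $\rho$; by the standard Bell-sampling analysis this yields, for every $P\in\mathcal{S}$, an estimate of $\abs{\tr{P\rho}}^2$ accurate to $\mathcal{O}(\varepsilon^2)$ (Hoeffding plus a union bound over $\mathcal{S}$), hence an estimate of $\abs{\tr{P\rho}}$ accurate to $\varepsilon$ and a list of the ``significant'' Paulis in $\mathcal{S}$. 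Then, in \emph{Phase~2}, the learner spends $\mathcal{O}((\log|\mathcal{S}|+\log(1/\delta))/\varepsilon^4)$ private single-copy queries to fix the sign of $\tr{P\rho}$ for each significant $P$, outputting $\hat{\alpha}_P$ from the two phases and $\hat{\alpha}_P=0$ for the remaining $P$. This is exactly the template's ``narrow down with $\mathsf{O}_1$, then resolve within the cell with $\mathsf{O}_2$'' pattern, where the cells of the implicit partition of the hard-instance families are the fibers of the magnitude map $\rho\mapsto(\abs{\tr{P\rho}})_{P\in\mathcal{S}}$.

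For \textbf{completeness}, I would just invoke the correctness of the cited two-copy protocols and take a union bound over the (suitably rescaled) failure probabilities of the two phases; the query counts then follow from \cite{chen2024optimal, king2025triply}, giving the claimed $\mathcal{O}((\log|\mathcal{S}|+\log(1/\delta))/\varepsilon^4)$ in each phase.

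The content is in \textbf{privacy}. The key structural point is that Phase~1 is a single fixed, state-independent POVM whose outcome statistics depend on $\rho$ only through the collection $\{\abs{\tr{P\rho}}^2\}_{P\in\mathcal{P}_n}$: the operators $\{P\otimes P\}_{P\in\mathcal{P}_n}$ mutually commute (anticommuting $P,Q$ still give $PQ\otimes PQ=QP\otimes QP$), the Bell states are their common eigenvectors, and the outcome probabilities are, up to normalization, the $\mathbb{F}_2^{2n}$-Fourier transform of $P\mapsto\abs{\tr{P\rho}}^2$. Hence the entire public transcript seen by any eavesdropper $\mathcal{A}$ is a function of $(\abs{\tr{P\rho}}^2)_P$ alone. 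Next I would compute, for the hard instances, $\tr{Q\rho_{b,P}}=2\varepsilon(-1)^b\,\delta_{Q,P}$ and $\tr{Q\rho_b}=2\varepsilon(-1)^{b_Q}\,\mathds{1}[Q\in\mathcal{S}]$ for all $Q\neq\mathds{1}_2^{\otimes n}$ (using $\tr{QP}=2^n\delta_{Q,P}$ for distinct-from-identity Paulis), so that $\abs{\tr{Q\rho_{b,P}}}^2$ and $\abs{\tr{Q\rho_b}}^2$ are independent of $b$. Therefore the public transcript is statistically independent of $b$; since $\mathcal{A}$ sees nothing of the private Phase~2, its posterior on $b$ equals its prior, and its optimal guess succeeds with probability $\tfrac12$ for the single-bit family $\{\rho_{b,P}\}$ and $2^{-|\mathcal{S}|}$ for the multi-bit family $\{\rho_b\}$. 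Finally, positivity of $\rho_{b,P}$ is automatic from $\varepsilon<1/2$ since $\norm{P}_\infty=1$, while the hypothesis $\varepsilon\leq(2\max_{b\in\{0,1\}^{|\mathcal{S}|}}\norm{\sum_{P\in\mathcal{S}}(-1)^{b_P}P}_\infty)^{-1}$ is exactly what makes each $\rho_b$ a legitimate density operator (unit trace is automatic, and all eigenvalues lie in $[0,1]$ iff $2\varepsilon\norm{\sum_{P}(-1)^{b_P}P}_\infty\leq 1$); combined with the norm bounds recalled in the proposition's footnote this yields the clean sufficient condition $2\varepsilon\leq 1/|\mathcal{S}|$ and the stated $\delta_p$-guessing bound.

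The main obstacle is not conceptual but bookkeeping: I must check against the precise formulation of the protocols in \cite{chen2024optimal, king2025triply} that Phase~1 is genuinely \emph{sign-oblivious} — that neither the choice of public measurements nor the ``significance'' test ever uses information distinguishing $\rho_{b,P}$ from $\rho_{b',P}$ (or $\rho_b$ from $\rho_{b'}$) — so that one can honestly claim the adversary's view is $b$-independent, and that the single-copy sign-determination phase really fits within the stated query budget. Both should follow directly, since Bell sampling is a fixed measurement and the significance test depends only on the estimated magnitudes.
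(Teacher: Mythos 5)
Your proposal matches the paper's approach: the paper itself does not spell out a separate proof of this proposition — it instantiates the two-phase template of \Cref{subsec:covert-learning-template} with the Bell-sampling-then-sign-determination protocols of \cite{chen2024optimal, king2025triply}, exactly as you do, and the privacy claim rests on the observation you articulate that the Bell-sampling outcome distribution depends on $\rho$ only through the magnitudes $\{\abs{\tr{P\rho}}^2\}_P$, which is $b$-independent for the hard instances $\rho_{b,P}$ and $\rho_b$. Your explicit computations ($\tr{Q\rho_{b,P}}=2\varepsilon(-1)^b\delta_{Q,P}$, the Fourier-transform view of the Bell measurement statistics, the positivity checks tied to $\varepsilon<1/2$ and $2\varepsilon\lVert\sum_P(-1)^{b_P}P\rVert_\infty\le 1$) are correct and in fact make the argument more self-contained than the paper's sketch.
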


Here, a two-copy (single-copy) quantum measurement example oracle is the natural variant of $\mathsf{O}^{\mathrm{QMeasEx}}$ in which the only admissible queries are two-copy (single-copy) POVMs. We note that \Cref{proposition:covert-pauli-shadow-tomography} also implicitly shows how \Cref{definition:covert-exact-learning-public-examples-private-sq,definition:covert-learning-public-quantum-examples-private-qsq} can easily be modified to formulate covertness in learning scenarios other than (exact) function/state learning.

\begin{remark}
    While the focus of this section is target-covertness, we note that the two-phase Pauli shadow tomography procedure outlined above also comes with a certain strategy-covertness. 
    Namely, the first phase (Bell sampling) is independent of the specific set $\mathcal{S}$ of Paulis of interest; it only depends on $|\mathcal{S}|$, which determines the number of Bell sampling measurements to be performed.
    Thus, any adversary $\mathcal{A}$ that sees only $\mathcal{L}$'s public two-copy quantum measurement example oracle queries does not gain any information about the set $\mathcal{S}$ other than its size. This means that the above procedure is also a sample-efficient (albeit in general computationally inefficiently) statistically covert QSQ algorithm for $\mathcal{S}$-queries from public (two-copy) quantum measurement examples (compare \Cref{definition:covert-qsq-model})---if we assume that the size of $\mathcal{S}$ is known in advance and given to the simulator as an input.
\end{remark}

As a second scenario in which our template applies, consider the problem of (exactly) learning an unknown $n$-qubit stabilizer state. This can be achieved with $\mathcal{O}(n)$ queries to a two-copy oracle \cite{montanaro2017learning} but requires $\Omega(n^2)$ queries when only a single-copy oracle is available \cite{arunachalam2022optimal}.
Here, learning stabilizer states with two-copy access proceeds by first using two-copy Bell difference sampling to determine a set of $n$ candidate stabilizer generators up to signs, and then uses further single-copy measurements to determine those signs \cite[Algorithm 1]{montanaro2017learning}, thus obtaining an efficient description that uniquely identifies the stabilizer state. 
Again, our template can be applied to obtain a covert variant of this procedure, allowing a learner with private single-copy access to benefit from public multi-copy access while keeping the information about the correct choice of signs for the stabilizer generator hidden from any eavesdropper. 
Concretely, this yields the following:

\begin{proposition}[Covert Stabilizer State Learning using Public Two-Copy and Private Singe-Copy Quantum Measurement Example Oracle]\label{proposition:covert-stabilizer-state-learning}
    Let $n\in\mathbb{N}$. 
    There exists a covert stabilizer state learning algorithm $\mathcal{L}$ from private single-copy quantum measurement examples and public two-copy quantum measurement examples that achieves the following guarantees:
    For any unknown $n$-qubit stabilizer state $\ket{\psi}$, $\mathcal{L}$ makes at most $\mathcal{O}(n)$ many queries to a public two-copy quantum measurement example oracle for $\ket{\psi}$, followed by at most $\mathcal{O}(n)$ many queries to a private single-copy quantum measurement example oracle for $\ket{\psi}$ in such a way that the following holds:
    \begin{itemize}
        \item \textbf{Completeness:} With probability $\geq 1-2^{-n}$, $\mathcal{L}$ outputs a valid set of $n$ stabilizer generators for $\ket{\psi}$.
        \item \textbf{Privacy:} For a uniformly random stabilizer state $\ket{\psi}$, any adversary $\mathcal{A}$ that only sees $\mathcal{L}$'s public two-copy quantum measurement example oracle correctly guesses $\ket{\psi}$ with probability at most $1/2^n$.
    \end{itemize}
\end{proposition}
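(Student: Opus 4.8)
The plan is to instantiate the two-phase covert-learning template of \Cref{subsec:covert-learning-template} with the public oracle $\mathsf{O}_{\mathrm{pub}}=\mathsf{O}^{\mathrm{QMeasEx}}_{\mathrm{pub}}$ restricted to two-copy POVMs and the private oracle $\mathsf{O}_{\mathrm{pri}}=\mathsf{O}^{\mathrm{QMeasEx}}_{\mathrm{pri}}$ restricted to single-copy POVMs, applied to the two-copy stabilizer learning algorithm of \cite[Algorithm 1]{montanaro2017learning}. The starting point is the structural fact that an $n$-qubit stabilizer state $\ket{\psi}$ is the unique common $+1$-eigenstate of a maximal abelian subgroup $S$ of the $n$-qubit Pauli group with $-\mathds{1}\notin S$, and that such an $S$ is fully specified by two pieces of data: (i) its image $V\subseteq\mathbb{F}_2^{2n}$ under the Weyl (symplectic) encoding, an $n$-dimensional isotropic subspace (``the generators up to signs''), and (ii) a sign vector $\sigma\in\{0,1\}^n$ recording the eigenvalue $(-1)^{\sigma_i}$ of a fixed generating set. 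A uniformly random stabilizer state can thus be sampled as: draw $V$ uniformly over $n$-dimensional isotropic subspaces, then draw $\sigma$ uniformly from $\{0,1\}^n$; the $2^n$ choices of $\sigma$ for a fixed $V$ yield $2^n$ distinct stabilizer states.

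First I would run \textbf{Phase 1} publicly: query $\mathsf{O}_{\mathrm{pub}}$ on the two-copy Bell difference sampling POVM $\mathcal{O}(n)$ times. By \cite{montanaro2017learning}, each outcome is an element of $V$ whose distribution depends only on $V$ (in fact it is uniform on $V$), and a standard $\mathbb{F}_2$-covering argument shows that $cn$ such outcomes span $V$ except with probability at most $2^{-n}$ for a suitable absolute constant $c$. From a spanning set the learner extracts, by Gaussian elimination, $n$ linearly independent Paulis $P_1,\dots,P_n$ that generate $S$ up to signs. Then \textbf{Phase 2}, privately: for each $i$ the learner issues one single-copy query to $\mathsf{O}_{\mathrm{pri}}$ that measures $P_i$ on a copy of $\ket{\psi}$; since $\ket{\psi}$ is an eigenstate of $P_i$ with eigenvalue $\pm 1$, the outcome is deterministic and reveals the $i$-th sign. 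After $n$ such queries the learner holds a complete signed generating set for $S$ and outputs it, having used $\mathcal{O}(n)$ public two-copy queries and $n$ private single-copy queries in total. Completeness is then immediate: conditioned on the Phase-1 spanning event, which fails with probability at most $2^{-n}$, both phases are exact, so $\mathcal{L}$ succeeds with probability $\geq 1-2^{-n}$.

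For \textbf{privacy}, the key observation I would establish is that the Phase-1 transcript---the only thing the adversary sees---is a function of $V$ and the internal measurement randomness \emph{alone}, hence statistically independent of $\sigma$ conditioned on $V$. Writing $\ket{\psi}\bra{\psi}=2^{-n}\sum_{Q\in S}(\pm Q)$, Bell (difference) sampling on $\ket{\psi}^{\otimes 2}$ returns the Weyl label of $Q$ with probability proportional to the squared expansion coefficient, so the signs cancel and the outcome distribution is uniform on $V$ irrespective of $\sigma$. Since, conditioned on $V$, the state $\ket{\psi}$ is uniform over the $2^n$ sign choices and the transcript carries no information distinguishing them, any adversary $\mathcal{A}$---even one that reconstructs $V$ exactly---has posterior uniform over those $2^n$ states and therefore guesses $\ket{\psi}$ correctly with probability at most $2^{-n}$.

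The hard part will be the privacy step: one must check rigorously that the Bell-sampling outcome statistics on two copies of $\ket{\psi}$ are genuinely insensitive to the signs $\sigma$, and that ``uniformly random stabilizer state'' really does factor as (uniform isotropic $V$) $\times$ (uniform $\sigma\in\{0,1\}^n$). The remaining ingredients---Montanaro's correctness guarantee, the $\mathbb{F}_2$-spanning tail bound that pins down the constant hidden in $\mathcal{O}(n)$, and the determinism of the Phase-2 sign measurements---are routine. Finally, I would record the contrast with the $\Omega(n^2)$ single-copy lower bound of \cite{arunachalam2022optimal}, which shows that the public two-copy oracle genuinely reduces the learner's query complexity while leaking nothing about the signs, thereby realizing the template of \Cref{subsec:covert-learning-template} in this setting.
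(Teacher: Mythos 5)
Your proposal is correct and takes exactly the route the paper intends: Phase~1 uses public two-copy Bell(-difference) sampling to reveal the unsigned stabilizer $V$, Phase~2 uses $n$ private single-copy stabilizer-generator measurements to recover the signs $\sigma$, and privacy follows because the Bell-sampling outcome distribution on $\ket{\psi}^{\otimes 2}$ depends only on $V$ and not on $\sigma$, so the adversary's posterior is uniform over the $2^n$ sign choices. The paper does not spell out this proof---the proposition is stated as an instance of the two-phase template of \Cref{subsec:covert-learning-template} applied to \cite[Algorithm 1]{montanaro2017learning}---so your write-up, which also correctly notes the factorization of the uniform stabilizer distribution into (uniform isotropic $V$)\,$\times$\,(uniform $\sigma$) and the determinism of the sign measurements, supplies precisely the details the paper leaves implicit.
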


Similarly, some of the recent algorithms for learning $t$-doped stabilizer states \cite{grewal2023efficient, leone2023learning, hangleiter2023bell} and for agnostic stabilizer state learning \cite{grewal2023improved}\footnote{See also \cite{chen2025stabilizer,bakshi2025learningclosestproductstate,grewal2025agnostictomographystabilizerproduct} for related work on agnostic learning of quantum states, and \cite{wadhwa2024agnosticprocesstomography} for extensions to agnostic channel learning.} can be viewed as first using two-copy Bell difference sampling to narrow down the set of candidate states and then identifying a suitable candidate using single-copy classical Clifford shadows \cite{huang2020predicting}. 
This again makes them amenable to our covert learning template.

While in this subsection we have considered problems of quantum state learning, our template naturally carries over to some quantum process learning tasks, such as Pauli transfer matrix learning \cite{caro2022learning-PTM} or learning Clifford unitaries 
from Choi state access. 
One may also consider our template in the context of Hamiltonian learning, where Bell sampling, which uses entangled input states, is a powerful subroutine for Hamiltonian structure learning, while coefficients can be learned from simple quantum experiments without auxiliary systems once the structure is known; see for instance \cite{bakshi2024structure, hu2025ansatzfree}. 
It remains interesting to identify and formalize further scenarios in which our template applies.

\section{Covert Verifiable Learning From Public Quantum Oracle Queries}\label{sec:public-quantum-oracle-private-classical-queries}

A \emph{quantum phase oracle $\mathsf{O}^{\mathrm{QPh}}(f)$} for a Boolean function $f:\{0,1\}^{n}\to\{0,1\}$ acts on a computational basis state as
\begin{equation}
  \mathsf{O}^{\mathrm{QPh}}(f): \ket{x}\mapsto (-1)^{f(x)}\ket{x}\, .
\end{equation}
Applying this oracle to the uniform superposition state on $n$ qubits yields the \emph{phase state} of the function $f$,
\begin{equation}
  \ket{\psi_f^{\mathrm{Ph}}} 
  = \mathsf{O}^{\mathrm{QPh}}(f) H^{\otimes n} \ket{0^n}
  = \frac{1}{\sqrt{2^{n}}}\sum_{x\in\{0,1\}^{n}}(-1)^{f(x)}\ket{x}\, .
\end{equation}
Such phase states serve as a key resource in several quantum learning routines (compare \cite{bshouty1995learning-DNF, arunachalam2017survey}, see also \cite{cross2015quantumlearning, grilo2019LWE,kanade2019learningDNFs,caro2020quantum, caro2024verification} for non-uniform or noisy variants) and more generally in quantum algorithms such as \cite{deutsch1992rapid, bernstein1997complexity, simon1997power, aaronson2015forrelation}.

Our objective in this section is for the learner to obtain copies of this phase state from a \emph{public} quantum phase oracle in a \emph{target-covert} way, that is, while preventing an adversary from learning about function $f$. Unlike the settings considered earlier in the paper, because the channel between the learner and the oracle is \emph{quantum}, an eavesdropper cannot passively observe the transcript of interaction; it must \emph{actively} modify the state to extract information. Therefore, target-covert protocols in this setting must ensure not only that they leak no information about $f$ but also actively check that the correct state is received by the learner. That is, we must ensure both privacy and verifiability. To achieve this, we equip the learner with a weaker classical private oracle and allow the learner to abort the interaction, which together will enable state verification and the detection of covertness violations. These requirements are similar to the covert \emph{verifiable} learning framework of \cite{canetti2021covert}, which allows the adversary to modify the responses to classical membership queries sent by the learner. We formalize our setting with the following definition:

\begin{definition}[Covert Verifiable Quantum Data From Public Quantum Oracle Queries---Restating \Cref{inf-definition:covert-quantum-data-public-quantum-oracle-private-classical-queries}]\label{definition:covert-quantum-data-public-quantum-oracle-private-classical-queries}
    A quantum algorithm $\mathcal{L}$ that has access to a private classical membership query oracle $\mathsf{O}^{\mathrm{Mem}}_{\mathrm{pri}}$ and to a public quantum phase oracle $\mathsf{O}_{\mathrm{pub}}^{\mathrm{QPh}}$ is a $(m_{\mathrm{pri}}, m_{\mathrm{pub}},\delta_c,\delta_s, \varepsilon, (\delta_{\mathrm{leak}}))$-covert verifiable procedure for producing $m$ quantum phase state copies for a concept class $\mathcal{F}$ against adversary $\mathcal{A}$ if it satisfies the following:
    \begin{itemize}
        \item $(\varepsilon,\delta_c)$-\textbf{Completeness:} For any $f\in\mathcal{F}$, after making at most $m_{\mathrm{pri}}$ queries to $\mathsf{O}_{\mathrm{pri}}^{\mathrm{Mem}}(f)$ and at most $m_{\mathrm{pub}}$ queries to $\mathsf{O}_{\mathrm{pub}}^{\mathrm{QPh}}(f)$, $\mathcal{L}$ accepts and outputs a state $\rho$ such that $\bra{\psi_f^{\mathrm{Ph}}}^{\otimes m}\rho\ket{\psi_f^{\mathrm{Ph}}}^{\otimes m}\geq 1-\varepsilon$ with success probability $\geq 1-\delta_c$.
        \item $(\varepsilon,\delta_s)$-\textbf{Soundness:} For any $f\in\mathcal{F}$, after making at most $m_{\mathrm{pri}}$ queries to $\mathsf{O}_{\mathrm{pri}}^{\mathrm{Mem}}(f)$ and at most $m_{\mathrm{pub}}$ queries to $\mathsf{O}_{\mathrm{pub}}^{\mathrm{QPh}}(f)$, the latter of which are subject to corruption by the adversary $\mathcal{A}$, $\mathcal{L}$ accepts and outputs some $\rho$ with $\bra{\psi_f^{\mathrm{Ph}}}^{\otimes m}\rho\ket{\psi_f^{\mathrm{Ph}}}^{\otimes m}< 1-\varepsilon$ with failure probability $\leq \delta_s$.
        \item \textbf{{Privacy}:}
        For $F\sim\mathcal{F}$ a randomly drawn concept, \ldots 
        \begin{itemize}
            \item \textbf{Version 1:} \ldots the adversary gains no information about $F$, in the sense that the joint state $\rho_{\mathsf{FA}}$ of the classical function register and the adversary's (in general quantum) register factorizes as $\rho_{\mathsf{FA}}=\frac{\mathds{1}_{\mathsf{F}}}{|\mathcal{F}|}\otimes \rho_{\mathsf{A}}$.
            \item \textbf{Version 2:} \ldots if the adversary has a probability of at least $\delta_{\mathrm{leak}}$ of extracting information about $F$ in every interaction, then $\mathcal{L}$ accepts with failure probability $\leq \delta_s$. 
        \end{itemize}
    \end{itemize}
\end{definition}

\textbf{Version~1} is a strictly stronger guarantee: regardless of the adversary’s strategy and even if the protocol aborts, the adversary learns nothing about $F$ (the joint state factorizes as stated). By contrast, \textbf{Version~2} is a \emph{cheat-sensitive} guarantee: if the adversary attempts to extract information about $F$ with probability at least $\delta_{\mathrm{leak}}$, then the learner detects the violation and aborts with high probability. As stated, for $\delta_{\mathrm{leak}}$ to be well defined, Version~2 implicitly assumes that we can lower bound the per-round information-extraction probability\footnote{This poses a restriction on the adversary that does not exist in Version 1: it must attempt to extract information in each round for $\delta_{\mathrm{leak}}$ to be non-zero.}. 

The choice of $\mathcal{A}$ in the above definition defines the power and capabilities of the adversary against which we obtain covertness. Ideally, we want to provide Privacy Version 1 against powerful adversaries with unbounded computational power, quantum memory of their own, and the ability to ``see'' and modify both directions of the communication with the oracle: the query as well as the response. However, in the worst-case, the next observation demonstrates that such an adversary has a simple strategy that prevents information-theoretic or even computational covertness. 

\begin{observation}[Impossibility of Covertness Against General Quantum Adversaries]
\label{obs:no-covert-qmem}
    Let $\mathcal{F}$ be any class of Boolean functions with $|\mathcal{F}|>1$ that can be efficiently exactly learned from a \emph{single} phase state copy. 
    Then, no learner for $\mathcal{F}$ that makes at least a single query to a public quantum phase oracle $\mathsf{O}^{\mathrm{QPh}}_{
    \mathrm{pub}
    }$ can simultaneously satisfy the completeness and privacy requirement from \Cref{definition:covert-quantum-data-public-quantum-oracle-private-classical-queries} (either version) against an adversary with $n$ qubits of quantum memory that can modify both directions of the learner-oracle quantum communication. 
\end{observation}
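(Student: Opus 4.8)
The plan is to exhibit an explicit ``relay-and-regenerate'' adversary whose presence leaves the learner's view of the protocol \emph{identical} to an honest execution, while nonetheless handing the adversary a complete classical description of the unknown $f$. The mechanism that makes this possible is precisely the one-copy learnability hypothesis on $\mathcal{F}$: because the learner talks to the oracle over a quantum channel that the adversary fully controls, the adversary can secretly substitute its own input into the oracle call, use the returned phase state to read off $f$, and then synthesize for the learner exactly the state an honest oracle would have returned on the learner's actual query.

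Concretely, consider the learner's first call to $\mathsf{O}_{\mathrm{pub}}^{\mathrm{QPh}}(f)$, on some $n$-qubit state $\ket{\phi}$ (the learner may be adaptive and $\ket{\phi}$ arbitrary). The adversary intercepts $\ket{\phi}$, stores it in its $n$-qubit memory, and instead feeds the uniform superposition $\ket{u}=H^{\otimes n}\ket{0^n}$ to the real oracle, receiving back $\mathsf{O}^{\mathrm{QPh}}(f)\ket{u}=\ket{\psi_f^{\mathrm{Ph}}}$. It runs the assumed efficient single-copy exact learner on $\ket{\psi_f^{\mathrm{Ph}}}$ and records the resulting $f$ classically (the phase-state register is then discarded). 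Since it now knows $f$, the adversary applies $\mathsf{O}^{\mathrm{QPh}}(f)$ to the stored $\ket{\phi}$ and forwards $\mathsf{O}^{\mathrm{QPh}}(f)\ket{\phi}$ to the learner. On every later call (if $m_{\mathrm{pub}}>1$) it simply simulates $\mathsf{O}^{\mathrm{QPh}}(f)$ faithfully, which it can do having learned $f$. This uses only the stated $n$ qubits of quantum memory (to hold $\ket{\phi}$ while the substituted phase state is processed) plus classical storage for $f$, and it is efficient by efficiency of the one-copy learner.

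Two conclusions follow. First, the composite channel ``adversary $\circ$ honest oracle'' seen by the learner on each query is exactly the map $\ket{\phi}\mapsto\mathsf{O}^{\mathrm{QPh}}(f)\ket{\phi}$, i.e.\ the honest oracle's action; moreover the learner's private queries go to the honest $\mathsf{O}_{\mathrm{pri}}^{\mathrm{Mem}}(f)$ and are consistent with the relayed responses. Hence the learner's entire execution---internal measurements, private membership queries, abort decision, and output state---is distributed identically to the adversary-free run, so it still accepts and outputs $\rho$ with $\bra{\psi_f^{\mathrm{Ph}}}^{\otimes m}\rho\ket{\psi_f^{\mathrm{Ph}}}^{\otimes m}\geq 1-\varepsilon$ with probability $\geq 1-\delta_c$: completeness is untouched. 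Second, after the interaction the adversary's register holds a perfect copy of $f$, so the joint state on the function register $\mathsf{F}$ and adversary register $\mathsf{A}$ is the maximally correlated $\frac{1}{|\mathcal{F}|}\sum_{f\in\mathcal{F}}\proj{f}_{\mathsf{F}}\otimes\proj{f}_{\mathsf{A}}$, which for $|\mathcal{F}|>1$ is \emph{not} of the form $\frac{\mathds{1}_{\mathsf{F}}}{|\mathcal{F}|}\otimes\rho_{\mathsf{A}}$; this refutes Privacy Version~1. For Version~2, the adversary extracts full information about $F$ in its round with probability $1\geq\delta_{\mathrm{leak}}$, yet the learner accepts with probability $\geq 1-\delta_c>\delta_s$ (in the nondegenerate regime $\delta_c+\delta_s<1$), contradicting the required abort guarantee. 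Either way, completeness and privacy cannot hold simultaneously against this adversary.

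The only step requiring care is the claim that the learner's view is \emph{exactly} the honest one. The substitution is invisible because $\mathsf{O}^{\mathrm{QPh}}(f)$ is, by definition, the action of the honest oracle on an \emph{arbitrary} input, and the private-oracle consistency checks the learner may perform test agreement with the true $f$---which the relayed responses satisfy by construction. Formally this reduces to observing that the two global processes (honest vs.\ attacked) induce the same CPTP map on all of the learner's registers and all oracle-response registers, so all downstream statistics, including the accept probability and output fidelity, coincide. No quantitative estimates are needed beyond $1-\delta_c>\delta_s$.
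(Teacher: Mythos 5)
Your proposal is correct and follows essentially the same ``relay-and-regenerate'' strategy as the paper: swap in a uniform superposition, learn $f$ from the resulting phase state, re-apply $\mathsf{O}^{\mathrm{QPh}}(f)$ to the learner's stored query, and perfectly simulate from then on. The only cosmetic difference is that you explicitly check both privacy versions, whereas the paper notes Version~2 is the weaker requirement (Version~1 implies it) and so only refutes Version~2; your extra verification of Version~1 is correct but not needed.
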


\begin{proof}[Proof sketch]
    As Version 2 of the privacy requirement is the less restrictive one, it suffices to show the impossibility result for this version.
    Suppose the learner satisfies the completeness requirement from \Cref{definition:covert-quantum-data-public-quantum-oracle-private-classical-queries} with $\delta_c>0$.
    Suppose further that the learner $\mathcal{L}$ queries $\mathsf{O}^{\mathrm{QPh}}_{\mathrm{pub}}$ on the $\mathsf{L}_1$-register of a state $\ket{\psi}_{\mathsf{L}_1,\mathsf{L}_2}$.
    An adversary $\mathcal{A}$ with quantum memory that can modify both directions of the learner-oracle quantum communication can now proceed as follows in every round:
    
    \begin{enumerate}
      \item $\mathcal{A}$ prepares $H^{\otimes n}\ket{0^n} = \frac{1}{\sqrt{2^{n}}}\sum_{x\in\{0,1\}^{n}}\ket{x}$ in its quantum memory register $\mathsf{A}$.
      \item Upon receiving the $\mathsf{L}_1$-register of $\ket{\psi}_{\mathsf{L}_1,\mathsf{L}_2}$ intended for the public oracle, $\mathcal{A}$ swaps it with the uniform superposition from its own quantum memory by applying $\mathrm{SWAP}_{\mathsf{L}_1, \mathsf{A}}$. 
      \item $\mathcal{A}$ now sends the $\mathsf{L}_1$-register to $\mathsf{O}^{\mathrm{QPh}}_{\mathrm{pub}}(f)$ and receives the corresponding phase state $\ket{\psi_f^{\mathrm{Ph}}}$.
      \item By assumption, $\mathcal{A}$ can now apply a quantum learning algorithm to the single copy of $\ket{\psi_f^{\mathrm{Ph}}}$ to exactly learn $f$.
      \item Having exactly learned $f$, $\mathcal{A}$ can now perfectly simulate the action of $\mathsf{O}^{\mathrm{QPh}}_{\mathrm{pub}}(f)$ on the state currently stored in the $\mathsf{A}$-register and on every subsequent learner query.
    \end{enumerate}

    From the learner’s viewpoint, the transcript and reduced states are indistinguishable from an execution without an adversary; hence, by completeness, $\mathcal{L}$ accepts with probability at least $1-\delta_c$. However, the adversary extracts information about $F$ with probability $1$ in every round (i.e., $\delta_{\mathrm{leak}}=1$), so Version~2 would require that acceptance occur with probability at most $\delta_s$. Thus, the learner cannot satisfy Version 2 of the privacy requirement from \Cref{definition:covert-quantum-data-public-quantum-oracle-private-classical-queries}, against such an adversary $\mathcal{A}$. 
\end{proof}

We can derive variants of \Cref{obs:no-covert-qmem} under looser learning requirements; for example, when the target class is only learnable with high (rather than unit) success probability, or when the adversary has sufficient prior information that a \emph{single} phase-state copy suffices. However, even the stated \Cref{obs:no-covert-qmem} has consequences regarding the feasibility of covert learning. For linear (parity) functions, the Bernstein–Vazirani algorithm \cite{bernstein1997complexity} succeeds with no error from one phase state copy. Consequently, learning linear functions cannot be made covert against general adversaries in the sense of \Cref{definition:covert-quantum-data-public-quantum-oracle-private-classical-queries}. An analogous impossibility holds for the Deutsch–Jozsa promise problem: a single phase query decides with certainty whether a function $f$ is constant or balanced \cite{deutsch1992rapid,cleve1998quantum}, so no protocol can simultaneously satisfy completeness and privacy against general adversaries for this distinguishing task as well.

We conclude from \Cref{obs:no-covert-qmem}: To covertly obtain a phase state, we must in general either restrict the adversary or impose additional conditions on the learning problem. In the simple strategy described above, the adversary needed both its own quantum memory (to perform the swap step) and the ability to tamper with the quantum communication between learner and oracle in both directions. We will achieve covertness without limiting the learning task itself by considering adversaries that are weakened along one of these dimensions:

\begin{enumerate}
    \item General adversaries that can tamper only with the quantum communication from the oracle to the learner (see Section~\ref{sec:non-iid-onlybackwards}),
    \item i.i.d.\ ancilla-free adversaries that can tamper with both directions of quantum communication (see Section~\ref{sec:iid-both-directions}).
\end{enumerate}

Our two covert verifiable protocols differ in the nature of privacy guarantee and how it is achieved, but they employ the same primitive to verify the state received by the learner. Section~\ref{sec:verification-shadows} develops 
versions of the shadow overlap certification protocol \cite{huang2025certifying-nature} that will form the core subroutine of our procedures, and the following sections combine these primitives with additional privacy analyses to obtain the two covert verifiable protocols for acquiring quantum data. In Section \ref{sec:phase-state-apps}, we then discuss concrete quantum algorithms that can be made covert and verifiable using these protocols while maintaining the large quantum advantage. 

\subsection{Phase State Certification} \label{sec:verification-shadows}

For state certification, we equip the learner with a private classical membership query oracle $\mathsf{O}^{\mathrm{Mem}}_{\mathrm{pri}}$. With this oracle, the shadow overlap estimation protocol of \cite{huang2025certifying-nature} (which was first published as \cite{huang2024certifying}; we will mostly refer to the corresponding preprint \cite{huang2024certifying-arXiv}) can be used as a phase state certification procedure from i.i.d.\ copies using single-qubit measurements. To certify a phase state in the non‑i.i.d.\ setting, we invoke a result from \cite{fawzi2024learning}, which gives a general approach to transform any non-adaptive incoherent quantum algorithm designed to work on i.i.d.~copies into a non-i.i.d.~version thereof. The following two subsections review these results from prior work and develop them specifically for the certification of phase states. 

\begin{remark}\label{remark:phase-states-vs-quantum-examples}
    In this subsection, we focus on certifying phase states obtained by querying a public quantum phase oracle $\mathsf{O}_{\mathrm{pub}}^{\mathrm{QPh}}$. However, our certification results easily carry over to quantum example states, that is, states of the form $\ket{\psi_f^{\mathrm{Ex}}}\coloneqq \frac{1}{\sqrt{2^n}}\sum_{x\in\{0,1\}^n} \ket{x,f(x)}$, where $f:\{0,1\}^n\to\{0,1\}^m$. These states naturally arise when querying a quantum membership query oracle $\mathsf{O}_{\mathrm{pub}}^{\mathrm{QMem}}$ on a uniform superposition.
    That our certification results indeed extend to quantum example states is most easily seen via the unitary equivalence between phase states and example states (compare, e.g., the discussion in \cite[Section 5.1]{caro2024testingclassicalpropertiesquantum}).
    Namely,
    \begin{equation}\label{eq:equivalence-phase-qex}
        (\mathds{1}_2^{\otimes n}\otimes H^{\otimes m})\ket{\psi_f^{\mathrm{Ex}}}
        = \frac{1}{\sqrt{2^{n+m}}}\sum_{x\in\{0,1\}^n} \sum_{y\in\{0,1\}^m}(-1)^{y\cdot f(x)}\ket{x,y}\, ,
    \end{equation}
    which is clearly a phase state for the function $\tilde{f}:\{0,1\}^{n+m}\to\{0,1\}$, $\tilde{f}(x,y)=y\cdot f(x)$.
    Thus, when the public quantum oracle is a membership query oracle, the learner can simply first transform the states by $\mathds{1}_2^{\otimes n}\otimes H^{\otimes m}$, certify the corresponding phase state with the procedures developed below (note that classical query access to $f$ enables simulation of classical query access to $\tilde{f}$), and, if certification is passed, transform back with another application of $\mathds{1}_2^{\otimes n}\otimes H^{\otimes m}$ to recover a certified example state. 
\end{remark}

\subsubsection{Certification With i.i.d. Copies}

\cite{huang2024certifying-arXiv} introduced a surrogate measure for the fidelity between quantum states that they termed ``shadow overlap'' and proved a key relation between this surrogate and the true fidelity. 
We state this relationship here specifically for phase states.

\begin{theorem}[{\cite[Theorem 4 +  Lemma 27]{huang2024certifying-arXiv}}]\label{theorem:shadow-overlap-completeness-and-soundness}
    Let $f:\{0,1\}^n\to\{0,1\}$ be an unknown Boolean function and let $\ket{\psi_f^{\mathrm{Ph}}}$ denote the associated $n$-qubit phase state.
    The empirical shadow overlap $\omega$ estimated from i.i.d.~copies of an unknown $n$-qubit state $\rho$ and from classical query access to $f$ via \cite[Protocol 1]{huang2024certifying-arXiv} is a random variable whose expectation value $\mathbb{E}[\omega]$, the so-called shadow overlap, satisfies
    \begin{equation}
        \begin{cases}
            \textrm{if }~ \mathbb{E}[\omega]\geq 1-\varepsilon, \quad &\textrm{then } \bra{\psi_f^{\mathrm{Ph}}}\rho\ket{\psi_f^{\mathrm{Ph}}}\geq 1-n\varepsilon\\
            \textrm{if } \bra{\psi_f^{\mathrm{Ph}}}\rho\ket{\psi_f^{\mathrm{Ph}}}\geq 1-\varepsilon, &\textrm{then } \mathbb{E}[\omega]\geq 1-\varepsilon
        \end{cases}\, .
    \end{equation}
    Moreover, $\mathbb{E}[\omega]$ can be written as $\mathbb{E}[\omega]=\tr[L\rho]$ for an ($f$-dependent) observable $L$ satisfying $0\leq L\leq \mathds{1}$.
\end{theorem}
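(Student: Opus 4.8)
The plan is to make the shadow‑overlap observable $L$ completely explicit for phase states, which immediately delivers the ``moreover'' statement, and then to reduce both implications to one elementary spectral fact about an $X$‑basis ``Hamming‑weight'' operator by conjugating everything with a diagonal unitary. \emph{Step 1 (identifying the observable $L$).} Unwinding \cite[Protocol~1]{huang2024certifying-arXiv} for the target $\ket{\psi_f^{\mathrm{Ph}}}$: for a uniformly random qubit index $i$, a computational‑basis outcome $x_{-i}$ on the other $n-1$ qubits determines, using classical query access to $f$, the unit vector $\ket{\phi_i(x_{-i})}\coloneqq\tfrac{1}{\sqrt2}\bigl((-1)^{f(x_{-i},0)}\ket{0}+(-1)^{f(x_{-i},1)}\ket{1}\bigr)$ in which qubit $i$ must lie, and the empirical estimator $\omega$ is constructed so that its mean equals the probability that qubit $i$ passes the single‑qubit check in the basis $\{\ket{\phi_i(x_{-i})},\ket{\phi_i(x_{-i})}^{\perp}\}$. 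Hence $\mathbb{E}[\omega]=\tr[L\rho]$ with $L\coloneqq\tfrac1n\sum_{i=1}^{n}L_i$ and $L_i\coloneqq\sum_{x_{-i}}\proj{x_{-i}}\otimes\proj{\phi_i(x_{-i})}$; since the $\proj{x_{-i}}$ are mutually orthogonal, each $L_i$ is a projector, so $0\le L\le\mathds{1}$, which is exactly the ``moreover'' claim.

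\emph{Step 2 (reduce to $f\equiv 0$).} Let $D_f\coloneqq\sum_{x\in\{0,1\}^n}(-1)^{f(x)}\proj{x}$, a diagonal unitary with $D_f\,H^{\otimes n}\ket{0^n}=\ket{\psi_f^{\mathrm{Ph}}}$ and $D_f\bigl(\ket{x_{-i}}\otimes\ket{+}_i\bigr)=\ket{x_{-i}}\otimes\ket{\phi_i(x_{-i})}$. Conjugating term by term gives $D_f^{\dagger}L_i D_f=\mathds{1}_{-i}\otimes\proj{+}_i$, hence $D_f^{\dagger}L\,D_f=\mathds{1}-\tfrac1n\widehat{N}$, where $\widehat{N}\coloneqq\sum_{i=1}^{n}\bigl(\mathds{1}_{-i}\otimes\proj{-}_i\bigr)$ counts how many qubits sit in the $X$‑basis state $\ket{-}$. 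Since $D_f\ket{+}^{\otimes n}=\ket{\psi_f^{\mathrm{Ph}}}$, setting $\sigma\coloneqq D_f^{\dagger}\rho D_f$ preserves both quantities of interest: $\bra{\psi_f^{\mathrm{Ph}}}\rho\ket{\psi_f^{\mathrm{Ph}}}=\bra{+^n}\sigma\ket{+^n}$ and $\tr[L\rho]=1-\tfrac1n\tr[\widehat{N}\sigma]$. It therefore suffices to prove the two implications for $\sigma$ against the uniform target $\ket{+}^{\otimes n}$.

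\emph{Step 3 (elementary spectral bound).} In the $X$ eigenbasis, $\widehat{N}$ is diagonal with eigenvalue equal to the Hamming weight of the $\{+,-\}^{n}$ label, so
\[
\mathds{1}-\proj{+^n}\ \le\ \widehat{N}\ \le\ n\bigl(\mathds{1}-\proj{+^n}\bigr),
\]
the lower bound because the Hamming weight is at least $1$ off the all‑$+$ string (where both sides vanish), the upper bound because it is at most $n$. Tracing against $\sigma$: if $\mathbb{E}[\omega]=1-\tfrac1n\tr[\widehat{N}\sigma]\ge 1-\varepsilon$, then $\tr[\widehat{N}\sigma]\le n\varepsilon$, and the left inequality yields $1-\bra{+^n}\sigma\ket{+^n}\le n\varepsilon$, i.e.\ fidelity $\ge 1-n\varepsilon$; conversely, if $\bra{+^n}\sigma\ket{+^n}\ge 1-\varepsilon$, then the right inequality yields $\tr[\widehat{N}\sigma]\le n\varepsilon$, hence $\mathbb{E}[\omega]\ge 1-\varepsilon$. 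Undoing the $D_f$‑conjugation recovers both statements for general $f$.

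The place I expect genuine subtlety is Step~3 \emph{being} this clean: it is special to phase states (more generally, stabilizer states), where the local consistency checks conjugate simultaneously to single‑qubit projectors whose joint spectrum is transparent, so that the ``gap constant'' $n$ is visible by inspection. For an arbitrary target state the analogous operator inequality $\mathds{1}-\proj{\psi}\le n(\mathds{1}-L)$ is precisely the nontrivial detectability‑type estimate behind \cite[Theorem~4 and Lemma~27]{huang2024certifying-arXiv}, which I would cite directly if one prefers not to specialize it by hand; a minor secondary point is to keep the empirical estimator $\omega$ and its mean $\mathbb{E}[\omega]=\tr[L\rho]$ cleanly separated, since the theorem concerns only the mean and hence requires no concentration argument here.
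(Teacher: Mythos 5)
Your proof is correct, and it deserves a slightly unusual comparison: the paper does not prove this statement at all---it is imported as a black box from \cite[Theorem~4 + Lemma~27]{huang2024certifying-arXiv}. What you have supplied is a self-contained derivation specialized to phase states, and it is both tighter in presentation and more revealing than a bare citation.

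The three steps check out. Step~1 correctly identifies the shadow-overlap observable $L=\tfrac1n\sum_i L_i$ with $L_i=\sum_{x_{-i}}\proj{x_{-i}}\otimes\proj{\phi_i(x_{-i})}$ a projector, giving $0\le L\le\mathds{1}$; this is where you implicitly rely on the fact that for phase states the target computational-basis marginal is uniform, so Protocol~1 needs no reweighting factor and the per-round estimator really is the $\{0,1\}$-valued pass/fail indicator. (That simplification holds, and it is the one place I would recommend an explicit sentence flagging it, since for general targets the Huang et al.\ shadow overlap does involve amplitude-dependent conditioning.) Step~2 is a clean conjugation identity: $D_f$ maps $\ket{x_{-i}}\otimes\ket{+}_i$ to $\ket{x_{-i}}\otimes\ket{\phi_i(x_{-i})}$, so $D_f^{\dagger}LD_f=\mathds{1}-\tfrac1n\widehat{N}$ with $\widehat{N}$ the $X$-basis ``$-$-counting'' operator. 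Step~3 is the two-sided operator inequality $\mathds{1}-\proj{+^n}\le\widehat{N}\le n(\mathds{1}-\proj{+^n})$, which is immediate by diagonalizing in the $\{\ket{+},\ket{-}\}^{\otimes n}$ basis, and both implications follow by tracing against $\sigma=D_f^\dagger\rho D_f$.

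What your argument buys relative to the citation is transparency about the constant: the factor $n$ in ``$\geq 1-n\varepsilon$'' is precisely the inverse spectral gap of the Hamming-weight operator $\widehat{N}$, i.e.\ the detectability constant for the hypercube structure underlying a phase state, visible by inspection rather than through the general relative-spectral-gap machinery of \cite{huang2024certifying-arXiv}. The one caveat to retain is the one you already noted yourself: this clean conjugation-to-Hamming-weight trick is special to phase (more generally, stabilizer) targets, so the argument does not replace the general Theorem~4 of the reference, only its instantiation used here.
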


To obtain algorithmic guarantees in terms of an empirical estimate of the shadow overlap, \cite{huang2024certifying-arXiv} combine the connection between the shadow overlap and the true fidelity with concentration bounds. Again phrased for the special case of phase states, this yields:

\begin{theorem}[{\cite[Theorems 5 and 6 +  Lemma 27]{huang2024certifying-arXiv}}]\label{theorem:shadow-overlap-copy-complexity}
    Let $f:\{0,1\}^n\to\{0,1\}$ be an unknown Boolean function and let $\ket{\psi_f^{\mathrm{Ph}}}$ denote the associated $n$-qubit phase state.
    There is a procedure that, given classical query access to $f$, uses 
    \begin{enumerate}
        \item either non-adaptive single-qubit Pauli measurements on $\mathcal{O}\left(\frac{n^2 \log(1/\delta)}{\varepsilon^2}\right)$ many i.i.d.~copies, 
        \item or adaptive single-qubit measurements on $\mathcal{O}\left(\frac{n \log(1/\delta)}{\varepsilon}\right)$ many i.i.d.~copies 
    \end{enumerate} 
    of an unknown $n$-qubit state $\rho$ to produce, with success probability $\geq 1-\delta$, an $(\varepsilon/4n)$-accurate estimate $\hat{\omega}$ of the shadow overlap $\mathbb{E}[\omega]$.
    This suffices to produce, again with success probability $\geq 1-\delta$, the output \textsf{FAILED} if $\bra{\psi_f^{\mathrm{Ph}}}\rho\ket{\psi_f^{\mathrm{Ph}}}<1-\varepsilon$ and the output \textsf{CERTIFIED} if $\bra{\psi_f^{\mathrm{Ph}}}\rho\ket{\psi_f^{\mathrm{Ph}}}\geq 1 - \frac{\varepsilon}{2n}$.
    Moreover, if $\bra{\psi_f^{\mathrm{Ph}}}\rho\ket{\psi_f^{\mathrm{Ph}}}=1$, we produce output \textsf{CERTIFIED} with success probability $1$.
\end{theorem}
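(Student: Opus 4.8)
The plan is to import the shadow-overlap certification machinery of \cite{huang2024certifying-arXiv} essentially wholesale, specializing it to phase states via their Lemma~27 and combining it with the fidelity--shadow-overlap sandwich already recorded in \Cref{theorem:shadow-overlap-completeness-and-soundness}. First I would instantiate \cite[Protocol~1]{huang2024certifying-arXiv} for the target $\ket{\psi_f^{\mathrm{Ph}}}$: on each i.i.d.\ copy of the unknown state $\rho$, sample a uniformly random qubit $i\in[n]$, measure the other $n-1$ qubits in the computational basis to obtain an outcome string $z$, and measure qubit $i$ in the Pauli-$X$ basis. Two classical membership queries to $f$ produce the bit $c_i(z)=f(z,x_i{=}0)\oplus f(z,x_i{=}1)$, and one records the $\{0,1\}$-valued indicator that the observed $X$-outcome equals $c_i(z)$; the estimate $\hat\omega$ is the empirical mean of these indicators over all copies. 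The point is that on the ideal phase state, conditioning on $z$ leaves qubit $i$ in the state $\tfrac{1}{\sqrt{2}}(\ket{0}+(-1)^{c_i(z)}\ket{1})$, so the indicator is deterministically $1$, whereas for a general $\rho$ its expectation is $\mathbb{E}[\omega]=\tr[L\rho]$ with the observable $L$, $0\le L\le\mathds{1}$, from \Cref{theorem:shadow-overlap-completeness-and-soundness}. I would cite \cite[Lemma~27]{huang2024certifying-arXiv} for the fact that phase states possess exactly this efficiently computable local-inversion structure (the local inversion at qubit $i$ given $z$ is the single-qubit gate fixed by $c_i(z)$, obtained from two queries to $f$), which is what licenses applying their general theorems; the measurements used are single-qubit Paulis in the non-adaptive variant and single-qubit (possibly basis-rotated) measurements in the adaptive variant.

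Next I would convert an accurate estimate of $\mathbb{E}[\omega]$ into the \textsf{CERTIFIED}/\textsf{FAILED} decision purely by arithmetic with \Cref{theorem:shadow-overlap-completeness-and-soundness}. If $\bra{\psi_f^{\mathrm{Ph}}}\rho\ket{\psi_f^{\mathrm{Ph}}}<1-\varepsilon$, then the first implication of that theorem, used with its accuracy parameter set to $\varepsilon/n$ and read in contrapositive form, forces $\mathbb{E}[\omega]<1-\varepsilon/n$; if $\bra{\psi_f^{\mathrm{Ph}}}\rho\ket{\psi_f^{\mathrm{Ph}}}\ge 1-\varepsilon/(2n)$, the second implication with parameter $\varepsilon/(2n)$ gives $\mathbb{E}[\omega]\ge 1-\varepsilon/(2n)$. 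Hence an estimate $\hat\omega$ with $|\hat\omega-\mathbb{E}[\omega]|\le\varepsilon/(4n)$ separates the two cases once compared against the threshold $1-\tfrac{3\varepsilon}{4n}$: output \textsf{CERTIFIED} iff $\hat\omega\ge 1-\tfrac{3\varepsilon}{4n}$, and \textsf{FAILED} otherwise. The zero-error completeness claim then comes for free: when $\rho=\ket{\psi_f^{\mathrm{Ph}}}\bra{\psi_f^{\mathrm{Ph}}}$, every local check passes with certainty, so $\hat\omega=1$ deterministically and the procedure outputs \textsf{CERTIFIED}.

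The two copy-complexity bounds then correspond to the two estimators of \cite[Theorems~5 and~6]{huang2024certifying-arXiv}. For the non-adaptive bound I would simply invoke Hoeffding's inequality: the per-copy indicator lies in $[0,1]$, so $\mathcal{O}(\eta^{-2}\log(1/\delta))$ copies give $|\hat\omega-\mathbb{E}[\omega]|\le\eta$ with probability $\ge 1-\delta$, and taking $\eta=\varepsilon/(4n)$ yields $\mathcal{O}(n^2\log(1/\delta)/\varepsilon^2)$ copies and only non-adaptive single-qubit Pauli measurements. For the adaptive bound I would appeal to the sharper adaptive single-qubit estimator of \cite[Theorem~6]{huang2024certifying-arXiv}, whose variance scales with the infidelity: in the regime relevant to the test, the rejection probability $1-\mathbb{E}[\omega]$ is $\mathcal{O}(\varepsilon/n)$, so a variance-aware (empirical-Bernstein / multiplicative-Chernoff) analysis distinguishes $\mathbb{E}[\omega]\ge 1-\varepsilon/(2n)$ from $\mathbb{E}[\omega]<1-\varepsilon/n$ using only $\mathcal{O}(n\log(1/\delta)/\varepsilon)$ copies, which is all the certify/fail decision above requires.

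I expect the only real obstacle to be bookkeeping rather than any new idea: keeping the factors of $n$ relating infidelity to shadow overlap consistent, so that the claimed $(\varepsilon/4n)$-accuracy and the certify/fail thresholds line up, and checking that the adaptive estimator of \cite[Theorem~6]{huang2024certifying-arXiv}---which is what buys the $n/\varepsilon$ saving over a plain Hoeffding bound, and whose stated $(\varepsilon/4n)$-accuracy a naive reading would price at $1/\varepsilon^2$ copies---really does apply when the input $\rho$ is arbitrary (not merely promised close to $\ket{\psi_f^{\mathrm{Ph}}}$) and when the local-inversion data is drawn from $\mathsf{O}_{\mathrm{pri}}^{\mathrm{Mem}}(f)$ purely in classical post-processing of the measurement outcomes. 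Since \Cref{theorem:shadow-overlap-completeness-and-soundness} already packages the phase-state specialization of \cite[Theorem~4]{huang2024certifying-arXiv}, the residual work is essentially to quote the right statements and track parameters.
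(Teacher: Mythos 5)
Your reconstruction is correct and is essentially the route the paper itself takes, in the sense that the paper offers no independent proof of this theorem at all: it is quoted directly from \cite{huang2024certifying-arXiv} (Theorems~5, 6, and Lemma~27), with the only ``new'' content being the specialization to phase states already encapsulated in \Cref{theorem:shadow-overlap-completeness-and-soundness}. Your arithmetic with the shadow-overlap/fidelity sandwich is sound—placing the decision threshold at $1-\tfrac{3\varepsilon}{4n}$ exactly separates the cases $\mathbb{E}[\omega]<1-\varepsilon/n$ and $\mathbb{E}[\omega]\ge 1-\varepsilon/(2n)$ under $(\varepsilon/4n)$-accuracy, and indeed this is the same threshold the paper later uses in \Cref{alg:certify-state}—and the zero-error completeness observation (each local check succeeds deterministically on the ideal phase state) and the Hoeffding-based non-adaptive count are also correct. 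Your caveat about the adaptive $\mathcal{O}(n\log(1/\delta)/\varepsilon)$ bound not naively matching a plain $(\varepsilon/4n)$-accuracy Hoeffding count is well taken; the resolution is, as you guessed, that the cited adaptive estimator gives a guarantee whose cost degrades gracefully with $1-\mathbb{E}[\omega]$, and the certify/fail task only needs resolution near $\mathbb{E}[\omega]=1$, which is precisely the regime where the adaptive scheme is cheap. Since this theorem is a black-box import and you have matched both the mechanism (random-qubit local check via two classical queries computing $c_i(z)$) and the parameters, there is nothing to fault.
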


Therefore, under an i.i.d.~assumption, we can, with inverse-exponential in $n$ failure probability, produce a certified phase state copy employing \Cref{theorem:shadow-overlap-copy-complexity} by using $\poly(n,1/\varepsilon)$ copies of the unknown state $\rho$. 
Moreover, as $\ket{\psi_f^{\mathrm{Ph}}}^{\otimes m}$ is itself an $(mn)$-qubit phase state, we can produce a certified state $\ket{\psi_f^{\mathrm{Ph}}}^{\otimes m}$ using $\poly(m, n,1/\varepsilon)$ copies. 
In the language of \Cref{definition:covert-quantum-data-public-quantum-oracle-private-classical-queries}, this already achieves completeness and soundness against i.i.d.~adversaries.
We will use this procedure directly in \Cref{sec:iid-both-directions} when studying covert verifiable quantum learning against ancilla-free i.i.d.~adversaries. 

\subsubsection{Certification With Non-i.i.d. Copies}

As per \Cref{theorem:shadow-overlap-copy-complexity}, under the i.i.d.~assumption, there is a way to certify phase states efficiently using only non-adaptive, incoherent measurements. This allows us to lift the certification protocol to handle non-i.i.d.~quantum data, by combining it with the following template. 

\begin{theorem}[{\cite[Algorithm 1, Theorem 4.8 and Remark 4.10]{fawzi2024learningpreprint}}, specified to our scenario of interest]\label{theorem:extending-iid-to-noniid}
    Let $\mathcal{L}$ be a quantum algorithm that performs non-adaptive, incoherent measurements on $k_\mathcal{L}$ many copies of an unknown $n$-qubit state $\rho$, followed by classical post-processing, to produce, with success probability $\geq 1-\frac{\delta}{6}$, an $(\varepsilon/2)$-accurate estimate $\hat{p}$ of an expectation value $\tr[L\rho]$, where $0\leq L\leq \mathds{1}$ is an observable.
    Then, there is a quantum algorithm $\mathcal{L}'$ that, given a permutation-invariant state $\rho^{A_1\ldots A_N}$ with $N=\tilde{\mathcal{O}}\left(\frac{n k_\mathcal{L}^2 \log^2(1 / \delta)}{\delta^2 \varepsilon^2}\right)$ many $n$-qubit subsystems, performs non-adaptive, incoherent measurements of the same type as those used by $\mathcal{L}$ on the first $N-1$ subsystems $\rho^{A_1\ldots A_{N-1}}$ and, with success probability $\geq 1-\delta$, produces an estimate $\hat{p}$ and calibration information $c$ such that $\left\lvert\hat{p} - \tr[L\rho_{c,p}^{A_N}]\right\rvert\leq \varepsilon$, where $\rho_{c,\hat{p}}^{A_N}$ is the post-measurement state on subsystem $A_N$.
    The estimate $\hat{p}$ is produced by classically post-processing $k_\mathcal{L}$ of the obtained measurement outcomes with the same classical post-processing as in $\mathcal{L}$.
\end{theorem}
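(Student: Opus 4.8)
The plan is not to reprove this from scratch: the statement is a restatement of \cite[Theorem 4.8 and Remark 4.10]{fawzi2024learningpreprint} specialized to the task of estimating the expectation value of a single bounded observable, so the work is to make the specialization explicit and recall why it holds. First I would identify the ``property to be learned'' in the general Fawzi et al.\ framework with the single observable $L$ satisfying $0\leq L\leq\mathds{1}$, verify that the boundedness hypothesis imposed there is met by this $L$, and match up the parameters: the i.i.d.\ algorithm $\mathcal{L}$ is assumed to produce an $(\varepsilon/2)$-accurate estimate with failure probability $\leq\delta/6$, and the lifted algorithm $\mathcal{L}'$ inherits accuracy $\varepsilon$ and failure probability $\delta$ at the stated sample overhead $N=\tilde{\mathcal{O}}\!\left(n k_\mathcal{L}^2 \log^2(1/\delta)/(\delta^2\varepsilon^2)\right)$, which is exactly their bound read off in this regime. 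The fact that the guarantee is phrased in terms of the \emph{post-measurement} reduced state $\rho_{c,\hat{p}}^{A_N}$ on the unmeasured subsystem --- rather than some pre-fixed state --- is the content of Remark 4.10, and it is precisely this feature that makes the theorem usable as a certification primitive downstream.

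Second, for the reader I would sketch the mechanism underlying the cited result, framed as an idea rather than a verbatim reproduction. Writing $\tau_j$ for the reduced state on the next-to-be-measured subsystem conditioned on the outcomes of the first $j$ of $\mathcal{L}'$'s (non-adaptive, hence schedule-free) single-copy measurements, permutation invariance of the input forces the bounded real sequence $\tr[L\tau_j]$ --- and, more generally, the predictive distribution of the upcoming outcome --- to behave like a bounded martingale whose total quadratic variation is controlled, $\sum_j \mathbb{E}\big[(\tr[L\tau_{j+1}]-\tr[L\tau_j])^2\big]\leq 1$ since every term lies in $[0,1]$. Consequently, once $j\gg k_\mathcal{L}$ subsystems have been measured the next $k_\mathcal{L}$ outcomes are, conditioned on the earlier ones, approximately i.i.d.\ draws from a measurement of a state close to $\tau_{N-1}=\rho_{c,\hat{p}}^{A_N}$; feeding those $k_\mathcal{L}$ outcomes into $\mathcal{L}$'s \emph{unmodified} classical post-processing therefore estimates $\tr[L\rho_{c,\hat{p}}^{A_N}]$ up to the i.i.d.\ error plus a ``drift'' error, and a Freedman/Azuma-type inequality converts this into the high-probability bound. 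The factor $n$ and the $1/\delta^2$ in $N$ are the price of averaging over enough late subsystems for this concentration to take effect; re-deriving exactly these exponents is the quantitatively delicate part of the cited proof, which I would simply invoke.

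The main thing to be careful about --- and I expect this, rather than anything in the proof itself, to be the real obstacle when applying the statement --- is bookkeeping around the two structural hypotheses. First, the input $\rho^{A_1\dots A_N}$ must genuinely be permutation-invariant; downstream we enforce this by having the learner apply a uniformly random permutation to the $N$ received registers before running the protocol, which is free and affects no permutation-symmetric quantity. Second, $\mathcal{L}'$ must use measurements ``of the same type'' as $\mathcal{L}$ and post-processing ``the same as'' $\mathcal{L}$, so that whichever structural feature of $\mathcal{L}$ we later rely on --- here, that it uses only non-adaptive single-qubit Pauli measurements and touches the classical description of $f$ only during post-processing --- is inherited by $\mathcal{L}'$ verbatim. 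Granting these, the theorem follows from the cited work together with the parameter translation above.
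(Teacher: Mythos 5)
Your proposal is correct and takes essentially the same approach as the paper, which states this theorem as a direct parameter-matched specialization of the cited Fawzi et al.\ result (Algorithm 1, Theorem 4.8, Remark 4.10) and does not supply an independent proof. Your added sketch of the martingale/quadratic-variation mechanism and your bookkeeping notes—that permutation-invariance must be enforced (the paper does so by random permutation in Step 1 of \Cref{alg:certify-state}) and that $\mathcal{L}'$ must reuse $\mathcal{L}$'s measurement type and post-processing verbatim—accurately reflect both the cited result and the way the paper later invokes it.
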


Our next result illustrates how to combine the two ingredients above and obtain an efficient procedure (\Cref{alg:certify-state}) that produces one copy of a phase state even from non-i.i.d.~quantum data, in the sense that it achieves completeness and soundness.
As \Cref{theorem:shadow-overlap-copy-complexity} uses a classical membership query oracle, so does our non-i.i.d.~version. And as the translation from i.i.d.~to non-i.i.d~ achieved by \Cref{theorem:extending-iid-to-noniid} comes with an overhead in the number of copies, our procedure inherits this overhead. 

\begin{algorithm}
    \caption{Certified public quantum phase oracle query---$\ComplexityFont{CertifyState}^{\mathsf{O}^{\mathrm{Mem}}(f)}_{ \varepsilon, \delta}$}
    \begin{algorithmic}[1]
        \Require Number of qubits $n\in\mathbb{N}$; access to oracle $\mathsf{O}^{\mathrm{Mem}}(f)$; confidence parameter $\delta\in (0,1)$; accuracy parameter $\varepsilon\in (0,1)$; $(Nn)$-qubit state $\rho$.
        \Ensure ``reject'' or copy of an $n$-qubit state
        \State $\rho\gets \frac{1}{N!}\sum_{\pi\in S_N} U_\pi \rho U_\pi^\dagger$ \Comment{Randomly permute the $N$ many $n$-qubit subsystems of the input state.}
        \State Run $\mathcal{L}'^{\mathsf{O}^{\mathrm{Mem}}(f)}$ using  from \Cref{theorem:extending-iid-to-noniid}, where $\mathcal{L}$ is the non-adaptive procedure from \Cref{theorem:shadow-overlap-copy-complexity} for confidence parameter $\delta/6$ and accuracy parameter $\varepsilon/2$. Obtain an estimate $\hat{\omega}$, calibration information $c$, and an $n$-qubit post-measurement state $\rho_{c,\hat{\omega}}^{A_N}$.
        \If{$\hat{\omega}<1 - \frac{3\varepsilon}{4n}$}
            \State Output ``reject''.
        \Else
            \State Output $\rho_{c,\hat{\omega}}^{A_N}$.
        \EndIf
    \end{algorithmic}\label{alg:certify-state}
\end{algorithm}

\begin{theorem}[{Non-i.i.d.~Phase State Certification Subroutine}] \label{theorem:verified-shadow-overlap-algorithm}
    Let $n\in\mathbb{N}$ and $\delta,\varepsilon\in (0,1)$.
    $\ComplexityFont{CertifyState}^{\mathsf{O}^{\mathrm{Mem}}(f)}_{ \varepsilon, \delta}$ (\Cref{alg:certify-state}), when given as input an $(Nn)$-qubit state $\rho$ with $N=\Tilde{\mathcal{O}}\left(\frac{n^5}{\delta^2 \varepsilon^6}\right)$ and given access to a membership oracle $\mathsf{O}^{\mathrm{Mem}}(f)$ for a Boolean function $f:\{0,1\}^n\to\{0,1\}$, has the following guarantees:
    \begin{itemize}
        \item \textbf{Completeness:} If $\rho = \ket{\psi_f^{\mathrm{Ph}}}\bra{\psi_f^{\mathrm{Ph}}}^{\otimes N}$ then, 
        \begin{equation}
            \Pr\left[\ComplexityFont{CertifyState}^{\mathsf{O}^{\mathrm{Mem}}(f)}_{ \varepsilon, \delta}(\rho)\text{ accepts and } \ComplexityFont{CertifyState}^{\mathsf{O}^{\mathrm{Mem}}(f)}_{ \varepsilon, \delta}(\rho) = \ket{\psi_f^{\mathrm{Ph}}}\bra{\psi_f^{\mathrm{Ph}}}\right]
            =1\, .
        \end{equation}
        \item \textbf{Soundness:} For any $(Nn)$-qubit state $\rho$, 
        \begin{equation}
            \Pr\left[\ComplexityFont{CertifyState}^{\mathsf{O}^{\mathrm{Mem}}(f)}_{ \varepsilon, \delta}(\rho)\text{ accepts and } \bra{\psi_f^{\mathrm{Ph}}}\ComplexityFont{CertifyState}^{\mathsf{O}^{\mathrm{Mem}}(f)}_{ \varepsilon, \delta}(\rho)\ket{\psi_f^{\mathrm{Ph}}} < 1 - \varepsilon  \right]
            \leq \delta .
        \end{equation}
        \item \textbf{Efficiency: } $\ComplexityFont{CertifyState}^{\mathsf{O}^{\mathrm{Mem}}(f)}_{ \varepsilon, \delta}(\rho)$ makes $\mathcal{O}(\poly(n, 1/\delta, 1/\varepsilon))$ many calls to $\mathsf{O}^{\mathrm{Mem}}(f)$ and runs in time $\mathcal{O}(\poly(n, 1/\delta, 1/\varepsilon))$. Moreover, all measurements performed by the algorithm are non-adaptive single-qubit Pauli measurements. 
    \end{itemize}
\end{theorem}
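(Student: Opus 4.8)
The plan is to prove Theorem~\ref{theorem:verified-shadow-overlap-algorithm} by chaining together the two imported black boxes---the i.i.d.\ shadow overlap certification of \Cref{theorem:shadow-overlap-copy-complexity} (together with the fidelity/shadow-overlap relation of \Cref{theorem:shadow-overlap-completeness-and-soundness}) and the i.i.d.-to-non-i.i.d.\ lifting of \Cref{theorem:extending-iid-to-noniid}---and then verifying that the parameter choices in \Cref{alg:certify-state} actually deliver completeness with certainty and soundness with error $\leq\delta$.

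First I would set up the reduction. By \Cref{theorem:shadow-overlap-completeness-and-soundness}, the (expected) shadow overlap $\mathbb{E}[\omega]$ equals $\tr[L\rho]$ for some $f$-dependent observable $L$ with $0\leq L\leq\mathds{1}$, and it sandwiches the fidelity: $\bra{\psi_f^{\mathrm{Ph}}}\rho\ket{\psi_f^{\mathrm{Ph}}}\geq 1-\varepsilon \Rightarrow \mathbb{E}[\omega]\geq 1-\varepsilon$, while $\mathbb{E}[\omega]\geq 1-\varepsilon' \Rightarrow \bra{\psi_f^{\mathrm{Ph}}}\rho\ket{\psi_f^{\mathrm{Ph}}}\geq 1-n\varepsilon'$. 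The non-adaptive i.i.d.\ estimator of \Cref{theorem:shadow-overlap-copy-complexity} is exactly an algorithm $\mathcal{L}$ that produces an estimate of $\tr[L\rho]$ from non-adaptive incoherent single-qubit Pauli measurements, so it is of the form required by \Cref{theorem:extending-iid-to-noniid}. I would instantiate \Cref{theorem:extending-iid-to-noniid} with this $\mathcal{L}$, run with confidence $\delta/6$ and accuracy $\varepsilon/2$ (in the sense that $\mathcal{L}$ returns an estimate within $\varepsilon/2$ of the \emph{rescaled} shadow-overlap quantity---note \Cref{theorem:shadow-overlap-copy-complexity} is phrased in terms of $(\varepsilon/4n)$-accuracy for $\mathbb{E}[\omega]$, so I will need to track the $1/n$ factors carefully and pick the accuracy parameter fed into the theorems so that, after the $\varepsilon\mapsto\varepsilon/2$ loss from the lifting, the final estimate $\hat\omega$ is within $\varepsilon/(4n)$ of $\tr[L\rho_{c,\hat\omega}^{A_N}]=\mathbb{E}[\omega\,|\,\text{post-measurement state }\rho_{c,\hat\omega}^{A_N}]$ with probability $\geq 1-\delta$). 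This fixes $N=\tilde{\mathcal{O}}(n k_\mathcal{L}^2\log^2(1/\delta)/(\delta^2\varepsilon^2))$; substituting $k_\mathcal{L}=\mathcal{O}(n^2\log(1/\delta)/\varepsilon^2)$ from the non-adaptive branch of \Cref{theorem:shadow-overlap-copy-complexity} (with its internal accuracy set to a constant times $\varepsilon/n$) yields $N=\tilde{\mathcal{O}}(n^5/(\delta^2\varepsilon^6))$, matching the claimed bound.

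Next I would prove the two guarantees against the output state $\rho_{c,\hat\omega}^{A_N}$. For \textbf{soundness}: condition on the good event (probability $\geq 1-\delta$) that $|\hat\omega-\tr[L\rho_{c,\hat\omega}^{A_N}]|\leq \varepsilon/(4n)$, i.e.\ $|\hat\omega-\mathbb{E}[\omega]|\leq\varepsilon/(4n)$ for the post-measurement state. If the algorithm accepts, then $\hat\omega\geq 1-3\varepsilon/(4n)$, so $\mathbb{E}[\omega]\geq 1-3\varepsilon/(4n)-\varepsilon/(4n)=1-\varepsilon/n$; applying the first implication of \Cref{theorem:shadow-overlap-completeness-and-soundness} (with its ``$\varepsilon$'' set to $\varepsilon/n$) gives $\bra{\psi_f^{\mathrm{Ph}}}\rho_{c,\hat\omega}^{A_N}\ket{\psi_f^{\mathrm{Ph}}}\geq 1-n\cdot(\varepsilon/n)=1-\varepsilon$. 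Hence on the good event we never accept a state of fidelity $<1-\varepsilon$, so the bad acceptance probability is $\leq\delta$. For \textbf{completeness}: if $\rho=\ket{\psi_f^{\mathrm{Ph}}}\bra{\psi_f^{\mathrm{Ph}}}^{\otimes N}$, then it is already permutation-invariant and every post-measurement state on $A_N$ is again $\ket{\psi_f^{\mathrm{Ph}}}\bra{\psi_f^{\mathrm{Ph}}}$ (the measurements on $A_1,\dots,A_{N-1}$ act on other tensor factors and do not disturb $A_N$). Because fidelity is exactly $1$, the final sentence of \Cref{theorem:shadow-overlap-copy-complexity} (``if the fidelity equals $1$ we output \textsf{CERTIFIED} with probability $1$'') together with the corresponding exactness in \Cref{theorem:extending-iid-to-noniid}'s estimator on a genuine i.i.d.\ input guarantees $\hat\omega=1\geq 1-3\varepsilon/(4n)$ deterministically, so the algorithm accepts and outputs $\ket{\psi_f^{\mathrm{Ph}}}\bra{\psi_f^{\mathrm{Ph}}}$ with probability $1$. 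Finally, \textbf{efficiency} is immediate: $N=\poly(n,1/\delta,1/\varepsilon)$, the post-processing is that of \Cref{theorem:shadow-overlap-copy-complexity} which is $\poly(n)$-time and makes $\poly(n)$ membership queries to $\mathsf{O}^{\mathrm{Mem}}(f)$ (crucially, only during classical post-processing, not during the measurements), and all measurements are the non-adaptive single-qubit Pauli measurements inherited from $\mathcal{L}$.

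The main obstacle I anticipate is bookkeeping rather than conceptual: the two imported theorems use incompatible normalizations of the accuracy parameter (one phrases things as $\varepsilon/(4n)$-accuracy for $\mathbb{E}[\omega]$ with a fidelity gap of $\varepsilon/(2n)$ vs.\ $\varepsilon$, the other introduces an $\varepsilon\mapsto\varepsilon/2$ loss and a $\delta\mapsto\delta/6$ loss in the lifting), so I must carefully choose the parameters plugged into \Cref{theorem:shadow-overlap-copy-complexity,theorem:extending-iid-to-noniid} so that the composed guarantee lands exactly at the ``accept $\Rightarrow$ fidelity $\geq 1-\varepsilon$, reject probability $\leq\delta$'' threshold encoded by the test $\hat\omega<1-3\varepsilon/(4n)$ in \Cref{alg:certify-state}, and so that the $k_\mathcal{L}^2$ dependence propagates to the stated $N=\tilde{\mathcal{O}}(n^5/(\delta^2\varepsilon^6))$. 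A secondary subtlety worth spelling out is why \Cref{theorem:extending-iid-to-noniid} may be applied with an underlying i.i.d.\ algorithm $\mathcal{L}$ that also consults the classical oracle $\mathsf{O}^{\mathrm{Mem}}(f)$: this holds because \Cref{theorem:extending-iid-to-noniid}'s reduction treats $\mathcal{L}$'s classical post-processing as a black box and the shadow-overlap estimator of \Cref{theorem:shadow-overlap-copy-complexity} queries $f$ only inside that post-processing step (the quantum measurements themselves are $f$-independent), a point already flagged in the ``Algorithmic Ideas'' discussion that I would state explicitly as a short lemma.
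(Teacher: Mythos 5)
Your soundness and efficiency arguments match the paper's proof and your parameter bookkeeping is the right plan, but there is a genuine gap in the completeness step. You write that ``the corresponding exactness in \Cref{theorem:extending-iid-to-noniid}'s estimator on a genuine i.i.d.\ input guarantees $\hat\omega=1$ deterministically,'' but \Cref{theorem:extending-iid-to-noniid} states no such exactness: it only guarantees success with probability $\geq 1-\delta$. In fact, the lifted algorithm $\mathcal{L}'$ from \cite{fawzi2024learningpreprint} has an internal failure mode \emph{before} any measurement occurs: it samples random measurement indices $r_1,\dots,r_{k_\mathcal{L}\log(6k_\mathcal{L}/\delta)}$ from $\{1,\dots,k_\mathcal{L}\}$ and aborts if this set does not cover all of $\{1,\dots,k_\mathcal{L}\}$ (a coupon-collector event). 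This abort can occur with nonzero probability even on the ideal input $\rho=\ket{\psi_f^{\mathrm{Ph}}}\bra{\psi_f^{\mathrm{Ph}}}^{\otimes N}$, so ``accept with probability $1$'' does not follow from invoking the black boxes as stated. The paper's proof addresses this explicitly by \emph{modifying} \cite[Algorithm~1]{fawzi2024learningpreprint}: when the sampled indices fail to cover $\{1,\dots,k_\mathcal{L}\}$, run the underlying i.i.d.\ certification procedure on $k_\mathcal{L}$ subsystems chosen uniformly at random from the first $k_\mathcal{L}\log(6k_\mathcal{L}/\delta)$ subsystems instead of aborting. On ideal input each subsystem is an exact copy of $\ket{\psi_f^{\mathrm{Ph}}}$, so this fallback still yields $\hat\omega=1$ and acceptance, while soundness is unaffected because the original error bound of \Cref{theorem:extending-iid-to-noniid} already subsumes the probability of this low-probability event, making the behavior on that event irrelevant to the $\leq\delta$ failure bound. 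Without identifying and making this modification, your argument establishes completeness probability $\geq 1-\delta$ rather than the claimed $=1$.
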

\begin{proof}
    We take $N=\tilde{\mathcal{O}}\left(\frac{n (n^2 \log(1/\delta) / \varepsilon^2)^2 \log^2( 1 / \delta)}{\delta^2 \varepsilon^2}\right) =\Tilde{\mathcal{O}}\left(\frac{n^5}{\delta^2 \varepsilon^6}\right)$ to be the number of copies that we get from combining \Cref{theorem:shadow-overlap-copy-complexity,theorem:extending-iid-to-noniid}. With this copy complexity, we now first prove completeness and soundness, then we discuss the efficiency of the procedure.

    \textbf{Completeness:} Assume $\rho = \ket{\psi_f^{\mathrm{Ph}}}\bra{\psi_f^{\mathrm{Ph}}}^{\otimes N}$ consists of $N$ i.i.d.~copies of the ideal state $\ket{\psi_f^{\mathrm{Ph}}}$. In this case, the post-measurement state on the last $n$-qubit subsystem is simply $\rho_{c,\hat{\omega}}^{A_N}=\ket{\psi_f^{\mathrm{Ph}}}\bra{\psi_f^{\mathrm{Ph}}}$ as the copies are independent and unentangled. 
    Moreover, unless $\mathcal{L}'^{\mathsf{O}^{\mathrm{Mem}}(f)}$ aborts before measuring, all measurements performed by it are performed on i.i.d.~copies of $\ket{\psi_f^{\mathrm{Ph}}}$ and hence will lead to an empirical shadow overlap $\hat{\omega}=1$. The algorithm will accept and output $\ket{\psi_f^{\mathrm{Ph}}}$ contained in the last subsystem. 
    Thus, the only mode of failure in \cref{alg:certify-state} is a potential reject coming from the additional pre-processing in \cite[Algorithm 1]{fawzi2024learningpreprint}. To analyze this possibility, let us examine the algorithm $\mathcal{L}'^{\mathsf{O}^{\mathrm{Mem}}(f)}$ in more detail.
    
    According to \cite[Algorithm 1]{fawzi2024learningpreprint}, before making any measurements, $\mathcal{L}'^{\mathsf{O}^{\mathrm{Mem}}(f)}$ first samples random (indices of) measurements $r_1,\ldots,r_{k_\mathcal{L}\log(6k_\mathcal{L}/\delta)}$ from all measurements of the original (i.i.d.~version) of the algorithm indexed by $\{1,\ldots,k_\mathcal{L}\}$. If $r_1,\ldots,r_{k_\mathcal{L}\log(6k_\mathcal{L}/\delta)}$ does not cover the entire set $\{1,\ldots,k_\mathcal{L}\}$, $\mathcal{L}'^{\mathsf{O}^{\mathrm{Mem}}(f)}$ fails. Note that, by the coupon collector problem, the probability of this event is small, and the failure probability bound provided by \cref{theorem:extending-iid-to-noniid} already takes this into account. 
    Thus, to achieve completeness for our purposes, we consider the following minor adjustment to \cite[Algorithm 1]{fawzi2024learningpreprint}: If $r_1,\ldots,r_{k_\mathcal{L}\log(6k_\mathcal{L}/\delta)}$ does \emph{not} cover the entire set $\{1,\ldots,k_\mathcal{L}\}$, we run the i.i.d.~certification procedure on any $k_{\mathcal{L}}$ many subsystems selected uniformly at random from the first $k_\mathcal{L}\log(6k_\mathcal{L}/\delta)$ subsystems. 
    In the completeness case, each subsystem contains one copy of $\ket{\psi_f^{\mathrm{Ph}}}$, so the estimator still returns $\hat{\omega}=1$, and we accept. 
    This modification leaves soundness untouched because the original error probability bound already accounts for the (low-probability) event of failing to cover all of $\{1,\ldots,k_\mathcal{L}\}$; therefore, the behavior when the low‑probability event occurs is irrelevant.

    \textbf{Soundness:} Let $\rho=\rho^{A_1\ldots A_N}$ be an arbitrary $(Nn)$-qubit state. Combining the guarantees from \Cref{theorem:shadow-overlap-copy-complexity,theorem:extending-iid-to-noniid}, we see that the chosen number of copies, $N$, suffices to ensure: With probability $\geq 1-\delta$, Step 2 of  $\ComplexityFont{CertifyState}^{\mathsf{O}^{\mathrm{Mem}}(f)}_{ \varepsilon, \delta}(\rho)$ produces an estimate $\hat{\omega}$, calibration information $c$, and an $n$-qubit post-measurement state $\rho_{c,\hat{\omega}}^{A_N}$ on the $N$th subsystem such that $\left\lvert\hat{\omega} - \tr[L\rho_{c,\hat{\omega}}^{A_N}]\right\rvert\leq \varepsilon/4n$, where $L$ is the observable from \Cref{theorem:shadow-overlap-completeness-and-soundness}.
    As \Cref{theorem:shadow-overlap-completeness-and-soundness} also tells us that $\tr[L \rho_{c,\hat{\omega}}^{A_N}]=\mathbb{E}[\omega]$ exactly equals the shadow overlap estimated from copies of $\rho_{c,\hat{\omega}}^{A_N}$, we can now rely on the connections between fidelity and expected overlap established in that same theorem. These imply:
    If $\bra{\psi_f^{\mathrm{Ph}}}\rho_{c,\hat{\omega}}^{A_N}\ket{\psi_f^{\mathrm{Ph}}}<1-\varepsilon$, then $\tr[L \rho_{c,\hat{\omega}}^{A_N}]=\mathbb{E}[\omega]\leq 1-\frac{\varepsilon}{n}$, so $\hat{\omega}<1-\frac{3\varepsilon}{4n}$. Therefore, if $\bra{\psi_f^{\mathrm{Ph}}}\rho_{c,\hat{\omega}}^{A_N}\ket{\psi_f^{\mathrm{Ph}}}<1-\varepsilon$, we reject with probability $\geq 1- \delta$. This yields soundness.

    \textbf{Efficiency:} The shadow overlap estimation procedure behind \Cref{theorem:shadow-overlap-copy-complexity} ($\mathcal{L}$ in Step 2 of \Cref{alg:certify-state}) queries $\mathsf{O}^{\mathrm{Mem}}(f)$ only during the classical post-processing of the obtained measurement outcomes, and it does so at most $k_\mathcal{L} = \mathcal{O}\left(\frac{n^2 \log(1/\delta)}{\varepsilon^2}\right)$ many times. 
    According to \Cref{theorem:extending-iid-to-noniid}, $\mathcal{L}'^{\mathsf{O}^{\mathrm{Mem}}(f)}$ only runs the classical post-processing part of $\mathcal{L}$ once and hence, also queries $\mathsf{O}^{\mathrm{Mem}}(f)$ at most $\mathcal{O}\left(\frac{n^2 \log(1/\delta)}{\varepsilon^2}\right)$ many times. The runtime bound as well as the claim on the kind of measurements performed follow immediately from the efficiency of shadow overlap estimation and the fact that, according to \Cref{theorem:extending-iid-to-noniid}, the non-i.i.d.~extension $\mathcal{L}'^{\mathsf{O}^{\mathrm{Mem}}(f)}$ uses the same kind of measurements and post-processing as the underlying i.i.d.~algorithm $\mathcal{L}$.
\end{proof}

By \Cref{theorem:verified-shadow-overlap-algorithm}, when given access to a public quantum phase oracle for $f$ and a private classical membership query oracle for $f$, one can use \Cref{alg:certify-state} to obtain a single certified (approximate) copy of the phase state $\ket{\psi_f^{\mathrm{Ph}}}$ with completeness and soundness requirements as in \Cref{definition:covert-quantum-data-public-quantum-oracle-private-classical-queries}.
As discussed at the end of the previous subsection, this immediately extends from a single to $m$ certified phase state copies when we set $N=\tilde{\mathcal{O}}(m^5n^5/\delta^2\varepsilon^6)$.\footnote{We point out that this $N$ refers to the number of $(mn)$-qubit subsystems. Thus, one incurs a further $m$-factor when rephrasing in terms of the number of $n$-qubit subsystems.}
Note that this upper bound falls short of the usually desired (poly-)logarithmic scaling in $1/\delta$. 
While we leave open the question of how to achieve a $\mathrm{poly}(\log(1/\delta))$-scaling for (covert) verifiable quantum data acquisition as in \Cref{definition:covert-quantum-data-public-quantum-oracle-private-classical-queries}, at the end of \cref{sec:non-iid-onlybackwards}, we discuss how to ensure a $\mathrm{poly}(\log(1/\delta))$-scaling when using this verifiable quantum data acquisition subroutine for solving any specific decision or learning task of interest.

\begin{remark}\label{remark:combine-with-vqc}     
    \cite{goldwasser2021interactive, caro2024interactiveproofsverifyingquantum} both highlight that verification of learning, classical as well as quantum, in general goes beyond verification of computation because of the presence of untrusted data on the prover side.
    In the quantum context, \Cref{theorem:verified-shadow-overlap-algorithm} provides a complete and sound procedure for certifying a specific type of quantum data (even with privacy guarantees, as we argue below). 
    Thus, a verifier may first use our procedure to acquire certified quantum data, and they can then (compare \cite[Observation 1]{caro2024interactiveproofsverifyingquantum}) delegate quantum computations on that data using verified quantum computing \cite{gheorghiu2018verification, fitzsimonsUnconditionallyVerifiableBlind2017, broadbent2018howtoverify}.
    This may serve as a more general recipe for combining data certification with interactive proofs for computing to obtain interactive proofs for learning.
\end{remark}

\subsection{Covert Verifiable Quantum Learning Against Unidirectional Adversaries} \label{sec:non-iid-onlybackwards}

We now give two simple procedures for querying a public quantum phase oracle such that, if an eavesdropper only has access to the state sent back from the public quantum phase oracle to the learner (in Step 3 of \Cref{alg:covert-public-phase-query-randomness,alg:covert-public-phase-query-entanglement} below), then the eavesdropper does not learn anything about the unknown function $f$. 

\begin{algorithm}
    \caption{Target-covert public quantum phase oracle query from classical randomness}
    \label{alg:covert-public-phase-query-randomness}
    \begin{algorithmic}[1]
        \Require Access to a public oracle $\mathsf{O}_{\mathrm{pub}}^{\mathrm{QPh}}(f)$
        \Ensure One copy of $\ket{\psi_f^{\mathrm{Ph}}}$
        \State Privately choose $r\sim\{0,1\}^n$ uniformly at random.
        \State Prepare the state $\ket{\psi^{(r)}}=Z^{r}(H\ket{0})^{\otimes n}=\frac{1}{\sqrt{2^n}}\sum_{x\in\{0,1\}^n} (-1)^{r\cdot x} \ket{x}$ and send it to $\mathsf{O}_{\mathrm{pub}}^{\mathrm{QPh}}(f)$. 
        \State Receive the state $\ket{\psi_{f}^{(r)}}=\frac{1}{\sqrt{2^n}}\sum_{x\in\{0,1\}^n} (-1)^{r\cdot x + f(x)} \ket{x}$ in response to the public oracle query.
        \State Apply $Z^{r}$ to $\ket{\psi_{f}^{(r)}}$ to obtain $\ket{\psi_f^{\mathrm{Ph}}}$.
    \end{algorithmic}
\end{algorithm}

\begin{algorithm} 
    \caption{Target-covert public quantum phase oracle query from entanglement}
    \label{alg:covert-public-phase-query-entanglement}
    \begin{algorithmic}[1]
        \Require Access to a public oracle $\mathsf{O}_{\mathrm{pub}}^{\mathrm{QPh}}(f)$
        \Ensure One copy of $\ket{\psi_f^{\mathrm{Ph}}}$
        \State Prepare the state $(H\ket{0})^{\otimes n}\otimes (H\ket{0})^{\otimes n}=\left(\frac{1}{\sqrt{2^n}}\sum_{r\in\{0,1\}^n}\ket{r}\right)\otimes \left(\frac{1}{\sqrt{2^n}}\sum_{x\in\{0,1\}^n}\ket{x}\right)$.
        \State For $1\leq i\leq n$, apply controlled-$Z$ gates between the $i$th register (as control) and the $(n+i)$th register (as target), thus preparing the state $\frac{1}{\sqrt{2^n}}\sum_{r\in\{0,1\}^n}\ket{r}\otimes \ket{\psi^{(r)}}$. Send the last $n$ qubits to $\mathsf{O}_{\mathrm{pub}}^{\mathrm{QPh}}(f)$. 
        \State Receive the response to the public oracle query, leading to the overall state $\frac{1}{\sqrt{2^n}}\sum_{r\in\{0,1\}^n}\ket{r}\otimes \ket{\psi_{f}^{(r)}}$.
        \State Measure the first $n$ qubits, observe outcome $r$, then apply $Z^{r}$ to the remaining $n$ qubits to obtain $\ket{\psi_f^{\mathrm{Ph}}}$. (Alternatively: Undo the controlled-$Z$ gates to obtain $(H\ket{0})^{\otimes n} \otimes \ket{\psi_f^{\mathrm{Ph}}}$ and discard the first $n$ qubits.)
    \end{algorithmic}
\end{algorithm}

It is easy to see that \Cref{alg:covert-public-phase-query-randomness,alg:covert-public-phase-query-entanglement} do not leak information to a unidirectional adversary. We include formal observation and a short proof for ease of reference and for completeness.

\begin{observation} \label{obs:target-covert-query}
    \Cref{alg:covert-public-phase-query-randomness,alg:covert-public-phase-query-entanglement} satisfy the following guarantees:
    \begin{itemize}
        \item[(i)] \textbf{Completeness:} If there is no eavesdropper, then, when given public oracle access to $\mathsf{O}_{\mathrm{pub}}^{\mathrm{QPh}}(f)$, \Cref{alg:covert-public-phase-query-randomness,alg:covert-public-phase-query-entanglement} each output one copy of $\ket{\psi_f^{\mathrm{Ph}}}$.
        \item[(ii)] \textbf{Privacy:} For $F\sim\mathcal{F}$ drawn from some class $\mathcal{F}$ with $|\mathcal{F}|>1$ according to some probability distribution $\mu$ over $\mathcal{F}$, a unidirectional adversary that only eavesdrops on the quantum communication from $\mathsf{O}_{\mathrm{pub}}^{\mathrm{QPh}}(f)$ to the learner in \Cref{alg:covert-public-phase-query-randomness} or \Cref{alg:covert-public-phase-query-entanglement} gains no information about $F$. That is, the joint state $\rho_{\mathsf{FA}}$ of the classical function register and the adversary's register factorizes as $\rho_{\mathsf{FA}} = \mathbb{E}_{F\sim\mu}[\ket{F}\bra{F}_{\mathsf{F}}]\otimes \rho_{\mathsf{A}}$. 
    \end{itemize}
\end{observation}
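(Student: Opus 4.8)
The plan is to treat the two claims separately. For completeness, the key observation is that $Z^{r}$ is diagonal in the computational basis and therefore commutes with the phase oracle, i.e.\ $[Z^{r},\mathsf{O}_{\mathrm{pub}}^{\mathrm{QPh}}(f)]=0$ for every $r\in\{0,1\}^{n}$ and every $f$; once this is in hand, both algorithms reduce to a few one-line algebraic manipulations. For privacy, the plan is to compute the reduced state on the only register a unidirectional adversary can touch and show it equals the maximally mixed state $\mathds{1}_{2}^{\otimes n}/2^{n}$ regardless of $f$, so that \emph{any} operation the adversary performs — even an actively tampering one — cannot create correlation with the classical function register $\mathsf{F}$.

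\textbf{Completeness.} For \Cref{alg:covert-public-phase-query-randomness}, I would write the state received in Step~3 as $\mathsf{O}_{\mathrm{pub}}^{\mathrm{QPh}}(f)\ket{\psi^{(r)}}=\mathsf{O}_{\mathrm{pub}}^{\mathrm{QPh}}(f)Z^{r}H^{\otimes n}\ket{0^{n}}=Z^{r}\mathsf{O}_{\mathrm{pub}}^{\mathrm{QPh}}(f)H^{\otimes n}\ket{0^{n}}=Z^{r}\ket{\psi_{f}^{\mathrm{Ph}}}$, using the commutation above; Step~4 then applies $Z^{r}$ and, since $(Z^{r})^{2}=\mathds{1}$, recovers $\ket{\psi_{f}^{\mathrm{Ph}}}$. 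For \Cref{alg:covert-public-phase-query-entanglement}, the controlled-$Z$ layer acts as $\ket{r}\otimes\ket{x}\mapsto(-1)^{r\cdot x}\ket{r}\otimes\ket{x}$, so Step~2 prepares $\tfrac{1}{\sqrt{2^{n}}}\sum_{r}\ket{r}\otimes\ket{\psi^{(r)}}$; applying $\mathsf{O}_{\mathrm{pub}}^{\mathrm{QPh}}(f)$ to the second register and commuting as before yields $\tfrac{1}{\sqrt{2^{n}}}\sum_{r}\ket{r}\otimes Z^{r}\ket{\psi_{f}^{\mathrm{Ph}}}$. Measuring the first register, observing $r$, and correcting with $Z^{r}$ leaves $\ket{\psi_{f}^{\mathrm{Ph}}}$; alternatively, re-applying the (self-inverse) controlled-$Z$ layer turns the state into $\bigl(\tfrac{1}{\sqrt{2^{n}}}\sum_{r}\ket{r}\bigr)\otimes\ket{\psi_{f}^{\mathrm{Ph}}}$, so discarding the first register again gives $\ket{\psi_{f}^{\mathrm{Ph}}}$. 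Each of these is a routine check.

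\textbf{Privacy.} I would model the global state just before the adversary acts as $\rho=\sum_{f}\mu(f)\ket{f}\bra{f}_{\mathsf{F}}\otimes\sigma_{\mathsf{RS}}^{(f)}$, where $\mathsf{R}$ is the learner's private register (the classical coin in \Cref{alg:covert-public-phase-query-randomness}, the first $n$ qubits in \Cref{alg:covert-public-phase-query-entanglement}) and $\mathsf{S}$ is the $n$-qubit register returned from the oracle. In both protocols, tracing out $\mathsf{R}$ gives $\mathrm{Tr}_{\mathsf{R}}\bigl[\sigma_{\mathsf{RS}}^{(f)}\bigr]=\tfrac{1}{2^{n}}\sum_{r\in\{0,1\}^{n}}\ket{\psi_{f}^{(r)}}\bra{\psi_{f}^{(r)}}$, and expanding in the computational basis and using $\tfrac{1}{2^{n}}\sum_{r}(-1)^{r\cdot(x\oplus x')}=\mathds{1}[x=x']$ collapses this to $\mathds{1}_{\mathsf{S}}/2^{n}$, which is independent of $f$. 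Since a unidirectional adversary sees only $\mathsf{S}$, the state on $\mathsf{F}\mathsf{S}$ available to it already factorizes, $\rho_{\mathsf{F}\mathsf{S}}=\bigl(\sum_{f}\mu(f)\ket{f}\bra{f}\bigr)_{\mathsf{F}}\otimes\tfrac{\mathds{1}_{\mathsf{S}}}{2^{n}}$. The adversary then applies a CPTP map $\mathcal{N}_{\mathsf{S}\to\mathsf{A}}$ (its private ancillas can be absorbed into $\mathcal{N}$, since they are prepared in an $f$-independent state, and $\mathcal{N}$ cannot depend on $f$), and $\mathrm{id}_{\mathsf{F}}\otimes\mathcal{N}$ preserves the product form, so $\rho_{\mathsf{F}\mathsf{A}}=\bigl(\sum_{f}\mu(f)\ket{f}\bra{f}\bigr)_{\mathsf{F}}\otimes\mathcal{N}(\mathds{1}_{\mathsf{S}}/2^{n})_{\mathsf{A}}$, which is exactly the claimed factorization with $\rho_{\mathsf{A}}=\mathcal{N}(\mathds{1}_{\mathsf{S}}/2^{n})$.

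I do not expect a genuine obstacle; once the registers are named this is essentially a bookkeeping exercise. The one point deserving care is that the privacy guarantee must survive an \emph{active} (tampering) unidirectional adversary rather than a merely passive one — but the argument above already handles this, because conditioned on any fixed $f$ the adversary's input register $\mathsf{S}$ is in the fixed state $\mathds{1}/2^{n}$, so no channel it applies, however adaptively or maliciously chosen, can manufacture $f$-dependence. It is precisely the adversary's lack of access to the private register $\mathsf{R}$ that pins down its marginal to be $f$-independent.
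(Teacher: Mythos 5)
Your proof is correct and follows essentially the same route as the paper: both use the commutation relation $[Z^r,\mathsf{O}_{\mathrm{pub}}^{\mathrm{QPh}}(f)]=0$ for completeness, and both establish privacy by tracing out the learner's private register $\mathsf{R}$ to find that the adversary's accessible marginal is the maximally mixed state $\mathds{1}/2^n$ for every $f$, after which any CPTP processing preserves the product form. The only difference is that you spell out the Fourier-sum identity $\tfrac{1}{2^n}\sum_r(-1)^{r\cdot(x\oplus x')}=\mathds{1}[x=x']$ and explicitly note that the adversary's ancillas can be absorbed into the channel; the paper states the same reduced-state computation as a "direct calculation" and leaves the ancilla point implicit.
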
 
\begin{proof}
    Completeness is immediate from the description of the algorithms. Here, we only rely on the fact that $[Z^r, \mathsf{O}_{\mathrm{pub}}^{\mathrm{QPh}}(f)]=0$ holds for all $r$ and $f$. 

    For privacy, we present the argument for \Cref{alg:covert-public-phase-query-randomness}. The reasoning for \Cref{alg:covert-public-phase-query-entanglement} is identical because the reduced state on the subsystem that the eavesdropper has access to is the same in both procedures.
    We can describe the overall system after the public quantum phase oracle has been queried---consisting of the classical random function, the classical random bits, and the quantum state $\ket{\psi_{f}^{(r)}}$ prepared in Step 3 of \Cref{alg:covert-public-phase-query-randomness}---by the classical-quantum state 
    \begin{equation}
        \rho_{\mathsf{FRQ}}
        = \sum_{f:\{0,1\}^n\to\{0,1\}} \mu(f)\ket{f}\bra{f}_\mathsf{F} \otimes \frac{1}{2^n}\sum_{r\in\{0,1\}^n} \ket{r}\bra{r}_\mathsf{R} \otimes\ket{\psi_{f}^{(r)}}\bra{\psi_{f}^{(r)}}_\mathsf{Q}\, .
    \end{equation}
    A unidirectional adversary that only eavesdrops on the quantum communication from $\mathsf{O}_{\mathrm{pub}}^{\mathrm{QPh}}(f)$ to the learner does not have access to the private randomness system $\mathsf{R}$. Thus, the relevant classical-quantum state from the adversary's perspective is
    \begin{equation}
        \rho_{\mathsf{FQ}}
        = \tr_{\mathsf{R}}[\rho_{\mathsf{FRQ}}]
        = \sum_{f:\{0,1\}^n\to\{0,1\}} \mu(f)\ket{f}\bra{f}_\mathsf{F} \otimes \frac{1}{2^n}\sum_{r\in\{0,1\}^n} \ket{\psi_{f}^{(r)}}\bra{\psi_{f}^{(r)}}_\mathsf{Q}
        \, .
    \end{equation}
    A direct calculation shows that $\frac{1}{2^n}\sum_{r\in\{0,1\}^n} \ket{\psi_{f}^{(r)}}\bra{\psi_{f}^{(r)}}_\mathsf{Q} = \frac{1}{2^n}\sum_{x\in\{0,1\}^n} \ket{x}\bra{x}_\mathsf{Q} = \mathds{1}_\mathsf{Q}/2^n$ holds for every $f$. 
    Hence, we have shown that the state factorizes as $\rho_{\mathsf{FQ}}=\mathbb{E}_{F\sim\mu}[\ket{F}\bra{F}_{\mathsf{F}}]\otimes \mathds{1}_\mathsf{Q}/2^n$.
    As the adversary can now only apply local processing in the form of a quantum channel mapping the $\mathsf{Q}$-system to some (potentially larger) $\mathsf{A}$-system, the final state $\rho_{\mathsf{FA}}$ also factorizes as $\rho_{\mathsf{FA}} = \mathbb{E}_{F\sim\mu}[\ket{F}\bra{F}_{\mathsf{F}}]\otimes \rho_{\mathsf{A}}$.
\end{proof}

The argument given above immediately extends to non-uniform superpositions. Concretely, when replacing $\frac{1}{\sqrt{2^n}}\sum_x \ket{x}$ by $\sum_x \alpha_x \ket{x}$, the intermediate factorization becomes $\rho_{\mathsf{FQ}} = \mathbb{E}_{F\sim\mu}[\ket{F}\bra{F}_{\mathsf{F}}]\otimes \sum_x |\alpha_x|^2 \ket{x}\bra{x}_\mathsf{Q}$. However, as quantum testing and learning procedures often focus on the state $\ket{\psi_f^{\mathrm{Ph}}}$, so do we. We also note that, at this point, \Cref{alg:covert-public-phase-query-entanglement} provides the same guarantees as \Cref{alg:covert-public-phase-query-randomness}, despite requiring more involved (quantum) processing from the learner. The potential benefits of \Cref{alg:covert-public-phase-query-entanglement} over \Cref{alg:covert-public-phase-query-randomness} will become clearer in \Cref{sec:iid-both-directions}.

Next, we show how to combine \Cref{alg:covert-public-phase-query-randomness} or \Cref{alg:covert-public-phase-query-entanglement} with the non-i.i.d.~phase state certification developed in \Cref{theorem:verified-shadow-overlap-algorithm} to obtain a covert verifiable procedure for acquiring phase state copies from a public quantum phase query oracle. We first give the algorithm and then prove the corresponding guarantees.

\begin{algorithm}
    \caption{Covert Verifiable Phase States from Public Oracle against Unidirectional Adversaries}
    \begin{algorithmic}[1]
        \Require Number of qubits $n \in \mathbb{N}$; number of desired copies $m \in \mathbb{N}$; parameters $\delta, \varepsilon \in (0,1)$; access to oracles $\mathsf{O}_{\mathrm{pub}}^{\mathrm{QPh}}(f)$ and $\mathsf{O}_{\mathrm{pri}}^{\mathrm{Mem}}(f)$
        \Ensure ``reject'' or a an $(mn)$-qubit state
        \State Set $N = N(n, m, \varepsilon, \delta) = \tilde{\mathcal{O}}\left( \frac{nm \left((nm)^2 \log(1/\delta) / \varepsilon^2\right)^2 \log^2(1/\delta)}{\delta^2 \varepsilon^2} \right) = \tilde{\mathcal{O}}\left( \frac{(nm)^5}{\delta^2 \varepsilon^6} \right)$.
        \State Query $\mathsf{O}_{\mathrm{pub}}^{\mathrm{QPh}}(f)$ using either \Cref{alg:covert-public-phase-query-randomness} or \Cref{alg:covert-public-phase-query-entanglement} for $Nm$ times, with fresh private randomness or a fresh randomness register, respectively, in each query. Denote the resulting $(Nmn)$-qubit state by $\rho$.
        \State Using $\mathsf{O}_{\mathrm{pri}}^{\mathrm{Mem}}(f)$, apply $\ComplexityFont{CertifyState}^{\mathsf{O}^{\mathrm{Mem}}(f)}_{ \varepsilon, \delta}$ (\Cref{alg:certify-state}) to certify $\rho$ against the phase state $\ket{\psi_f^{\mathrm{Ph}}}^{\otimes m}$. 
        \If{$\ComplexityFont{CertifyState}^{\mathsf{O}^{\mathrm{Mem}}(f)}_{ \varepsilon, \delta}(\rho)$ rejects}
            \State Output ``reject''.
        \Else
            \State Return the resulting $(mn)$-qubit state $\sigma$.
        \EndIf
    \end{algorithmic}
    \label{alg:covert-verifiable-public-quantum-private-classical}
\end{algorithm}

\begin{theorem}[Covert Verifiable Phase States against Unidirectional Adversaries---Formal version of \Cref{inf-theorem:covert-quantum-data-acquisition-v1}] \label{theorem:covert-shadow-overlap} 
For $\delta, \varepsilon \in (0, 1)$, \Cref{alg:covert-verifiable-public-quantum-private-classical} is a $(m_{\mathrm{pri}} = \mathcal{O}\left(\frac{n^2m^3 \log(1/\delta)}{\varepsilon^2}\right), m_{\mathrm{pub}} = \tilde{\mathcal{O}}\left( \frac{n^5m^6}{\delta^2 \varepsilon^6} \right),\delta_c = 0,\delta_s = \delta, \varepsilon)$-covert verifiable quantum algorithm for producing $m$ quantum phase states for an arbitrary concept class $\mathcal{F}$ against unidirectional adversaries. That is: 
    \begin{itemize}
        \item \textbf{Completeness:} With no adversary, for any $f\in\mathcal{F}$, after making at most $m_{\mathrm{pri}}$ queries to $\mathsf{O}_{\mathrm{pri}}^{\mathrm{Mem}}(f)$ and at most $m_{\mathrm{pub}}$ queries to $\mathsf{O}_{\mathrm{pub}}^{\mathrm{QPh}}(f)$, \Cref{alg:covert-verifiable-public-quantum-private-classical} accepts and outputs $\ket{\psi_f^{\mathrm{Ph}}}\bra{\psi_f^{\mathrm{Ph}}}^{\otimes m}$  with success probability $1$.
        \item \textbf{Soundness:} For any $f\in\mathcal{F}$, after making at most $m_{\mathrm{pri}}$ queries to $\mathsf{O}_{\mathrm{pri}}^{\mathrm{Mem}}(f)$ and at most $m_{\mathrm{pub}}$ queries to $\mathsf{O}_{\mathrm{pub}}^{\mathrm{QPh}}(f)$, the latter of which are subject to corruption by an arbitrary adversary, \Cref{alg:covert-verifiable-public-quantum-private-classical} accepts and outputs some $\sigma$ with $\bra{\psi_f^{\mathrm{Ph}}}^{\otimes m}\sigma\ket{\psi_f^{\mathrm{Ph}}}^{\otimes m}< 1-\varepsilon$ with failure probability $\leq \delta$.
        \item \textbf{{Privacy}:} For $F\sim\mathcal{F}$ a randomly drawn concept, a unidirectional adversary adversary gains no information about $F$ from interacting with the at most $m_{\mathrm{pub}}$ queries to $\mathsf{O}_{\mathrm{pub}}^{\mathrm{QPh}}(f)$ made by \Cref{alg:covert-verifiable-public-quantum-private-classical}. That is, the joint state $\rho_{\mathsf{FA}}$ of the classical function register and the adversary's register factorizes as $\rho_{\mathsf{FA}} = \frac{\mathds{1}_{\mathsf{F}}}{|\mathcal{F}|}\otimes \rho_\mathsf{A}$. 
\end{itemize}
\end{theorem}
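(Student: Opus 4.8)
The plan is to prove the three properties of \Cref{alg:covert-verifiable-public-quantum-private-classical} by composing the guarantees of its two building blocks: the target-covert query subroutine (\Cref{alg:covert-public-phase-query-randomness} or \Cref{alg:covert-public-phase-query-entanglement}, whose properties are \Cref{obs:target-covert-query}) and the non-i.i.d.\ certification subroutine \ComplexityFont{CertifyState} (\Cref{alg:certify-state}, whose properties are \Cref{theorem:verified-shadow-overlap-algorithm}). The only genuinely new content is to (a) account bookkeeping for the parameters $m_{\mathrm{pri}}$ and $m_{\mathrm{pub}}$, (b) observe that the private classical membership oracle suffices to simulate membership queries to the function $f$ defining the phase state being certified, and (c) argue that certifying the $(mn)$-qubit state $\ket{\psi_f^{\mathrm{Ph}}}^{\otimes m}$, viewed as a phase state of a function on $mn$ bits, is the right target.

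First, for \textbf{completeness} I would argue: absent an adversary, \Cref{obs:target-covert-query}(i) says each of the $Nm$ calls to the query subroutine returns an exact copy of $\ket{\psi_f^{\mathrm{Ph}}}$, so the state $\rho$ fed into \ComplexityFont{CertifyState} is $\bigl(\ket{\psi_f^{\mathrm{Ph}}}\bra{\psi_f^{\mathrm{Ph}}}^{\otimes m}\bigr)^{\otimes N}$, i.e., exactly $N$ i.i.d.\ copies of the $(mn)$-qubit state $\ket{\psi_{\tilde f}^{\mathrm{Ph}}}$ where $\tilde f:\{0,1\}^{mn}\to\{0,1\}$ is the ``parallel'' function $\tilde f(x^{(1)},\dots,x^{(m)}) = \bigoplus_{j=1}^m f(x^{(j)})$ — note $\ket{\psi_f^{\mathrm{Ph}}}^{\otimes m} = \ket{\psi_{\tilde f}^{\mathrm{Ph}}}$ and that membership queries to $\tilde f$ are simulated by $m$ membership queries to $f$, hence by $\mathsf{O}_{\mathrm{pri}}^{\mathrm{Mem}}(f)$. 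By the completeness clause of \Cref{theorem:verified-shadow-overlap-algorithm} (applied with $n\mapsto mn$), \ComplexityFont{CertifyState} accepts with probability $1$ and returns $\ket{\psi_f^{\mathrm{Ph}}}\bra{\psi_f^{\mathrm{Ph}}}^{\otimes m}$, so $\delta_c=0$. For \textbf{soundness}: an arbitrary (even bidirectional) adversary corrupting the $Nm$ public queries produces \emph{some} $(Nmn)$-qubit state $\rho$; crucially, $\rho$ need not be i.i.d.\ or permutation-invariant, but Step 1 of \Cref{alg:certify-state} symmetrizes it, and the soundness clause of \Cref{theorem:verified-shadow-overlap-algorithm} holds for \emph{any} input state. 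Hence with failure probability $\le\delta$, \ComplexityFont{CertifyState} either rejects or outputs a $\sigma$ with $\bra{\psi_f^{\mathrm{Ph}}}^{\otimes m}\sigma\ket{\psi_f^{\mathrm{Ph}}}^{\otimes m}\ge 1-\varepsilon$, which is exactly the claimed soundness. For the parameter counts, $m_{\mathrm{pub}} = Nm = \tilde{\mathcal{O}}((nm)^5/\delta^2\varepsilon^6)\cdot m$ — here I would be slightly careful that $N$ in Step 1 is the number of $(mn)$-qubit subsystems, so the number of public phase-oracle queries is $Nm$, matching the stated $\tilde{\mathcal{O}}(n^5 m^6/\delta^2\varepsilon^6)$; and $m_{\mathrm{pri}}$ comes from $k_{\mathcal L} = \mathcal{O}((mn)^2\log(1/\delta)/\varepsilon^2)$ membership queries to $\tilde f$ in the post-processing of \ComplexityFont{CertifyState}, each costing $m$ queries to $\mathsf{O}_{\mathrm{pri}}^{\mathrm{Mem}}(f)$, giving $\mathcal{O}(n^2 m^3\log(1/\delta)/\varepsilon^2)$.

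For \textbf{privacy}, the key point is that against a unidirectional adversary the certification subroutine is invisible: \ComplexityFont{CertifyState} uses only the private oracle $\mathsf{O}_{\mathrm{pri}}^{\mathrm{Mem}}(f)$ and acts locally on states already held by the learner, so it introduces no new learner$\to$oracle$\to$learner communication that the adversary could touch. Thus the adversary's entire view is exactly its view of the $Nm$ executions of \Cref{alg:covert-public-phase-query-randomness} (or \Cref{alg:covert-public-phase-query-entanglement}), each with \emph{fresh} private randomness. By \Cref{obs:target-covert-query}(ii), in a \emph{single} execution the reduced state the unidirectional adversary sees is $\mathds{1}_2^{\otimes n}/2^n$, independent of $f$; since the randomness is fresh across the $Nm$ rounds, the joint adversary state over all rounds is a tensor product of $f$-independent maximally mixed states, hence still $f$-independent. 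Formally, I would write the overall classical-quantum state $\rho_{\mathsf{F}\mathsf{R}_1\cdots\mathsf{R}_{Nm}\mathsf{Q}_1\cdots\mathsf{Q}_{Nm}}$, trace out all private randomness registers $\mathsf{R}_i$ to get $\tfrac{\mathds{1}_{\mathsf{F}}}{|\mathcal F|}\otimes\bigl(\mathds{1}/2^n\bigr)^{\otimes Nm}$ exactly as in the proof of \Cref{obs:target-covert-query}, and then note that any adversary channel (applied round-by-round, possibly adaptively, but only to the $\mathsf{Q}_i$ registers it sees) maps this to $\tfrac{\mathds{1}_{\mathsf{F}}}{|\mathcal F|}\otimes\rho_{\mathsf A}$. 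This is \textbf{Privacy Version 1}. I do not expect a serious obstacle here; the main thing to be careful about is that ``unidirectional'' in the statement means the adversary never sees the learner$\to$oracle message $\ket{\psi^{(r)}}$, which is what prevents it from measuring out $r$ and breaking the scheme, and that no step of \Cref{alg:covert-verifiable-public-quantum-private-classical} other than the $Nm$ public queries involves the public channel at all.

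The main obstacle — really the only subtle point — is the composition of the \emph{non-i.i.d.} certification guarantee with the \emph{adversarially corrupted} query outputs: one must confirm that \Cref{theorem:verified-shadow-overlap-algorithm}'s soundness is genuinely worst-case over input states (it is, since the symmetrization step in \Cref{alg:certify-state} handles arbitrary $\rho$), and that the ``target phase state'' language is consistent when $m>1$ — i.e., that $\ket{\psi_f^{\mathrm{Ph}}}^{\otimes m}$ is itself a phase state on $mn$ qubits so that \Cref{theorem:verified-shadow-overlap-algorithm} applies verbatim with $n$ replaced by $mn$, and that classical query access to $f$ suffices to simulate the membership oracle for this $mn$-bit function. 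Once these identifications are made, the three properties follow by direct composition, and the parameter bounds are a matter of substituting $n\mapsto mn$ into the bounds of \Cref{theorem:verified-shadow-overlap-algorithm} and multiplying query counts by $m$ where a parallel-function query unfolds into $m$ queries to $f$.
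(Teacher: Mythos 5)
Your proposal is correct and follows essentially the same route as the paper: compose \Cref{obs:target-covert-query} for completeness and privacy with the worst-case soundness of \Cref{theorem:verified-shadow-overlap-algorithm} (applied to the $(mn)$-qubit phase state $\ket{\psi_f^{\mathrm{Ph}}}^{\otimes m}$, whose membership oracle is simulated by $m$ calls to $\mathsf{O}_{\mathrm{pri}}^{\mathrm{Mem}}(f)$), using fresh private randomness/entanglement per round for privacy and straightforward bookkeeping for the parameter bounds. The paper's proof is terser but makes exactly the same identifications, including the $N\mapsto Nm$ public-query count and the $m$-fold overhead on private queries.
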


As will be evident from the proof, while the privacy is against unidirectional adversaries, the soundness holds even against arbitrary adversaries.

\begin{proof}
    \textbf{Completeness:} Due to \Cref{obs:target-covert-query}(i), we know that $\rho$ in Line 2 of \Cref{alg:covert-verifiable-public-quantum-private-classical} is $\ket{\psi_f^{\mathrm{Ph}}}\bra{\psi_f^{\mathrm{Ph}}}^{\otimes mN}$. Moreover, completeness of $\ComplexityFont{CertifyState}^{\mathsf{O}^{\mathrm{Mem}}(f)}_{ \varepsilon, \delta}$ from \Cref{theorem:verified-shadow-overlap-algorithm} guarantees that with probability $1$, we accept and output $\ket{\psi_f^{\mathrm{Ph}}}\bra{\psi_f^{\mathrm{Ph}}}^{\otimes m}$. 
    
    \textbf{Soundness:} Number of copies used ($N$) suffices to use the soundness of $\ComplexityFont{CertifyState}^{\mathsf{O}^{\mathrm{Mem}}(f)}_{ \varepsilon, \delta}$ from \Cref{theorem:verified-shadow-overlap-algorithm} applied to the $m$-copy phase state of $f$. As $\ComplexityFont{CertifyState}^{\mathsf{O}^{\mathrm{Mem}}(f)}_{ \varepsilon, \delta}$ works with non-i.i.d~copies, we can apply to an arbitrary $(Nmn)$-qubit state ($\rho$ in this case). Therefore, no matter the action of adversary, 
    \begin{equation}
        \Pr\left[\text{\Cref{alg:covert-verifiable-public-quantum-private-classical} accepts and } \bra{\psi_f^{\mathrm{Ph}}}^{\otimes m}\sigma\ket{\psi_f^{\mathrm{Ph}}}^{\otimes m} < 1 - \varepsilon  \right]
            \leq \delta .
    \end{equation}
    \textbf{Privacy:} As we use fresh randomness or a fresh entanglement register, respectively, in each of the $Nm$ queries, \Cref{obs:target-covert-query}(ii) tells us that, in each round, the unidirectional adversary gains no information about the unknown $F$ and the joint function-adversary state factorizes.
    Consequently, even for the whole interaction, the adversary gains no information about $F$ and the joint function-adversary state factorizes. 
    
    \textbf{Efficiency:} To use the non-i.i.d.~certification protocol from \Cref{theorem:verified-shadow-overlap-algorithm}, we need sufficiently many public oracle queries to prepare $N = N(n, m, \varepsilon, \delta) = 
    \tilde{\mathcal{O}}\left( \frac{(nm)^5}{\delta^2 \varepsilon^6} \right)$ many copies of $\ket{\psi_f^{\mathrm{Ph}}}^{\otimes m}$.
    As one copy of $\ket{\psi_f^{\mathrm{Ph}}}^{\otimes m}$ can be prepared from $m$ many queries to $ \mathsf{O}_{\mathrm{pub}}^{\mathrm{QPh}}(f)$ queries, we arrive at an overall number of $Nm = \tilde{\mathcal{O}}\left( \frac{n^5m^6}{\delta^2 \varepsilon^6} \right)$ queries to $ \mathsf{O}_{\mathrm{pub}}^{\mathrm{QPh}}(f)$. According to \Cref{theorem:verified-shadow-overlap-algorithm}, the classical post-processing of the obtained measurement outcomes uses $\mathcal{O}\left(\frac{(nm)^2 \log(1/\delta)}{\varepsilon^2}\right)$ classical membership queries to the function $f^{\otimes m}(x_1,\ldots,x_m)=f(x_1)\oplus \ldots \oplus f(x_m)$ associated with the phase state $\ket{\psi_f^{\mathrm{Ph}}}^{\otimes m}$. One query to $f^{\otimes m}$ can be simulated using $m$ queries to $\mathsf{O}_{\mathrm{pri}}^{\mathrm{Mem}}(f)$, leading us to a total number of $\mathcal{O}\left(\frac{n^2m^3 \log(1/\delta)}{\varepsilon^2}\right)$ queries to $\mathsf{O}_{\mathrm{pri}}^{\mathrm{Mem}}(f)$. 
\end{proof}

\Cref{alg:covert-verifiable-public-quantum-private-classical} effectively augments the non-i.i.d.~state certification of phase states (\Cref{alg:certify-state}) with the property of covertness. However, as discussed as the end of \Cref{sec:verification-shadows}, we inherit the $\poly(1/\delta)$ scaling from \Cref{alg:certify-state}. Lacking a general quantum counterpart of majority voting\footnote{While \cite{buhrman2022quantummajorityvote} provides a certain kind of quantum majority vote, it is unfortunately insufficient for our purposes because it assumes that there are only two possible candidate states and that those two states are orthogonal.} between copies produced from single rounds in the above protocol, it is not clear how to apply the standard approach to amplify the success probability and thus improve the $\delta$-dependence. However, when considering algorithms that use such copies to solve a (distinguishing) task, we now show that one can indeed amplify the success probability. This succeeds for essentially the same reason that weak error reduction for \textsc{QMA} (see, for instance, \cite[Lemma 14.1]{kitaev2002classicalandquantum}) succeeds. 
We leave as an open question whether one can also imitate \emph{strong} error reduction for \textsc{QMA} to obtain improved upper bounds on the number of copies. 

For ease of presentation, here we consider binary distinguishing problems (with a promise) for Boolean functions. 
That is, we consider a problem $P$ defined by subsets $S_{\mathrm{YES}}, ~S_{\mathrm{NO}}\subset\{0,1\}^{\{0,1\}^n}$. If $f\in S_{\mathrm{YES}}$ ($f\in S_{\mathrm{NO}}$), the only valid solution to $P$ for $f$ is YES (NO). If $f\not\in S_{\mathrm{YES}}\cup S_{\mathrm{NO}}$, both YES and NO are valid solutions to $P$ for $f$. 
With this nomenclature established, we now present a version of \Cref{theorem:covert-shadow-overlap} for solving a distinguishing problem with amplified success probability.

\begin{algorithm}
    \caption{\Cref{alg:covert-verifiable-public-quantum-private-classical} applied to distinguishing task}
    \begin{algorithmic}[1]
        \Require Number of qubits $n\in\mathbb{N}$; distinguishing algorithm $A$ with parameters $\varepsilon_A\in (0,1)$ and $\delta_A\in (0,1/4)$; access to oracles $\mathsf{O}_{\mathrm{pub}}^{\mathrm{QPh}}(f)$, $\mathsf{O}_{\mathrm{pri}}^{\mathrm{Mem}}(f)$; target confidence parameter $\delta\in (0,1)$.
        \Ensure ``reject'' or candidate solution
        \State Set $\ell=\frac{2\ln(1/\delta)}{(1-4\delta_A)^2}$ and $N=N(n,m,\varepsilon_A,\delta_A,\delta)=\Tilde{\mathcal{O}}\left(\frac{n^5 m^5}{\delta_A^2 \varepsilon_A^6}\right)$.
        \State Query $\mathsf{O}_{\mathrm{pub}}^{\mathrm{QPh}}(f)$ using either \Cref{alg:covert-public-phase-query-randomness} or \Cref{alg:covert-public-phase-query-entanglement}, $N\ell m$ times, with fresh private randomness or a fresh randomness register, respectively, in each round. Denote the resulting $(N\ell mn)$-qubit state by $\rho$.
        \For{$1\leq j\leq \ell$}
            \State Using $\mathsf{O}_{\mathrm{pri}}^{\mathrm{Mem}}(f)$, apply $\ComplexityFont{CertifyState}^{\mathsf{O}^{\mathrm{Mem}}(f)}_{ \varepsilon_A, \delta_A}$ to the first $Nnm$ qubits of $\rho$ against the phase state $\ket{\psi_f^{\mathrm{Ph}}}^{\otimes m}$.
            \If{$\ComplexityFont{CertifyState}^{\mathsf{O}^{\mathrm{Mem}}(f)}_{ \varepsilon_A, \delta_A}(\rho)$ rejects} 
                \State Output ``reject''.
            \Else 
                \State Denote the resulting $((N(\ell-j)+1)mn)$-qubit state by $\tilde{\rho}^{(j)}$.
                \State $X_j \gets A(\tr_{>mn}[\tilde{\rho}^{(j)}])$.
            \EndIf 
            \State Set $\rho\gets \tr_{<mn}[\tilde{\rho}^{(j)}]$.
        \EndFor
        \State Return $\text{Majority}(X_1,\ldots, X_\ell)$.
    \end{algorithmic}\label{alg:amp-verifiable-public-quantum-private-classical}
\end{algorithm}

\begin{corollary}\label{corollary:verifiable-binary-distinguishing-from-phase-states}
    Consider a binary distinguishing problem $P$ for Boolean functions. Let $\ket{\psi_f^{\mathrm{Ph}}}$ be the phase state of a Boolean function $f: \{0, 1\}^n \to \{0, 1\}$. Let $A$ be a quantum algorithm for solving $P$ that, upon input of any $(mn)$-qubit state $\rho$, satisfies the following guarantee for some $\varepsilon_A \in (0,1), \delta_A\in (0,\frac{1}{4}),~m\in \mathbb{N}$: 
    \begin{align}\label{equation:assumption-algorithm-solves-problem-robustly}
        \text{ if } \bra{\psi_f^{\mathrm{Ph}}}^{\otimes m} \rho \ket{\psi_f^{\mathrm{Ph}}}^{\otimes m } \geq 1 - \varepsilon_A, \text{ then } \Pr[ A(\rho) \text{ is a valid solution to }P\text{ for }f] \geq 1 - \delta_A\, .
    \end{align}
    Let $\delta\in (0,1)$ be a confidence parameter.
    Set $\ell=\frac{2\ln(1/\delta)}{(1-4\delta_A)^2}$.
    \Cref{alg:amp-verifiable-public-quantum-private-classical} gives a quantum algorithm $A'$ that, using at most $m_{\mathrm{pub}}=\Tilde{\mathcal{O}}\left(\frac{n^5m^6 \ell}{\delta_A^2 \varepsilon_A^6}\right)$ queries to a public quantum phase oracle $\mathsf{O}_{\mathrm{pub}}^{\mathrm{QPh}}(f)$ and at most $m_{\mathrm{pri}}=\Tilde{\mathcal{O}}\left(\frac{n^2 m^3 \ell \log\left(1/\delta_A\right)}{\varepsilon_A^2}\right)$ queries to a private classical membership query oracle $\mathsf{O}_{\mathrm{pri}}^{\mathrm{Mem}}(f)$, achieves the following guarantees:
    \begin{itemize}
        \item \textbf{Completeness:} With no adversary, for any $f\in\mathcal{F}$, after making at most $m_{\mathrm{pri}}$ queries to $\mathsf{O}_{\mathrm{pri}}^{\mathrm{Mem}}(f)$ and at most $m_{\mathrm{pub}}$ queries to $\mathsf{O}_{\mathrm{pub}}^{\mathrm{QPh}}(f)$, 
        \begin{equation}
            \Pr\left[A'^{\mathsf{O}_{\mathrm{pri}}^{\mathrm{Mem}}(f), \mathsf{O}_{\mathrm{pub}}^{\mathrm{QPh}}(f)}\text{ accepts and outputs a valid solution to } P \text{ for }f\right]
            \geq 1-\delta\, .
        \end{equation}
        \item \textbf{Soundness:}  For any $f\in\mathcal{F}$, after making at most $m_{\mathrm{pri}}$ queries to $\mathsf{O}_{\mathrm{pri}}^{\mathrm{Mem}}(f)$ and at most $m_{\mathrm{pub}}$ queries to $\mathsf{O}_{\mathrm{pub}}^{\mathrm{QPh}}(f)$, the latter of which are subject to corruption by an arbitrary adversary,  
        \begin{equation}
            \Pr\left[A'^{\mathsf{O}_{\mathrm{pri}}^{\mathrm{Mem}}(f), \mathsf{O}_{\mathrm{pub}}^{\mathrm{QPh}}(f)}\text{ accepts and outputs an invalid solution to } P \text{ for }f\right]
            \leq \delta\, .
        \end{equation}
        \item \textbf{{Privacy}:} For $F\sim\mathcal{F}$ drawn uniformly at random from any concept class $\mathcal{F}$, a unidirectional adversary adversary gains no information about $F$ from interacting with the at most $m_{\mathrm{pub}}$ queries to $\mathsf{O}_{\mathrm{pub}}^{\mathrm{QPh}}(f)$ made by \Cref{alg:amp-verifiable-public-quantum-private-classical}. That is, the joint state $\rho_{\mathsf{FA}}$ of the classical function register and the adversary's register factorizes as $\rho_{\mathsf{FA}} = \frac{\mathds{1}_{\mathsf{F}}}{|\mathcal{F}|}\otimes \rho_\mathsf{A}$. 
        \item \textbf{Efficiency:} $A'$ runs in time $\mathcal{O}(\poly(n, m, 1/\delta_A, 1/(1-4\delta_A)^2, 1/\varepsilon_A, \log(1/\delta), \ComplexityFont{Complexity}(A))$ where $\ComplexityFont{Complexity}(A)$ is the run time of the original algorithm $A$. In addition to the operations used in $A$, $A'$ only employs uniform superposition state preparation and single-qubit Pauli gates if \Cref{alg:covert-public-phase-query-randomness} is used (single-qubit Pauli and CZ gates if \Cref{alg:covert-public-phase-query-entanglement} is used). 
    \end{itemize}
\end{corollary}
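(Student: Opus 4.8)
The plan is to view \Cref{alg:amp-verifiable-public-quantum-private-classical} as a \textsf{QMA}-style weak-error-reduction wrapper around \Cref{theorem:covert-shadow-overlap}. Up front the learner covertly pulls $N\ell m$ phase-state copies via $N\ell m$ uses of \Cref{alg:covert-public-phase-query-randomness} (or \Cref{alg:covert-public-phase-query-entanglement}) with fresh randomness, obtaining an $(N\ell mn)$-qubit state; then, in each of the $\ell=2\ln(1/\delta)/(1-4\delta_A)^2$ rounds, it runs \Cref{alg:certify-state} on the next $N$ blocks of $mn$ qubits to certify one of them against $\ket{\psi_f^{\mathrm{Ph}}}^{\otimes m}$, and if certification passes runs $A$ on that certified block to get a candidate answer $X_j$; a certification failure in any round triggers a global ``reject,'' and otherwise the output is $\mathrm{Majority}(X_1,\dots,X_\ell)$. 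Privacy will come essentially for free, completeness and soundness each reduce to a Hoeffding bound after isolating the right per-round failure event, and efficiency is bookkeeping.

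First, for \textbf{privacy} I would note that the only messages on the $\mathsf{O}_{\mathrm{pub}}^{\mathrm{QPh}}$-to-learner channel are the $N\ell m$ responses to queries made through \Cref{alg:covert-public-phase-query-randomness}/\Cref{alg:covert-public-phase-query-entanglement} with fresh randomness, while every call to $\ComplexityFont{CertifyState}$ and every invocation of $A$ touches only the learner's own registers and the \emph{private} oracle $\mathsf{O}_{\mathrm{pri}}^{\mathrm{Mem}}(f)$. By \Cref{obs:target-covert-query}(ii) each individual public query leaves the joint function--adversary state factorized for a unidirectional eavesdropper, and composing over all rounds preserves the factorization $\rho_{\mathsf{FA}}=\frac{\mathds{1}_{\mathsf{F}}}{|\mathcal{F}|}\otimes\rho_{\mathsf{A}}$, which is \textbf{Privacy Version~1}. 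As in \Cref{theorem:covert-shadow-overlap}, I expect soundness to hold against \emph{arbitrary} adversaries and only privacy to need the unidirectional restriction.

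For \textbf{completeness} (no adversary), \Cref{obs:target-covert-query}(i) gives that the state produced up front is $\ket{\psi_f^{\mathrm{Ph}}}\bra{\psi_f^{\mathrm{Ph}}}^{\otimes N\ell m}$, i.e.\ unentangled copies, so by the completeness clause of \Cref{theorem:verified-shadow-overlap-algorithm} every round accepts with probability $1$ and outputs an exact $\ket{\psi_f^{\mathrm{Ph}}}^{\otimes m}$; then \eqref{equation:assumption-algorithm-solves-problem-robustly} applied with fidelity $1$ makes each $X_j$ valid with probability $\geq 1-\delta_A$, independently, and a Hoeffding bound for $\mathrm{Bin}(\ell,\delta_A)$ at threshold $\ell/2$ gives failure probability $\leq\exp(-2\ell(\tfrac12-\delta_A)^2)\leq\delta$ for the stated $\ell$ (using $(\tfrac12-\delta_A)^2=(1-2\delta_A)^2/4\geq(1-4\delta_A)^2/4$ when $\delta_A<\tfrac14$). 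For \textbf{soundness}, the adversary-corrupted up-front state is an arbitrary, possibly globally entangled $(N\ell mn)$-qubit state. The crucial point is that $\ComplexityFont{CertifyState}$ from \Cref{theorem:verified-shadow-overlap-algorithm} is \emph{non-i.i.d.}: for round $j$, conditioned on any history in which the algorithm has not yet rejected, the soundness clause applied to the reduced state on that round's $N$ blocks says that with probability $\geq 1-\delta_A$ the round either rejects or outputs a state of fidelity $\geq 1-\varepsilon_A$ with $\ket{\psi_f^{\mathrm{Ph}}}^{\otimes m}$; on the good event, \eqref{equation:assumption-algorithm-solves-problem-robustly} makes $X_j$ valid with probability $\geq 1-\delta_A$, so the event $B_j$ that ``round $j$ accepts and $X_j$ is invalid'' has conditional probability $\leq 2\delta_A$ given \emph{any} past. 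Since ``$A'$ accepts and outputs an invalid solution'' forces all $\ell$ rounds to accept, in which case the number of invalid $X_j$ equals the number of $B_j$ that occurred, $\sum_j\mathds{1}[B_j]$ is stochastically dominated by $\mathrm{Bin}(\ell,2\delta_A)$ and $\Pr[\sum_j\mathds{1}[B_j]\geq\ell/2]\leq\exp(-\tfrac{\ell(1-4\delta_A)^2}{2})=\delta$ (the hypothesis $\delta_A<\tfrac14$ makes $2\delta_A<\tfrac12$ so this deviation bound is meaningful---exactly the weak-error-reduction-for-\textsf{QMA} calculation).

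Finally, \textbf{efficiency} is routine: $m_{\mathrm{pub}}=N\ell m=\tilde{\mathcal{O}}(n^5m^6\ell/(\delta_A^2\varepsilon_A^6))$ since each public query is one call to $\mathsf{O}_{\mathrm{pub}}^{\mathrm{QPh}}(f)$; per round $\ComplexityFont{CertifyState}^{\mathsf{O}^{\mathrm{Mem}}(f)}_{\varepsilon_A,\delta_A}$ applied to the $(mn)$-qubit phase state of $f^{\otimes m}(x_1,\dots,x_m)=f(x_1)\oplus\cdots\oplus f(x_m)$ makes $\mathcal{O}((mn)^2\log(1/\delta_A)/\varepsilon_A^2)$ membership queries to $f^{\otimes m}$, each simulated by $m$ calls to $\mathsf{O}_{\mathrm{pri}}^{\mathrm{Mem}}(f)$, so over $\ell$ rounds $m_{\mathrm{pri}}=\tilde{\mathcal{O}}(n^2m^3\ell\log(1/\delta_A)/\varepsilon_A^2)$; the runtime is $\ell$ times (certification time plus $\ComplexityFont{Complexity}(A)$) plus uniform-superposition preparation and single-qubit Pauli gates (and controlled-$Z$ gates if \Cref{alg:covert-public-phase-query-entanglement} is used). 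The main obstacle is the soundness step: one must \emph{not} treat the $\ell$ rounds as independent---the adversary can entangle the whole input across rounds---and instead exploit that the certification subroutine is worst-case over the reduced state in every round conditioned on the past, which is precisely what licenses the stochastic-domination/Hoeffding argument.
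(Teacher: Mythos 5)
Your proof is correct and follows essentially the same structure as the paper's: privacy via composing \Cref{obs:target-covert-query}(ii) across rounds with fresh randomness, completeness via i.i.d.\ certification plus a Chernoff bound at threshold $\ell/2$ with $\mathrm{Bin}(\ell,\delta_A)$, soundness via a per-round conditional bound of $2\delta_A$ on the ``accept-and-invalid'' event (splitting on GOOD/not-GOOD exactly as the paper does), and the same query-count bookkeeping. You correctly identify the crux---that the rounds cannot be treated as independent and one must use that $\ComplexityFont{CertifyState}$'s soundness is worst-case over the (post-measurement, history-conditioned) input in every round.

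One place where your phrasing is actually slightly tighter than the paper's: you invoke stochastic domination of $\sum_j\mathds{1}[B_j]$ by $\mathrm{Bin}(\ell,2\delta_A)$ and then apply Hoeffding to that binomial. The paper instead expands over all index sets $b$ with $|b|\geq\ell/2$, bounds each term by $(2\delta_A)^{|b|}$, and then replaces $\sum_{b:|b|\geq\ell/2}(2\delta_A)^{|b|}$ by $\Pr[\mathrm{Bin}(\ell,2\delta_A)\geq\ell/2]$ via an equality. That step is an inequality in the wrong direction, since $\sum_{k\geq\ell/2}\binom{\ell}{k}p^k\geq\sum_{k\geq\ell/2}\binom{\ell}{k}p^k(1-p)^{\ell-k}$ (and the former can even exceed $1$). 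Your stochastic-domination phrasing---which goes directly from the per-history conditional bound $\Pr[B_j\mid\text{past}]\leq2\delta_A$ to a coupling with i.i.d.\ $\mathrm{Bern}(2\delta_A)$ variables---sidesteps this and is the cleaner way to land on the Hoeffding tail. The rest (completeness and efficiency arithmetic, including $(\tfrac12-\delta_A)^2\geq(1-4\delta_A)^2/4$ and the $m$-fold simulation of $\mathsf{O}^{\mathrm{Mem}}(f^{\otimes m})$ by $\mathsf{O}_{\mathrm{pri}}^{\mathrm{Mem}}(f)$) matches the paper.
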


As in \Cref{theorem:covert-shadow-overlap}, the soundness in \Cref{corollary:verifiable-binary-distinguishing-from-phase-states} is against general adversaries. Only the privacy in \Cref{corollary:verifiable-binary-distinguishing-from-phase-states} requires restricting the adversary to be unidirectional.

\begin{proof}
    \textbf{Completeness: } Due to \Cref{obs:target-covert-query}(i), the state $\rho$ in Line 2 of \Cref{alg:amp-verifiable-public-quantum-private-classical} equals \(\ket{\psi_f^{\mathrm{Ph}}}\bra{\psi_f^{\mathrm{Ph}}}^{\otimes N\ell}\).  The completeness of $\ComplexityFont{CertifyState}^{\mathsf{O}^{\mathrm{Mem}}(f)}_{ \varepsilon_A, \delta_A}$ from \Cref{theorem:verified-shadow-overlap-algorithm} therefore implies that, with probability 1, the protocol accepts and outputs \(\tr_{>mn}[\tilde{\rho}^{(j)}] = \ket{\psi_f^{\mathrm{Ph}}}\bra{\psi_f^{\mathrm{Ph}}}^{\otimes m}\) for every round \(1 \le j \le \ell\). For each such round we have $\bra{\psi_f^{\mathrm{Ph}}}^{\otimes m} \tr_{>mn}[\tilde{\rho}^{(j)}] \ket{\psi_f^{\mathrm{Ph}}}^{\otimes m} = 1 \ge 1-\varepsilon_A$, so by the robustness guarantee of $A$, we have 
    \begin{equation}
        \Pr[ A(\tr_{>mn}[\tilde{\rho}^{(j)}]) \text{ is a valid solution to }P\text{ for }f] \ge 1 - \delta_A \, .
    \end{equation}
    That is, Step 9 outputs the correct solution with probability $\ge 1-\delta_A$ in every round, and the different rounds are i.i.d.~because $\rho=\ket{\psi_f^{\mathrm{Ph}}}\bra{\psi_f^{\mathrm{Ph}}}^{\otimes N\ell}$. So, by a Chernoff bound we conclude that 
    \begin{align}
        \Pr [A' \text{ outputs a valid solution to } P(f)]
        &= \Pr[\text{Majority}(X_1,\ldots, X_\ell) \text{ is a valid solution to } P(f)]\\
        &= \Pr[\mathrm{Binom}(l, 1-\delta_A)\ge \ell/2]\\
        &\ge 1 - \exp\left(-2 \ell \left(\frac{1}{2} - \delta_A\right)^2\right)\\
        &\ge 1 - \delta \, ,
        \end{align}
    where in the last step we used $\ell=\frac{2\ln(1/\delta)}{(1-4\delta_A)^2}$.
    
    \textbf{Soundness: } Consider now an arbitrary adversary that can interfere with the public oracle queries in Step 2. Here, we use the soundness guarantee of \Cref{theorem:verified-shadow-overlap-algorithm} and the robustness guarantee of algorithm $A$ to argue that we either ``reject'' or output the valid solution with high probability. We only err if we accept in Line 4 in all rounds and if in a majority of those rounds the invalid solution is output in Line 9. Hence, we want to upper bound the following probability probability: 
    \begin{align}
        &\Pr [A'\text{ accepts the interaction and outputs an invalid solution}]\\
        &= \Pr \left[\text{no round rejects and }\geq \frac{1}{2} \text{ of all } \ell \text{ rounds produce an invalid solution}\right]\\
        &= \sum_{b\in\{0,1\}^\ell: |b|\geq \ell/2} \Pr \left[\text{no round rejects and exactly the rounds $j$ with $b_j=1$ produce an invalid solution}\right]\\
        &\leq \sum_{b\in\{0,1\}^\ell: |b|\geq \ell/2} \Pr \left[\text{all rounds $j$  with $b_j=1$ accept and produce an invalid solution}\right]\\
        &= \sum_{b\in\{0,1\}^\ell: |b|\geq \ell/2} \Pr \left[\text{for all $j$  with $b_j=1$: Step 4 accepts and Step 9 outputs an invalid solution}\right]\\
        &\leq \sum_{b\in\{0,1\}^\ell: |b|\geq \ell/2} (2 \delta_A)^{|b|}\\
        &= \Pr[\mathrm{Binom}(m, 2\delta_A)\geq \ell/2]\\
        &\leq \exp\left(-2 \ell \left(\frac{1}{2} - 2\delta_A\right)^2\right) \\
        &\leq \delta\, ,
    \end{align}
    where in the last step we used $\ell=\frac{2\ln(1/\delta)}{(1-4\delta_A)^2}$. Moreover, in the fifth step, we have combined the soundness guarantee of \Cref{theorem:verified-shadow-overlap-algorithm} with the robustness guarantee of algorithm $A$ as follows.  Fix a $b\in\{0,1\}^\ell$, and denote by $j_1,\ldots,j_{|b|}$ the indices $j$ such that $b_j=1$. Let $\ComplexityFont{ACCEPT}_j$ be the event that Step 4 accepts in round $j$ and let $\ComplexityFont{INVALID}_j$ be the event that Step 9 outputs an invalid solution in round $j$. 
    Then
    \begin{align}
        &\Pr\left[\text{for all $j$  with $b_j=1$: Step 4 accepts and Step 9 outputs an invalid solution}\right]\\
        &= \Pr\left[\bigcap_{k=1}^{|b|} \ComplexityFont{ACCEPT}_{j_k} \cap \ComplexityFont{INVALID}_{j_k}\right]\\
        &= \Pr[\ComplexityFont{ACCEPT}_{j_1} \cap \ComplexityFont{INVALID}_{j_1}]\cdot\Pr\left[\bigcap_{k=2}^{|b|} \ComplexityFont{ACCEPT}_{j_k} \cap \ComplexityFont{INVALID}_{j_k} ~\bigg\vert~ \ComplexityFont{ACCEPT}_{j_1} \cap \ComplexityFont{INVALID}_{j_1}\right]\\
        &= \Pr[\ComplexityFont{ACCEPT}_{j_1} \cap \ComplexityFont{INVALID}_{j_1}]\cdot\ldots\cdot \Pr\left[\ComplexityFont{ACCEPT}_{j_{|b|}} \cap \ComplexityFont{INVALID}_{j_{|b|}} ~\bigg\vert~ \bigcap_{k=1}^{|b|-1} \ComplexityFont{ACCEPT}_{j_k} \cap \ComplexityFont{INVALID}_{j_k}\right]\\
        &\leq (2\delta_A)^{|b|}\, .
    \end{align}        
    \normalsize
    Here, the last step holds because for any $1\leq i \leq |b|$, we can further divide the corresponding factor based on whether the resulting state $\tr_{>mn}[\tilde{\rho}^{(j_i)}]$ has high overlap with $\ket{\psi_f^{\mathrm{Ph}}}^{\otimes m}$ or not. Define $\ComplexityFont{GOOD}_j$ to be the event $\bra{\psi_f^{\mathrm{Ph}}}^{\otimes m} \tr_{>mn}[\tilde{\rho}^{(j)}] \ket{\psi_f^{\mathrm{Ph}}}^{\otimes m} \ge 1-\varepsilon_A$ for each round $j$. Then, for any $i$, 
    \small
    \begin{align}
        &\Pr\left[\ComplexityFont{ACCEPT}_{j_{i}} \cap \ComplexityFont{INVALID}_{j_{i}} ~\bigg\vert~ \bigcap_{k=1}^{i-1} \ComplexityFont{ACCEPT}_{j_k} \cap \ComplexityFont{INVALID}_{j_k}\right]\\
        &= \Pr\left[\ComplexityFont{ACCEPT}_{j_{i}} \cap \ComplexityFont{INVALID}_{j_{i}}  \cap \ComplexityFont{GOOD}_{j_i}~\bigg\vert~ \bigcap_{k=1}^{i-1} \ComplexityFont{ACCEPT}_{j_k} \cap \ComplexityFont{INVALID}_{j_k}\right] \\
        &+ \Pr\left[\ComplexityFont{ACCEPT}_{j_{i}} \cap \ComplexityFont{INVALID}_{j_{i}}  \cap \overline{\ComplexityFont{GOOD}_{j_i}}~\bigg\vert~ \bigcap_{k=1}^{i-1} \ComplexityFont{ACCEPT}_{j_k} \cap \ComplexityFont{INVALID}_{j_k}\right]\\
        &\leq \Pr\left[\ComplexityFont{INVALID}_{j_{i}}  \cap \ComplexityFont{GOOD}_{j_i}~\bigg\vert~ \bigcap_{k=1}^{i-1} \ComplexityFont{ACCEPT}_{j_k} \cap \ComplexityFont{INVALID}_{j_k}\right] + \Pr\left[\ComplexityFont{ACCEPT}_{j_{i}} \cap \overline{\ComplexityFont{GOOD}_{j_i}}~\bigg\vert~ \bigcap_{k=1}^{i-1} \ComplexityFont{ACCEPT}_{j_k} \cap \ComplexityFont{INVALID}_{j_k}\right] \\
        &\leq \Pr\left[\ComplexityFont{INVALID}_{j_{i}}  ~\bigg\vert~ \ComplexityFont{GOOD}_{j_i} \cap \bigcap_{k=1}^{i-1} \ComplexityFont{ACCEPT}_{j_k} \cap \ComplexityFont{INVALID}_{j_k}\right] + \Pr\left[\ComplexityFont{ACCEPT}_{j_{i}} \cap \overline{\ComplexityFont{GOOD}_{j_i}}~\bigg\vert~ \bigcap_{k=1}^{i-1} \ComplexityFont{ACCEPT}_{j_k} \cap \ComplexityFont{INVALID}_{j_k}\right] \\
        &\leq \delta_A + \delta_A = 2\delta_A.
    \end{align}
    \normalsize
    Here, for the first term, we have invoked the robustness guarantee of algorithm~$A$ from \Cref{equation:assumption-algorithm-solves-problem-robustly}. This guarantee is stated over all states having high fidelity with the target state. Therefore, conditioning on events from previous rounds only changes which state is fed to~$A$; as long as the conditioned input remains ``good,'' the bound coming from the robustness guarantee is unaffected by earlier events.
    For the second term, we have used the soundness guarantee of \Cref{theorem:verified-shadow-overlap-algorithm}. Crucially, this guarantee also holds over any input state, including those obtained after adversarial modifications. In particular, even after conditioning on $\bigcap_{k=1}^{i-1}\,\ComplexityFont{ACCEPT}_{j_k}\cap \ComplexityFont{INVALID}_{j_k}$, the state entering Step~4 in iteration~$j_i$ of \Cref{alg:amp-verifiable-public-quantum-private-classical}, $\sigma_{o_{1},\dots,o_{j_{i}-1}}$, where $o_{j}$ is the measurement outcome from round $j$, is admissible for the theorem. 
    Hence, $\Pr\left[\ComplexityFont{ACCEPT}_{j_{i}} \cap \overline{\ComplexityFont{GOOD}_{j_i}}~\bigg\vert~ \bigcap_{k=1}^{i-1} \ComplexityFont{ACCEPT}_{j_k} \cap \ComplexityFont{INVALID}_{j_k}\right]$ is bounded exactly as prescribed by \Cref{theorem:verified-shadow-overlap-algorithm} evaluated on $\sigma_{o_{1},\dots,o_{j_{i-1}}}$. This justifies treating the per-round bound as unaffected by the history of previous rounds in our analysis.
    
    \textbf{Privacy: }  Similar to \Cref{theorem:covert-shadow-overlap}, as we use fresh randomness in each of the $N\ell$ rounds, in each of the rounds, the state from the eavesdropper's perspective contains no information about $F$ as shown in \Cref{obs:target-covert-query}(ii), irrespective of whether \Cref{alg:covert-public-phase-query-randomness} or \Cref{alg:covert-public-phase-query-entanglement} is used. Therefore, throughout the interaction, the adversary learns nothing about $F$ and the joint function-adversary state factorizes. 
        
    \textbf{Efficiency: } In line 2, we prepare $N\ell m$ copies of uniform states augmented by either classical randomness or entanglement. Hence, in total we query the public oracle $\mathsf{O}_{\mathrm{pub}}^{\mathrm{QPh}}(f)$ for $N\ell m = \Tilde{\mathcal{O}}\left(\frac{n^5m^6 \ell}{\delta_A^2 \varepsilon_A^6}\right)$ times. 
    
    Each iteration $1 \leq j \leq \ell$, uses $Nnm$ qubits of the state received and runs $\ComplexityFont{CertifyState}^{\mathsf{O}^{\mathrm{Mem}}(f)}_{ \varepsilon_A, \delta_A}$ with $\tilde{\mathcal{O}}\left(\frac{(nm)^2\log(1/\delta_A)}{\varepsilon_A^2}\right)$ classical membership queries to $f^{\otimes m}$. The entire algorithm then uses $\mathcal{O}\left(\frac{n^2m^3 \ell \log(1/\delta_A)}{\varepsilon_A^2}\right)$ queries to $\mathsf{O}_{\mathrm{pri}}^{\mathrm{Mem}}(f)$ as we need $m$ queries to $\mathsf{O}_{\mathrm{pri}}^{\mathrm{Mem}}(f)$ to simulate one query to $\mathsf{O}_{\mathrm{pri}}^{\mathrm{Mem}}(f^{\otimes m})$. We run algorithm $A$ on the resulting output state of $\ComplexityFont{CertifyState}^{\mathsf{O}^{\mathrm{Mem}}(f)}_{ \varepsilon_A, \delta_A}$ in each round. Hence, $A$ is called $\ell$ times. 
\end{proof}

\begin{remark}\label{remark:robustness-assumption}
    \Cref{equation:assumption-algorithm-solves-problem-robustly} can be viewed as requiring that the quantum algorithm $A$ solves the problem $P$ ``robustly'' from $m$ copies with reasonably high success probability.
    This robustness assumption, however, is not really restrictive, which can be seen as follows: Suppose $A$ is a quantum algorithm that solves the problem $P$ from $m$ copies of $\ket{\psi_f^{\mathrm{Ph}}}$ with success probability $1-\tilde{\delta}_A$. Now, suppose $\bra{\psi_f^{\mathrm{Ph}}}^{\otimes m} \rho \ket{\psi_f^{\mathrm{Ph}}}^{\otimes m} \geq 1 - \varepsilon_A$. Then, by Fuchs-van de Graaf \cite{fuchs1999cryptographic}, we know that $\frac{1}{2}\norm{\rho - \ket{\psi_f^{\mathrm{Ph}}}\bra{\psi_f^{\mathrm{Ph}}}^{\otimes m}}_1\leq \sqrt{\varepsilon_A}$. So, given $\rho$, $A$ produces a valid solution with probability $\geq 1-\tilde{\delta}_A-\sqrt{\varepsilon
    _A}$. Thus, if we for example set $\delta_A  =  2 \tilde{\delta}_A$ and $\varepsilon_A=\tilde{\delta}_A^2$, \Cref{equation:assumption-algorithm-solves-problem-robustly} is satisfied.
\end{remark}

\begin{remark} \label{rmk:extension_to_estimation}
    In \Cref{corollary:verifiable-binary-distinguishing-from-phase-states} we have focused on binary distinguishing problems for simplicity of presentation. 
    The algorithm and proof can straightforwardly be extended to (promise) problems of distinguishing between more than two options. 
    We can also modify the procedure and proof to work for estimation problems instead.
    Namely, if we consider a problem of estimating a single parameter $\mu = \mu(f)$, and if we assume that $\bra{\psi_f^{\mathrm{Ph}}}^{\otimes m} \rho \ket{\psi_f^{\mathrm{Ph}}}^{\otimes m } \geq 1 - \varepsilon_A$ implies $\Pr[ |A(\rho) - \mu|\leq 2\varepsilon_A/5] \geq 1 - \delta_A$, we can replace the majority vote in Step 13 of \Cref{alg:amp-verifiable-public-quantum-private-classical} by the following rule: Identify a subset $S\subseteq [\ell]$ with $|S|\geq 2\ell/3$ such that $|X_i-X_j|\leq 4\varepsilon_A / 5$ holds for all $i,j\in S$. (If no such $S$ exists, abort.) Then, output the empirical average $\hat{\mu}\coloneqq \frac{1}{|S|}\sum_{i\in S} X_i$. If we pick $\ell$ large enough such that $\Pr[\mathrm{Binom}(\ell, 1-\delta_A)\ge 2\ell/3]\geq 1-\delta$, then also $\Pr[|\hat{\mu}-\mu|\leq\varepsilon_A]\geq 1-\delta$.\footnote{This can be seen as follows: Suppose $|X_i-\mu|\leq 2\varepsilon_A/5$ holds for at least a $(2/3)$-fraction of all $i\in [\ell]$. Clearly, those $i$ then form a valid set $S$ as specified in the rule, thus such an $S$ exists 
    Moreover, for any set $S\subseteq [\ell]$ of size $|S|\geq 2\ell/3$, we have $|X_i-\mu|\leq 2\varepsilon_A/5$ for at least a $(1/2)$-fraction of $i\in S$. If we additionally require that $|X_i-X_j|\leq 4\varepsilon_A/5$ for all $i,j\in S$, we conclude that the remaining at most $(1/2)$-fraction of $j\in S$ satisfies $|X_j-\mu|\leq 6\varepsilon_A / 5$. Thus, $|\hat{\mu} - \mu|\leq \frac{2\varepsilon}{5} + \frac{6\varepsilon}{2\cdot 5} = \varepsilon$.}
    In the same spirit, an extension to multi-parameter estimation is possible.
\end{remark} 

\begin{remark} \label{remark:interactive-verification}
    When focusing only on the completeness and soundness guarantees established in \Cref{corollary:verifiable-binary-distinguishing-from-phase-states}, the result constitutes a novel insight into interactive proofs for quantum learning. Namely, it shows: Any (binary) distinguishing task that can be solved from a polynomial number of phase state copies copies (possibly together with polynomially many classical queries) can also be solved by an interactive proof system $(V,P)$ in which the quantum verifier $V$ makes polynomially many queries to a classical membership query oracle and in which the honest 
    quantum prover $P$ makes polynomially many queries to a quantum phase state oracle. 
    Combining this with \cite[Corollary 2]{caro2024interactiveproofsverifyingquantum}---which states that a verifier cannot improve their query complexity for solving a many-vs-one distinguishing task by interacting with an untrusted prover who has access to a stronger oracle---we conclude: Any many-vs-one distinguishing task for Boolean functions that can be solved from polynomially many phase state copies (possibly together with polynomially many classical queries) can also be quantumly solved from polynomially many classical queries. In other words: For Boolean function many-vs-one distinguishing tasks, quantum phase states can have at most a polynomial (information-theoretic) advantage over classical queries.
\end{remark}

\paragraph{Loss of covertness against bidirectional adversaries.}
The algorithms in this subsection are covert against unidirectional adversaries that can only interact with the \emph{responses} to the oracle queries. Against a bidirectional adversary that can tamper with both the learner$\to$oracle and oracle$\to$learner channels, masking via randomness alone is no longer sufficient for privacy. In particular, such an adversary can either replace the learner’s query state with its own (as in \Cref{obs:no-covert-qmem}) or measure the learner’s query in some basis and prepare its own state. Concretely, against \Cref{alg:covert-public-phase-query-randomness}, the adversary could efficiently measure in the basis $\{\,\ket{\psi^{(r)}} = H^{\otimes n}\ket{r} \,=\, H^{\otimes n}X^{r}\ket{0^n} \,=\, Z^{r}(H\ket{0})^{\otimes n}\,\}_{r\in\{0,1\}^n}$ to learn the private string $r^*\in\{0,1\}^n$ used by the learner and then process the oracle’s output to obtain $\ket{\psi_f^{\mathrm{Ph}}}$, completely violating Version~1 of our privacy guarantee. Even if we weaken Version~1 to require privacy only upon acceptance, as in \Cref{obs:no-covert-qmem}, the attack persists whenever the function is learnable from one phase-state copy. As our current verification step only checks fidelity with $\ket{\psi_f^{\mathrm{Ph}}}^{\otimes m}$, the adversary can acquire $\ket{\psi_f^{\mathrm{Ph}}}$, learn $f$, and re-apply the mask using the observed $r^*$; the fidelity test still passes and we accept. This motivates augmenting the protocol with checks \emph{beyond} target-state fidelity that let us abort upon detection of information leakage.

We do so in the next subsection by observing that, in \Cref{alg:covert-public-phase-query-entanglement}, masking is implemented via entanglement between a quantum randomness register and the register on which the oracle is queried. In this setting, information leakage can be tied to breaking the entanglement between these registers. For an ancilla-free, i.i.d.\ adversary model, we will show that such entanglement breaking can be efficiently detected, yielding a mechanism to abort on privacy violations.

\begin{remark}
    In \Cref{alg:covert-public-phase-query-randomness}, the masking amounts to querying on the phase state corresponding to a \emph{linear} (parity) function, which is vulnerable to the measurement attack described above. One can strengthen the mask by using higher-degree functions (e.g., quadratic) or, under computational assumptions, a quantum-secure pseudorandom function; this hinders efficient recovery of the mask from a single query. Exploring stronger masking families may yield additional protection in regimes where the learning problem cannot be solved from a single phase-state query, but it does not overcome the worst-case impossibility of \Cref{obs:no-covert-qmem}. We therefore present linear masking to emphasize the minimal masking requirements under the unidirectional adversary model and leave guarantees with richer masking families to future work.
\end{remark}

\subsection{Covert Verifiable Quantum Learning Against Ancilla-Free i.i.d. Adversaries} \label{sec:iid-both-directions}

Given the discussion at the end of the previous subsection and in the introduction of \Cref{sec:public-quantum-oracle-private-classical-queries}, to obtain covertness even against an adversary that can modify both directions of the quantum channel, we must restrict either the adversary or the learning problem. We consider an \emph{ancilla‐free} adversary, i.e., one with no additional quantum memory of its own. Note that this in particular rules out the swap attack of \Cref{obs:no-covert-qmem}. 
In this setting, we will show that querying the public oracle via the entanglement‐based strategy of \Cref{alg:covert-public-phase-query-entanglement}, followed by a state‐verification step, achieves covertness. 

As the first step, we make the following simple but key observation. 

\begin{observation}[Ancilla-free and measurement-free adversary gains no information] \label{obs:ancilla-free-learns-nothing}
    Let $F\sim \mathcal{F}$ be drawn from a distribution $\mu$ over some function class $\mathcal{F}$.
    \begin{enumerate}
        \item\textbf{Completeness:} With no eavesdropper, given public oracle access to $\mathsf{O}_{\mathrm{pub}}^{\mathrm{QPh}}(f)$, \Cref{alg:covert-public-phase-query-entanglement} produces one copy of $\ket{\psi_f^{\mathrm{Ph}}}$.
        \item\textbf{Privacy:} Any ancilla‐free eavesdropper that makes no measurements before the public‐oracle call to $\mathsf{O}_{\mathrm{pub}}^{\mathrm{QPh}}(f)$ obtains no information about $F$.
    \end{enumerate}
\end{observation}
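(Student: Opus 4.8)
The plan is to dispatch completeness immediately and then isolate the single structural fact that makes privacy work against an ancilla-free, measurement-free eavesdropper. For completeness there is nothing new to do: this is exactly the guarantee already recorded in \Cref{obs:target-covert-query}(i) for \Cref{alg:covert-public-phase-query-entanglement}, which follows from $[Z^r,\mathsf{O}_{\mathrm{pub}}^{\mathrm{QPh}}(f)]=0$ together with the fact that the controlled-$Z$ layer prepares $\frac{1}{\sqrt{2^n}}\sum_r\ket{r}_{\mathsf{R}}\otimes\ket{\psi^{(r)}}_{\mathsf{S}}$ and is uncomputed (equivalently, the $\mathsf{R}$-register is measured and an $Z^r$ correction applied) in the final step. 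So I would simply cite it.

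For privacy, the key observation I would make first is that the state $\ket{\Psi}_{\mathsf{RS}}=\frac{1}{\sqrt{2^n}}\sum_{r}\ket{r}_{\mathsf{R}}\otimes\ket{\psi^{(r)}}_{\mathsf{S}}$ prepared after the controlled-$Z$ gates equals $(\mathds{1}_{\mathsf{R}}\otimes H^{\otimes n}_{\mathsf{S}})\big(\frac{1}{\sqrt{2^n}}\sum_r\ket{r}_{\mathsf{R}}\ket{r}_{\mathsf{S}}\big)$, since $\ket{\psi^{(r)}}=H^{\otimes n}\ket{r}$; hence $\ket{\Psi}_{\mathsf{RS}}$ is maximally entangled across the $\mathsf{R}\mid\mathsf{S}$ cut, and in particular the reduced state on $\mathsf{S}$ — the only subsystem an ancilla-free eavesdropper ever touches — is $\mathds{1}_{\mathsf{S}}/2^n$, independent of $f$. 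This marginal was already computed in the proof of \Cref{obs:target-covert-query}(ii), so I would reuse that computation rather than redo it.

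Then I would run the following short argument. Since the eavesdropper is ancilla-free and makes no measurement before the public-oracle call, its only admissible pre-oracle action is a unitary $U$ on $\mathsf{S}$. As $U$, the oracle $\mathsf{O}_{\mathrm{pub}}^{\mathrm{QPh}}(f)$, and any further unitaries the eavesdropper applies on the return path are all unitary, and unitaries fix the maximally mixed state, the reduced state on $\mathsf{S}$ stays $\mathds{1}_{\mathsf{S}}/2^n$ before the oracle, immediately after the oracle, and at every subsequent step — always $f$-independent. Any classical record the eavesdropper eventually produces therefore arises from a POVM applied to a state unitarily related to $\mathds{1}_{\mathsf{S}}/2^n$, so its outcome distribution does not depend on $f$; consequently the joint state of the function register and the eavesdropper's register factorizes as $\rho_{\mathsf{FA}}=\mathbb{E}_{F\sim\mu}[\ket{F}\bra{F}_{\mathsf{F}}]\otimes\rho_{\mathsf{A}}$, which is the claimed privacy statement.

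The step I expect to require the most care — though it is conceptual rather than computational — is pinning down precisely which operations an ``ancilla-free, measurement-free'' adversary can perform, so that the claim ``the pre-oracle action is a unitary on $\mathsf{S}$'' is rigorous: no Naimark-dilated POVMs, no partial swap of $\mathsf{S}$-qubits into private quantum memory, and no classically-controlled branching that secretly stores information about intermediate data. This is exactly the place where ancilla-freeness is load-bearing: with even a single ancilla qubit the swap attack of \Cref{obs:no-covert-qmem} recovers $f$, so I would make the model restrictions explicit at the outset and emphasize that the argument collapses without them. Everything else reduces to the one-line invariance of the maximally mixed state under unitaries.
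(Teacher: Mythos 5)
Your proof is correct and takes essentially the same approach as the paper's: cite \Cref{obs:target-covert-query}(i) for completeness, and for privacy observe that the $\mathsf{S}$-marginal is $\mathds{1}/2^n$ independent of $f$ and remains so under the adversary's pre-oracle action and the oracle itself, since both are unitary. One small sharpening worth making (and which the paper does make): an ancilla-free, measurement-free adversary with internal classical randomness implements a \emph{mixed-unitary channel} $\mathcal{E}(\cdot)=\sum_i p_i U_i(\cdot)U_i^\dagger$ rather than a single unitary $U$, though this changes nothing since convex mixtures of unitaries also fix the maximally mixed state.
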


\begin{proof}
Item (i) follows directly from \Cref{obs:target-covert-query}(i).
For (ii), we describe the system immediately before the public oracle call in \Cref{alg:covert-public-phase-query-entanglement} (after Step 2). Let $\mathsf{F}$ denote the classical register holding $F$, $\mathsf{R}$ the private randomness register, and $\mathsf{Q}$ the quantum register sent to the oracle. The joint classical–quantum state is:
\begin{equation}
    \rho_{\mathsf{FRQ}}^{\mathrm{pre}}
    = \sum_{f:\{0,1\}^n\to\{0,1\}} \mu(f)\ket{f}\bra{f}_\mathsf{F} \otimes \frac{1}{2^n}\sum_{r, r'\in\{0,1\}^n} \ket{r}\bra{r'}_\mathsf{R} \otimes \ket{\psi^{(r)}}\bra{\psi^{(r')}}_\mathsf{Q} \, .
\end{equation}

An ancilla‐free and measurement‐free eavesdropper acts on the system $Q$ with a mixed unitary channel $\mathcal{E}(\cdot)=\sum_i p_i U_i (\cdot) U_i^\dagger$ on $Q$. 
This holds because, without ancillas or measurements, the only admissible adversary actions are unitaries that depend on the adversary's classical randomness.
After applying $\mathcal{E}$ and the public phase oracle unitary $\mathsf{O}_{\mathrm{pub}}^{\mathrm{QPh}}(f)$ on the $\mathsf{Q}$-system, the state becomes: 
\begin{equation}
    \rho_{\mathsf{FRQ}}^{\mathrm{post}}
    = \sum_{f:\{0,1\}^n\to\{0,1\}} \mu(f)\ket{f}\bra{f}_\mathsf{F} \otimes \frac{1}{2^n}\sum_{r, r'\in\{0,1\}^n} \ket{r}\bra{r'}_\mathsf{R} \otimes \mathsf{O}_{\mathrm{pub}}^{\mathrm{QPh}}(f)\mathcal{E}\left(\ket{\psi^{(r)}}\bra{\psi^{(r')}}_\mathsf{Q}\right) \mathsf{O}_{\mathrm{pub}}^{\mathrm{QPh}}(f) \, .
\end{equation}

The eavesdropper does not have access to the private randomness $\mathsf{R}$-system. Thus, the relevant classical-quantum state from the eavesdropper's perspective after the oracle is the reduced state:
\begin{equation}
\rho_{\mathsf{FQ}}^{\mathrm{post}} = \tr_{\mathsf{R}} \left(\rho_{\mathsf{FRQ}}^{\mathrm{post}}\right) = 
\sum_{f:\{0,1\}^n\to\{0,1\}} \mu(f)\ket{f}\bra{f}_\mathsf{F} \otimes \frac{1}{2^n}\sum_{r\in\{0,1\}^n} \mathsf{O}_{\mathrm{pub}}^{\mathrm{QPh}}(f)\mathcal{E}\left(\ket{\psi^{(r)}}\bra{\psi^{(r)}}_\mathsf{Q}\right) \mathsf{O}_{\mathrm{pub}}^{\mathrm{QPh}}(f) \, .
\end{equation}
Because $\{\ket{\psi^{(r)}}\}_{r}$ is a complete orthonormal basis of the $\mathsf{Q}$-register,
$\sum_{r} 
\ket{\psi^{(r)}}\bra{\psi^{(r)}}_\mathsf{Q} = \mathds{1}_\mathsf{Q}$. Moreover, as unitary conjugations leave the maximally mixed state invariant, we obtain:
\begin{equation}
\rho_{\mathsf{FQ}}^{\mathrm{post}}
= \sum_{f:\{0,1\}^n\to\{0,1\}} \mu(f)\ket{f}\bra{f}_\mathsf{F} \otimes \frac{\mathds{1}_\mathsf{Q}}{2^n}\, .
\end{equation}
This factorization shows that $\mathsf{F}$ and $\mathsf{Q}$ are independent, so the system the eavesdropper can access contains no information about $F$.
\end{proof}

This implies that any ancilla-free adversary must measure at least one of the $\mathsf{Q}$-register (this is the only way to perform a non-unitary action without ancillas) before the oracle action in order to obtain any information about $F$, once the state is returned. Next, we prove that any such measurement on $\mathsf{Q}$ reduces the fidelity between the state returned to the learner and the ideal phase state (on both $\mathsf{R}$ and $\mathsf{Q}$ registers) to at most $\frac{1}{2}$.

\begin{observation}[Pre-oracle measurement drops fidelity to ideal phase state] \label{thm:measurement-fidelity-drop}
    Fix any function $f$.
    Consider any ancilla-free eavesdropper that, in \Cref{alg:covert-public-phase-query-entanglement}, acts on the $\mathsf{Q}$-register before and after the public oracle call (Step~3). Suppose that, with probability $1-\delta_{\mathrm{leak}} \in [0, 1]$ over the eavesdropper’s internal randomness, the pre-oracle action on $\mathsf{Q}$ is measurement-free, and with probability $\delta_{\mathrm{leak}}$, the eavesdropper measures at least one qubit of $\mathsf{Q}$ before $\mathsf{O}_{\mathrm{pub}}^{\mathrm{QPh}}(f)$ is queried. 
    If $\sigma_{\mathsf{RQ}}$ is the state returned to the learner averaged over the eavesdropper’s internal randomness and the randomness of the outcomes of any measurements performed by the eavesdropper, then
    \begin{equation}
        \bra{\Psi_f^{\mathrm{Ph}}}_{\mathsf{RQ}}\,\sigma_{\mathsf{RQ}}\,\ket{\Psi_f^{\mathrm{Ph}}}_{\mathsf{RQ}} \le 1-\frac{\delta_{\mathrm{leak}}}{2} \, ,
    \end{equation}
    where the ideal target phase state is
    \begin{equation}
        \ket{\Psi_f^{\mathrm{Ph}}}_{\mathsf{RQ}} = \frac{1}{\sqrt{2^n}}\sum_{r\in\{0,1\}^n}\ket{r}\otimes \ket{\psi_{f}^{(r)}} \, .
    \end{equation}
\end{observation}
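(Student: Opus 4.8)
The plan is to establish the fidelity bound by reasoning about the Schmidt number (Schmidt rank) of the state on the bipartition $\mathsf{R}:\mathsf{Q}$. The key structural fact is that $\ket{\Psi_f^{\mathrm{Ph}}}_{\mathsf{RQ}}$ is maximally entangled across this cut: writing it out, it equals $\frac{1}{\sqrt{2^n}}\sum_{r}\ket{r}_{\mathsf{R}}\otimes Z^r\ket{\psi_f^{\mathrm{Ph}}}_{\mathsf{Q}}$, and since $\{Z^r\ket{\psi_f^{\mathrm{Ph}}}\}_r$ is an orthonormal basis of $\mathsf{Q}$, this is a maximally entangled state of Schmidt rank exactly $2^n$. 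By \cite[Lemma 1]{terhal2000schmidt}, any state $\tau_{\mathsf{RQ}}$ of Schmidt number at most $k$ satisfies $\bra{\Psi_f^{\mathrm{Ph}}}_{\mathsf{RQ}}\tau_{\mathsf{RQ}}\ket{\Psi_f^{\mathrm{Ph}}}_{\mathsf{RQ}}\le k/2^n$. So the whole argument reduces to showing that when the eavesdropper measures at least one qubit of $\mathsf{Q}$ before the oracle call, the state the learner receives has Schmidt number at most $2^{n-1}$.

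**First I would** set up the branch decomposition: condition on the eavesdropper's internal randomness. With probability $1-\delta_{\mathrm{leak}}$ the pre-oracle action is measurement-free, i.e.\ a unitary $U$ on $\mathsf{Q}$; post-oracle the adversary may apply a further channel on $\mathsf{Q}$, but all of this is local on $\mathsf{Q}$, so the returned state on this branch still has Schmidt rank $2^n$ and fidelity at most $1$ with $\ket{\Psi_f^{\mathrm{Ph}}}_{\mathsf{RQ}}$ (this branch we bound trivially by $1$). With probability $\delta_{\mathrm{leak}}$ the adversary measures at least one qubit, say the first qubit of $\mathsf{Q}$, possibly after a unitary, possibly adaptively. **Next I would** argue: starting from $\ket{\Psi}_{\mathsf{RS}}=\frac{1}{\sqrt{2^n}}\sum_{r}\ket{r}_{\mathsf{R}}\otimes\ket{\psi^{(r)}}_{\mathsf{Q}}$ (the pre-oracle state, maximally entangled of Schmidt rank $2^n$), applying a unitary on $\mathsf{Q}$ doesn't change the Schmidt rank, but then a projective measurement of a single qubit of $\mathsf{Q}$ onto one of two orthogonal outcomes projects $\mathsf{Q}$ into a subspace of dimension $2^{n-1}$; hence each post-measurement branch state lives in $\mathsf{R}\otimes(\text{subspace of dim }2^{n-1})$ and therefore has Schmidt rank at most $2^{n-1}$. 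Any subsequent local processing on $\mathsf{Q}$ — further unitaries, the oracle $\mathsf{O}_{\mathrm{pub}}^{\mathrm{QPh}}(f)$, post-oracle channels, even additional measurements — cannot increase the Schmidt number of the $\mathsf{R}:\mathsf{Q}$ state, since Schmidt number is monotone under local operations (LOCC, in particular one-sided channels on $\mathsf{Q}$). So on this branch, averaging over measurement outcomes, the returned state is a mixture of Schmidt-number-$\le 2^{n-1}$ states, hence itself has Schmidt number $\le 2^{n-1}$, giving fidelity $\le 2^{n-1}/2^n=1/2$ with $\ket{\Psi_f^{\mathrm{Ph}}}_{\mathsf{RQ}}$ by \cite[Lemma 1]{terhal2000schmidt}.

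**Then I would** assemble the pieces by linearity of fidelity in the first argument: $\sigma_{\mathsf{RQ}}=(1-\delta_{\mathrm{leak}})\,\sigma^{\mathrm{no\text{-}meas}}_{\mathsf{RQ}}+\delta_{\mathrm{leak}}\,\sigma^{\mathrm{meas}}_{\mathsf{RQ}}$, so
\begin{equation}
\bra{\Psi_f^{\mathrm{Ph}}}_{\mathsf{RQ}}\sigma_{\mathsf{RQ}}\ket{\Psi_f^{\mathrm{Ph}}}_{\mathsf{RQ}}\le(1-\delta_{\mathrm{leak}})\cdot 1+\delta_{\mathrm{leak}}\cdot\tfrac12=1-\tfrac{\delta_{\mathrm{leak}}}{2}\, .
\end{equation}
**The main obstacle** I anticipate is being careful about what ``ancilla-free'' actually licenses the adversary to do, and making the Schmidt-number monotonicity airtight for \emph{adaptive} strategies. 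An ancilla-free adversary can still interleave unitaries and projective measurements with classical feed-forward on $\mathsf{Q}$ (as the footnote in the excerpt notes), and the measurement need not be in the computational basis or on a fixed qubit. I need to argue that the ``at least one projective measurement of at least one qubit'' commits the state to a proper subspace of $\mathsf{Q}$ of dimension $\le 2^{n-1}$ \emph{at the moment of that measurement}, regardless of what unitaries preceded it (a unitary before the measurement just changes the basis, so the measurement still rank-deficiently projects $\mathsf{Q}$), and that Schmidt number is non-increasing under every subsequent operation on $\mathsf{Q}$ alone — including the oracle call and further measurements — which follows from the standard fact that Schmidt number cannot increase under LOCC, here specialized to operations acting only on the $\mathsf{Q}$ side. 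I should also note that the $\mathsf{R}$-register is untouched throughout (it is the learner's private register), which is what makes ``all adversary operations are local on $\mathsf{Q}$'' hold, and that averaging over branches and outcomes only mixes states of bounded Schmidt number, preserving the bound. Finally I would invoke \cite[Lemma 1]{terhal2000schmidt} in the precise form: maximal overlap of a Schmidt-number-$k$ state with a maximally entangled state of local dimension $d$ is $k/d$, with $d=2^n$ and $k=2^{n-1}$.
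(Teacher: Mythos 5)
Your proof is correct and follows essentially the same strategy as the paper: decompose by whether a pre-oracle measurement occurs, show the post-measurement state on the measuring branch has Schmidt number at most $2^{n-1}$ across the $\mathsf{R}:\mathsf{Q}$ cut, use monotonicity of Schmidt number under local operations on $\mathsf{Q}$ together with \cite[Lemma 1]{terhal2000schmidt}, and combine the branches by linearity. The only minor difference is how the $2^{n-1}$ Schmidt-rank bound is obtained --- you observe directly that a rank-$2^{n-1}$ projector confines the $\mathsf{Q}$-support to a $2^{n-1}$-dimensional subspace, while the paper reaches the same conclusion via the ricochet identity $(A\otimes I)\ket{\Omega}=(I\otimes A^{\top})\ket{\Omega}$ to exhibit the Schmidt decomposition of the post-measurement state explicitly; both are valid and your version is a bit more elementary.
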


\begin{proof}
We decompose the eavesdropper’s strategy into two disjoint classes of branches, according to its private randomness: \emph{measurement-free pre-oracle} branches, and \emph{measuring pre-oracle} branches, in which at least one qubit of $\mathsf{Q}$ is measured before the public oracle.

For such a branch, the eavesdropper’s strategy and the public oracle query can be decomposed as:
\small
\begin{equation}
    \mathsf{Q} \xrightarrow{U} \mathsf{Q} \xrightarrow{\{\Pi_b\}_{b\in\{0,1\}}} (\mathsf{Q}, b) \xrightarrow[\text{(depends on $b$)}]{\text{pre-oracle instrument $\{\mathcal{E}_x\}$}} ( \mathsf{Q}, b, x) \xrightarrow{\mathsf{O}_{\mathrm{pub}}^{\mathrm{QPh}}(f)} (\mathsf{Q}, b, x) \xrightarrow[\text{(depends on $b, x$)}]{\text{post-oracle instrument $\{\mathcal{E}'_y\}$}} (\mathsf{Q}, b, x, y)\, ,
\end{equation}
\normalsize
where $U$ is a unitary on $\mathsf{Q}$, $\Pi_b$ projects the first qubit onto $\ket{b}$, and $\mathcal{E}$, $\mathcal{E}'$ are arbitrary quantum instruments.
 
Let $\mathsf{R}$ denote the learner’s private randomness register as in \Cref{alg:covert-public-phase-query-entanglement}, and let $\ket{\Omega}_{\mathsf{RQ}} = \frac{1}{\sqrt{2^n}} \sum_{r \in\{0,1\}^n} \ket{r}_\mathsf{R}\ket{r}_\mathsf{Q}$ be the canonical maximally entangled state. The state $\ket{\psi}_{\mathsf{RQ}} = \sum_{r \in\{0,1\}^n} \ket{r}_\mathsf{R}\ket{\psi^{(r)}}_\mathsf{Q}$
received by the eavesdropper is also maximally entangled, it can be expressed as
$\ket{\psi}_{\mathsf{RQ}} = (\mathds{1}_\mathsf{R} \otimes H^{\otimes n})\ket{\Omega}_{\mathsf{RQ}}$. Using the identity $(A\otimes I)\ket{\Omega}_{\mathsf{RQ}} = (I\otimes A^{\mathsf{T}})\ket{\Omega}_{\mathsf{RQ}}$, the joint $\mathsf{RQ}$ state immediately before the measurement is
$(H^{\otimes n}U^{\mathsf{T}}\otimes I)\ket{\Omega}_{\mathsf{RQ}}$.

If the measurement yields outcome $b$, the post-measurement state is
\begin{align}
    (I\otimes \Pi_b) (H^{\otimes n}U^{\mathsf{T}}\otimes I)\ket{\Omega}_{\mathsf{RQ}}
    &= (H^{\otimes n}U^{\mathsf{T}}\otimes I)\ket{\psi^b}_{\mathsf{RQ}} 
    = (H^{\otimes n}U^{\mathsf{T}}\otimes I) \left(\ket{b}_{\mathsf{R_1}}\ket{b}_{\mathsf{Q_1}}\otimes \ket{\Omega}_{\overline{\mathsf{R_1}},\overline{\mathsf{Q_1}}}\right)\, ,
\end{align}
where we set $\ket{\psi^b}_{\mathsf{RQ}} = \ket{b}_{\mathsf{R_1}}\ket{b}_{\mathsf{Q_1}}\otimes \ket{\Omega}_{\overline{\mathsf{R_1}},\overline{\mathsf{Q_1}}}$.
Averaging over $b$ gives the average post-measurement state,
\begin{align}
    (H^{\otimes n}U^{\mathsf{T}}\otimes I) \left(\sum_{b\in\{0, 1\}} p_b \ket{\psi^b}\bra{\psi^b}_{\mathsf{RQ}}\right) (UH^{\otimes n}\otimes I)\, .
\end{align}

Crucially, the operator $\sum_{b\in\{0, 1\}} p_b \ket{\psi^b}\bra{\psi^b}_{\mathsf{RQ}}$ has Schmidt number at most $2^{n-1}$, since every pure state in the ensemble has Schmidt rank exactly $2^{n-1}$ across the $\mathsf{R}{:}\mathsf{Q}$ cut. Now, $U$, $H^{\otimes n}$, the arbitrary quantum channels on $\mathsf{Q}$, and the public phase oracle $\mathsf{O}_{\mathrm{pub}}^{\mathrm{QPh}}(f)$ are all local operations with access to classical side information. Therefore, by monotonicity of Schmidt number under LOCC~\cite{terhal2000schmidt}, the final joint state in the each measuring branch, $\sigma_{\mathsf{RQ}}^{\mathrm{meas}}$, has Schmidt number at most $2^{n-1}$.

The bipartite target state $\ket{\Psi_f^{\mathrm{Ph}}}_{\mathsf{RQ}}$ is maximally entangled (i.e., it has Schmidt rank $2^{n}$). By~\cite{terhal2000schmidt}, any state on $\mathsf{R},\mathsf{Q}$ with Schmidt number $\le k$ has fidelity at most $k2^{-n}$ with any maximally entangled state on $\mathsf{R},\mathsf{Q}$. Applying this with $k=2^{n-1}$ yields, 
\begin{align}
    \bra{\Psi_f^{\mathrm{Ph}}}_{\mathsf{RQ}}\sigma_{\mathsf{RQ}}^{\mathrm{meas}}\ket{\Psi_f^{\mathrm{Ph}}}_{\mathsf{RQ}} &\le \frac{2^{n-1}}{2^n} = \frac{1}{2}\, .
\end{align}
Finally, since this occurs in each \emph{pre-oracle measuring} branch the overall fidelity is at most $1-\delta_{\mathrm{leak}} + \frac{\delta_{\mathrm{leak}}}{2} =1- \frac{\delta_{\mathrm{leak}}}{2}$.
\end{proof}

Now, we can combine \Cref{obs:ancilla-free-learns-nothing} and \Cref{thm:measurement-fidelity-drop} with \Cref{theorem:shadow-overlap-copy-complexity} to obtain the following covert verifiable algorithm inspired by \Cref{alg:covert-public-phase-query-entanglement} for acquiring copies of quantum phase states against i.i.d. ancilla-free adversaries:

\begin{algorithm}
    \caption{Covert Verifiable Phase States from Public Oracle against i.i.d.~Ancilla-free Adversaries}
    \begin{algorithmic}[1]
        \Require Number of qubits $n\in\mathbb{N}$; required copies $m\in\mathbb{N}$; parameters $\delta,\delta_{\mathrm{leak}},\varepsilon\in(0,1)$; access to $\mathsf{O}_{\mathrm{pub}}^{\mathrm{QPh}}(f)$ and $\mathsf{O}_{\mathrm{pri}}^{\mathrm{Mem}}(f)$.
        \Ensure ``reject'' or a candidate state
        \State Set $N=N(n,m,\varepsilon,\delta,\delta_{\mathrm{leak}})=\tilde{\mathcal{O}}\left(\frac{n m \log(1/\delta)}{\min\{\varepsilon,\varepsilon_{\mathrm{leak}}\}}\right)$ where $\varepsilon_{\mathrm{leak}}\coloneqq1-\left(1-\frac{\delta_{\mathrm{leak}}}{2}\right)^{m}$.
        \For{$1 \leq j \leq (N+1)m$} 
        \State Prepare the state $\left(\frac{1}{\sqrt{2^n}}\sum_{r\in\{0,1\}^n}\ket{r}\right)\otimes \left(\frac{1}{\sqrt{2^n}}\sum_{x\in\{0,1\}^n}\ket{x}\right)$. \Comment{Line 1 of \Cref{alg:covert-public-phase-query-entanglement}}
        \State For $1\leq i\leq n$, apply controlled-$Z$ gates between the $i$th register (as control) and the $(n+i)$th register (as target), thus preparing the state $\frac{1}{\sqrt{2^n}}\sum_{r\in\{0,1\}^n}\ket{r}\otimes \ket{\psi^{(r)}}$. \Comment{Line 2 of \Cref{alg:covert-public-phase-query-entanglement}}
        \State Send the last $n$ qubits to $\mathsf{O}_{\mathrm{pub}}^{\mathrm{QPh}}(f)$.  \Comment{Line 2 of \Cref{alg:covert-public-phase-query-entanglement}}
        \EndFor
                
        \State Let $\rho$ be the resulting $((N+1)\cdot 2nm)$-qubit state (after all oracle responses).
    \State Partition the $(N+1)$ two-register blocks into a \emph{certification set} $\mathsf{C}=\{1,\dots,N\}$ and an \emph{output set} $\mathsf{U}=\{N+1\}$.
    \State Using $\mathsf{O}_{\mathrm{pri}}^{\mathrm{Mem}}(f)$, apply the adaptive version of \Cref{theorem:shadow-overlap-copy-complexity} to the certification blocks of $\rho$ (i.e., to $\rho_\mathsf{C}:=\tr_{\mathsf{U}}[\rho]$) against $\ket{\Psi_f^{\mathrm{Ph}}}^{\otimes m}$ with parameters $(\min\{\varepsilon,(1 - \Omega(1))\varepsilon_{\mathrm{leak}}\}, \delta)$ and let the post-measurement state be $\rho'$ (including both $\mathsf{U}$ and $\mathsf{C}$ registers).
    \If{the certification test rejects}
      \State \textbf{return} ``reject''
    \Else
      \State On the output block of $\rho'$ (i.e., on $\rho'_\mathsf{U}$): for every $2n$ qubits, measure the first $n$ qubits in the computational basis to obtain string $r$, and then apply $Z^{r}$ to the remaining $n$ qubits. Discard the measured $n$ qubits. \Comment{Line 4 of \Cref{alg:covert-public-phase-query-entanglement}}
      \State \textbf{return} the resulting $(mn)$-qubit state.
    \EndIf
    \end{algorithmic}
    \label{alg:covert-verifiable-public-quantum-private-ancilla-free}
\end{algorithm}

\begin{theorem}[Covert Verifiable Phase States Against i.i.d.~Ancilla-free Adversaries---Formal version of \Cref{inf-theorem:covert-quantum-data-acquisition-v2}] \label{theorem:covert-shadow-overlap-ancilla-free} 
For $\delta, \delta_{\mathrm{leak}}, \varepsilon \in (0, 1)$, and $\varepsilon_{\mathrm{leak}} \coloneqq 1-\left(1-\frac{\delta_{\mathrm{leak}}}{2}\right)^m$, ~\Cref{alg:covert-verifiable-public-quantum-private-ancilla-free} is a $(m_{\mathrm{pri}} = \mathcal{O}\left(\frac{nm^2 \log(1/\delta)}{\min\{\varepsilon, \varepsilon_{\mathrm{leak}}\}}\right), m_{\mathrm{pub}} = \tilde{\mathcal{O}}\left( \frac{nm^2 \log(1/\delta)}{\min\{\varepsilon, \varepsilon_{\mathrm{leak}}\}} \right),\delta_c = 0,\delta_s = \delta, \varepsilon = \min\{\varepsilon, \varepsilon_{\mathrm{leak}}\})$-covert verifiable quantum algorithm for producing $m$ quantum phase states for any arbitrary concept class $\mathcal{F}$ against i.i.d. ancilla-free adversaries that extract information with probability at least $\delta_{\mathrm{leak}}$. That is: 
    \begin{itemize}
        \item \textbf{Completeness:} With no adversary, for any $f\in\mathcal{F}$, after making at most $m_{\mathrm{pri}}$ queries to $\mathsf{O}_{\mathrm{pri}}^{\mathrm{Mem}}(f)$ and at most $m_{\mathrm{pub}}$ queries to $\mathsf{O}_{\mathrm{pub}}^{\mathrm{QPh}}(f)$, \Cref{alg:covert-verifiable-public-quantum-private-ancilla-free} accepts and outputs $\ket{\psi_f^{\mathrm{Ph}}}\bra{\psi_f^{\mathrm{Ph}}}^{\otimes m}$  with success probability $1$.
        \item \textbf{Soundness:} For any $f\in\mathcal{F}$, after making at most $m_{\mathrm{pri}}$ queries to $\mathsf{O}_{\mathrm{pri}}^{\mathrm{Mem}}(f)$ and at most $m_{\mathrm{pub}}$ queries to $\mathsf{O}_{\mathrm{pub}}^{\mathrm{QPh}}(f)$, the latter of which are subject to corruption by an arbitrary i.i.d.~adversary, \Cref{alg:covert-verifiable-public-quantum-private-ancilla-free} accepts and outputs some $\sigma$ with $\bra{\psi_f^{\mathrm{Ph}}}^{\otimes m}\sigma\ket{\psi_f^{\mathrm{Ph}}}^{\otimes m}< 1-\min\{\varepsilon, \varepsilon_{\mathrm{leak}}\}$ with failure probability $\leq \delta$.
        \item \textbf{Privacy.} For $F\sim\mathcal{F}$, and any i.i.d.\,ancilla-free adversary that extracts information about the unknown function $F$ with probability at least $\delta_{\mathrm{leak}}$, \Cref{alg:covert-verifiable-public-quantum-private-ancilla-free} accepts with probability $ \leq \delta$. Here, the probability $\delta$ is over the learner’s internal randomness (and measurements), whereas $\delta_{\mathrm{leak}}$ is over the adversary’s randomness.

\end{itemize}
\end{theorem}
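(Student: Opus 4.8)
The strategy is to verify completeness, soundness, privacy, and efficiency separately, using the fact that \Cref{alg:covert-verifiable-public-quantum-private-ancilla-free} \emph{defers} the $\mathsf{R}$-register measurement of \Cref{alg:covert-public-phase-query-entanglement}: it first collects $(N+1)m$ entangled $\mathsf{RQ}$-blocks, groups them into $N+1$ super-blocks of $m$ blocks each, certifies the $N$ super-blocks in the certification set $\mathsf{C}$ against the $2nm$-qubit target $\ket{\Psi_f^{\mathrm{Ph}}}_{\mathsf{RQ}}$ (tensored $m$ times) via the adaptive shadow-overlap procedure of \Cref{theorem:shadow-overlap-copy-complexity}, and only afterwards applies the computational-basis measurement and $Z^r$-correction to the untouched output super-block $\mathsf{U}$. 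The whole analysis rests on an i.i.d.\ reduction: because the adversary is i.i.d.\ across queries and the learner draws fresh randomness each round, from the learner's viewpoint the joint state on all $(N+1)m$ blocks is $\sigma_0^{\otimes (N+1)m}$, where $\sigma_0$ is the fixed per-round $2n$-qubit state on $\mathsf{RQ}$ obtained after one interaction, averaged over the adversary's internal randomness and over the outcomes of any measurements it performs (outcomes the learner never sees). Thus each super-block is $\sigma_0^{\otimes m}$, and certifying $\mathsf{C}$ — which is just $N$ i.i.d.\ copies of $\sigma_0^{\otimes m}$ — simultaneously vouches for the output super-block in $\mathsf{U}$, which is what makes the simpler i.i.d.\ certification of \Cref{theorem:shadow-overlap-copy-complexity} (rather than the non-i.i.d.\ lifting used in \Cref{theorem:covert-shadow-overlap}) sufficient here, yielding the better $\log(1/\delta)$ scaling.

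For \textbf{completeness}, with no adversary \Cref{obs:target-covert-query}(i) gives $\sigma_0 = \ket{\Psi_f^{\mathrm{Ph}}}\bra{\Psi_f^{\mathrm{Ph}}}_{\mathsf{RQ}}$ exactly (this is precisely the state of \Cref{alg:covert-public-phase-query-entanglement} just before its final measurement), so each super-block has overlap exactly $1$ with the target; by the last clause of \Cref{theorem:shadow-overlap-copy-complexity} the certification accepts with probability $1$, the output super-block is still $\ket{\Psi_f^{\mathrm{Ph}}}_{\mathsf{RQ}}^{\otimes m}$, and measuring its $\mathsf{R}$-halves and applying the $Z^r$ corrections reproduces $\ket{\psi_f^{\mathrm{Ph}}}^{\otimes m}$ exactly, so $\delta_c = 0$. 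For \textbf{soundness}, observe that $\ket{\Psi_f^{\mathrm{Ph}}}_{\mathsf{RQ}}$ is the $2n$-qubit phase state of the function $g(r,x) = r\cdot x \oplus f(x)$, whose $m$-fold version is computable from $m$ classical queries to $f$; applying the soundness half of \Cref{theorem:shadow-overlap-copy-complexity} with accuracy parameter $\varepsilon^{\mathrm{cert}} := \min\{\varepsilon, (1-\Omega(1))\varepsilon_{\mathrm{leak}}\}$ and confidence $\delta$ to the $N = \tilde{\mathcal{O}}(nm\log(1/\delta)/\varepsilon^{\mathrm{cert}})$ copies shows that, conditioned on acceptance, with probability $\geq 1-\delta$ the certified (hence also the i.i.d.\ output) super-block has overlap $\geq 1 - \varepsilon^{\mathrm{cert}} \geq 1 - \min\{\varepsilon,\varepsilon_{\mathrm{leak}}\}$ with $\ket{\Psi_f^{\mathrm{Ph}}}_{\mathsf{RQ}}^{\otimes m}$. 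Since the deferred measurement-and-correction step is a quantum channel $\Lambda$ mapping $\ket{\Psi_f^{\mathrm{Ph}}}_{\mathsf{RQ}}^{\otimes m} \mapsto \ket{\psi_f^{\mathrm{Ph}}}^{\otimes m}$ (deterministically, on every branch of its internal measurement), monotonicity of fidelity under channels transfers this overlap bound to the actual output, giving the claimed soundness; note this argument uses only i.i.d.-ness, not ancilla-freeness.

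For \textbf{privacy}, suppose the i.i.d.\ ancilla-free adversary extracts information about $F$ with probability at least $\delta_{\mathrm{leak}}$ in each round. By \Cref{obs:ancilla-free-learns-nothing}(ii), an ancilla-free adversary that performs no pre-oracle measurement learns nothing, so in each round it must measure at least one qubit of $\mathsf{Q}$ before the oracle call with probability at least $\delta_{\mathrm{leak}}$. \Cref{thm:measurement-fidelity-drop} then bounds the per-round state by $\bra{\Psi_f^{\mathrm{Ph}}}_{\mathsf{RQ}}\sigma_0\ket{\Psi_f^{\mathrm{Ph}}}_{\mathsf{RQ}} \leq 1 - \delta_{\mathrm{leak}}/2$, hence each super-block satisfies $\bra{\Psi_f^{\mathrm{Ph}}}^{\otimes m}\sigma_0^{\otimes m}\ket{\Psi_f^{\mathrm{Ph}}}^{\otimes m} = \big(\bra{\Psi_f^{\mathrm{Ph}}}_{\mathsf{RQ}}\sigma_0\ket{\Psi_f^{\mathrm{Ph}}}_{\mathsf{RQ}}\big)^m \leq (1 - \delta_{\mathrm{leak}}/2)^m = 1 - \varepsilon_{\mathrm{leak}} < 1 - \varepsilon^{\mathrm{cert}}$, where the strict inequality is exactly why the certification is run with $(1-\Omega(1))\varepsilon_{\mathrm{leak}}$ in place of $\varepsilon_{\mathrm{leak}}$. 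As $\mathsf{C}$ consists of $N$ i.i.d.\ copies of this super-block, the ``\textsf{FAILED}'' clause of \Cref{theorem:shadow-overlap-copy-complexity} applies and the certification rejects with probability $\geq 1-\delta$, so the learner accepts with probability $\leq \delta$, which is Privacy Version~2. For \textbf{efficiency}, the algorithm issues $(N+1)m = \tilde{\mathcal{O}}(nm^2\log(1/\delta)/\min\{\varepsilon,\varepsilon_{\mathrm{leak}}\})$ public queries, and since the adaptive shadow-overlap procedure on the $2nm$-qubit target makes $\mathcal{O}(nm\log(1/\delta)/\varepsilon^{\mathrm{cert}})$ classical queries to $g^{\otimes m}$, each simulated by $m$ queries to $\mathsf{O}_{\mathrm{pri}}^{\mathrm{Mem}}(f)$, it makes $\mathcal{O}(nm^2\log(1/\delta)/\min\{\varepsilon,\varepsilon_{\mathrm{leak}}\})$ private queries, and all operations (state preparation, $\mathrm{CZ}$ and single-qubit Pauli gates, non-adaptive/adaptive single-qubit measurements, classical post-processing) run in polynomial time.

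The part I expect to demand the most care is the bookkeeping among the three parameters $\varepsilon$, $\varepsilon_{\mathrm{leak}}$, and $\varepsilon^{\mathrm{cert}}$: one must choose the gap $\varepsilon^{\mathrm{cert}} < \varepsilon_{\mathrm{leak}}$ (the $(1-\Omega(1))$ factor) large enough that the ``\textsf{FAILED}'' branch of \Cref{theorem:shadow-overlap-copy-complexity} fires on super-blocks whose overlap is only $1 - \varepsilon_{\mathrm{leak}}$, while keeping $\varepsilon^{\mathrm{cert}} \leq \min\{\varepsilon,\varepsilon_{\mathrm{leak}}\}$ so the soundness conclusion still meets the declared output accuracy; and, relatedly, one must confirm that the per-round reduced state $\sigma_0$ on $\mathsf{RQ}$ is well-defined and i.i.d.\ across rounds despite the adversary's hidden internal coins and unseen measurement outcomes, and that the deferred channel $\Lambda$ really sends the $2n$-qubit target to the $n$-qubit target so that monotonicity of fidelity can be invoked.
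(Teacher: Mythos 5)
Your proposal is correct and follows essentially the same route as the paper's own proof: reduce to an i.i.d.\ per-round state $\rho_{\mathrm{out}}$ (your $\sigma_0$) on $\mathsf{RQ}$, certify $N$ of the $N{+}1$ $m$-copy super-blocks against the $(2nm)$-qubit target $\ket{\Psi_f^{\mathrm{Ph}}}^{\otimes m}$ via the adaptive i.i.d.\ shadow-overlap test of \Cref{theorem:shadow-overlap-copy-complexity}, transfer the overlap guarantee to the deferred-measurement output via monotonicity of fidelity under CPTP maps, and combine \Cref{obs:ancilla-free-learns-nothing} with \Cref{thm:measurement-fidelity-drop} to force a per-round fidelity drop to at most $1-\delta_{\mathrm{leak}}/2$ (hence $\le (1-\delta_{\mathrm{leak}}/2)^m = 1-\varepsilon_{\mathrm{leak}}$ per super-block), which the certifier then catches. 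One minor point: your soundness sentence is phrased as a conditional bound ($\Pr[\text{low overlap}\mid\text{accept}]\le\delta$), whereas \Cref{theorem:shadow-overlap-copy-complexity} delivers and the theorem statement requires the joint bound $\Pr[\text{accept}\,\cap\,\text{low overlap}]\le\delta$; these are not equivalent in general, but since your final conclusion invokes the joint form this is only a phrasing slip, not a gap.
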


\begin{proof}
    \textbf{Completeness:}
If there is no adversary, each query to $\mathsf{O}_{\mathrm{pub}}^{\mathrm{QPh}}(f)$ implemented as in \Cref{alg:covert-public-phase-query-entanglement} prepares one copy of the two-register phase state $\ket{\Psi_f^{\mathrm{Ph}}}$ by \Cref{obs:target-covert-query}. Thus, after the loop, the global state of the $(N+1)$ two-register $m$-copy blocks is $\ket{\Psi_f^{\mathrm{Ph}}}\bra{\Psi_f^{\mathrm{Ph}}}^{\otimes (N+1)m}$. We then apply the adaptive variant of \Cref{theorem:shadow-overlap-copy-complexity}, using the access to $\mathsf{O}_{\mathrm{pri}}^{\mathrm{Mem}}(f)$, to the first $N$ blocks for certifying against the state $\ket{\Psi_f^{\mathrm{Ph}}}\bra{\Psi_f^{\mathrm{Ph}}}^{\otimes m}$; by completeness of \Cref{theorem:shadow-overlap-copy-complexity}, the test accepts with probability $1$. Finally, the measure-and-correct map from Step 13 in \Cref{alg:covert-public-phase-query-entanglement} sends each $\ket{\Psi_f^{\mathrm{Ph}}}$ to $\ket{\psi_f^{\mathrm{Ph}}}$, so the algorithm outputs $\ket{\psi_f^{\mathrm{Ph}}}\bra{\psi_f^{\mathrm{Ph}}}^{\otimes m}$ with unit probability.

\textbf{Soundness.}
Under the i.i.d.\ assumption, each two-register state returned by the public-oracle query has the same state $\rho_{\mathrm{out}}$. After the query loop, the pre-verification state is $\rho_{\mathrm{out}}^{\otimes (N+1)m}$. We group the first $Nm$ of these into $N$ verification blocks, each equal to $\rho_{\mathrm{out}}^{\otimes m}$, and let the last $m$-copy block be the output block. Then, we apply the adaptive variant of \Cref{theorem:shadow-overlap-copy-complexity} to the sufficient $N$ copies of $\rho_{\mathrm{out}}^{\otimes m}$, with access to $\mathsf{O}_{\mathrm{pri}}^{\mathrm{Mem}}(f)$, to test against $\ket{\Psi_f^{\mathrm{Ph}}}\bra{\Psi_f^{\mathrm{Ph}}}^{\otimes m}$. By the soundness guarantee in the theorem, we conclude that:
\begin{equation}
\Pr\left[\text{verification accepts and } \bra{\Psi_f^{\mathrm{Ph}}}^{\otimes m}\,\rho_{\mathrm{out}}^{\otimes m}\,\ket{\Psi_f^{\mathrm{Ph}}}^{\otimes m}
< 1-\min\{\varepsilon,\varepsilon_{\mathrm{leak}}\}\right] \le \delta\, .
\end{equation}
Let $\mathcal{N}$ be the CPTP “measure-and-correct’’ map from Step 13 (which in particular sends $\ket{\Psi_f^{\mathrm{Ph}}}\mapsto\ket{\psi_f^{\mathrm{Ph}}}$ on each of the $m$ copies), and let $\sigma=\mathcal{N}(\rho_{\mathrm{out}}^{\otimes m})$ be the final $(mn)$-qubit state output upon acceptance. By monotonicity of fidelity under CPTP maps,
\begin{equation}
\mathcal{F}\left(\ket{\Psi_f^{\mathrm{Ph}}}\bra{\Psi_f^{\mathrm{Ph}}}^{\otimes m},~\rho_{\mathrm{out}}^{\otimes m}\right) \le \mathcal{F}\left(\ket{\psi_f^{\mathrm{Ph}}}\bra{\psi_f^{\mathrm{Ph}}}^{\otimes m},~\sigma\right).
\end{equation}
Hence:
\begin{align}
&\Pr\left[\text{verification accepts and }
\bra{\psi_f^{\mathrm{Ph}}}^{\otimes m}\sigma\ket{\psi_f^{\mathrm{Ph}}}^{\otimes m}
< 1-\min\{\varepsilon,\varepsilon_{\mathrm{leak}}\}\right]\\
&\le
\Pr\left[\text{verification accepts and }
\bra{\Psi_f^{\mathrm{Ph}}}^{\otimes m}\rho_{\mathrm{out}}^{\otimes m}\ket{\Psi_f^{\mathrm{Ph}}}^{\otimes m}
< 1-\min\{\varepsilon,\varepsilon_{\mathrm{leak}}\}\right]\\
& \le \delta.
\end{align}

\textbf{Privacy:} As above, under the i.i.d.\ assumption, each two-register state returned by the public-oracle query is $\rho_{\mathrm{out}}$, making the total pre-verification state $\rho_{\mathrm{out}}^{\otimes (N+1)m}$. The first $N$ i.i.d.\ $m$-copy blocks are used to detect privacy violations, while the last $m$-copy block is output after post-processing. The key idea is that, by \Cref{obs:ancilla-free-learns-nothing}, in any branch of the adversary's i.i.d.\ strategy where it does not measure before the oracle action, it learns nothing about $F$. Therefore, for the adversary to gain information about $F$ with probability at least $\delta_{\mathrm{leak}}$, it must measure the received state pre-oracle with at least $\delta_{\mathrm{leak}}$ probability. \Cref{thm:measurement-fidelity-drop} implies that, in that case,
\begin{equation}
\bra{\Psi_f^{\mathrm{Ph}}}\rho_{\mathrm{out}}\ket{\Psi_f^{\mathrm{Ph}}} \le 1 - \frac{\delta_{\mathrm{leak}}}{2}.
\end{equation}

In particular, for each $m$-copy block,
\begin{equation}
\bra{\Psi_f^{\mathrm{Ph}}}^{\otimes m}\rho_{\mathrm{out}}^{\otimes m}\ket{\Psi_f^{\mathrm{Ph}}}^{\otimes m} \le \left(1 - \frac{\delta_{\mathrm{leak}}}{2}\right)^m.
\end{equation}

We set $\varepsilon_{\mathrm{leak}} \coloneqq 1 - \left(1 - \frac{\delta_{\mathrm{leak}}}{2}\right)^m$, and have the certification procedure in Line~9 test against $\min\{\varepsilon,(1 - \Omega(1))\varepsilon_{\mathrm{leak}}\}$ using the sufficient $\tilde{\mathcal{O}}\left(\frac{nm \log(1/\delta)}{\min\{\varepsilon,\varepsilon_{\mathrm{leak}}\}}\right)$ copies of $\rho_{\mathrm{out}}^{\otimes m}$, via the adaptive version of \Cref{theorem:shadow-overlap-copy-complexity}. The soundness of \Cref{theorem:shadow-overlap-copy-complexity} guarantees that against an adversary who gains information in each round with probability at least $\delta_{\mathrm{leak}}$, $\Pr\left[\text{verification accepts}\right] \leq \delta$. Note that the probability above is over the randomized measurements of the certification procedure in \Cref{theorem:shadow-overlap-copy-complexity}.
    
\textbf{Efficiency:} Using the i.i.d.\ certification protocol from \Cref{theorem:shadow-overlap-copy-complexity} to certify $\ket{\Psi_f^{\mathrm{Ph}}}$ to error $\min\{\varepsilon,(1 - \Omega(1))\varepsilon_{\mathrm{leak}}\}$ requires $N = N(n,m,\varepsilon,\varepsilon_{\mathrm{leak}},\delta)
= \tilde{\mathcal{O}}\left(\frac{n m \log(1/\delta)}{\min\{\varepsilon,\varepsilon_{\mathrm{leak}}\}}\right)$
$m$-copy, two-register blocks. Each $m$-copy block uses $m$ queries to $\mathsf{O}_{\mathrm{pub}}^{\mathrm{QPh}}(f)$, so in total we make $
\tilde{\mathcal{O}}\left(\frac{n m^{2} \log(1/\delta)}{\min\{\varepsilon,\varepsilon_{\mathrm{leak}}\}}\right)$
public oracle queries. Moreover, for each $m$-copy block we perform two classical membership checks tied to $\ket{\Psi_f^{\mathrm{Ph}}}^{\otimes m}$; each check can be implemented using $m$ queries to $\mathsf{O}_{\mathrm{pri}}^{\mathrm{Mem}}(f)$.
\end{proof}

In the above algorithm, the same state certification procedure suffices for both soundness and privacy-violation detection. This follows because \Cref{obs:ancilla-free-learns-nothing} together with \Cref{thm:measurement-fidelity-drop} links any privacy violation to a drop in fidelity for an ancilla-free adversary.

\begin{remark} Although the privacy guarantee is proved only against i.i.d.\ ancilla-free adversaries, the soundness guarantee does not rely on ancilla-freeness and holds against arbitrary i.i.d.\ adversaries. Moreover, the privacy argument extends beyond the ancilla-free model: it suffices that the adversary’s operation on the query register $\mathsf{Q}$ is entanglement breaking in some way (in our setting, we force the adversary to break entanglement by measuring to learn about $F$ due to the ancilla-free assumption). After that point, we can consider a general adversary. This is because once the fidelity with the entangled target state drops; subsequent local CPTP processing cannot increase it. In particular, as indicated in the proof of \Cref{thm:measurement-fidelity-drop}, in the considered adversary strategy, we don't require ancilla-freeness after the first measurement is made. However, our ancilla-free model is strictly weaker than bounded-storage models \cite{damgaard2008cryptography}, where the adversary’s quantum memory is capped at specified points in the protocol but unbounded computation is allowed before and after those points. Note that the standard no-go attack \Cref{obs:no-covert-qmem} relies on coherently retaining the learner’s entire $n$-qubit query across the pre- and post-oracle steps. If inter-round quantum storage is limited to $s<n$ qubits with a bounded storage assumption, that strategy fails: trying to keep all $n$ qubits forces the adversary to irreversibly discard/compress $(n-s)$ qubits, which our certification detects. Formally extending our guarantees to bounded storage would require an explicit memory–disturbance lower bound; we leave this as an interesting direction for future work.
\end{remark}

\begin{remark}
Note that for the original algorithm $A$ to be sample-efficient, we require $m = \mathcal{O}(\poly(n))$. In this setting, \Cref{alg:covert-verifiable-public-quantum-private-ancilla-free}  yields an efficient, covert and verifiable method for obtaining quantum phase states against adversaries that gain information with probability at least $\frac{1}{\ComplexityFont{poly}(n,m)}$. The learner can tune the tolerated leakage level by choosing the parameter $\delta_{\mathrm{leak}}$. More generally, if $\delta_{\mathrm{leak}}$ is noticeable\footnote{That is, the tolerated probability $\delta_{\mathrm{leak}}=\delta_{\mathrm{leak}}(n,m)$ of extracting information satisfies: $\exists c, c', n_0, m_0\in\mathbb{N}$ s.t.~$\forall n\geq n_0$ and $\forall m\geq m_0:$ $\delta_{\mathrm{leak}}\geq n^{-c}m^{-c'}$.}, then our covert verifiable algorithm will be efficient. As is to be expected, if we set $\delta_{\mathrm{leak}} \propto \frac{1}{\ComplexityFont{exp}(m,n)}$, meaning that the adversary gains information with exponentially small probability, the covert verifiable learning procedure becomes intractable, since it would require distinguishing quantum states that are inverse-exponentially (in $n, m$) close in fidelity. 
\end{remark}

Similar to \Cref{corollary:verifiable-binary-distinguishing-from-phase-states}, we can now extend \Cref{alg:covert-verifiable-public-quantum-private-ancilla-free} to obtain a protocol for solving (binary distinguishing) tasks covertly against i.i.d.\ ancilla-free adversaries. 

\begin{algorithm}
    \caption{\Cref{alg:covert-verifiable-public-quantum-private-ancilla-free} applied to distinguishing task}
    \begin{algorithmic}[1]
        \Require Number of qubits $n\in\mathbb{N}$; distinguishing algorithm $A$ with parameters $\varepsilon_A\in (0,1)$, $m \in \mathbb{N}$ and $\delta_A\in (0,1/2)$; access to oracles $\mathsf{O}_{\mathrm{pub}}^{\mathrm{QPh}}(f)$, $\mathsf{O}_{\mathrm{pri}}^{\mathrm{Mem}}(f)$; $\delta \in (0, 1), \delta_{\mathrm{leak}} \in (0, 1)$.
        \Ensure ``reject'' or candidate solution
        \State Using $\mathsf{O}_{\mathrm{pub}}^{\mathrm{QPh}}(f)$ and $\mathsf{O}_{\mathrm{pri}}^{\mathrm{Mem}}(f)$, run \Cref{alg:covert-verifiable-public-quantum-private-ancilla-free} with parameters $(\delta, \varepsilon = \varepsilon_A, \delta_{\mathrm{leak}})$ against the $m$-copy phase state
        \If{\Cref{alg:covert-verifiable-public-quantum-private-ancilla-free} rejects} 
            \State \Return ``reject''
        \Else
        \State Denote the resulting state by $\rho$
        \State \Return $A(\rho)$
        \EndIf
    \end{algorithmic}\label{alg:verifiable-public-quantum-private-classical-task-ancilla-free}
\end{algorithm}

\begin{corollary}\label{corollary:verifiable-binary-distinguishing-from-phase-states-ancilla-free}
    Consider a binary distinguishing problem $P$ for Boolean functions. Let $\ket{\psi_f^{\mathrm{Ph}}}$ be the phase state of a Boolean function $f: \{0, 1\}^n \to \{0, 1\}$. Let $A$ be a quantum algorithm for solving $P$ that, upon input of any $(mn)$-qubit state $\rho$, satisfies the following guarantee for some $\varepsilon_A \in (0,1), \delta_A\in (0,\frac{1}{2}),~m\in \mathbb{N}$: 
    \begin{align}\label{equation:assumption-algorithm-solves-problem-robustly-2}
        \text{ if } \bra{\psi_f^{\mathrm{Ph}}}^{\otimes m} \rho \ket{\psi_f^{\mathrm{Ph}}}^{\otimes m } \geq 1 - \varepsilon_A, \text{ then } \Pr[ A(\rho) \text{ is a valid solution to }P\text{ for }f] \geq 1 - \delta_A\, .
    \end{align}
    For $\delta \in (0, 1), ~\delta_{\mathrm{leak}}\in(0, 1)$ and $\varepsilon_{\mathrm{leak}}:=1-\left(1-\frac{\delta_{\mathrm{leak}}}{2}\right)^m$, \Cref{alg:verifiable-public-quantum-private-classical-task-ancilla-free} gives a quantum algorithm $A'$ that, using at most $m_{\mathrm{pub}}=\Tilde{\mathcal{O}}\left(\frac{nm^2 \log(1/\delta)}{\min\{\varepsilon_A, \varepsilon_{\mathrm{leak}}\}}\right)$ queries to a public quantum phase oracle $\mathsf{O}_{\mathrm{pub}}^{\mathrm{QPh}}(f)$ and at most $m_{\mathrm{pri}}=\Tilde{\mathcal{O}}\left(\frac{nm^2 \log(1/\delta)}{\min\{\varepsilon_A, \varepsilon_{\mathrm{leak}}\}}\right)$ queries to a private classical membership query oracle $\mathsf{O}_{\mathrm{pri}}^{\mathrm{Mem}}(f)$, achieves the following guarantees:
    \begin{itemize}
        \item \textbf{Completeness:}  With no adversary, for any $f\in\mathcal{F}$, after making at most $m_{\mathrm{pri}}$ queries to $\mathsf{O}_{\mathrm{pri}}^{\mathrm{Mem}}(f)$ and at most $m_{\mathrm{pub}}$ queries to $\mathsf{O}_{\mathrm{pub}}^{\mathrm{QPh}}(f)$, 
        \begin{equation}
            \Pr\left[A'^{\mathsf{O}_{\mathrm{pri}}^{\mathrm{Mem}}(f), \mathsf{O}_{\mathrm{pub}}^{\mathrm{QPh}}(f)}\text{ accepts and outputs a valid solution to } P \text{ for }f\right]
            \geq 1-\delta_A\, .
        \end{equation}
        \item \textbf{Soundness:} For any $f\in\mathcal{F}$, after making at most $m_{\mathrm{pri}}$ queries to $\mathsf{O}_{\mathrm{pri}}^{\mathrm{Mem}}(f)$ and at most $m_{\mathrm{pub}}$ queries to $\mathsf{O}_{\mathrm{pub}}^{\mathrm{QPh}}(f)$, the latter of which are subject to corruption by an arbitrary i.i.d.\ adversary,  
        \begin{equation}
            \Pr\left[A'^{\mathsf{O}_{\mathrm{pri}}^{\mathrm{Mem}}(f), \mathsf{O}_{\mathrm{pub}}^{\mathrm{QPh}}(f)}\text{ accepts and outputs an invalid solution to } P \text{ for }f\right]
            \leq \delta + \delta_A\, .
        \end{equation}
        \item \textbf{{Privacy}:} For $F\sim\mathcal{F}$, and any i.i.d.\,ancilla-free adversary that extracts information about the unknown function $F$ with probability at least $\delta_{\mathrm{leak}}$, \Cref{alg:verifiable-public-quantum-private-classical-task-ancilla-free} accepts with probability $ \leq \delta$. Here, the probability $\delta$ is over the learner’s internal randomness (and measurements), whereas $\delta_{\mathrm{leak}}$ is over the adversary’s randomness.
        \item \textbf{Efficiency:} $A'$ runs in time $\mathcal{O}(\poly(n, m, \log(1/\delta), 1/\varepsilon_A, 1/\varepsilon_{\mathrm{leak}}, \ComplexityFont{Complexity}(A))$ where  ~$\ComplexityFont{Complexity}(A)$ is the run time of the original algorithm $A$. In addition to the quantum operations used in $A$, $A'$ only employs uniform superposition state preparation, controlled-$Z$ gates and adaptive single-qubit gates.
    \end{itemize}
\end{corollary}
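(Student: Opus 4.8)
The plan is to derive \Cref{corollary:verifiable-binary-distinguishing-from-phase-states-ancilla-free} as a straightforward composition of \Cref{theorem:covert-shadow-overlap-ancilla-free}, instantiated with $\varepsilon=\varepsilon_A$, and the robustness hypothesis \eqref{equation:assumption-algorithm-solves-problem-robustly-2} on the algorithm $A$. This is simpler than the proof of \Cref{corollary:verifiable-binary-distinguishing-from-phase-states}: there a majority vote over $\ell$ independent rounds was needed to boost the $\poly(1/\delta)$ soundness of \Cref{alg:covert-verifiable-public-quantum-private-classical} to $\log(1/\delta)$, whereas \Cref{alg:covert-verifiable-public-quantum-private-ancilla-free} already certifies against the target $m$-copy phase state with soundness $\delta$ at only $\poly(\log(1/\delta))$ overhead, so a single invocation followed by one call to $A$ suffices.

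First I would dispatch completeness: with no adversary, the completeness clause of \Cref{theorem:covert-shadow-overlap-ancilla-free} gives that \Cref{alg:covert-verifiable-public-quantum-private-ancilla-free} accepts with probability $1$ and outputs $\ket{\psi_f^{\mathrm{Ph}}}\bra{\psi_f^{\mathrm{Ph}}}^{\otimes m}$, which has overlap $1\geq 1-\varepsilon_A$ with the target, so \eqref{equation:assumption-algorithm-solves-problem-robustly-2} yields a valid solution from $A$ with probability $\geq 1-\delta_A$. Privacy is immediate: \Cref{alg:verifiable-public-quantum-private-classical-task-ancilla-free} accepts exactly when the invoked \Cref{alg:covert-verifiable-public-quantum-private-ancilla-free} accepts, since the subsequent classical call $A(\rho)$ does not affect the accept/reject decision; hence the privacy clause of \Cref{theorem:covert-shadow-overlap-ancilla-free} transfers verbatim against i.i.d.\ ancilla-free adversaries that extract information with probability $\geq\delta_{\mathrm{leak}}$. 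Efficiency follows by adding one run of $A$ to the query and runtime bounds of \Cref{theorem:covert-shadow-overlap-ancilla-free} with $\varepsilon=\varepsilon_A$, and the stated gate set follows by inspecting \Cref{alg:covert-public-phase-query-entanglement} (uniform superposition preparation and controlled-$Z$ gates) together with the adaptive single-qubit measurements of the shadow-overlap subroutine from \Cref{theorem:shadow-overlap-copy-complexity}.

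The only step needing care is soundness against an arbitrary i.i.d.\ adversary. I would invoke the soundness clause of \Cref{theorem:covert-shadow-overlap-ancilla-free} (which does not assume ancilla-freeness) to conclude that \Cref{alg:covert-verifiable-public-quantum-private-ancilla-free} accepts and outputs $\sigma$ with $\bra{\psi_f^{\mathrm{Ph}}}^{\otimes m}\sigma\ket{\psi_f^{\mathrm{Ph}}}^{\otimes m}<1-\min\{\varepsilon_A,\varepsilon_{\mathrm{leak}}\}$ with probability $\leq\delta$; since $\min\{\varepsilon_A,\varepsilon_{\mathrm{leak}}\}\leq\varepsilon_A$, this also bounds by $\delta$ the probability that the protocol accepts and outputs a state of overlap below $1-\varepsilon_A$ with $\ket{\psi_f^{\mathrm{Ph}}}^{\otimes m}$. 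Splitting the bad event $\{A'\text{ accepts and returns an invalid solution}\}$ according to whether the output fidelity is below $1-\varepsilon_A$ or at least $1-\varepsilon_A$, the first branch is bounded by $\delta$ as just noted, and the second branch is bounded by $\delta_A$ by applying \eqref{equation:assumption-algorithm-solves-problem-robustly-2} conditionally to the (possibly adversarially correlated) output state — the robustness bound holds uniformly over all states of fidelity at least $1-\varepsilon_A$ with the target, so it is insensitive to how $\sigma$ was produced. A union bound over the two branches gives $\delta+\delta_A$. I expect this conditioning, along with tracking which guarantees hold against which adversary class (certification soundness for all i.i.d.\ adversaries, privacy for i.i.d.\ ancilla-free adversaries), to be the main — though mild — obstacle; otherwise the argument is pure bookkeeping on top of \Cref{theorem:covert-shadow-overlap-ancilla-free}.
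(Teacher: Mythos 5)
Your proposal is correct and follows essentially the same route as the paper's proof: it also uses the completeness/soundness/privacy of \Cref{theorem:covert-shadow-overlap-ancilla-free} directly (with $\varepsilon=\varepsilon_A$), the uniform-over-states robustness of $A$ to handle the ``good fidelity but wrong answer'' branch, and a decomposition of the bad event according to whether the certified output has fidelity at least $1-\varepsilon_A$, yielding the $\delta+\delta_A$ union bound. Your observations that the privacy clause transfers verbatim because the accept/reject decision is entirely determined by the subroutine, and that the soundness of the certification step holds against arbitrary i.i.d.\ adversaries without ancilla-freeness, match the paper's treatment.
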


As in \Cref{theorem:covert-shadow-overlap-ancilla-free}, the soundness in \Cref{corollary:verifiable-binary-distinguishing-from-phase-states-ancilla-free} is against general i.i.d\ adversaries. Only the privacy in \Cref{corollary:verifiable-binary-distinguishing-from-phase-states-ancilla-free} requires restricting the adversary to be ancilla-free.

\begin{proof}
    \textbf{Completeness: } Due to completeness of \Cref{theorem:covert-shadow-overlap-ancilla-free}, Line 1 never rejects and the output state $\rho$ is equals \(\ket{\psi_f^{\mathrm{Ph}}}\bra{\psi_f^{\mathrm{Ph}}}^{\otimes m}\) with probability $1$ . Now by the robustness guarantee of $A$, we have 
    \begin{equation}
        \Pr[ A(\rho) \text{ is a valid solution to }P\text{ for }f] \ge 1 - \delta_A \, .
    \end{equation}
    
    \textbf{Soundness: } Here, we use the soundness guarantee of \Cref{theorem:covert-shadow-overlap-ancilla-free} and the robustness guarantee of algorithm $A$ to argue that we either ``reject'' or output the valid solution with high probability. We only err if we accept in Line 1 but $A(\rho)$ is invalid. Therefore, we want to upper bound the following probability: 

\begin{align}
    \Pr\left[\ComplexityFont{ACCEPT} \cap \ComplexityFont{INVALID}\right]
        &= \Pr\left[\ComplexityFont{ACCEPT} \cap \ComplexityFont{INVALID}  \cap \ComplexityFont{GOOD}\right] + \Pr\left[\ComplexityFont{ACCEPT} \cap \ComplexityFont{INVALID}  \cap \overline{\ComplexityFont{GOOD}}\right]\\
        &\leq \Pr\left[\ComplexityFont{INVALID}  \cap \ComplexityFont{GOOD}\right] + \Pr\left[\ComplexityFont{ACCEPT} \cap \overline{\ComplexityFont{GOOD}}\right] \\
        &\leq \Pr\left[\ComplexityFont{INVALID}  ~\bigg\vert~ \ComplexityFont{GOOD} \right] + \Pr\left[\ComplexityFont{ACCEPT} \cap \overline{\ComplexityFont{GOOD}}\right] \\
        &\leq \delta_A + \delta, 
    \end{align}
    \normalsize
    
    where $\ComplexityFont{ACCEPT}$ is the event that Line 1 accepts, $\ComplexityFont{INVALID}$ is the event that Line 6 returns an invalid outcome and $\ComplexityFont{GOOD}$ is the event that $\bra{\psi_f^{\mathrm{Ph}}}^{\otimes m} \rho \ket{\psi_f^{\mathrm{Ph}}}^{\otimes m} \ge 1-\varepsilon_A$.  Here, for the first term, we invoke the robustness guarantee of algorithm~$A$ from \Cref{equation:assumption-algorithm-solves-problem-robustly-2} and for the second term, we use the soundness guarantee of \Cref{theorem:covert-shadow-overlap-ancilla-free}. 
    
    \textbf{Privacy: }  As the adversary only interacts with the query transcript via the invocation of \Cref{theorem:covert-shadow-overlap-ancilla-free}, this is directly inherited from \Cref{theorem:covert-shadow-overlap-ancilla-free}.
        
    \textbf{Efficiency: } We run \Cref{alg:covert-verifiable-public-quantum-private-ancilla-free} once and then run $A$ on the output giving us the claimed scaling. 
\end{proof}

\Cref{corollary:verifiable-binary-distinguishing-from-phase-states-ancilla-free} shows that any quantum algorithm $A$ that uses phase states can be made covert with only polynomial overhead. For emphasis, we let the covert algorithm $A'$ inherit $A$’s internal confidence parameter $\delta_A$ for completeness and soundness. However, as in \Cref{corollary:verifiable-binary-distinguishing-from-phase-states}, we can repeat the protocol $\ell$ times based on $(\delta, \delta_A)$, and then employ a majority vote to amplify the completeness and soundness to  $\delta$-error as well. This amplified version will be used in \Cref{sec:phase-state-apps}.

We note that the discussions from \Cref{remark:robustness-assumption} (regarding obtaining robust quantum algorithms) and \Cref{rmk:extension_to_estimation} (regarding solving estimation problems covertly) also directly apply to \Cref{corollary:verifiable-binary-distinguishing-from-phase-states-ancilla-free}. 

\begin{remark}[From $\mathsf{O}^{\mathrm{QPh}}_{\mathrm{pub}}$ and phase states to $\mathsf{O}^{\mathrm{Mem}}_{\mathrm{pri}}$ and quantum examples]\label{remark:covert-quantum-examples-from-QMem}
    In \Cref{remark:phase-states-vs-quantum-examples}, we have argued that certification of phase states from $\mathsf{O}^{\mathrm{QPh}}_{\mathrm{pub}}$- and $\mathsf{O}^{\mathrm{Mem}}_{\mathrm{pri}}$-access enables certification of quantum example states from $\mathsf{O}^{\mathrm{QMem}}_{\mathrm{pub}}$- and $\mathsf{O}^{\mathrm{Mem}}_{\mathrm{pri}}$-access.
    Hence, as a consequence of the above, we can also efficiently achieve completeness and soundness in acquiring and learning from quantum example state data from $\mathsf{O}^{\mathrm{QMem}}_{\mathrm{pub}}$- and $\mathsf{O}^{\mathrm{Mem}}_{\mathrm{pri}}$-access.
    Let us argue that also our covertness guarantees carry over. 
    To do so, we first describe a suitable modification of \Cref{alg:covert-public-phase-query-randomness}.
    Namely, consider querying $\mathsf{O}^{\mathrm{QMem}}_{\mathrm{pub}}(f)$ on the input register $\mathsf{IN}$ and the auxiliary register $\mathsf{AUX}$ of the state 
    \begin{align}
        &(H^{\otimes n}\ket{r})_{\mathsf{IN}}\otimes \left((H^{\otimes m})_{\mathsf{AUX}}\mathrm{CNOT}_{\mathsf{OUT}\to\mathsf{AUX}}((H^{\otimes m}\ket{\tilde{r}})_{\mathsf{OUT}}\otimes \ket{0^m}_{\mathsf{AUX}})\right)\\
        &=\frac{1}{\sqrt{2^{n+m}}}\sum_{x\in\{0,1\}^n}\sum_{y\in\{0,1\}^m}(-1)^{r\cdot x + \tilde{r}\cdot y}\ket{x}_{\mathsf{IN}}\otimes \ket{y}_{\mathsf{OUT}}\otimes (H^{\otimes m}\ket{y})_{\mathsf{AUX}}\\
        &= \frac{1}{\sqrt{2^{n+2m}}}\sum_{x\in\{0,1\}^n}\sum_{y,z\in\{0,1\}^m}(-1)^{r\cdot x + \tilde{r}\cdot y + y\cdot z}\ket{x}_{\mathsf{IN}}\otimes \ket{y}_{\mathsf{OUT}}\otimes \ket{z}_{\mathsf{AUX}} \, ,
    \end{align}
    where we think of the $\mathsf{OUT}$-register as remaining a private register on the learner side.
    Following the same calculations as in the standard phase kickback trick for functions outputting a single bit, the oracle query transforms this state to
    \begin{align}
        &(\mathsf{O}^{\mathrm{QMem}}_{\mathrm{pub}}(f)_{\mathsf{IN},\mathsf{AUX}}\otimes \mathds{1}_{\mathsf{OUT}})\left((H^{\otimes n}\ket{r})_{\mathsf{IN}}\otimes \left((H^{\otimes m})_{\mathsf{AUX}}\mathrm{CNOT}_{\mathsf{OUT}\to\mathsf{AUX}}((H^{\otimes m}\ket{\tilde{r}})_{\mathsf{OUT}}\otimes \ket{0^m}_{\mathsf{AUX}})\right)\right)\\
        &= \frac{1}{\sqrt{2^{n+2m}}}\sum_{x\in\{0,1\}^n}\sum_{y,z\in\{0,1\}^m}(-1)^{r\cdot x + \tilde{r}\cdot y + y\cdot z}\ket{x}_{\mathsf{IN}}\otimes \ket{y}_{\mathsf{OUT}}\otimes \ket{z\oplus f(x)}_{\mathsf{AUX}}\\
        &= \frac{1}{\sqrt{2^{n+2m}}}\sum_{x\in\{0,1\}^n}\sum_{y,z\in\{0,1\}^m}(-1)^{r\cdot x + \tilde{r}\cdot y + y\cdot f(x) + y\cdot z}\ket{x}_{\mathsf{IN}}\otimes \ket{y}_{\mathsf{OUT}}\otimes \ket{z}_{\mathsf{AUX}}\\
        &= \frac{1}{\sqrt{2^{n+m}}}\sum_{x\in\{0,1\}^n}\sum_{y\in\{0,1\}^m}(-1)^{r\cdot x + \tilde{r}\cdot y + y\cdot f(x)}\ket{x}_{\mathsf{IN}}\otimes \ket{y}_{\mathsf{OUT}}\otimes (H^{\otimes m}\ket{y})_{\mathsf{AUX}} \, .\label{eq:phase-kickback-state-randomness}
    \end{align}
    After uncomputing $H^{\otimes m}_{\mathsf{AUX}}$ and $\mathrm{CNOT}_{\mathsf{OUT}\to\mathsf{AUX}}$, we obtain $\frac{1}{\sqrt{2^{n+m}}}\sum_{x\in\{0,1\}^n}\sum_{y\in\{0,1\}^m}(-1)^{r\cdot x + \tilde{r}\cdot y + y\cdot f(x)}\ket{x}_{\mathsf{IN}}\otimes \ket{y}_{\mathsf{OUT}}\otimes \ket{0^m}_{\mathsf{AUX}} = \ket{\psi_{\tilde{f}}^{(r\tilde{r})}}_{\mathsf{IN},\mathsf{OUT}}\otimes \ket{0^m}_{\mathsf{AUX}}$. So, the learner can now throw away the auxiliary $\mathsf{AUX}$-register and is left with the phase state $\ket{\psi_{\tilde{f}}^{(r\tilde{r})}}$.
    With this state, they can run an analogue of \Cref{alg:covert-public-phase-query-randomness} to perform phase state certification for the function $\tilde{f}(x,y)=y\cdot f(x)$ from \Cref{remark:phase-states-vs-quantum-examples} with private randomness $r\tilde{r}\sim\{0,1\}^{n+m}$.
    In particular, notice also that the reduced state of \Cref{eq:phase-kickback-state-randomness} on the joint $\mathsf{IN},\mathsf{AUX}$- register when averaged over the randomness $r\tilde{r}$ is the maximally mixed state on $n+m$ qubits\footnote{In fact, this is true even without averaging over the randomness $\tilde{r}$, we include it here for consistency of the presentation with the second variant involving entanglement instead of private randomness.} and in particular, is independent of $f$. Thus, we get the same privacy guarantee as in \Cref{obs:target-covert-query}. Hence, our covert verifiable guarantees for quantum data acquisition against unidirectional adversaries (compare \Cref{theorem:covert-shadow-overlap}) carry over to acquiring quantum examples from $\mathsf{O}^{\mathrm{QMem}}_{\mathrm{pub}}$- and $\mathsf{O}^{\mathrm{Mem}}_{\mathrm{pri}}$-access.
    Similarly, also our covert verifiable guarantees for solving distinguishing and estimation problems (compare \Cref{corollary:verifiable-binary-distinguishing-from-phase-states} and the discussion thereafter) carry over.
    
    Similarly to the step from \Cref{alg:covert-public-phase-query-randomness} to \Cref{alg:covert-public-phase-query-entanglement}, we can replace private randomness by entanglement by querying $\mathsf{O}^{\mathrm{QMem}}_{\mathrm{pub}}(f)$ on the input register $\mathsf{IN}$ and the auxiliary register $\mathsf{AUX}$ of the state
    \begin{align}
        &\frac{1}{\sqrt{2^{n+m}}}\sum_{r\in\{0,1\}^n}\sum_{\tilde{r}\in\{0,1\}^m}\ket{r}_\mathsf{R}\otimes \ket{\tilde{r}}_{\tilde{\mathsf{R}}}\otimes (H^{\otimes n}\ket{r})_{\mathsf{IN}}\otimes \left((H^{\otimes m})_{\mathsf{AUX}}\mathrm{CNOT}_{\mathsf{OUT}\to\mathsf{AUX}}((H^{\otimes m}\ket{0^m})_{\mathsf{OUT}}\otimes \ket{\tilde{r}}_{\mathsf{AUX}})\right) \\
        &= (H^{\otimes m})_{\mathsf{IN}}(H^{\otimes m})_{\mathsf{AUX}}\mathrm{CNOT}_{\mathsf{OUT}\to\mathsf{AUX}}(H^{\otimes m})_{\mathsf{OUT}}\left(\ket{\Omega}_{\mathsf{R},\mathsf{IN}} \otimes \ket{0^m}_{\mathsf{OUT}}\otimes \ket{\Omega}_{\tilde{\mathsf{R}},\mathsf{AUX}}\right) \, .
    \end{align}
    Now we can imitate our earlier argument: 
    By our above reasoning (and in analogy to \Cref{{obs:ancilla-free-learns-nothing}}), an ancilla-free adversary learns nothing about the unknown function unless they perform a measurement on the joint $\mathsf{IN},\mathsf{AUX}$-register before the  $\mathsf{O}^{\mathrm{QMem}}_{\mathrm{pub}}$-query. 
    However, if they perform such a measurement, then we can use the identities $(A_{\mathsf{S},\mathsf{R}}\otimes \mathds{1}_{\mathsf{IN}})(\ket{\psi}_{\mathsf{S}}\otimes\ket{\Omega}_{\mathsf{R},\mathsf{IN}}) = (A_{\mathsf{S},\mathsf{IN}}^{\top_{\mathsf{IN}}}\otimes \mathds{1}_{\mathsf{R}})(\ket{\psi}_{\mathsf{S}}\otimes\ket{\Omega}_{\mathsf{R},\mathsf{IN}})$ and similarly for $\tilde{\mathsf{R}},\mathsf{AUX}$ instead of $\mathsf{R},\mathsf{IN}$ to argue analogously to the proof of \Cref{thm:measurement-fidelity-drop} that this leads to a decrease in the Schmidt number of the average post-measurement state and hence to a drop in fidelity with the original maximally entangled state, which the learner can detect.
    With this, our covert verifiable guarantees for quantum data acquisition against i.i.d.~ancilla-free adversaries (compare \Cref{theorem:covert-shadow-overlap-ancilla-free}) carry over to acquiring quantum examples from $\mathsf{O}^{\mathrm{QMem}}_{\mathrm{pub}}$- and $\mathsf{O}^{\mathrm{Mem}}_{\mathrm{pri}}$-access.
    Similarly, also our covert verifiable guarantees for solving distinguishing and estimation problems (compare \Cref{corollary:verifiable-binary-distinguishing-from-phase-states-ancilla-free} and the discussion thereafter) carry over.
\end{remark}

\subsection{Applications} \label{sec:phase-state-apps}

Recall that classical covert learning in the \cite{canetti2021covert} model requires the learning problem to be computationally hard from access to only the private oracle and computationally easy from access to the public oracle for the setting to be meaningful. 
Similarly, for \Cref{corollary:verifiable-binary-distinguishing-from-phase-states} and \Cref{corollary:verifiable-binary-distinguishing-from-phase-states-ancilla-free} to be meaningful, we must consider problems in which the classical query complexity exceeds the quantum sample complexity. 

In fact, given that our covert verifiable algorithms have a polynomial complexity overhead over the quantum sample complexity of solving the problem from directly available quantum data, we need a problem in which the classical-vs-quantum separation is larger than this polynomial, otherwise the covert verifiable learner does not gain a meaningful advantage through querying the public quantum oracle.

In this section, we thus apply our results to two problems that famously provide an exponential quantum advantage in the black-box model: the Forrelation problem and Simon's problem.

\subsubsection{Covert Verifiable Forrelation}

Forrelation is the following task defined on two Boolean functions $f, g$ that exponential separates classical membership queries and quantum phase state queries. 

\begin{definition}[Forrelation \cite{aaronson2010bqp, aaronson2015forrelation}]
    The amount of ``forrelation'' between two Boolean functions $f,g:\{0,1\}^n\to\{0,1\}$ is defined as
    \begin{equation}
        \Phi(f,g)
        \coloneqq \frac{1}{2^{3n/2}} \sum_{x,y\in\{0,1\}^n} (-1)^{f(x)+x\cdot y + g(y)}\, .
    \end{equation}
    In the Forrelation problem, we are given oracle access to $f$ and $g$, and the task is to decide whether $(i) ~\lvert \Phi(f,g)\rvert\leq \frac{1}{100}$ or $(ii) ~\Phi(f,g)\geq \nicefrac{3}{5}$, promised that one of these two is the case.
\end{definition}

Quantum algorithms and classical lower bounds for Forrelation typically assume independent access to classical membership oracles and to quantum phase state oracles for $f$ and $g$. To fit this into the framework of this section (where we learn a property of a single function), define $h:\{0,1\}^{2n}\to\{0,1\}$ by $h(x,y)=f(x)\oplus g(y)$. Then $\ket{\psi_h^{\mathrm{Ph}}} = \ket{\psi_f^{\mathrm{Ph}}}\otimes\ket{\psi_g^{\mathrm{Ph}}}$, so a query to $\mathsf{O}_{\mathrm{pub}}^{\mathrm{QPh}}(h)$ can be simulated by one query each to $\mathsf{O}_{\mathrm{pub}}^{\mathrm{QPh}}(f)$ and $\mathsf{O}_{\mathrm{pub}}^{\mathrm{QPh}}(g)$. 
This reduction to a single function is not strictly necessary: one may covertly and verifiably obtain phase states for $f$ and $g$ separately and apply versions of  \Cref{corollary:verifiable-binary-distinguishing-from-phase-states,corollary:verifiable-binary-distinguishing-from-phase-states-ancilla-free} that work on algorithms that take phase states of multiple functions as inputs. We adopt the one-function formulation here only for simplicity of presentation and consistency with our earlier formulations.

Now let us recall the existing classical lower bounds, rephrased in our notation: 

\begin{theorem}[{\cite[Classical membership query lower bound for Forrelation]{aaronson2015forrelation}}] Any algorithm that solves Forrelation with bounded-error probability with only classical membership query access to $h$ requires $\Omega\left(\frac{2^{n/2}}{n}\right)$ queries to the oracle.
\end{theorem}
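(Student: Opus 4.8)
The plan is to invoke Yao's minimax principle and reduce to a real-valued Gaussian distinguishing problem. It suffices to exhibit distributions $\mathcal{D}_{\mathrm{YES}},\mathcal{D}_{\mathrm{NO}}$ over function pairs $(f,g)$ --- equivalently over $h(x,y)=f(x)\oplus g(y)$ --- on which no deterministic $T$-query classical algorithm achieves constant distinguishing advantage when $T=o(2^{n/2}/n)$, while $\Phi(f,g)\geq 3/5$ holds with high probability under $\mathcal{D}_{\mathrm{YES}}$ and $\lvert\Phi(f,g)\rvert\leq 1/100$ holds with high probability under $\mathcal{D}_{\mathrm{NO}}$. I would take $\mathcal{D}_{\mathrm{NO}}$ to be uniformly random $f,g$, and $\mathcal{D}_{\mathrm{YES}}$ to be the Gaussian-rounded distribution: sample $v\sim\mathcal{N}(0,\mathds{1}_N)$ with $N=2^n$, and set $f(x)=\mathds{1}[v_x<0]$ and $g(y)=\mathds{1}[(Hv)_y<0]$, where $H$ is the orthonormal Walsh--Hadamard transform on $\mathbb{C}^N$ (so $H^2=\mathds{1}_N$ and $H_{xy}=\pm N^{-1/2}$).

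Step 1 is to certify the promise gap. Writing $F_x=(-1)^{f(x)}$ and $G_y=(-1)^{g(y)}$, one computes $\Phi(f,g)=\langle F,HG\rangle/N$. Under $\mathcal{D}_{\mathrm{NO}}$ this has mean $0$ and variance $1/N$, so Chebyshev gives $\lvert\Phi\rvert\leq 1/100$ except with probability $O(1/N)$. Under $\mathcal{D}_{\mathrm{YES}}$ we have $\Phi=\langle\operatorname{sgn}(v),H\operatorname{sgn}(Hv)\rangle/N$, and since each pair $(v_x,(Hv)_y)$ is bivariate Gaussian with correlation $H_{xy}$, the arcsine identity $\mathbb{E}[\operatorname{sgn}(v_x)\operatorname{sgn}((Hv)_y)]=\tfrac{2}{\pi}\arcsin(H_{xy})$ summed over $x,y$ yields $\mathbb{E}[\Phi]=\tfrac{2}{\pi}+o(1)>3/5$. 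Promoting this to $\Phi\geq 3/5$ with high probability requires a concentration bound for the nonlinear functional $v\mapsto\langle\operatorname{sgn}(v),H\operatorname{sgn}(Hv)\rangle$; this is the step that costs a $\Theta(1/n)$ factor relative to the ideal $2^{n/2}$ bound --- one conditions on a good event (e.g.\ mild coordinatewise truncation of $v$) or slightly boosts the gap, absorbing the logarithmic overhead. (The loss is known to be avoidable for closely related ``real Forrelation'' variants, which are $\Theta(2^{n/2})$-hard classically.)

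Step 2, the heart of the argument, is the indistinguishability bound. The key observation is that each Boolean query $h(x,y)$ is a fixed function (XOR of two signs) of the two reals $v_x$ and $(Hv)_y$; hence any $T$-query Boolean algorithm is simulated by an algorithm making at most $2T$ queries to a pair of real oracles --- one returning coordinates of $v\sim\mathcal{N}(0,\mathds{1}_N)$, the other returning coordinates of $Hv$ (under $\mathcal{D}_{\mathrm{YES}}$) or of an independent $w\sim\mathcal{N}(0,\mathds{1}_N)$ (under $\mathcal{D}_{\mathrm{NO}}$, since rounding independent Gaussians reproduces uniformly random $f,g$). Fix any adaptive sequence of queries revealing coordinate sets $S$ of $v$ and $U$ of the second oracle. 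Under $\mathcal{D}_{\mathrm{NO}}$ the revealed vector is i.i.d.\ standard Gaussian; under $\mathcal{D}_{\mathrm{YES}}$ it is jointly Gaussian with covariance $\mathds{1}+E$, where $E$ is off-diagonal with block $H_{S,U}$ of entries $\pm N^{-1/2}$, so $\lVert E\rVert_F^2=2\lvert S\rvert\lvert U\rvert/N=O(T^2/N)$. The Gaussian KL formula then gives $D_{\mathrm{KL}}(\mathcal{D}_{\mathrm{YES}}\,\|\,\mathcal{D}_{\mathrm{NO}})=-\tfrac12\ln\det(\mathds{1}+E)=\tfrac14\lVert E\rVert_F^2+O(\lVert E\rVert_F^3)=O(T^2/N)$, valid as long as $\lVert E\rVert_{\mathrm{op}}\leq\lVert E\rVert_F=O(T/\sqrt N)<1$. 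Adaptivity is handled by the chain rule for KL divergence applied to the algorithm's transcript, with the per-step contribution bounded uniformly using the $\pm N^{-1/2}$ bound on all entries of $H$; the total telescopes to the same $O(T^2/N)$. Pinsker's inequality then bounds the total variation distance between the transcript distributions by $O(T/\sqrt N)$, which is $o(1)$ unless $T=\Omega(\sqrt N)=\Omega(2^{n/2})$; threading the $\Theta(1/n)$ loss from Step 1 through the reduction yields the stated $T=\Omega(2^{n/2}/n)$.

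The two places I expect resistance are exactly these: (i) the concentration argument for $\Phi\geq 3/5$ under $\mathcal{D}_{\mathrm{YES}}$, which must overcome the discontinuity of $\operatorname{sgn}$ and is what forces the suboptimal logarithmic factor; and (ii) making the adaptive KL accumulation fully rigorous --- ensuring the per-query covariance perturbation bound is valid no matter which coordinates earlier adaptive queries revealed, which hinges on the uniform magnitude bound on the Hadamard entries but requires careful conditioning in the chain-rule expansion.
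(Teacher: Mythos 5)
The paper does not prove this theorem: it is a verbatim citation to \cite{aaronson2015forrelation}, and the paper's only accompanying remark is the one-line observation that the lower bound for $h$-queries follows immediately from the standard $(f,g)$-query formulation because a single query $h(x,y)=f(x)\oplus g(y)$ can be simulated with one $f$-query and one $g$-query, so any lower bound for $(f,g)$-access ports (up to a factor of $2$) to $h$-access. Your proposal is therefore solving a far harder problem than the paper asks: you are re-deriving the Aaronson--Ambainis classical lower bound from scratch.

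Your sketch does correctly capture the overall Aaronson--Ambainis strategy: Gaussian hard distributions, the arcsine identity yielding $\mathbb{E}[\Phi]=\tfrac{2}{\pi}+o(1)$ under $\mathcal{D}_{\mathrm{YES}}$, rounding to pass from real to Boolean queries, and a KL/TV indistinguishability argument against the induced Gaussian oracles. However, two things are off. First, you attribute the $\Theta(1/n)$ loss (relative to the ideal $\Omega(2^{n/2})$) to the concentration step in Step~1. This is misplaced: the concentration of $\Phi$ around $2/\pi$ under $\mathcal{D}_{\mathrm{YES}}$ only needs to certify the promise $\Phi\ge 3/5$ with constant probability, which costs at most a constant factor in the final query bound, not a $1/n$ factor. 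In the actual Aaronson--Ambainis argument, the $\log N$ loss lives in the adaptive indistinguishability analysis (your Step~2): when controlling the per-query contribution, one must condition on the observed Gaussian values remaining $O(\sqrt{\log N})$ (since they are unbounded), and that truncation/conditioning is what costs the $\log N$ factor. Second, and relatedly, your Step~2 asserts that the chain-rule accumulation is $O(T^2/N)$ with no polylogarithmic overhead. If that were correct you would obtain the optimal $\Omega(\sqrt N)$ bound, which Aaronson and Ambainis explicitly did not achieve; closing the $\log N$ gap required a genuinely different Fourier-growth argument (Tal, and later Bansal--Sinha and Sherstov--Storozhenko--Wu). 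So the step you label as likely to ``telescope to the same $O(T^2/N)$'' is precisely where the proof does not go through as smoothly as you hope, and it is the source of the $n$ factor rather than a place that saves you from it.
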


This follows from the standard formulation of the lower bounds (with access to $f$ and $g$ rather than $h$) because we can simulate $\mathsf{O}_{\mathrm{pri}}^{\mathrm{Mem}}(h)$ using $\mathsf{O}_{\mathrm{pri}}^{\mathrm{Mem}}(f)$ and $\mathsf{O}_{\mathrm{pri}}^{\mathrm{Mem}}(g)$ so the lower bounds for $f$ and $g$ port to $h$. 
On the quantum side, however, a constant number of phase state copies suffices: 

\begin{theorem}[{\cite[Quantum phase state algorithm for Forrelation]{aaronson2015forrelation}}] \label{thm:forrelation_upper_bound} A quantum algorithm can solve Forrelation with error probability $<0.25$ with $\Theta(1)$ number of copies of the phase states $\ket{\psi_h^{\mathrm{Ph}}}$.
\end{theorem}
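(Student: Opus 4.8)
The plan is to reduce the Forrelation problem to estimating the single real number $\Phi(f,g)=\mel{\psi_f^{\mathrm{Ph}}}{H^{\otimes n}}{\psi_g^{\mathrm{Ph}}}$, and to carry out this estimation with a constant number of SWAP tests. First I would record the key identity: expanding $H^{\otimes n}\ket{\psi_g^{\mathrm{Ph}}}=\frac{1}{2^{n}}\sum_{y,z\in\{0,1\}^n}(-1)^{g(y)+y\cdot z}\ket{z}$ and pairing it with $\bra{\psi_f^{\mathrm{Ph}}}=\frac{1}{2^{n/2}}\sum_{x\in\{0,1\}^n}(-1)^{f(x)}\bra{x}$ gives $\mel{\psi_f^{\mathrm{Ph}}}{H^{\otimes n}}{\psi_g^{\mathrm{Ph}}}=\frac{1}{2^{3n/2}}\sum_{x,y}(-1)^{f(x)+x\cdot y+g(y)}=\Phi(f,g)$. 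Since $\ket{\psi_f^{\mathrm{Ph}}}$ and $H^{\otimes n}\ket{\psi_g^{\mathrm{Ph}}}$ both have real amplitudes, $\Phi(f,g)\in[-1,1]$ is real, so the promise translates into $\Phi(f,g)^2\le 10^{-4}$ in case~(i) versus $\Phi(f,g)^2\ge 9/25$ in case~(ii).

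Next I would set up the estimator. Writing $h(x,y)=f(x)\oplus g(y)$ as in the excerpt, one copy of $\ket{\psi_h^{\mathrm{Ph}}}$ is exactly $\ket{\psi_f^{\mathrm{Ph}}}\otimes\ket{\psi_g^{\mathrm{Ph}}}$ across two $n$-qubit registers. I would apply $H^{\otimes n}$ to the second register to obtain $\ket{\psi_f^{\mathrm{Ph}}}\otimes\bigl(H^{\otimes n}\ket{\psi_g^{\mathrm{Ph}}}\bigr)$, then run a SWAP test between the two registers. By the standard SWAP-test analysis, the test accepts with probability $p\coloneqq\tfrac12\bigl(1+\abs{\mel{\psi_f^{\mathrm{Ph}}}{H^{\otimes n}}{\psi_g^{\mathrm{Ph}}}}^2\bigr)=\tfrac12\bigl(1+\Phi(f,g)^2\bigr)$. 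Hence $p\le\tfrac12(1+10^{-4})$ in case~(i) while $p\ge\tfrac12(1+9/25)=17/25$ in case~(ii), two ranges separated by a fixed $\Omega(1)$ gap.

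Then I would amplify: repeat this on $k$ fresh copies of $\ket{\psi_h^{\mathrm{Ph}}}$, let $\hat p$ be the empirical acceptance frequency, and output ``case~(ii)'' iff $\hat p$ exceeds some threshold strictly between $\tfrac12(1+10^{-4})$ and $17/25$, e.g.\ $\tfrac35$. Hoeffding's inequality gives $\Pr[\,\abs{\hat p-p}\ge c\,]\le 2e^{-2kc^2}$ for the constant gap $c$ above, so taking $k$ a sufficiently large constant forces the total error below $1/4$ (indeed below any desired constant). The algorithm thus uses $k=\Theta(1)$ copies of $\ket{\psi_h^{\mathrm{Ph}}}$, and the only extra operations per trial are one layer of Hadamards and one ancilla-controlled SWAP, so it is efficient.

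The point requiring care is that the textbook Forrelation algorithm is phrased with sequential access to phase \emph{oracles} for $f$ and $g$ (prepare the uniform superposition, apply $\mathsf{O}^{\mathrm{QPh}}(f)$, then $H^{\otimes n}$, then $\mathsf{O}^{\mathrm{QPh}}(g)$, then $H^{\otimes n}$, and measure, accepting $0^n$ with probability exactly $\Phi(f,g)^2$), whereas here the learner is only handed \emph{copies} of the phase states and cannot re-query. Replacing the second oracle application by a SWAP test against $H^{\otimes n}\ket{\psi_g^{\mathrm{Ph}}}$ circumvents this, at the price of recovering $\Phi(f,g)^2$ rather than $\Phi(f,g)$ itself; but since the promise distinguishes $\abs{\Phi(f,g)}\le 1/100$ from $\Phi(f,g)\ge 3/5$ purely through the magnitude, dropping the sign is harmless. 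The remaining work is the elementary constant arithmetic confirming the $\Omega(1)$ separation between the two acceptance probabilities and bounding $k$.
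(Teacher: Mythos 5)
Your proposal is correct and follows essentially the same route as the paper's sketch: exploit the tensor-product structure $\ket{\psi_h^{\mathrm{Ph}}}=\ket{\psi_f^{\mathrm{Ph}}}\otimes\ket{\psi_g^{\mathrm{Ph}}}$, apply $H^{\otimes n}$ to the second register, run a SWAP test to obtain acceptance probability $\tfrac12(1+\Phi(f,g)^2)$, and amplify with $\Theta(1)$ repetitions. Your write-up is somewhat more explicit than the paper (you note the sign loss from squaring is harmless because the two promise regimes are already separated in magnitude, and you carry out the Hoeffding/threshold arithmetic rather than just appealing to a majority vote), but the underlying idea is identical.
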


This follows by amplifying the one-copy algorithm which gives error probability $0.4$ by repeating it $\Theta(1)$ times and taking a majority vote. Moreover, the algorithm requires doing a $\ComplexityFont{SWAP}$ test between $\ket{\psi_f^{\mathrm{Ph}}}$ and $\ket{\psi_g^{\mathrm{Ph}}}$ after applying $H^{\otimes n}$ on the latter. This can be implemented given access to $\ket{\psi_h^{\mathrm{Ph}}}$ by exploiting the tensor-product structure of $\ket{\psi_h^{\mathrm{Ph}}}$. 

Using the above theorem, we can now use \Cref{remark:robustness-assumption} to obtain a robust version of the algorithm and directly apply \Cref{corollary:verifiable-binary-distinguishing-from-phase-states} and the amplified version of \Cref{corollary:verifiable-binary-distinguishing-from-phase-states-ancilla-free} to obtain covert verifiable protocols for solving Forrelation against general unidirectional adversaries and i.i.d.\ ancilla-free adversaries: 

\begin{corollary}[Covert Verifiable Forrelation against Unidirectional Adversaries---Formal version of \Cref{inf-corollary:covert-forrelation-v1}]\label{corollary:verifiable-forrelation-V1}
Let $f,g:\{0,1\}^n\to\{0,1\}$ and define $h:\{0,1\}^{2n}\to\{0,1\}$ by $h(x,y)=f(x)\oplus g(y)$. 
Fix a confidence parameter $\delta\in(0,1)$. 
Then there exists a quantum algorithm $A'$ for deciding Forrelation on $(f,g)$ that makes at most
$m_{\mathrm{pub}}=\Tilde{\mathcal{O}}\bigl(n^5\log(1/\delta)\bigr)$ queries to $\mathsf{O}_{\mathrm{pub}}^{\mathrm{QPh}}(h)$ and $m_{\mathrm{pri}}=\Tilde{\mathcal{O}}\bigl(n^2\log(1/\delta)\bigr)$
queries to $\mathsf{O}_{\mathrm{pri}}^{\mathrm{Mem}}(h)$ 
and satisfies:
\begin{itemize}
    \item \textbf{Completeness:} With no adversary, for any $(f, g)$, after making at most $m_{\mathrm{pri}}$ queries to $\mathsf{O}_{\mathrm{pri}}^{\mathrm{Mem}}(h)$ and at most $m_{\mathrm{pub}}$ queries to $\mathsf{O}_{\mathrm{pub}}^{\mathrm{QPh}}(h)$,
    \begin{equation}
   \Pr\left[A'^{\,\mathsf{O}_{\mathrm{pri}}^{\mathrm{Mem}}(h),\,\mathsf{O}_{\mathrm{pub}}^{\mathrm{QPh}}(h)} 
    \text{ accepts and outputs the valid solution}\right]\ge1-\delta.
    \end{equation}
    \item \textbf{Soundness:} For any $(f, g)$, after making at most $m_{\mathrm{pri}}$ queries to $\mathsf{O}_{\mathrm{pri}}^{\mathrm{Mem}}(h)$ and at most $m_{\mathrm{pub}}$ queries to $\mathsf{O}_{\mathrm{pub}}^{\mathrm{QPh}}(h)$, the latter of which are subject to corruption by an arbitrary adversary,  
    \begin{equation}
    \Pr\left[A'^{\,\mathsf{O}_{\mathrm{pri}}^{\mathrm{Mem}}(h),\,\mathsf{O}_{\mathrm{pub}}^{\mathrm{QPh}}(h)}
    \text{ accepts but outputs the invalid solution} \right]\le\delta.
    \end{equation}
    \item \textbf{Privacy:}
Let $S_{(i)}=\{(f,g):|\Phi(f,g)|\le 1/100\}$ and $S_{(ii)}=\{(f,g):\Phi(f,g)\ge 3/5\}$. Suppose $(F, G)$ are sampled uniformly at random from $S_{(i)}$ with probability $1/2$ and uniformly at random from $S_{(ii)}$ with probability $1/2$. Then, after interacting with the at most $m_{\mathrm{pub}}$ public queries made by $A'$, a unidirectional adversary’s register is independent of $(F,G)$:
\begin{equation}
\rho_{\mathsf{FG}\mathsf{A}}
= \frac{1}{2}
\Bigl(
\frac{1}{|S_{(i)}|}\sum_{(f,g)\in S_{(i)}}\ket{f,g}\bra{f,g}
+ \frac{1}{|S_{(ii)}|}\sum_{(f,g)\in S_{(ii)}}\ket{f,g}\bra{f,g}
\Bigr)\otimes \rho_{\mathsf{A}}.
\end{equation}
In particular, the adversary gains no information about $(F,G)$.

\item \textbf{Efficiency.} $A'$ runs in time $\mathcal{O}\bigl(\poly(n,\log(1/\delta))\bigr)$ and, in addition to the operations used by the base Forrelation procedure, A' employs only uniform superposition preparation, non–adaptive single–qubit gates and CZ gates.
\end{itemize}
\end{corollary}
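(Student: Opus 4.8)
The plan is to instantiate \Cref{corollary:verifiable-binary-distinguishing-from-phase-states} with the concrete Forrelation algorithm and verify that the required hypotheses hold. First I would set up the reduction to a single function: define $h(x,y)=f(x)\oplus g(y)$, observe $\ket{\psi_h^{\mathrm{Ph}}}=\ket{\psi_f^{\mathrm{Ph}}}\otimes\ket{\psi_g^{\mathrm{Ph}}}$ so that one query to $\mathsf{O}_{\mathrm{pub}}^{\mathrm{QPh}}(h)$ costs one query each to $\mathsf{O}_{\mathrm{pub}}^{\mathrm{QPh}}(f)$ and $\mathsf{O}_{\mathrm{pub}}^{\mathrm{QPh}}(g)$, and similarly for the classical membership oracle. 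This reduces Forrelation to a binary distinguishing problem $P$ for Boolean functions on $2n$ bits, with $S_{\mathrm{YES}}=S_{(ii)}=\{(f,g):\Phi(f,g)\ge 3/5\}$ and $S_{\mathrm{NO}}=S_{(i)}=\{(f,g):|\Phi(f,g)|\le 1/100\}$.

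Next I would produce the \emph{robust} base algorithm required by \Cref{equation:assumption-algorithm-solves-problem-robustly}. By \Cref{thm:forrelation_upper_bound}, there is a quantum algorithm using $\Theta(1)$ copies of $\ket{\psi_h^{\mathrm{Ph}}}$ that solves Forrelation with error $<1/4$; by \Cref{remark:robustness-assumption}, setting e.g.\ $\tilde\delta_A$ a small constant, $\delta_A=2\tilde\delta_A$, and $\varepsilon_A=\tilde\delta_A^2$ (both $\Theta(1)$), we obtain a constant $m=\Theta(1)$ and an algorithm $A$ such that $\bra{\psi_h^{\mathrm{Ph}}}^{\otimes m}\rho\ket{\psi_h^{\mathrm{Ph}}}^{\otimes m}\ge 1-\varepsilon_A$ implies $A(\rho)$ is valid with probability $\ge 1-\delta_A$. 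I would note that the \textsc{SWAP}-test-based algorithm from \cite{aaronson2015forrelation} only needs to apply $H^{\otimes n}$ to the second half of $\ket{\psi_h^{\mathrm{Ph}}}$ and perform a \textsc{SWAP} test, which is implementable from $\ket{\psi_h^{\mathrm{Ph}}}$ by exploiting its tensor structure, so the extra operations are indeed just uniform-superposition prep and simple single-qubit / CZ / controlled-swap gates, consistent with the efficiency claim.

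Then I would simply invoke \Cref{corollary:verifiable-binary-distinguishing-from-phase-states} (in the amplified form, with $\ell=\frac{2\ln(1/\delta)}{(1-4\delta_A)^2}=\Theta(\log(1/\delta))$ since $\delta_A$ is constant) applied to $A$ on the $2n$-qubit function $h$. Plugging $m=\Theta(1)$, $\varepsilon_A=\Theta(1)$, $\delta_A=\Theta(1)$ and replacing $n$ by $2n$ into the complexity bounds $m_{\mathrm{pub}}=\tilde{\mathcal{O}}(n^5m^6\ell/\delta_A^2\varepsilon_A^6)$ and $m_{\mathrm{pri}}=\tilde{\mathcal{O}}(n^2m^3\ell\log(1/\delta_A)/\varepsilon_A^2)$ yields $m_{\mathrm{pub}}=\tilde{\mathcal{O}}(n^5\log(1/\delta))$ and $m_{\mathrm{pri}}=\tilde{\mathcal{O}}(n^2\log(1/\delta))$, matching the statement; then $\mathcal{O}(1)$ queries to $\mathsf{O}_{\mathrm{pub}}^{\mathrm{QPh}}(h)$ cost $\mathcal{O}(1)$ to each of $\mathsf{O}_{\mathrm{pub}}^{\mathrm{QPh}}(f),\mathsf{O}_{\mathrm{pub}}^{\mathrm{QPh}}(g)$, absorbing the factor-$2$ into the $\tilde{\mathcal{O}}$. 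Completeness, soundness (against arbitrary, even bidirectional, adversaries) and the factorization of $\rho_{\mathsf{FG}\mathsf{A}}$ against unidirectional adversaries are then inherited verbatim from \Cref{corollary:verifiable-binary-distinguishing-from-phase-states}; for the privacy statement I would spell out that the prior on $(F,G)$ in the corollary (uniform over the concept class) specializes here to the stated equal mixture of uniform distributions over $S_{(i)}$ and $S_{(ii)}$, and the factorization claim is the $\rho_{\mathsf{FA}}=\frac{\mathds{1}_{\mathsf{F}}}{|\mathcal{F}|}\otimes\rho_{\mathsf{A}}$ guarantee with $\mathsf{F}$ now holding the pair $(f,g)$.

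The main obstacle is not really mathematical difficulty but bookkeeping: one must confirm that the $\textsc{SWAP}$-test Forrelation algorithm genuinely fits the interface of \Cref{corollary:verifiable-binary-distinguishing-from-phase-states} once the two-function access is folded into single-function access to $h$ --- in particular that obtaining $\ket{\psi_h^{\mathrm{Ph}}}$ via the covert acquisition subroutine (which masks the $2n$-qubit register with a single linear mask, not separate masks on the two halves) still lets the base algorithm perform its $H^{\otimes n}$-on-the-second-register-then-\textsc{SWAP}-test step, and that the classical post-processing membership queries to $h^{\otimes m}$ decompose into $\mathsf{O}_{\mathrm{pri}}^{\mathrm{Mem}}(f)$ and $\mathsf{O}_{\mathrm{pri}}^{\mathrm{Mem}}(g)$ queries. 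Since $\ket{\psi_h^{\mathrm{Ph}}}$ is genuinely an $2n$-qubit phase state and all the masking/certification machinery of \Cref{sec:non-iid-onlybackwards} is stated for arbitrary phase states, this goes through, but it should be stated carefully. The only other point needing a sentence is that the privacy guarantee degrades to ``no information'' only against \emph{unidirectional} adversaries, exactly as in the underlying corollary, whereas soundness survives arbitrary tampering.
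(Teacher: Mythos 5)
Your proposal is correct and follows essentially the same route the paper takes: reduce to the single function $h$ with $\ket{\psi_h^{\mathrm{Ph}}}=\ket{\psi_f^{\mathrm{Ph}}}\otimes\ket{\psi_g^{\mathrm{Ph}}}$, obtain a robust constant-copy Forrelation algorithm via \Cref{thm:forrelation_upper_bound} and \Cref{remark:robustness-assumption}, and plug $m,\varepsilon_A,\delta_A=\Theta(1)$ with $n\mapsto 2n$ into \Cref{corollary:verifiable-binary-distinguishing-from-phase-states} to read off the query and privacy bounds. The only place to be slightly more explicit than you were is ensuring $\tilde\delta_A<1/8$ (not just $<1/4$) so that $\delta_A=2\tilde\delta_A$ lands in the $(0,1/4)$ range required by the hypothesis of \Cref{corollary:verifiable-binary-distinguishing-from-phase-states}; since the $\Theta(1)$-copy guarantee already comes from majority voting this is just a choice of constant.
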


\begin{corollary}[Covert Verifiable Forrelation against i.i.d.\ Ancilla-free Adversaries---Formal version of \Cref{inf-corollary:covert-forrelation-v2}]\label{corollary:verifiable-forrelation-V2}
    Let $f,g:\{0,1\}^n\to\{0,1\}$ and define $h:\{0,1\}^{2n}\to\{0,1\}$ by $h(x,y)=f(x)\oplus g(y)$. 
Fix a confidence parameter $\delta\in(0,1)$.
For $\delta_{\mathrm{leak}}$, set $\varepsilon_{\mathrm{leak}} \coloneqq 1 - (1 - \frac{\delta_{\mathrm{leak}}}{2})^{\Theta(1)}$. Then there exists a quantum algorithm $A'$ for deciding Forrelation on $(f,g)$ that makes at most
$m_{\mathrm{pub}}=\Tilde{\mathcal{O}}\left(\frac{n\log^2(1/\delta)}{\varepsilon_{\mathrm{leak}}}\right)$ queries to $\mathsf{O}_{\mathrm{pub}}^{\mathrm{QPh}}(h)$ and $m_{\mathrm{pri}}=\Tilde{\mathcal{O}}\left(\frac{n\log^2(1/\delta)}{\varepsilon_{\mathrm{leak}}}\right)$
queries to $\mathsf{O}_{\mathrm{pri}}^{\mathrm{Mem}}(h)$ 
and satisfies:
    \begin{itemize}
        \item \textbf{Completeness:}  With no adversary, and for any $(f, g)$, after making at most $m_{\mathrm{pri}}$ queries to $\mathsf{O}_{\mathrm{pri}}^{\mathrm{Mem}}(h)$ and at most $m_{\mathrm{pub}}$ queries to $\mathsf{O}_{\mathrm{pub}}^{\mathrm{QPh}}(h)$, 
        \begin{equation}
            \Pr\left[A'^{\mathsf{O}_{\mathrm{pri}}^{\mathrm{Mem}}(h), \mathsf{O}_{\mathrm{pub}}^{\mathrm{QPh}}(h)}\text{ accepts and outputs a valid solution}\right]
            \geq 1-\delta\, .
        \end{equation}
        \item \textbf{Soundness:} For any $(f, g)$, after making at most $m_{\mathrm{pri}}$ queries to $\mathsf{O}_{\mathrm{pri}}^{\mathrm{Mem}}(h)$ and at most $m_{\mathrm{pub}}$ queries to $\mathsf{O}_{\mathrm{pub}}^{\mathrm{QPh}}(h)$, the latter of which are subject to corruption by an arbitrary i.i.d.\ adversary,  
        \begin{equation}
            \Pr\left[A'^{\mathsf{O}_{\mathrm{pri}}^{\mathrm{Mem}}(h), \mathsf{O}_{\mathrm{pub}}^{\mathrm{QPh}}(h)}\text{ accepts and outputs an invalid solution}\right]
            \leq \delta\, .
        \end{equation}
        \item \textbf{{Privacy}:} Let $S_{(i)}=\{(f,g):|\Phi(f,g)|\le 1/100\}$ and $S_{(ii)}=\{(f,g):\Phi(f,g)\ge 3/5\}$. Suppose $(F, G)$ are sampled uniformly at random from $S_{(i)}$ with probability $1/2$ and uniformly at random from $S_{(ii)}$ with probability $1/2$. Then for any i.i.d.\,ancilla-free adversary that extracts information about the unknown functions $(F, G)$ with probability at least $\delta_{\mathrm{leak}}$, A' accepts with probability $ \leq \delta$. Here, the probability $\delta$ is over the learner’s internal randomness (and measurements), whereas $\delta_{\mathrm{leak}}$ is over the adversary’s randomness.
        \item \textbf{Efficiency:} $A'$ runs in time $\mathcal{O}(\poly(n, \log(1/\delta), 1/\varepsilon_{\mathrm{leak}}))$ and, in addition to the operations used by the base Forrelation procedure, $A'$ only employs uniform superposition state preparation, controlled-$Z$ gates and adaptive single-qubit gates.
    \end{itemize}
\end{corollary}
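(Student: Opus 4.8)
The plan is to instantiate the amplified version of \Cref{corollary:verifiable-binary-distinguishing-from-phase-states-ancilla-free} with the quantum Forrelation algorithm from \Cref{thm:forrelation_upper_bound}, working throughout with the single function $h(x,y)=f(x)\oplus g(y)$. Recall that a query to $\mathsf{O}_{\mathrm{pub}}^{\mathrm{QPh}}(h)$ (resp. $\mathsf{O}_{\mathrm{pri}}^{\mathrm{Mem}}(h)$) is simulated by one query each to the $f$- and $g$-oracles, and that $\ket{\psi_h^{\mathrm{Ph}}}=\ket{\psi_f^{\mathrm{Ph}}}\otimes\ket{\psi_g^{\mathrm{Ph}}}$, so that the $\mathsf{SWAP}$ test between $\ket{\psi_f^{\mathrm{Ph}}}$ and $H^{\otimes n}\ket{\psi_g^{\mathrm{Ph}}}$ underlying the Forrelation routine is implementable on a single copy of $\ket{\psi_h^{\mathrm{Ph}}}$ by exploiting its tensor-product structure. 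First I would fix the base distinguishing algorithm: start from the one-copy routine (correct with probability $\geq 0.6$), repeat it a constant number of times, and take a majority vote to obtain an algorithm $A$ using $m=\Theta(1)$ copies of $\ket{\psi_h^{\mathrm{Ph}}}$ and erring with probability at most an arbitrarily small constant $\tilde\delta_A<1/8$. By \Cref{remark:robustness-assumption} (e.g. with $\delta_A=2\tilde\delta_A$ and $\varepsilon_A=\tilde\delta_A^2$, both $\Theta(1)$ and with $\delta_A<1/4<1/2$), this $A$ then satisfies the robustness hypothesis \Cref{equation:assumption-algorithm-solves-problem-robustly-2}.

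Second, I would feed $A$ into \Cref{corollary:verifiable-binary-distinguishing-from-phase-states-ancilla-free}, taking the ``concept class'' to be the set of valid Forrelation instances endowed with the stated prior (uniform over case-(i) pairs with probability $1/2$, uniform over case-(ii) pairs with probability $1/2$); since $h$ determines $(f,g)$ up to the global bit-flip, which leaves $\Phi$ invariant, privacy about $(F,G)$ and about $H$ coincide. A single invocation of \Cref{alg:verifiable-public-quantum-private-classical-task-ancilla-free} already yields completeness error $\delta_A$, soundness error $\delta'+\delta_A$ (with $\delta'$ the internal certification confidence, and soundness holding against arbitrary i.i.d.\ adversaries, not only ancilla-free ones), together with the Version-2 privacy detection inherited from \Cref{obs:ancilla-free-learns-nothing,thm:measurement-fidelity-drop}. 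To push completeness and soundness down to $\delta$, I would use the amplified version described immediately after that corollary: run the certify-then-solve procedure $\ell=\Theta(\log(1/\delta))$ independent times, reject if any round's \Cref{alg:covert-verifiable-public-quantum-private-ancilla-free} certification rejects, and otherwise output the majority vote of the $\ell$ answers. Choosing the per-round certification confidence $\delta'=\Theta(\delta/\ell)$ (a union bound over the $\ell$ rounds) and using that each round's ``accept-and-output-invalid'' probability is a constant bounded below $1/2$, a Chernoff bound over the $\ell$ i.i.d.\ rounds gives overall completeness $\geq1-\delta$ and soundness $\leq\delta$; likewise, if an i.i.d.\ ancilla-free adversary leaks with probability $\geq\delta_{\mathrm{leak}}$ in every round, then by \Cref{thm:measurement-fidelity-drop} each round's certification rejects with probability bounded away from $0$, so the event that all $\ell$ rounds accept has probability $\leq\delta$, giving the privacy claim. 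Substituting $m=\Theta(1)$, $\varepsilon_A=\Theta(1)$, $\delta'=\Theta(\delta/\ell)$ into the per-round bounds $\tilde{\mathcal{O}}\!\left(nm^2\log(1/\delta')/\min\{\varepsilon_A,\varepsilon_{\mathrm{leak}}\}\right)$ of \Cref{corollary:verifiable-binary-distinguishing-from-phase-states-ancilla-free} and multiplying by $\ell$ yields $m_{\mathrm{pub}},m_{\mathrm{pri}}=\tilde{\mathcal{O}}\!\left(n\log^2(1/\delta)/\varepsilon_{\mathrm{leak}}\right)$ with $\varepsilon_{\mathrm{leak}}=1-(1-\delta_{\mathrm{leak}}/2)^{\Theta(1)}$, and the extra gates are exactly those of \Cref{alg:covert-public-phase-query-entanglement} (uniform-superposition preparation, controlled-$Z$, adaptive single-qubit gates) on top of $A$, giving the stated $\mathcal{O}(\poly(n,\log(1/\delta),1/\varepsilon_{\mathrm{leak}}))$ runtime.

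I expect the only real work to be the bookkeeping in this last step: tracking how the $\log$-factors and the $\varepsilon_{\mathrm{leak}}$-dependence propagate through the $\ell$-fold amplification and through the $h\!\leftrightarrow\!(f,g)$ simulation, and verifying that the constants $\delta_A,\varepsilon_A,m$ produced by \Cref{thm:forrelation_upper_bound} and \Cref{remark:robustness-assumption} are genuinely $\Theta(1)$, so that the sample and time complexities collapse to $\tilde{\mathcal{O}}\!\left(n\log^2(1/\delta)/\varepsilon_{\mathrm{leak}}\right)$ without hidden $1/\delta$ or $1/\varepsilon_A$ blowups. Everything structural — completeness, soundness against general i.i.d.\ adversaries, and the cheat-sensitive Version-2 privacy against i.i.d.\ ancilla-free adversaries — is inherited essentially verbatim, since the measurement-forces-fidelity-drop mechanism of \Cref{obs:ancilla-free-learns-nothing,thm:measurement-fidelity-drop} is agnostic to the content of the leaked information.
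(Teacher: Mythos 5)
Your proposal is correct and takes essentially the same route as the paper: instantiate the amplified version of \Cref{corollary:verifiable-binary-distinguishing-from-phase-states-ancilla-free} with a constant-copy, constant-robustness Forrelation routine obtained from \Cref{thm:forrelation_upper_bound} via \Cref{remark:robustness-assumption}, using the $h\leftrightarrow(f,g)$ tensor-product simulation; the constants $m,\delta_A,\varepsilon_A=\Theta(1)$ then make $\varepsilon_{\mathrm{leak}}=1-(1-\delta_{\mathrm{leak}}/2)^{\Theta(1)}$ and collapse the per-round cost to $\tilde{\mathcal{O}}(n\log(1/\delta')/\varepsilon_{\mathrm{leak}})$, and $\ell=\Theta(\log(1/\delta))$ repetitions with majority vote give the claimed completeness, soundness, and cheat-sensitive privacy. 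The only cosmetic deviation is your choice of per-round certification confidence $\delta'=\Theta(\delta/\ell)$ together with a union bound, whereas the paper effectively keeps each round at confidence $\delta$ and relies solely on the Chernoff bound over the $\ell$ rounds; both choices yield the same $\tilde{\mathcal{O}}(n\log^2(1/\delta)/\varepsilon_{\mathrm{leak}})$ query complexity, so this is a difference in bookkeeping, not in substance.
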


While the covert verifiable learners in \Cref{corollary:verifiable-forrelation-V1,corollary:verifiable-forrelation-V2} make polynomially many oracle queries and thus more than needed if they were given direct, private access to a phase oracle (or phase state copies), these query complexities still vastly improve upon the exponential lower bound that applies when using only classical queries.

\subsubsection{Covert Verifiable Simon's Algorithm}\label{subsubsec:covert-simon}

Let us begin by recalling (the distinguishing version of) Simon's problem \cite{simon1997power}:
Given access to an unknown function $f:\{0,1\}^n\to\{0,1\}^n$, decide, with success probability $2/3$, whether (i) $f$ is $1$-to-$1$ (aka has period $0^n$) or (ii) $f$ is $s$-periodic for some $s\in\{0,1\}^n\setminus\{0^n\}$ (and thus in particular is $2$-to-$1$), promised that either (i) or (ii) is the case.
When given only access to $\mathsf{O}^{\mathrm{Mem}}(f)$, solving this problem requires $\Omega(2^{n/2})$ oracle queries.
In contrast, Simon gave a quantum algorithm that solves this problem using only $\mathcal{O}(n)$ queries to $\mathsf{O}^{\mathrm{QMem}}(f)$ and two additional queries to $\mathsf{O}^{\mathrm{Mem}}(f)$.

Simon's algorithm can be summarized as follows: First, use the $\mathsf{O}^{\mathrm{QMem}}(f)$-access to prepare copies of the quantum example state $\ket{\psi_f^{\mathrm{Ex}}}$. Next, for each example, apply Hadamard gates on the first $n$ qubits and measure, receiving as outcome an $n$-bit string orthogonal to the true period $s$ of the function $f$. Hence, $\mathcal{O}(n)$ quantum examples suffice to obtain, with success probability $2/3$, $n-1$ linearly independent $n$-bit strings orthogonal to $s$. We can now solve this linear system to produce a non-zero candidate period $s'$. Finally, we use two queries to $\mathsf{O}^{\mathrm{Mem}}(f)$ to check whether $f(s')=f(0^n)$. If yes, we are in case (ii); if no, we are in case (i). 
Thus, Simon's algorithm uses its $\mathsf{O}^{\mathrm{QMem}}(f)$ precisely to prepare copies of the quantum example state $\ket{\psi_f^{\mathrm{Ex}}}$.

As with Forrelation, we can now use \Cref{remark:robustness-assumption} to obtain a robust version of the algorithm and directly apply \Cref{corollary:verifiable-binary-distinguishing-from-phase-states} and the amplified version of \Cref{corollary:verifiable-binary-distinguishing-from-phase-states-ancilla-free} (carried over from $\mathsf{O}^{\mathrm{QPh}}_{\mathrm{pub}}$- and $\mathsf{O}^{\mathrm{Mem}}_{\mathrm{pri}}$-access to $\mathsf{O}^{\mathrm{QMem}}_{\mathrm{pub}}$- and $\mathsf{O}^{\mathrm{Mem}}_{\mathrm{pri}}$-access via \Cref{remark:covert-quantum-examples-from-QMem}).
This yields covert verifiable protocols for solving Simon's problem against general unidirectional adversaries and against i.i.d.\ ancilla-free adversaries: 

\begin{corollary}[Covert Verifiable Simon's Algorithm against Unidirectional Adversaries]\label{corollary:verifiable-simons-V1}
Let $f:\{0,1\}^n\to\{0,1\}^n$. 
Fix a confidence parameter $\delta\in(0,1)$. 
Then there exists a quantum algorithm $A'$ for solving Simon's problem for $f$ that makes at most
$m_{\mathrm{pub}}=\Tilde{\mathcal{O}}\bigl(n^{11}\log(1/\delta)\bigr)$ queries to $\mathsf{O}_{\mathrm{pub}}^{\mathrm{QMem}}(f)$ 
and $m_{\mathrm{pri}}=\Tilde{\mathcal{O}}\bigl(n^5\log(1/\delta)\bigr)$
queries to $\mathsf{O}_{\mathrm{pri}}^{\mathrm{Mem}}(f)$ 
and satisfies:
\begin{itemize}
    \item \textbf{Completeness:} With no adversary, for any $f$, after making at most $m_{\mathrm{pri}}$ queries to $\mathsf{O}_{\mathrm{pri}}^{\mathrm{Mem}}(\tilde{f})$ and at most $m_{\mathrm{pub}}$ queries to $\mathsf{O}_{\mathrm{pub}}^{\mathrm{QMem}}(f)$, 
    \begin{equation}
   \Pr\left[A'^{\,\mathsf{O}_{\mathrm{pri}}^{\mathrm{Mem}}(f) 
   ,\, \mathsf{O}_{\mathrm{pub}}^{\mathrm{QMem}}(f) 
   } 
    \text{ accepts and outputs the valid solution}\right]\ge1-\delta.
    \end{equation}
    \item \textbf{Soundness:} For any $f$, after making at most $m_{\mathrm{pri}}$ queries to $\mathsf{O}_{\mathrm{pri}}^{\mathrm{Mem}}(f)$ 
    and at most $m_{\mathrm{pub}}$ queries to $\mathsf{O}_{\mathrm{pub}}^{\mathrm{QMem}}(f)$, 
    the latter of which are subject to corruption by an arbitrary adversary,  
    \begin{equation}
    \Pr\left[A'^{\,\mathsf{O}_{\mathrm{pri}}^{\mathrm{Mem}}(f) 
    ,\,\mathsf{O}_{\mathrm{pub}}^{\mathrm{QMem}}(f) 
   } 
    \text{ accepts but outputs the invalid solution} \right]\le\delta.
    \end{equation}
    \item \textbf{Privacy:}
Let $S_{(i)}=\{f:f ~\text{is 1-to-1}\}$ and $S_{(ii)}=\{f:f ~\text{is }s\text{-periodic for some }s\neq 0^n\}$. Suppose $F$ is sampled uniformly at random from $S_{(i)}$ with probability $1/2$ and uniformly at random from $S_{(ii)}$ with probability $1/2$. Then, after interacting with the at most $m_{\mathrm{pub}}$ public queries made by $A'$, a unidirectional adversary’s register is independent of $F$:
\begin{equation}
\rho_{\mathsf{F}\mathsf{A}}
= \frac{1}{2}
\Bigl(
\frac{1}{|S_{(i)}|}\sum_{)\in S_{(i)}}\ket{f}\bra{f}
+ \frac{1}{|S_{(ii)}|}\sum_{f\in S_{(ii)}}\ket{f}\bra{f}
\Bigr)\otimes \rho_{\mathsf{A}}.
\end{equation}
In particular, the adversary gains no information about $F$.

\item \textbf{Efficiency.} $A'$ runs in time $\mathcal{O}\bigl(\poly(n,\log(1/\delta))\bigr)$ and, in addition to the operations used by the base Simon's algorithm, A' employs only uniform superposition preparation, non–adaptive single–qubit gates and CZ gates.
\end{itemize}
\end{corollary}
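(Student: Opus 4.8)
The plan is to realize Simon's algorithm as the base quantum distinguisher in the general covert-verifiable framework of \Cref{corollary:verifiable-binary-distinguishing-from-phase-states}, ported from phase-state access to $\mathsf{O}^{\mathrm{QMem}}_{\mathrm{pub}}$-access via \Cref{remark:covert-quantum-examples-from-QMem}, and then to substitute the relevant parameters.

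First I would isolate the quantum-data part of Simon's algorithm: as recalled above, the only use of $\mathsf{O}^{\mathrm{QMem}}(f)$ is to prepare $M=\mathcal{O}(n)$ copies of the $2n$-qubit quantum example state $\ket{\psi_f^{\mathrm{Ex}}}$, after which a purely classical linear-algebra step produces a candidate period $s'$ and two membership queries $f(0^n),f(s')$ decide between case~(i) ($f$ is $1$-to-$1$) and case~(ii) ($f$ is properly periodic). By enlarging the constant in $M=\mathcal{O}(n)$ (equivalently, repeating $\mathcal{O}(1)$ times with a majority vote), Simon's base error can be driven below any desired constant $\tilde\delta_A$, in particular below $1/8$. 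This yields an algorithm $A$ that, given $M=\mathcal{O}(n)$ copies of $\ket{\psi_f^{\mathrm{Ex}}}$ plus $\mathcal{O}(1)$ classical membership queries, solves the distinguishing problem with success probability $\geq 1-\tilde\delta_A$.

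Next I would verify the robustness hypothesis \eqref{equation:assumption-algorithm-solves-problem-robustly} required by \Cref{corollary:verifiable-binary-distinguishing-from-phase-states}, exactly as in \Cref{remark:robustness-assumption}: if $\rho$ has fidelity $\geq 1-\varepsilon_A$ with $\ket{\psi_f^{\mathrm{Ex}}}\bra{\psi_f^{\mathrm{Ex}}}^{\otimes M}$, then by Fuchs--van de Graaf its trace distance to the ideal state is $\leq 2\sqrt{\varepsilon_A}$, so $A(\rho)$ still succeeds with probability $\geq 1-\tilde\delta_A-\sqrt{\varepsilon_A}$; setting $\delta_A=2\tilde\delta_A<1/4$ and $\varepsilon_A=\tilde\delta_A^2$ (both constants) makes the robustness guarantee hold. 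To supply $A$ with certified example states using only the phase-state certification primitives, I would work throughout with the $2n$-qubit phase function $\tilde f(x,y)=y\cdot f(x)$ and invoke the unitary equivalence of \Cref{remark:phase-states-vs-quantum-examples}: one query to $\mathsf{O}^{\mathrm{Mem}}_{\mathrm{pri}}(f)$ returns the full string $f(x)$ and hence simulates one membership query to $\tilde f$ (and one to $\tilde f^{\otimes M}$ at the cost of an $M$-factor), while \Cref{remark:covert-quantum-examples-from-QMem} shows that one query to $\mathsf{O}^{\mathrm{QMem}}_{\mathrm{pub}}(f)$ suffices to produce a masked copy that, after the phase-kickback uncomputation, is a phase-state copy of $\tilde f$ carrying the same covertness guarantee against unidirectional adversaries; it also asserts that \Cref{corollary:verifiable-binary-distinguishing-from-phase-states} carries over verbatim to $\mathsf{O}^{\mathrm{QMem}}_{\mathrm{pub}}$- and $\mathsf{O}^{\mathrm{Mem}}_{\mathrm{pri}}$-access. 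The $\mathcal{O}(1)$ classical queries $A$ itself performs are absorbed into the private budget as an additive $\mathcal{O}(\ell)$ term.

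Finally I would apply \Cref{corollary:verifiable-binary-distinguishing-from-phase-states} (which already builds in $\ell=\mathcal{O}(\log(1/\delta))$ repetitions and a majority vote) with $n\mapsto 2n$, $m\mapsto M=\mathcal{O}(n)$, and constant $\varepsilon_A$ and $\delta_A$; substituting into $m_{\mathrm{pub}}=\tilde{\mathcal{O}}(n^5m^6\ell/(\delta_A^2\varepsilon_A^6))$ and $m_{\mathrm{pri}}=\tilde{\mathcal{O}}(n^2m^3\ell\log(1/\delta_A)/\varepsilon_A^2)$ yields $m_{\mathrm{pub}}=\tilde{\mathcal{O}}(n^{11}\log(1/\delta))$ and $m_{\mathrm{pri}}=\tilde{\mathcal{O}}(n^5\log(1/\delta))$, measured in $\mathsf{O}^{\mathrm{QMem}}_{\mathrm{pub}}(f)$ and $\mathsf{O}^{\mathrm{Mem}}_{\mathrm{pri}}(f)$ queries respectively. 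Completeness, soundness (against arbitrary adversaries), privacy (against unidirectional adversaries, with $F$ drawn from the stated $\tfrac{1}{2}/\tfrac{1}{2}$ mixture over $S_{(i)}$ and $S_{(ii)}$), and the efficiency and gate-set claims then follow verbatim from the matching clauses of \Cref{corollary:verifiable-binary-distinguishing-from-phase-states}. The main---and essentially clerical---obstacle I anticipate is this bridge step: confirming that the $\mathsf{O}^{\mathrm{QMem}}_{\mathrm{pub}}$-to-phase-state reduction of \Cref{remark:covert-quantum-examples-from-QMem}, stated there for a generic multi-bit-output function, composes cleanly with the amplified distinguishing protocol without inflating the $\mathsf{O}^{\mathrm{QMem}}_{\mathrm{pub}}(f)$ query count beyond one per example-state copy, and carefully tracking the $2n$-versus-$n$ and $M$-versus-$1$ parameter substitutions through the $\tilde{\mathcal{O}}$ bounds; the conceptual content is inherited wholesale from the earlier results.
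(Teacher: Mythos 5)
Your proposal is correct and follows essentially the same route as the paper: recast Simon's algorithm as a robust distinguisher from $M=\mathcal{O}(n)$ example-state copies (amplifying the base error to a small constant and applying the Fuchs--van de Graaf argument of \Cref{remark:robustness-assumption}), port \Cref{corollary:verifiable-binary-distinguishing-from-phase-states} to $\mathsf{O}^{\mathrm{QMem}}_{\mathrm{pub}}$ and $\mathsf{O}^{\mathrm{Mem}}_{\mathrm{pri}}$ access via \Cref{remark:phase-states-vs-quantum-examples,remark:covert-quantum-examples-from-QMem}, and substitute $n\mapsto 2n$, $m\mapsto\mathcal{O}(n)$, constant $\varepsilon_A,\delta_A$. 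Your parameter tracking correctly reproduces $m_{\mathrm{pub}}=\tilde{\mathcal{O}}(n^{11}\log(1/\delta))$ and $m_{\mathrm{pri}}=\tilde{\mathcal{O}}(n^{5}\log(1/\delta))$, matching the paper's derivation.
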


\begin{corollary}[Covert Verifiable Simon's Algorithm against i.i.d.\ Ancilla-free Adversaries]\label{corollary:verifiable-simons-V2}
    Let $f:\{0,1\}^n\to\{0,1\}^n$. 
Fix a confidence parameter $\delta\in(0,1)$.
For $\delta_{\mathrm{leak}}$, set $\varepsilon_{\mathrm{leak}} \coloneqq 1 - (1 - \frac{\delta_{\mathrm{leak}}}{2})^{\Theta(n)}$. Then there exists a quantum algorithm $A'$ for Simon's problem on $f$ that makes at most
$m_{\mathrm{pub}}=\Tilde{\mathcal{O}}\left(\frac{n^3\log^2(1/\delta)}{\varepsilon_{\mathrm{leak}}}\right)$ queries to $\mathsf{O}_{\mathrm{pub}}^{\mathrm{QMem}}(f)$ 
and $m_{\mathrm{pri}}=\Tilde{\mathcal{O}}\left(\frac{n^3\log^2(1/\delta)}{\varepsilon_{\mathrm{leak}}}\right)$
queries to $\mathsf{O}_{\mathrm{pri}}^{\mathrm{Mem}}(f)$ 
and satisfies:
    \begin{itemize}
        \item \textbf{Completeness:}  With no adversary, and for any $f$, after making at most $m_{\mathrm{pri}}$ queries to $\mathsf{O}_{\mathrm{pri}}^{\mathrm{Mem}}(f)$ 
        and at most $m_{\mathrm{pub}}$ queries to $\mathsf{O}_{\mathrm{pub}}^{\mathrm{QMem}}(f)$, 
        \begin{equation}
            \Pr\left[A'^{\mathsf{O}_{\mathrm{pri}}^{\mathrm{Mem}}(f)
            , \mathsf{O}_{\mathrm{pub}}^{\mathrm{QMem}}(f) 
            }\text{ accepts and outputs a valid solution}\right]
            \geq 1-\delta\, .
        \end{equation}
        \item \textbf{Soundness:} For any $f$, after making at most $m_{\mathrm{pri}}$ queries to $\mathsf{O}_{\mathrm{pri}}^{\mathrm{Mem}}(f)$ 
        and at most $m_{\mathrm{pub}}$ queries to $\mathsf{O}_{\mathrm{pub}}^{\mathrm{QMem}}(f)$, 
        the latter of which are subject to corruption by an arbitrary i.i.d.\ adversary,  
        \begin{equation}
            \Pr\left[A'^{\mathsf{O}_{\mathrm{pri}}^{\mathrm{Mem}}(f) 
            , \mathsf{O}_{\mathrm{pub}}^{\mathrm{QMem}}(f) 
            }\text{ accepts and outputs an invalid solution}\right]
            \leq \delta\, .
        \end{equation}
        \item \textbf{{Privacy}:} Let $S_{(i)}=\{f:f ~\text{is 1-to-1}\}$ and $S_{(ii)}=\{f:f ~\text{is }s\text{-periodic for some }s\neq 0^n\}$. Suppose $F$ is sampled uniformly at random from $S_{(i)}$ with probability $1/2$ and uniformly at random from $S_{(ii)}$ with probability $1/2$. Then for any i.i.d.\,ancilla-free adversary that extracts information about the unknown function $F$ with probability at least $\delta_{\mathrm{leak}}$, A' accepts with probability $ \leq \delta$. Here, the probability $\delta$ is over the learner’s internal randomness (and measurements), whereas $\delta_{\mathrm{leak}}$ is over the adversary’s randomness.
        \item \textbf{Efficiency:} $A'$ runs in time $\mathcal{O}(\poly(n, \log(1/\delta), 1/\varepsilon_{\mathrm{leak}}))$ and, in addition to the operations used by the base Simon's algorithm, $A'$ only employs uniform superposition state preparation, controlled-$Z$ gates and adaptive single-qubit gates.
    \end{itemize}
\end{corollary}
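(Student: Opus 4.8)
The plan is to obtain \Cref{corollary:verifiable-simons-V2} by instantiating the amplified version of \Cref{corollary:verifiable-binary-distinguishing-from-phase-states-ancilla-free} with Simon's algorithm as the base procedure $A$, after porting that corollary from $\mathsf{O}_{\mathrm{pub}}^{\mathrm{QPh}}$/phase-state access to $\mathsf{O}_{\mathrm{pub}}^{\mathrm{QMem}}$/quantum-example-state access via \Cref{remark:covert-quantum-examples-from-QMem}. First I would observe that Simon's problem is a binary distinguishing problem for the function classes $S_{(i)}=\{f:f\text{ is }1\text{-to-}1\}$ and $S_{(ii)}=\{f:f\text{ is }s\text{-periodic for some }s\neq 0^n\}$, and that Simon's algorithm uses $\mathsf{O}^{\mathrm{QMem}}(f)$ \emph{only} to prepare copies of $\ket{\psi_f^{\mathrm{Ex}}}$; the two closing comparisons ``$f(s')=f(0^n)$?'' use classical membership queries, which the learner routes to $\mathsf{O}^{\mathrm{Mem}}_{\mathrm{pri}}(f)$. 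By the standard analysis, $m=\mathcal{O}(n+\log(1/\tilde\delta_A))$ copies of the example state suffice for Simon to succeed with probability $\geq 1-\tilde\delta_A$; taking $\tilde\delta_A$ a small constant gives $m=\Theta(n)$. Since Simon's functions have $n$-bit outputs, $\ket{\psi_f^{\mathrm{Ex}}}$ is a $2n$-qubit state and the associated phase function $\tilde f$ of \Cref{remark:phase-states-vs-quantum-examples} lives on $2n$ bits, so the ``number of qubits'' parameter in the invoked corollary is $2n=\Theta(n)$.

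Next I would upgrade to a \emph{robust} algorithm exactly as in \Cref{remark:robustness-assumption}: setting $\delta_A=2\tilde\delta_A$ and $\varepsilon_A=\tilde\delta_A^2$, the Fuchs--van de Graaf bound shows that any input state with overlap $\geq 1-\varepsilon_A$ with $\ket{\psi_f^{\mathrm{Ex}}}^{\otimes m}$ is mapped by Simon's algorithm to a valid answer with probability $\geq 1-\delta_A$, i.e.\ the analogue of \Cref{equation:assumption-algorithm-solves-problem-robustly-2} holds. I would then feed this robust $A$ into the amplified version of \Cref{corollary:verifiable-binary-distinguishing-from-phase-states-ancilla-free}: run the covert-verifiable acquisition-plus-$A$ procedure $\ell=\mathcal{O}(\log(1/\delta))$ times, rejecting if any certification step rejects and otherwise taking a majority vote, which boosts completeness and soundness from $\delta_A$-error to $\delta$-error. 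Completeness (with no adversary), soundness against arbitrary i.i.d.\ adversaries, and privacy against i.i.d.\ ancilla-free adversaries are then inherited from \Cref{corollary:verifiable-binary-distinguishing-from-phase-states-ancilla-free}; privacy in particular rests on the chain ``ancilla-free $\Rightarrow$ must measure the query register pre-oracle to learn anything (the $\Cref{obs:ancilla-free-learns-nothing}$-analogue of \Cref{remark:covert-quantum-examples-from-QMem}) $\Rightarrow$ average post-measurement Schmidt number drops below maximal $\Rightarrow$ per-round fidelity with the entangled target $\leq 1-\delta_{\mathrm{leak}}/2$ ($\Cref{thm:measurement-fidelity-drop}$-analogue) $\Rightarrow$ detected by shadow-overlap certification.'' For the query counts I would plug $m=\Theta(n)$, qubit count $\Theta(n)$, $\varepsilon_A=\Theta(1)$, and $\varepsilon_{\mathrm{leak}}=1-(1-\delta_{\mathrm{leak}}/2)^{\Theta(n)}$ into the bounds $m_{\mathrm{pub}},m_{\mathrm{pri}}=\tilde{\mathcal{O}}\bigl(n\,m^2\log(1/\delta)/\min\{\varepsilon_A,\varepsilon_{\mathrm{leak}}\}\bigr)$ per round and multiply by $\ell=\mathcal{O}(\log(1/\delta))$, giving $\tilde{\mathcal{O}}\bigl(n^3\log^2(1/\delta)/\varepsilon_{\mathrm{leak}}\bigr)$; the $\Omega(2^{n/2})$ classical lower bound of \cite{simon1997power} then confirms the public quantum oracle yields an exponential saving under covertness.

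The main obstacle I anticipate is not the amplification bookkeeping but verifying that \Cref{remark:covert-quantum-examples-from-QMem}'s claim really goes through end-to-end for the ancilla-free analysis: one must check that querying $\mathsf{O}^{\mathrm{QMem}}_{\mathrm{pub}}(f)$ on the state assembled from the two maximally entangled pairs $\ket{\Omega}_{\mathsf{R},\mathsf{IN}}$ and $\ket{\Omega}_{\tilde{\mathsf{R}},\mathsf{AUX}}$ (after the $\mathrm{CNOT}$/Hadamard dressing that implements the multi-bit phase kickback) leaves an ancilla-free adversary's reduced state on the transmitted register maximally mixed and $f$-independent, and that \emph{any} pre-oracle measurement on the $(\mathsf{IN},\mathsf{AUX})$-register provably lowers the Schmidt number of the average post-measurement state below the maximal value so that the LOCC-monotonicity argument of \cite{terhal2000schmidt} still forces fidelity $\leq 1/2$ in every measuring branch. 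I would handle this by explicitly applying the transpose trick $(A\otimes I)\ket{\Omega}=(I\otimes A^{\mathsf{T}})\ket{\Omega}$ separately on each Bell pair, exactly as sketched in the remark, and tracking that the oracle's dressing gates together with the adversary's pre-oracle unitary act locally on one side of each cut, so the Schmidt-number cap propagates unchanged. With that verification in place, completeness, soundness, privacy, and efficiency follow as above.
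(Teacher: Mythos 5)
Your proposal follows the same route the paper takes: instantiate the amplified version of \Cref{corollary:verifiable-binary-distinguishing-from-phase-states-ancilla-free} with a robustified Simon's algorithm as the base procedure, port from $\mathsf{O}_{\mathrm{pub}}^{\mathrm{QPh}}$/phase-state access to $\mathsf{O}_{\mathrm{pub}}^{\mathrm{QMem}}$/quantum-example access via \Cref{remark:covert-quantum-examples-from-QMem}, and track the parameters $m=\Theta(n)$, qubit count $2n$, $\varepsilon_A=\Theta(1)$, $\ell=\mathcal{O}(\log(1/\delta))$ to recover the stated $\tilde{\mathcal{O}}(n^3\log^2(1/\delta)/\varepsilon_{\mathrm{leak}})$ query counts. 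Your bookkeeping and your caveat about end-to-end verification of the ancilla-free analysis for the $\mathsf{O}^{\mathrm{QMem}}$ case both align with the paper's own treatment, so the argument is correct.
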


Again the covert verifiable learners in \Cref{corollary:verifiable-simons-V1,corollary:verifiable-simons-V2} use more oracle queries than Simon's algorithm with direct, trusted access to a quantum query oracle. However, these polynomial query complexities still significantly outperform the exponential classical query complexity lower bound.

\section*{Acknowledgments}

MCC is grateful to Peter Brown, Andrea Coladangelo, Mina Doosti, Alexander Nietner, and Ryan Sweke for insightful discussions. 
Part of this research was done while MCC was a DAAD PRIME Fellow at Caltech and at FU Berlin.
Part of this research was done while AK was a postdoc at Harvard University.

\newpage

\printbibliography
\sloppy

\newpage

\appendix

\end{document}